\journal{Journal of Automata, Languages and Combinatorics}
\begin{document}

\begin{frontmatter}

%% Title, authors and addresses

%% use the tnoteref command within \title for footnotes;
%% use the tnotetext command for theassociated footnote;
%% use the fnref command within \author or \address for footnotes;
%% use the fntext command for theassociated footnote;
%% use the corref command within \author for corresponding author footnotes;
%% use the cortext command for theassociated footnote;
%% use the ead command for the email address,
%% and the form \ead[url] for the home page:
%% \title{Title\tnoteref{label1}}
%% \tnotetext[label1]{}
%% \author{Name\corref{cor1}\fnref{label2}}
%% \ead{email address}
%% \ead[url]{home page}
%% \fntext[label2]{}
%% \cortext[cor1]{}
%% \affiliation{organization={},
%%             addressline={},
%%             city={},
%%             postcode={},
%%             state={},
%%             country={}}
%% \fntext[label3]{}

\title{Union of Finitely Generated Congruences on Ground Term Algebra}

%% use optional labels to link authors explicitly to addresses:
%% \author[label1,label2]{}
%% \affiliation[label1]{organization={},
%%             addressline={},
%%             city={},
%%             postcode={},
%%             state={},
%%             country={}}
%%
%% \affiliation[label2]{organization={},
%%             addressline={},
%%             city={},
%%             postcode={},
%%             state={},
%%             country={}}

\author{S\'andor V\'agv\"olgyi}
\ead{vagvolgy@inf.u-szeged.hu}
\affiliation{{Department  of Foundations of Computer Science University of Szeged},
             addressline={\'Arp\'ad t\'er 2}, 
            city={Szeged},
            postcode={H-6720}, 
          country={Hungary}}

\begin{abstract}
We show that for any   ground term equation systems  $E$ and $F$, (1) the union of the generated congruences by $E$ and $F $
is a congruence on the ground term algebra if and only if  there exists  a ground term equation system $H$ such that the congruence generated by $H$ is equal to 
the union of the congruences generated  by $E$ and $F$ if and only if the congruence generated by the union of $E $ and $F$ is equal to the union of the congruences generated by $E $ and $F$, and (2) it is decidable  in square time  whether the congruence generated by the union of $E$ and $F$ is equal to the union of the congruences generated by $E $ and $F$, 
where the size of the input is  the number of occurrences of symbols in $E$ plus the number of occurrences of symbols in $F$.
\end{abstract}

%%Graphical abstract
%\begin{graphicalabstract}
%\includegraphics{grabs}
%\end{graphicalabstract}

\begin{keyword}finitely generated congruence on the ground term algebra\sep reduced
ground term rewrite system \sep bottom-up tree automaton \sep square time decision algorithm 
%% keywords here, in the form: keyword \sep keyword

%% PACS codes here, in the form: \PACS code \sep code

%% MSC codes here, in the form: \MSC code \sep code
%% or \MSC[2008] code \sep code (2000 is the default)

\end{keyword}

\end{frontmatter}

% \linenumbers

%% main text

%
%        szubrutinok
%
\let\s=\sigma    \let\c=\circ     \let\b=\Box
\let\S=\Sigma   \let\D=\Delta
\let\th=\Theta
\def\sgt{SGT_\S (X)}
\def\ts{T_\Sigma}
\def\tsm{T_{\S,m}}
\def\les#1#2#3{#1\leq #2\leq #3}
\def\yk{Y_1\c\ldots\c Y_m}
\def\yp{Y_1\cdot\ldots\cdot Y_m}
\def\zk{W_1\c\ldots\c W_n}
\def\zp{W_1\cdot\ldots\cdot W_n}
\def\seq#1#2#3{#1_{#2},\ldots,#1_{#3}}
\def\seqq#1#2#3#4{#1_{#3}(#2_{#3}),\ldots,#1_{#4}(#2_{#4})}
\def\red#1{\mathop\rightarrow_{#1}}
\def\tred#1{\mathop\rightarrow_{#1}^{*}}
\def\thue#1{\mathop\leftrightarrow_{#1}}
\def\tthue#1{\mathop\leftrightarrow^*_{#1}}
\def\plusthue#1{\mathop\leftrightarrow_{#1}^{+}}
\newcommand{\Nat}{{\mathbb N}} 
%
%
%         teteldeklaraciok
\newtheorem{thm}{Theorem}[section]
\newtheorem{cons}[thm]{Consequence}
\newtheorem{sta}[thm]{Statement}
\newtheorem{lm}[thm]{Lemma}
\newtheorem{cla}[thm]{Claim}
\newtheorem{prop}[thm]{Proposition}
\newtheorem{lem}[thm]{Lemma}
\newtheorem{df}[thm]{Definition}
\newtheorem{alg}[thm]{Algorithm}
\newtheorem{exa}[thm]{Example}
\newtheorem{prob}[thm]{Problem}
%      itt kezdodik maga a cikk

\section{Introduction}

Kozen \cite{Kozen77}, Nelson and  Oppen \cite{jacm/NelsonO80}, and  Tarjan \cite{journals/jacm/Tarjan75} studied the word problem for ground term equation systems (GTESs)
and applied the graph-based method of congruence closure. In order to decide for a given GTES $E$ and ground terms $s$ and $t$, whether $s \tthue  E t$ holds, 
they constructed the subterm graph of $E$ and $s$ and $t$ and computed the congruential consequences of $E$ on the vertices of the graph.
Kozen  \cite{Kozen77}
showed that the  word problem for every  GTES is decidable in polynomial
time.  Nelson and  Oppen \cite{jacm/NelsonO80}
 and Tarjan \cite{journals/jacm/Tarjan75} showed that the 
 word problem for every  GTES is decidable   in $O(n \, \mathrm{log} \, n )$ time applying the congruence closure procedure on the subterm graph of the GTES $E$ and ground terms $s$ and $t$.
Continuing this research line, 
Snyder \cite{jsc/Snyder93} gave  an efficient algorithm
which, given a GTES  $E$ over a signature $\S$, produces an equivalent, reduced ground term rewriting system $R$ over $\S$ in  $O(n \, \mathrm{log} \, n )$ time. Then 
F\"ul\"op and V\'agv\"olgyi \cite{eatcs/FuloopV91} gave a simple and efficient algorithm
which, given a GTES  $E$  over a signature $\S$, produces a reduced ground term rewriting system (GTRS for short) $R$ over a signature containing $\S$ such that two terms are congruent modulo $E$ if and only if their $R$-normal forms are equal.
The algorithm calls a congruence closure procedure 
\cite{journals/jacm/Tarjan75} on the subterm graph of the GTES $E$ and runs in  $O(n \, \mathrm{log} \, n )$ time, where $n$ is the size of $E$. 
We say that a  tree language is congruential if it is the union of finitely many congruence classes of a finitely generated congruence. Brainerd \cite{Brainerd69}, Kozen  \cite{Kozen77} 
showed that congruential tree languages are the same as recognizable tree languages. Applying the above algorithm of F\"ul\"op and V\'agv\"olgyi \cite{eatcs/FuloopV91},
for a given GTES  $E$  and ground terms $p_1,\ldots ,p_k$ of size $n$, a
 deterministic bottom-up tree automaton
can be  constructed  in $O(n \,\mathrm{log} \,n)$ time which recognizes the
congruential tree language $p_1/_{\tthue
E}\cup\cdots\cup p_k/_{\tthue E}$  \cite{tcs/Vagvolgyi93}.

V\'agv\"olgyi \cite{tcs/Vagvolgyi03b}
showed that it is decidable for any given 
GTESs  $E$ and $F$   whether
there is a GTES $H$ such that 
$\tthue H=\tthue E\cap \tthue F$. If the answer is \enquote{yes}, then we can construct such a  GTES  $H$. 
We show the dual to the above result, first we
show that 
for any GTESs $E$ and $F$, $\tthue E\cup \tthue F$
 is a congruence on the ground term algebra
if and only if 
 there exists  a GTES $H$ such that $\tthue H=\tthue E\cup \tthue F$
  if and only if $\tthue {E\cup F}=\tthue E\cup \tthue F$.
  Second, we show that it is decidable  in  $O(n^2)$ 
 time for any given 
  GTESs  $E$ and $F$ whether $\tthue {E\cup F}=\tthue E\cup \tthue F$, where 
  $n$ is   the number of occurrences of symbols in $E$ plus the number of occurrences of symbols in $ F$.
  For each  signature $\Sigma$ without  constants,  the set of ground terms over $\Sigma$ is the empty set, each GTES over $\Sigma $ is the empty set, and our results above are obvious. Thus, throughout the paper we consider signatures with  constants. 

We now recall the fast ground completion algorithm of F\"ul\"op and V\'agv\"olgyi \cite{eatcs/FuloopV91}.
Let 
$E$ and $F$ be GTESs over $\Sigma$, and let $W$ denote the set of subterms appearing in the GTESs $E$ and $F$.
F\"ul\"op and V\'agv\"olgyi \cite{eatcs/FuloopV91}
presented a
fast ground completion algorithm which, given GTESs  $E$ and $F$  over a  signature   $\S$,
produces in $O(n \,\mathrm{log}\, n)$ time a finite set $C\langle E, F\rangle$ of  new  constants and 
a reduced GTRS $ R \langle E, F \rangle  $
over $\S$ extended with the new constants
such  that $\tthue  E = \tthue {R\langle E, F \rangle }  \cap \ts \times \ts $.
 Here $n$ stands for the number of occurrences of symbols in $E$ plus the number of occurrences of symbols in  $F$.
Furthermore, the  set $C\langle E, F\rangle$ of new constants is the set of 
the  equivalence classes of
the restriction   $\Theta\langle E, F \rangle  $
of $\tthue E  $ to the set $W$. For this purpose we proceed as follows:
In the first step  in $O(n)$
time, we create a subterm directed, labeled,  and acyclic multigraph (dag) $G$ for $E$ and $F$  \cite{jsc/Snyder93}. We consider the vertices in the  subterm dag  $G$   as terms in $W$.
 The relation $\tau \langle E, F \rangle $ on the vertices of $G$  corresponds to $E$. We construct the  relation $\tau \langle E, F \rangle $.
 Then 
we run on $G$  a  congruence closure algorithm for the relation $\tau \langle E, F \rangle $ in $O(n\,\mathrm{log}\, n)$ time  \cite{jacm/DowneyST80}. In this way, we obtain   on
$G$ the congruence closure $\rho \langle E, F \rangle $ of the relation $\tau \langle E, F \rangle $. Considering the vertices in the  subterm dag  $G$   as terms, 
$\rho \langle E, F \rangle $ yields 
       the restriction $\Theta\langle E, F \rangle$ of $\tthue E $ to $W$. Then using $\rho \langle E, F \rangle $, 
     $\Theta\langle E, F \rangle$,  and $C\langle E, F\rangle$, we 
produce the GTRS $ R \langle E, F \rangle $ in $O(n)$  time. 
GTRS $ R \langle E, F \rangle $ is the set of all 
rewrite rules 
 $\s({b}_1, \ldots, {b}_m) \dot{\rightarrow}{b}$, where 
 $\s\in \S_m$, $m\in \Nat$, and   $ {b}_1, \ldots, {b}_m, {b} \in C\langle E, F\rangle$, and $\{\, \s(t_1, \ldots, t_m) \mid t_1\in {b}_1, \ldots, t_m\in {b}_m\, \}\subseteq{b}$.
 F\"ul\"op and V\'agv\"olgyi \cite{eatcs/FuloopV91} showed that GTRS $ R \langle E, F \rangle $ is reduced, and hence is convergent.
We construct  GTRSs $ R \langle  F, E  \rangle $ and $R\langle E\cup F, \emptyset \rangle$ in a similar way.

The intuitive  idea of the paper is as follows. Let 
$E$ and $F$ be GTESs over $\Sigma$. For every  subterm $t$ appearing in $E \cup F$,  
GTRS $R\langle E, F \rangle$  reduces all elements  of the congruence  class $t/_{\tthue {\langle E,   F \rangle}}$   
    to the constant  (equivalence class)  $t/_{\Theta\langle E, F \rangle}$,    
and 
GTRS $R\langle E\cup F, \emptyset \rangle$ reduces every  element  of the congruence  class $t/_{\tthue {\langle E\cup F, \emptyset \rangle}}$ 
 to the  constant (equivalence class)   $t/_{\Theta\langle E\cup  F , \emptyset \rangle}$.    
For every subterm $t$  appearing in $E \cup F$, the   congruence class   $t/_{{\tthue {\langle E\cup F, \emptyset \rangle}}}$ is equal to the congruence class  $t/_{{\tthue {\langle E,  F \rangle}}}$  
if and only if $t/_{\Theta\langle E\cup  F , \emptyset \rangle} =t/_{\Theta\langle E, F \rangle}$ and 
for every  $p \in \ts$ and reduction sequence ${\cal RS}1$  from  $p$ to   $t/_{\Theta\langle E\cup F, \emptyset  \rangle}$   by
$R\langle {E \cup F},\emptyset\rangle$,  
there exists a reduction sequence ${\cal RS}2$  from $p$ to   $t/_{\Theta\langle E, F \rangle}$   by
$R\langle E, F \rangle $, 
where
each  rewrite rule 
$\s({b}_1, \ldots, {b}_m) \dot{\rightarrow} {b}\in  R\langle {E \cup F},\emptyset\rangle$  applied  at any node of $p$  along the  reduction sequence ${\cal RS}1$ is matched with
 some rewrite rule 
 $\s({c}_1, \ldots, {c}_m ) \dot{\rightarrow}  {c}$ in $ R\langle E, F\rangle$
applied by GTRS $R\langle E, F \rangle$ at the same node along the  reduction sequence ${\cal RS}2$. By matching we mean that 
   ${c}_i \subseteq  {b}_i$
 for all $i\in \{\, 1, \ldots, m\, \}$,  
and 
  ${c}\subseteq{b}$. 
 For every ${b}\in C\langle E\cup F, \emptyset \rangle$,
we say that  GTRS $R\langle E, F \rangle$  keeps up with the  GTRS 
$R\langle E\cup F, \emptyset \rangle$ writing the constant     ${b}$ 
if the following holds. 
If  GTRS 
$R\langle E\cup F, \emptyset \rangle$  applies a  rewrite rule  of the form 
 $\s({b}_1, \ldots, {b}_m) \dot{\rightarrow} {b}$ along a reductions sequence from $p$ to   $t/_{\Theta\langle E, F \rangle}$, then  
$ R\langle E, F\rangle$ also applies a rewrite rule $\s({c}_1, \ldots, {c}_m ) \dot{\rightarrow}  {c}$ for some 
   ${c}_1, \ldots, {c}_m, c \in C\langle E, F\rangle$ at the same node,  that is,  $ R\langle E, F\rangle$ also move forward along  its reduction sequence.
For every ${b}\in C\langle E\cup F, \emptyset \rangle$,
we define  that  GTRS $R\langle F, E \rangle$  keeps up with the  GTRS 
$R\langle E\cup F, \emptyset \rangle$ writing the constant     ${b}$ 
in a similar way. 

  For each   constant 
${d}\in C\langle E\cup F, \emptyset \rangle$, we keep track of all constants ${c}\in C\langle E\cup F, \emptyset\rangle$ which appear as a right-hand side of a rewrite  rule applied along 
 the reduction sequence from  some ground term $t\in \ts$  to the constant 
${d}$    by $R\langle E\cup F, \emptyset \rangle$.
To this end, we construct a directed pseudograph without parallel arcs (dpwpa), $AUX[E; F]$ with vertices $C\langle E\cup F, \emptyset \rangle$. At each vertex ${b}\in C\langle E\cup F, \emptyset \rangle$, we indicate whether it is an element of $C\langle E, F\rangle$ whether it is an 
element of 
$C\langle F, E\rangle$, whether GTRS $R\langle E, F \rangle$ keeps up with GTRS $R\langle E\cup F, \emptyset \rangle$ writing the constant     ${b}$, and whether  GTRS $R\langle F, E \rangle$ keeps up with GTRS $R\langle E\cup F, \emptyset \rangle$ writing the constant     ${b}$. Every directed arc $({b}, {c})$ is associated with some rewrite rule
 $ \s({b}_1, \ldots, {b}_m) \dot{\rightarrow}{b}$ in $R\langle E\cup F, \emptyset \rangle$, where    ${c}={b}_i$ for some $i \in \{\, 1, \ldots, m\, \}$. That is, ${b}$ is the right-hand side of the
 rewrite  rule, and ${c}$ appears on the left-hand side of the rewrite rule.
We say that GTRS 
 $R\langle E, F \rangle$
simulates the  GTRS $R\langle E\cup F, \emptyset \rangle$ on a constant    ${d}\in C\langle  E \cup F, \emptyset \rangle $ if  ${d}\in C\langle  E, F \rangle $ and  
 for all constants ${c}$ appearing in a reduction sequence from a  ground term  $p \in \ts$ to ${d}$ by GTRS  $R\langle E\cup F, \emptyset \rangle$, GTRS
$R\langle E, F \rangle$ keeps up with GTRS $R\langle E\cup F, \emptyset \rangle$ writing $c$.
 That is, for all constants ${c}$ with a path of positive length or a cycle     from  
   ${c}$ to    ${d}$ in  $AUX[E; F]$,  
$R\langle E, F \rangle$ keeps up with GTRS $R\langle E\cup F, \emptyset \rangle$ writing $c$.
We define that GTRS 
 $R\langle  F, E \rangle$
simulates the  GTRS $R\langle E\cup F, \emptyset \rangle$ on a constant    ${d}\in C\langle  E \cup F, \emptyset \rangle $ in a similar way.
During the run of  the decision algorithm, we carry out  a partial depth first search on the auxiliary dpwpa $AUX[E; F]$ to decide whether  $R\langle E, F \rangle$
simulates the  GTRS $R\langle E\cup F, \emptyset \rangle$ on a constant    ${d}\in C\langle  E \cup F, \emptyset \rangle $, and 
to decide whether  $R\langle F, E \rangle$
simulates the  GTRS $R\langle E\cup F, \emptyset \rangle$ on a constant    ${d}\in C\langle  E \cup F, \emptyset \rangle $.
 We say that GTRS $R\langle E, F \rangle$ and GTRS $R\langle F, E \rangle$
together  simulate the  GTRS $R\langle E \cup F, \emptyset \rangle$
 if 
   for each constant    ${d}\in C\langle  E \cup F, \emptyset \rangle $, GTRS  $R\langle E, F \rangle$  or GTRS  $R\langle F, E \rangle$
    simulate  the GTRS  $R\langle E \cup F, \emptyset \rangle$ on     ${d} $.   We show that   
for every subterm $t$  appearing in $E \cup F$,
  $\left( t/_{\tthue{ E\cup F} }=  t/_{{\tthue E}} \mbox{ or } t/_{\tthue{ E\cup F} }=  t/_{{\tthue F}} 
  \right) $
  if and only if 
  GTRS $R\langle E, F \rangle$ and GTRS $R\langle F, E \rangle$
together  simulate the  GTRS $R\langle E \cup F, \emptyset \rangle$.
We consider the following four main cases.

Main Case 1: $\Sigma $ is a unary signature. Then $\tthue {E\cup F}= \tthue E \cup \tthue F$ if and only if GTRS $R\langle E, F \rangle$ and GTRS $R\langle F, E \rangle$
together  simulate the  GTRS $R\langle E \cup F, \emptyset \rangle$. 

Main Case 2:  Both GTRS $R\langle E, F\rangle$ and GTRS $R\langle F, E\rangle$ are  total. 
Then  $\tthue {E\cup F}=\tthue E\cup \tthue F$
if and only if each $\Theta\langle E\cup F, \emptyset \rangle$ equivalence class is equal to a 
 $\Theta\langle E, F \rangle$ equivalence class or a $\Theta\langle F, E \rangle$  equivalence class.

Main Case 3:  $\S_k\neq \emptyset$ for some $k\geq 2$ and  $R\langle E, F\rangle$ 
or  $R\langle F, E\rangle$ is total. Then $\tthue {E\cup F}= \tthue E \cup \tthue F$ if and only if GTRS $R\langle E, F \rangle$ and GTRS $R\langle F, E \rangle$
together  simulate the  GTRS $R\langle E \cup F, \emptyset \rangle$. 

Main Case 4:  $\S$ has a symbol of arity at least $2$,
  and GTRS  $ R \langle E, F \rangle$ and GTRS $ R \langle F, E \rangle$  are not total.   
  Then $\tthue {E\cup F}=\tthue E\cup \tthue F$
 if and only if   $E \subseteq \tthue F$ or
 $ F \subseteq \tthue E $. 

%%%%%%%%

%%%%%%%%%%%%%%%%%%%%%%%%%%%%%%%%%%%%%%%%%%%%%%%%%%%%%%%%%

 In Main Case 1, the  overall complexity of our  computation is   $O(n^2)$.
 In each of Main Cases 2, 3, and 4,  the  overall complexity of our  computation is   $O(n\, \mathrm{log}\, n)$.
  Main  Case 2 overlaps with Main Case 1 and Main Case 3, and is a subcase of the union of 
 Main Case 1 and Main Case 3.  
 
 We now describe an  alternative approach for our decision problem.
 We can think of the  GTRSs $ R \langle E, F \rangle$, $ R \langle F, E \rangle$, and $ R \langle E \cup F, \emptyset \rangle$ 
  as the sets of rewrite rules of bottom-up tree automata. Thence we can also apply the decision algorithm of   Champav{\`{e}}re  et al \cite{franciak}. Their 
algorithm decides  inclusion between two tree languages recognized by given bottom-up 
  tree automata. Applying their algorithm  to our decision problem in  each of Main Cases 1, 2, and 3  yields an alternative decision  algorithm, which
    is more sophisticated than ours, and in general takes more time than ours,  it runs in  $O(n^3)$ time.  
   
   For any given 
  GTESs  $E$ and $F$, we can construct the  GTES $E \cup F$ in  $O(n \, \mathrm{log} \, n)$ time, where 
  $n$ is  as above. 
  
The rest of this paper is organized as follows: 
In Section \ref{prel}, we recall some basic definitions and fix the notations. In Section 
\ref{runningpelda},  we start our running examples. 
We consider in each main case an  example of GTESs $E$ and $F$ such that
$\tthue {E\cup F}=\tthue E\cup \tthue F$ and an  example
for GTESs $E$ and $F$ such that
$\tthue E\cup \tthue F\subset \tthue {E\cup F}$.
In Section \ref{gyors}, we recall the  fast ground completion algorithm of 
F\"ul\"op and V\'agv\"olgyi \cite{eatcs/FuloopV91}.
In Section \ref{peldak}, we continue  our running examples.
  In Section \ref{three-way}, we
 define  the  auxiliary dpwpa $AUX[ E;  F] =
(C\langle E\cup F, \emptyset \rangle, A[E; F])$ 
for the GTESs $E$ and $F$.
Then  we compute  $AUX[ E;  F] $ for our running  examples. 
In Section \ref{folytat}, we further continue  our running examples. In Section \ref{epit}, 
we construct the auxiliary  dpwpa  for the GTESs $E$ and $F$.
In Section \ref{preparatory}, we show some preparatory results. In Section \ref{jointly}, we define the notion
that GTRS $R\langle E, F \rangle$ and GTRS $R\langle F, E \rangle$
together  simulate the  GTRS $R\langle E \cup F, \emptyset \rangle$.  In Section \ref{esetek}, we 
define the four main cases of our decision algorithm.  We show that we can  decide which Main Cases hold in   $O(n \, \mathrm{log} \, n )$   time.
 In Sections \ref{tavasz}, \ref{nyar}, \ref{osz}, and \ref{tel}, we study Main Cases 1, 2, 3, and 4, respectively
one by one.
For each main case,  we present our decidability and complexity results, and we continue our relevant running examples.
 In Section \ref{algoritmus},  we  sum up our decidability and complexity results.
In Section \ref{tanulsag} we present our concluding remarks.

 \section{Preliminaries}\label{prel}

In this section, we present a brief review of the notions, notation and
preliminary results used in the paper.  

\paragraph{Algorithm concepts} In algorithmic analysis we use  the RAM (random access machine) model \cite[Section~2.2]{Sanders},
\cite[Section~2.1]{skiena20}. 
 For this model of computation,
each simple operation $(+,\, *,\, –, \, =,\,  if, \, call)$ takes exactly one time step.
An array is a collection of items of same data type stored at contiguous memory locations. 
For the concept of a red-black tree, see {\rm \cite[Section~13]{cormen}}.
\begin{prop}\label{keres} {\rm \cite[Section~13]{cormen}}
The search operation takes $O(\mathrm{log}\, n)$ time on a red-black tree, where $n$ is the  size of the tree.
\end{prop}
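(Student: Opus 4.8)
The plan is to reduce the running-time claim to a bound on the height of the tree, and then bound that height using the red-black structural invariants. First I would note that search in a red-black tree is ordinary binary-search-tree search: we start at the root, compare the search key against the key stored at the current node, and descend into the left or right child accordingly, halting when we find a match or fall off the tree at a leaf. Each step performs a constant number of comparisons and one pointer traversal, so on the RAM model the running time is $O(h)$, where $h$ is the height of the tree. It therefore suffices to prove $h = O(\log n)$.

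The heart of the argument is a lower bound on the number of internal nodes in terms of the \emph{black-height}. I would define the black-height $bh(x)$ of a node $x$ as the number of black nodes on any downward path from $x$ to a leaf, not counting $x$ itself; the invariant that all root-to-leaf paths contain equally many black nodes makes this well-defined. I would then show, by induction on the height of $x$, that the subtree rooted at $x$ contains at least $2^{bh(x)}-1$ internal nodes. The base case is a leaf, where both quantities are zero. For the inductive step, each child of $x$ has black-height at least $bh(x)-1$ and strictly smaller height, so by the induction hypothesis each of the two child-subtrees holds at least $2^{bh(x)-1}-1$ internal nodes; adding $x$ itself gives at least $2\bigl(2^{bh(x)-1}-1\bigr)+1 = 2^{bh(x)}-1$.

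To conclude, I would invoke the invariant that a red node has only black children: along any root-to-leaf path the red nodes are non-adjacent, so at least half of the non-root nodes on the path are black, giving $bh(\mathrm{root}) \geq h/2$. Applying the subtree bound at the root then yields $n \geq 2^{h/2}-1$, hence $h \leq 2\log_2(n+1) = O(\log n)$. Substituting into the $O(h)$ bound for search establishes the proposition. The only real obstacle is setting up the internal-node lower bound correctly; once that induction is in place, both the logarithmic height bound and the final time bound follow immediately.
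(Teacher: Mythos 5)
Your proof is correct and is precisely the standard argument from the cited source (Lemma~13.1 of Cormen et al.): the paper itself offers no proof of Proposition~\ref{keres}, deferring entirely to that reference, and your black-height induction giving $n \geq 2^{bh(\mathrm{root})}-1$ together with $bh(\mathrm{root}) \geq h/2$ is exactly how that lemma is established there. Nothing is missing.
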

\begin{prop}\label{redblack} {\rm \cite[Section~13]{cormen}}
The time-complexity of inserting $n\in \Nat$  elements into an empty  red-black tree, not allowing  duplicate values in the resulting red-black tree, is $O(n \,\mathrm{log}\, n)$. 
\end{prop}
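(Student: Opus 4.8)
The plan is to process the $n$ elements one at a time, inserting each into the red-black tree built so far, and to bound the cost of each individual insertion by $O(\mathrm{log}\; n)$; summing over the $n$ insertions then yields the claimed $O(n\,\mathrm{log}\; n)$ bound.

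First I would recall from \cite[Section~13]{cormen} the two structural facts about red-black trees on which the argument rests: a red-black tree holding $k$ nodes has height $O(\mathrm{log}\; k)$, and a single insertion into such a tree---comprising the downward search for the insertion point, the attachment of the new node, and the subsequent sequence of rotations and recolorings needed to restore the red-black invariants---runs in time proportional to the height, hence in $O(\mathrm{log}\; k)$.

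Next I would address the requirement that no duplicate values occur in the resulting tree. Before inserting the $i$-th element I would run the search operation of Proposition \ref{keres} on the current tree; if the element is already present, I skip the insertion, otherwise I perform it. By Proposition \ref{keres} this preliminary search costs $O(\mathrm{log}\; k)$, where $k$ is the current size of the tree, so guarding against duplicates does not change the asymptotic cost of handling a single element.

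Finally I would assemble the bound. At every stage the tree contains at most $n$ nodes, since it never holds more than the number of elements processed so far; thus for each of the $n$ elements both the duplicate-checking search and the possible insertion cost $O(\mathrm{log}\; n)$, and adding these costs over all $n$ elements gives $O(n\,\mathrm{log}\; n)$. There is no real obstacle here, as the result is classical; the only point demanding a little care is to confirm that the duplicate-detection step is subsumed by the same $O(\mathrm{log}\; n)$ per-element bound, and that the tree size stays bounded by $n$ so that no individual logarithmic factor exceeds $\mathrm{log}\; n$.
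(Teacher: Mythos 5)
Your proposal is correct and follows essentially the same route as the paper: bound each of the $n$ insertions by $O(\mathrm{log}\; n)$ using the standard red-black tree facts and sum, handling duplicates with a logarithmic-time check per element (the paper folds this check into a modified insertion routine rather than a separate preliminary search, which is an immaterial difference).
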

\begin{proof}
The time-complexity of inserting a single element into a red-black tree is $O(\mathrm{log}\, n)$, 
where $n$ is the  size of the tree \cite{cormen}. Consequently it  takes  $O(n \,\mathrm{log}\, n)$ time to insert $n\in \Nat$  elements into an empty  red-black tree.
We modify the insertion routine of the red-black tree to not allow duplicate values. The insertion function takes the same amount of time i.e. $O(\mathrm{log}\, n)$,
 as before,  and  does  not insert a  value if it is already in the tree. 
\end{proof}

\paragraph{Basic notions} 
The set of nonnegative integers is denoted by $\Nat$, and  $\Nat^*$ stands
for the free monoid generated by  $\Nat$ with empty word
 $\varepsilon$ as identity element. The  length of a word $\alpha\in \Nat^*$ is defined as  the number of integers it contains and is 
 denoted by
$\mathrm{length}(\alpha)$.
A word $\alpha\in \Nat^*$ is a prefix of a  word $\beta \in \Nat^*$ if there is  a word $\gamma\in \Nat^*$ 
such that $\alpha\gamma=\beta$. Moreover, if $\gamma \neq \varepsilon$, then  $\alpha$ is a proper prefix of $\beta$. 

  A binary sequence is a string of ones and zeros,  $BIN$ stands for the set of binary sequences. 
For an integer  $\ell \in \Nat$,  we write $BIN_\ell$
to denote the set of binary sequences
of length $\ell$. We define the binary sequence $0^k$, $k\in \Nat$ by recursion on $k$. Let $0^0= \varepsilon$. For each $k\in \Nat$, let
 $0^{k+1}= 0 0^k$. 
Let $0\leq k \leq \ell$ and $\alpha \in BIN_k$. We pad  the binary sequence $\alpha$ with leading zeros to length  $\ell$ by
producing  the binary sequence $0^{\ell-k} \alpha$.

We now recall the definition of the lexicographic order on  $BIN$ \cite [Definition 6.4.1.]{johnsonbaugh}. 
Given two distinct binary strings, to determine whether one precedes the other in the  lexicographic order
we compare the numbers in the words. There are two possibilities:
\begin{enumerate}\rm 
\item The words have different lengths, and each number in the shorter word
is identical to the corresponding number in the longer word.
\item The words have the same or different lengths, and at some
position, the numbers in the words differ. 
\end{enumerate}
If Condition 1 holds, then the shorter word precedes the longer. If Condition 2 holds, we locate the leftmost position $\ell$ at which the
numbers differ. The order of the words is determined by the order of the numbers at position $\ell$. 
\begin{df}[{\cite[Definition 6.4.1.]{johnsonbaugh}}] \rm
Let  $\alpha= i_1i_2 \ldots i_m$ and $\beta=j_1 j_2 \ldots j_n$ be binary strings. We say that $\alpha$ is lexicographically less than $\beta$ and write
$\alpha <_{lex} \beta$ if 

\begin{enumerate}
\item 
$m< n$ and $i_k = j_k $ for every $k=1, 2, \ldots, m$
or 
 \item 
$i_k\neq j_k$ for some $k$  and  for the smallest such $k$, we have $i_k < j_k$. 
\end{enumerate}
\end{df}

\paragraph{Sets and Relations {\rm \cite{baader1999term}}}
For any sets $A$ and $B$, 
$A \subseteq B$ means that $A$ is a subset of $B$. Proper inclusion is denoted by $A \subset B$. The union of two sets $A $ and $B$ is denoted by $A\cup B$. 
 The cardinality of a set $ A$ is  denoted $|A|$.
A relation on a set $A$ is a subset
   ${\rightarrow}$ of $A\times A$ \cite{baader1999term}.
We write    ${b}{\rightarrow} {c}$ for $({b},{c})\in \,{\rightarrow}$.
We denote by    ${\rightarrow}^*$ the reflexive
transitive closure of    ${\rightarrow}$,
by $\leftrightarrow^+$ the  transitive symmetric closure of    ${\rightarrow}$,
 and by $\leftrightarrow^*$ the reflexive
transitive  symmetric closure of    ${\rightarrow}$. 
 Note that $\leftrightarrow^*$ is an
equivalence relation on $A$.

Let    ${\rightarrow}$ be a relation on a set $A$.
An element    ${b}\in A$ is in normal form  (irreducible for     ${\rightarrow}$) if there
 exists no    ${c}\in A$ such that
   ${b}{\rightarrow} {c}$.
 We say that 
    ${c}\in A $ is a normal form of    ${b}\in A$ if    ${b}{\rightarrow}^* {c}$ and    ${c}$ is in normal form.

A relation    ${\rightarrow}$ on a set $A$ is
\begin{itemize}
\item normalizing if every element has a normal form, 
\item  terminating if there exists no infinite sequence of
elements    ${b}_1, {b}_2, {b}_3, \ldots$ in $A$
such that    ${b}_1{\rightarrow} {b}_2{\rightarrow} {b}_3{\rightarrow}\ldots$,
\item confluent if for any elements    ${b}_1, {b}_2, {b}_3$ in $A$,
whenever    ${b}_1 {\rightarrow}^*  {b}_2$ and     ${b}_1 {\rightarrow}^*  {b}_3$, there exists an
element    ${b}_4$ in $A$ such that    ${b}_2{\rightarrow}^* {b}_4 $ and    ${b}_3{\rightarrow}^* {b}_4 $,
\item convergent if it is  terminating and confluent.

\end{itemize}
Any terminating relation is normalizing. Hence we have the following proposition.
 \begin{prop}[{\cite[Lemma 2.1.8]{baader1999term}}]\label{nulladik}
 For any convergent relation     ${\rightarrow}$ on a set $A$, every
    ${b}\in A$ has a unique normal form, called the normal form of    ${b}$. 
\end{prop}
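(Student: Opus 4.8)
The plan is to split the statement into an existence part and a uniqueness part, handling the former with the Noetherian property and the latter with confluence. Recall that a convergent relation is by definition both Noetherian and confluent, so both tools are available.

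For existence, I would fix $a\in A$ and show that some reduction sequence starting at $a$ must terminate in an irreducible element. Suppose, for contradiction, that $a$ has no normal form. If $a$ were irreducible it would be its own normal form, so $a$ must be reducible and there is some $a_1$ with $a\rightarrow a_1$. Since $a\rightarrow^* a_1$, any normal form of $a_1$ would also be a normal form of $a$; as $a$ has none, $a_1$ has none either, and in particular $a_1$ is again reducible. Iterating this reasoning produces an infinite sequence $a\rightarrow a_1\rightarrow a_2\rightarrow\cdots$, which contradicts the assumption that $\rightarrow$ is Noetherian. Hence $a$ admits at least one normal form.

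For uniqueness, I would suppose that $b$ and $c$ are both normal forms of $a$, so that $a\rightarrow^* b$, $a\rightarrow^* c$, and both $b$ and $c$ are irreducible. Applying confluence to $a\rightarrow^* b$ and $a\rightarrow^* c$ yields an element $d$ with $b\rightarrow^* d$ and $c\rightarrow^* d$. The decisive observation is that an irreducible element admits no nontrivial reduction: since $b$ is irreducible and $\rightarrow^*$ is the reflexive transitive closure of $\rightarrow$, the relation $b\rightarrow^* d$ can only be realized by the empty step, forcing $b=d$; the same argument gives $c=d$. Therefore $b=c$, and the normal form of $a$ is unique.

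The argument is entirely routine and presents no serious obstacle. The only point that calls for a little care is the observation used in the uniqueness step, namely that $b\rightarrow^* d$ together with the irreducibility of $b$ forces $b=d$; this is immediate from the definitions of irreducibility and of the reflexive transitive closure, but it is the hinge on which the whole uniqueness argument turns. Since the result is classical, it would also be reasonable simply to invoke the cited reference \cite{baader1999term} rather than reproduce the proof.
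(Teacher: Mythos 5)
Your proof is correct. The paper itself gives no proof of this proposition --- it is stated as a cited classical result from \cite{baader1999term} --- and your argument is exactly the standard one that reference supplies: existence of a normal form from the Noetherian property (an element with no normal form would generate an infinite reduction chain), and uniqueness from confluence combined with the observation that an irreducible element $b$ satisfying $b\rightarrow^* d$ forces $b=d$. Your closing remark that one could simply invoke the reference is precisely what the paper does.
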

 If    ${b}\in A$ has a unique normalform, then we denote the latter
 by    ${b}{\downarrow}$.

Let $\rho$ be an
equivalence relation on $A$.  Then for
every    ${b}\in A$, we denote by    ${b}/_\rho$
the $\rho$ equivalence class containing    ${b}$,
i.e. $ {b}/_\rho=\{\, {c}\mid {b}\rho {c}\,\}$.
For each $B\subseteq A$, let $B/_\rho=\{\, b/_\rho\mid b\in B\, \}$. 

The empty product $\Pi_{i=1} ^ 0 n_i=1$ over $\Nat$ 
is the result of multiplying no factors.
The ceiling function maps a real number $x$ to the least integer greater than or equal to $x$, denoted $\lceil x \rceil $. 

\paragraph{Ground terms  {\rm \cite{baader1999term}}} A  signature   $\S$ is a finite set of symbols in
which every element $\s$ is associated with a non-negative integer, the arity of $\s$ denoted by  $\mathrm{arity}(\s)$. 
For each integer $k\in \Nat$, $\S_k$
denotes the elements of $\S$ which have arity $k$. We lose no generality assuming  that $\S_0\not=\emptyset$, see the Introduction.
 If $\Sigma=\Sigma_0 \cup \Sigma_1$, then $\Sigma$ is a unary  signature. 
$mar_{\Sigma} $  denotes the maximal arity of the symbols in $\Sigma$. 

We denote by $T_\S$ the set of ground terms over $\S$. It is the smallest set
$U$ for which
\begin{itemize}
\item[(i)] $\S_0\subseteq U$ and
\item[(ii)] $\s(t_1,\ldots ,t_m)\in U$ whenever $\s\in \S_m$ with $m\geq 1$ and $t_1,\ldots ,t_m\in U$.
\end{itemize}
Ground terms are also called trees. A set of ground terms is also called a tree language. 
For a ground  term $t\in \ts$, the set $sub(t)$ of subterms
of $t$ is defined by recursion as follows:
\begin{itemize}
\item[(i)] if $t\in \S_0$, then
$sub(t)=\{\,t\,\}$,
\item[(ii)]if $t=\s(t_1,\ldots,t_m)$
for some $m\geq 1$, $\s\in \S_m$, and $t_1,\ldots ,t_m\in\ts$,
then we have $sub(t)=
\bigcup\{\, sub(t_i)\mid 1\leq i\leq m \, \}\cup \{\,t\,\}$.
\end{itemize}
For a finite set $Z  \subseteq \ts$ of ground terms,
\begin{quote} $sub(Z )=
\bigcup\{\, sub(t)\mid t \in Z   \, \}$.
\end{quote} 
For a ground term $t\in \ts$, the height of $t$  is denoted by
$height(t)$ and is defined by recursion:
\begin{itemize}
\item[(i)] if $t\in \S_0$, then
$height(t)=0$, and
\item[(ii)] if $t=\s(\seq t1m)$ with $m\geq 1$ and $\s \in \S_m$, then
$height(t)=max \{\,height(t_i)\mid 1\leq i\leq m\, \}+1$.
\end{itemize}
For a ground term $t\in\ts$, the set of positions $\mathrm{POS}(t)\subseteq \Nat^*$ 
of $t$ is defined by tree induction.
\begin{itemize}
\item If $t\in\S_0,$ then  $\mathrm{POS}(t)=\{\, \varepsilon\, \}$. 
\item If $t=\s (t_1,\ldots, t_m)$ with $\s\in\S_m$, $ m \geq 1$, then
$\mathrm{POS}(t)=\{\, \varepsilon\, \}\cup \{\, i \alpha \mid 1\leq i \leq m, \alpha\in \mathrm{POS}(t_i) \, \}$.
\end{itemize}
The size of  a ground term $t\in\ts$ is  $\mathrm{size}(t)=|POS(t)|$.
For a finite set $Z  \subseteq \ts$ of ground terms, 
$\mathrm{size}(Z )$ is the number of occurrences of symbols of $\S$ in $Z$, i.e., 
$\mathrm{size}(Z )=\sum _{t \in Z} \mathrm{size}(t)$.
For any  ground term $t\in T_\S$ and position $\alpha\in \mathrm{POS}(t)$, we introduce the subterm 
$t|_\alpha\in \ts$ of $t$ at $\alpha$ 
as follows:
\begin{itemize}
\item for $t\in \S_0$, $t|_\varepsilon=t$;

\item for $t=\s(t_1, \ldots, t_m)$ with $m\geq 1$ and $\s\in \S_m$ if 
$\alpha =\varepsilon$ then $t|_\alpha=t$, 
 otherwise if
 $\alpha =i\beta$ with $1\leq i\leq m$, then $t|_\alpha=t_i|_\beta$. 
\end{itemize}

A context $\delta$ over $\S$ can be viewed as an incomplete term, which may contain several empty places  or holes. 
Formally, a context over $\S$  is a ground term containing zero, one or more occurrences of a special constant symbol 
$ \diamondsuit\not \in \Sigma$
denoting holes, i.e., a ground term over the extended signature $\S\cup 
\{\, \diamondsuit\, \}$. If $\delta$ contains exactly $n\in \Nat$ holes, then we say that $\delta$ is an $n$-context.
The set of all  $n$-contexts over $\S$ 
is denoted by $CON_{\S, n}$. 
For example, let $\S=\S_0\cup \S_2$, $\S_0=\{\, \#, \$\, \}$, and $\S_2=\{\, f, g\,  \}$.
Then 
$g(f(\diamondsuit, \#), g(\diamondsuit, \diamondsuit)) \in CON_{\S, 3}$.
Let $\delta$ be an $n$-context over $\S$, and let 
$t_1, \ldots, t_n\in \ts$ be ground  terms. Then  $\delta[t_1, \ldots, t_n]$ denotes the result of replacing the holes of $\delta$ from left to right by $t_1, \ldots, t_n$.
For example, 
\begin{quote}
$g(f(\diamondsuit, \#), f(\diamondsuit, \diamondsuit))[g(\$, \$), \#, \$]=
g(f(g(\$, \$), \#), f(\#, \$))$.
\end{quote}
Let $\delta$ be a $1$-context  over $\S$, and let $\alpha \in POS(\delta)$
be such that 
$t|_\alpha= \diamondsuit$. Then we call $\alpha$ the address of $\diamondsuit$ in $\delta$, and    $addr(\delta)$ stands for $\alpha$.
Let $\delta$ be a  $1$-context over $\S$ such that $\delta\neq \diamondsuit$, and  let    ${d}\in \S_0$ be a constant. 
Then we say that the ground term  $\delta[{d}]$ is   a proper extension  of the constant    ${d}\in \S_0$.

\paragraph{Ground term rewrite systems and equation systems {\rm\cite{jsc/Snyder93}}} Let $\S$ be a  signature, and $R$ be 
a finite subset of  $\ts\times \ts$. 
The size  of  $R$  is denoted by $\mathrm{size}(R)$,  and is the number of
occurrences of symbols of $\S$ in $R$, ie.,   $\mathrm{size}(R)=\sum_{(l, r) \in R} \mathrm{size}(l)+\mathrm{size}(r)$.
 Let
 $\Sigma\langle R\rangle $ denote the signature of  those symbols which occur in  $R$.
 For any ground  terms $s, t\in \ts$, we say that
$s$ rewrites to $t$, and denote this  by $s\red R t$ if
there exist an ordered pair $(l,r)$ in $R$, a context $\delta \in CON_{\Sigma, 1}$ such that
$s=\delta [l]$ and $t=\delta [r]$.
We fully specify such a rewrite step  by
$s \rightarrow _{[\alpha, (l, r)]}t$, where  $\alpha= addr(\delta)$. 
Furthermore,  
$s \leftrightarrow_{[\alpha, (l, r)]}t$  stands for 
($s \rightarrow _{[\alpha, (l, r)]}t$ or $t \rightarrow _{[\alpha, (l, r)]}s$).
When we focus on the relations $\red R$ and $\tred R$, we call $R$  a ground term rewrite system (GTRS for short), and
we call an ordered pair $(l, r)$ a rewrite rule, and denote it as $l\dot{\rightarrow} r$.
Moreover, we say that $l$ is the
left-hand side and $r$ is the right-hand side of the rewrite rule
$l\dot{\rightarrow} r$.
When we focus on the symmetric relations $\thue R$ and $\tthue R$,
we call $R$  a ground term equation system (GTES for short), and
we denote $(l, r)$ as $l \doteq r$, and call it an equation. Moreover, we denote a GTES  as $E$ rather than $R$  \cite{jsc/Snyder93}.
Let $E$  and $F$ be  GTESs over  a   signature   $\Sigma$. Then we define the  GTES $E \cup F$ over $\S$ as the union of the sets $E$ and $F$. 
\begin{sta}\label{erre}
Let $E$  and $F$ be  GTESs over  a   signature   $\Sigma$.
 We can construct  the  GTES $E \cup F$ over $\S$ in $O(n \, \mathrm{log} \, n)$  time.
\end{sta}
\begin{proof}
In   
 $O(n \, \mathrm{log} \, n)$ time we mergesort the  GTES $E $ and the GTES $F$ \cite[Section~2.2]{books/daglib/0029345}. Then in $O(n)$
time,  we concurrently scan through both of the  sorted relations to produce the result. When the concurrent scan of both GTESs reveals the same equation in both relations, only one of the equations is retained. 
\end{proof}

\begin{df} \rm
Let $E$  and $F$ be  GTESs over  a   signature   $\Sigma$.
 Then   let $\Sigma\langle E, F\rangle = \Sigma\langle E \rangle \cup \Sigma\langle F \rangle $. That is,  $\Sigma\langle E, F\rangle $ denotes the signature of  those symbols which occur in  $E$ or $F$.
 The  Boolean value $FLAG\langle E, F\rangle $  indicates  whether 
 $\S=\Sigma\langle E, F \rangle $. \newline
\end{df}
\begin{sta} \label{koszonom}
Let $E$  and $F$ be  GTESs over  a   signature   $\Sigma$, and  $n=\mathrm{size}(E)+\mathrm{size}(F)$. 
We can count the elements of $\S\langle E, F \rangle$  and compute the  Boolean value $FLAG\langle E, F\rangle $ in $O(n)$ time.
\end{sta}

Let $ E$  be a GTES  over $\S$. Then 
\begin{quote}
$ST\langle E \rangle=\{\, t\in \ts\mid t \in sub(u) \mbox{ or } t \in sub(v)  \mbox{ for some  }
u\doteq v \mbox{ in } E\, \}$
\end{quote} is the set of subterms occurring $E$.
The
below claim is immediate. 
\begin{sta}\label{meret}
1. Let $Z \subseteq \ts$ be a finite set  of ground terms over a signature $\S$. Then $|sub(Z)| \leq \mathrm{size}(Z)$. \newline
2. Let $ E$  be a GTES  over a signature $\S$. Then 
 $|ST\langle E \rangle |\leq \mathrm{size}(E)$.
 \end{sta}

Let $\S$ be a signature. 
A $\S$-algebra    ${\cal A} $ consists of a carrier set $A$, and a mapping that associates with each symbol $\s\in \Sigma_k$ a mapping 
$\s^{\cal A} : A^k\rightarrow A$ for every $k\in \Nat$ \cite[Definition 3.2.1]{baader1999term}.
The $\S$-ground-term algebra    ${\cal T}(\Sigma)$ has $\ts$ as carrier set, and interprets the symbols $\s\in \S_k$ for every  $k\in \Nat$ as follows:
\begin{quote}
$\s^{{\cal T}(\Sigma) }: \ts^k  \rightarrow \ts : (t_1, \ldots, t_k) \mapsto \s(t_1, \ldots, t_k)$.
\end{quote} 
Let $ E$  be a GTES  over $\S$.
The congruence relation generated by $E$ on the $\Sigma$-ground term algebra      ${\cal T}(\Sigma)$ is $\tthue E$. It is well known that $\tthue E$ is the least congruence on      ${\cal T}(\Sigma)$
containing $E$.
Let $R$ be a GTRS  over $\S$.
We say that GTRS $R$ is equivalent to the GTES $E$ if $\tthue R=\tthue E$. 
We say that  GTRS $R$  reaches a ground term $s\in \ts$ from a ground term $t \in \ts$ if $t \tred R s$ holds.
We say that  GTRS $R$ is  terminating, (confluent, etc.) if $\red R$ is
 terminating (confluent, etc.). A term $t\in \ts$ is irreducible for $R$ if it is irreducible  for $\red  R$.
If a term $t\in \ts$ has a unique $\red R$ normal form, then the latter is
denoted by $t{\downarrow}_R$.
A GTRS $R$ is reduced if for every rewrite rule $l\dot{\rightarrow} r$ in $R$,
$l$ is  irreducible for $R-\{\, l\dot{\rightarrow} r\,\}$
and $r$ is irreducible for  $R$.
We recall the following important result.
\begin{prop} [{\cite[Theorem 2.14]{jsc/Snyder93}}]\label{mozart} Every
  reduced GTRS $R$ is convergent.
\end{prop}

\begin{df}\label{alfa}\rm Let $E$ and $F$ be GTESs over a
 signature   $\S$, and   $W= ST\langle E \cup F\rangle$.
 Then let 
$\Theta\langle E,F \rangle=\tthue E \cap W\times W$.
\end{df}
It should be clear that
$\Theta\langle E, F \rangle$ is an equivalence relation on $W$. 
For each $t\in W$, we have $ t/_{\Theta\langle E, F \rangle}= t/_{\tthue E}\cap  W\times W$.

\paragraph{Bottom-up tree automata {\rm \cite{Engelfriet75}}, {\rm \cite{gs}}} 
Let $\S$ be a  signature, a bottom-up tree automaton (bta for short)  over 
$\S$  is a quadruple    ${\cal A}=(\S,A,R, A_f)$,
 where  $A$ is a finite set of states of arity $0$, 
$\S \cap A=\emptyset$, 
$A_f(\subseteq A)$
is the set of final states, 
$R$ is a finite set of rewrite rules of the following
form:
\begin{quote}
$ \s({b}_1,\ldots, {b}_k) \dot{\rightarrow}{b}$ with $k\in \Nat$, $\s\in \S_k$,
$\seq {b}1k, {b}\in A$.
\end{quote}
We consider $R$ as a GTRS  over $\S\cup A$.
 The size  of $\cal A$  is denoted by $\mathrm{size}({\cal A})$, and is $|A|+\mathrm{size}(R)$. 
The tree 
language recognized by $\cal A$ is $L({\cal A})=\{\,
t\in \ts\mid (\exists {b}\in A_f) \;t\tred {R} {b}\}$.
 We say that     ${\cal A}=(\S, A,  R, A_f)$ is a deterministic bta if 
for any $\s\in \S_k$, $k \in \Nat$,    ${b}_1, \ldots, {b}_k\in A$, there is
at most  one
rule with left-hand side $\s({b}_1, \ldots, {b}_k)$ in $R$.

\begin{prop}\label{mezo}{\em \cite{tcs/Vagvolgyi93}}
Let $E$ and $F$  be  GTESs over  a   signature   $\Sigma$,  and let 
 $W=ST \langle E\cup F \rangle$.
   For every  $p \in W$,  
   we can construct in  $O(n \, \mathrm{log} \, n )$   time the deterministic bottom-up tree automaton
       ${\cal A}=(\S, C\langle E, F \rangle,  R\langle E, F\rangle,\{\, p/_{\Theta \langle E, F\rangle }\, \} )$, and
        ${\cal A}$ recognizes  the tree language $ p/_{\tthue E}$.
         \end{prop}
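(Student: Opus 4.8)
The plan is to exhibit ${\cal A}$ explicitly and then reduce everything to the two properties of the fast ground completion of F\"ul\"op and V\'agv\"olgyi \cite{eatcs/FuloopV91} that are already available: that $R\langle E, F\rangle$ is reduced, hence convergent by Proposition \ref{mozart}, and that $\tthue E=\tthue{R\langle E, F\rangle}\cap\ts\times\ts$. Throughout I would abbreviate $\Theta\langle E, F\rangle$ to $\Theta$ and $R\langle E, F\rangle$ to $R$. First I would observe that ${\cal A}=(\S, C\langle E, F\rangle, R, \{\,p/_\Theta\,\})$ is a genuine bottom-up tree automaton: the states are the finitely many constants of $C\langle E, F\rangle$, every rule of $R$ has the bta shape $\s(a_1,\ldots,a_m)\dot\rightarrow a$, and $p/_\Theta\in C\langle E, F\rangle$ since $p\in W$. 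Determinism is immediate from reducedness, for if two distinct rules shared a left-hand side $\s(a_1,\ldots,a_m)$, then that left-hand side would be reducible by a rule other than itself, contradicting the reducedness of $R$.

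The core of the argument is a single reduction lemma: for every $q\in W$ one has $q\tred R q/_\Theta$. I would prove this by structural induction on $q$. Writing $q=\s(q_1,\ldots,q_m)$, the immediate subterms $q_1,\ldots,q_m$ again lie in $W$ because $W=ST\langle E\cup F\rangle$ is subterm-closed, so the induction hypothesis gives $q_i\tred R q_i/_\Theta$ and hence $q\tred R \s(q_1/_\Theta,\ldots,q_m/_\Theta)$; I then apply the rule of $R$ whose left-hand side is $\s(q_1/_\Theta,\ldots,q_m/_\Theta)$, arguing from the defining condition of $R$ together with the congruence property of $\tthue E$ that its right-hand side is exactly $q/_\Theta$. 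Since every constant $a\in C\langle E, F\rangle$ is irreducible (no left-hand side is a bare new constant), $q/_\Theta$ is a normal form; in particular, taking $q=p$, the lemma yields $p{\downarrow}_R=p/_\Theta$.

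With the lemma in hand the language computation is routine and uses only convergence. A term $t\in\ts$ satisfies $t\tred R p/_\Theta$ exactly when $t{\downarrow}_R=p/_\Theta=p{\downarrow}_R$, because $p/_\Theta$ is the unique normal form reached, and $t{\downarrow}_R=p{\downarrow}_R$ is equivalent to $t\tthue R p$. As $t,p\in\ts$, invoking $\tthue E=\tthue R\cap\ts\times\ts$ turns this into $t\tthue E p$, i.e. $t\in p/_{\tthue E}$; hence $L({\cal A})=p/_{\tthue E}$. For the running time I would note that $C\langle E, F\rangle$ and $R$ are produced by the fast ground completion in $O(n\,\mathrm{log}\,n)$ time, and the only $p$-dependent work is locating the class $p/_\Theta$ among the vertices of the subterm dag through the already-computed congruence closure $\rho\langle E, F\rangle$, which stays within the same bound.

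The step I expect to be the main obstacle is the inductive case of the reduction lemma, where I must certify that the rule with left-hand side $\s(q_1/_\Theta,\ldots,q_m/_\Theta)$ both belongs to $R$ and has right-hand side precisely $q/_\Theta$. This forces a careful reading of the defining condition $\{\,\s(t_1,\ldots,t_m)\mid t_1\in a_1,\ldots,t_m\in a_m\,\}\subseteq a$ against the fact that such extensions need not themselves lie in $W$: the delicate point is that the target class must be pinned down through the representatives that do land in $W$, using congruence to reconcile the remaining extensions, a bookkeeping that ultimately rests on the correctness of the F\"ul\"op--V\'agv\"olgyi construction.
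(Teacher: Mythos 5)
The paper states this proposition as a citation from \cite{tcs/Vagvolgyi93} and gives no proof of its own, but your argument is exactly the assembly of the machinery the paper already records in Section \ref{gyors}: your ``reduction lemma'' is Proposition \ref{masodik}, your normal-form and language computation reproduces Propositions \ref{beszur}, \ref{rhcp}, \ref{hatodik} (equivalently \ref{szazadik}), and determinism from reducedness matches the remark after Definition \ref{szabaly}. The proof is correct and takes essentially the intended approach.
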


\begin{prop}\label{erdo}{\em 
    \cite{franciak}}
  Given two btas      ${\cal A}=(  \S, A, R_{\cal A}, A_f)$ and
     ${\cal B}=( \S, B, R_{\cal B}, B_f)$,
  we can decide whether $L({\cal A})\subseteq L({\cal B})$ in $O (\mathrm{size}({\cal A})  \cdot \mathrm{size}({\cal B}) )$ time.
\end{prop}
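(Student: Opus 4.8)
The plan is to reduce the inclusion test to an emptiness test through the classical equivalence that $L({\cal A})\subseteq L({\cal B})$ holds if and only if $L({\cal A})\cap(\ts\setminus L({\cal B}))=\emptyset$. Everything hinges on representing the complement $\ts\setminus L({\cal B})$ cheaply, and the only way to do so within the stated bound is to exploit that ${\cal B}$ is deterministic; this is the case for all the btas arising in this paper, since by Proposition \ref{mezo} they are built from the reduced, hence deterministic, rule sets $R\langle E,F\rangle$ (no two rules share a left-hand side). For a deterministic ${\cal B}$ we may bypass the exponential determinization needed in the general case: conceptually we complete ${\cal B}$ with a single sink state $\bot$ absorbing every missing transition, obtaining a complete deterministic bta ${\cal B}_\bot$ with $L({\cal B}_\bot)=L({\cal B})$, and then complement it by declaring $(B\cup\{\,\bot\,\})\setminus B_f$ final. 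Crucially this completion must stay implicit, because a complete deterministic bta can have as many as $(|B|+1)^{mar_\S}$ transitions.

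First I would form the product bta ${\cal C}$ over $\S$ whose states are the pairs $(a,b)\in A\times(B\cup\{\,\bot\,\})$, equipped with a transition $\s((a_1,b_1),\ldots,(a_m,b_m))\dot{\rightarrow}(a,b)$ whenever $\s(a_1,\ldots,a_m)\dot{\rightarrow} a$ lies in $R_{\cal A}$ and $\s(b_1,\ldots,b_m)\dot{\rightarrow} b$ is the unique transition of ${\cal B}_\bot$, and I would declare $(a,b)$ final exactly when $a\in A_f$ and $b\notin B_f$. By construction $L({\cal C})=L({\cal A})\cap(\ts\setminus L({\cal B}))$, so the whole question becomes whether $L({\cal C})=\emptyset$, equivalently whether some final state $(a,b)$ of ${\cal C}$ is accessible, i.e. is reached by a common ground term.

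The key step is to decide this accessibility without ever materializing ${\cal C}$ or completing ${\cal B}$, since there are at most $|A|\cdot(|B|+1)$ accessible pairs but potentially exponentially many transitions. I would compute the accessible pairs by a bottom-up saturation analogous to the linear-time emptiness test for a single bta: a pair $(a,b)$ with $b\in B$ is fired once some ${\cal A}$-rule $\s(a_1,\ldots,a_m)\dot{\rightarrow} a$ and some genuine ${\cal B}$-rule $\s(b_1,\ldots,b_m)\dot{\rightarrow} b$ have all their argument pairs already accessible, and ranging over such matching pairs of rules costs only $O(size({\cal A})\cdot size({\cal B}))$ overall, which is within budget. The hard part, and exactly what the algorithm of \cite{franciak} is designed to handle, is detecting the accessible sink pairs $(a,\bot)$: such a pair witnesses a term on which ${\cal A}$ reaches $a$ while ${\cal B}$ has no matching transition, which is an absence condition that cannot be read off any single rule and must be discovered by an amortized bookkeeping that again charges only $O(size({\cal A})\cdot size({\cal B}))$ work. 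Once the accessible pairs are known, I would answer ``$L({\cal A})\subseteq L({\cal B})$'' precisely when no accessible final pair $(a,b)$ with $a\in A_f$ and $b\notin B_f$ has been produced.
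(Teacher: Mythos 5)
The paper never proves this proposition: it is imported verbatim from \cite{franciak}, so the only ``proof'' to compare against is the citation itself. Your sketch follows the route of that reference --- reduce inclusion to emptiness of $L({\cal A})\cap(\ts\setminus L({\cal B}))$, complement ${\cal B}$ implicitly through a sink state $\bot$, and saturate the accessible pairs of a product automaton that is never materialized --- so you have reconstructed the intended argument rather than found a different one.

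Two points deserve emphasis. First, your restriction to deterministic ${\cal B}$ is not an optional convenience but a necessary correction to the statement: for unrestricted btas, language inclusion is EXPTIME-complete (it subsumes universality), so no $O(size({\cal A})\cdot size({\cal B}))$ algorithm can exist for the proposition as literally written. The hypothesis that ${\cal B}$ is deterministic must be added, and your observation that every automaton the paper feeds into this proposition (via Proposition \ref{mezo}) is deterministic is what keeps the applications sound. Second, your proposal does not actually close the one step that carries all the difficulty. For the genuine pairs $(a,b)$ with $b\in B$ your cost accounting is fine, since one ranges over pairs of rules. But for the sink pairs you must detect, for an ${\cal A}$-rule $\s(a_1,\ldots,a_m)\dot{\rightarrow}a$ whose argument states already admit accessible partners, whether \emph{some} tuple $(b_1,\ldots,b_m)$ of those partners has \emph{no} matching rule in $R_{\cal B}$; naively this means examining up to $|B|^m$ tuples, and asserting that ``an amortized bookkeeping'' stays within $O(size({\cal A})\cdot size({\cal B}))$ is precisely the content of the theorem being cited, not a proof of it. As written, the proposal is a correct reduction plus an appeal to \cite{franciak} for the decisive step, which puts it on the same footing as the paper's own treatment but does not make it self-contained.
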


\paragraph{Directed pseudographs without parallel arcs \rm{\cite{bang}}}
A directed pseudograph is a type of directed graph that allows for both parallel arcs  (i.e. pairs of directed arcs  with the same tail and the same head) and loops 
(i.e, directed arcs whose head and tail coincide) \cite[top of page 4]{bang}. 
A directed pseudograph without parallel arcs (dpwpa  for short) $G$ is an ordered pair $(V, A)$, 
where $V$ is a ﬁnite set, and $A\subseteq V \times V $
is a  relation on $V$. The set $V$ is called the vertex set of $G$, and its elements
are called vertices.  The set $A$ is called the directed arc  set of $G$, and its
elements are called directed arcs. 
A walk in $G$  is an alternating sequence $w=x_1\,  (x_1, x_2) \,  x_2 \,  (x_2, x_3)\,   x_3 \,  \ldots\,   x_{k-1}\,   (x_{k-1}, x_k) \,    x_k$, $k\geq 1$, 
of vertices $ x_i$ and arcs $(x_j, x_{j+1}) $. 
We say that  $w$ is from vertex $x_1$ to vertex $x_k$. The length of $w$ is the number $k-1$ of its arcs   \cite[page 12]{bang}.
A single vertex $x_1$ is considered as a walk from $x_1$ to $x_1$ of length $0$.
 A trail is a walk $w$ in which all arcs are distinct.
If the vertices of a  trail $w$  are distinct, then $w$ is a path. 
If the vertices
$x_1 , x_2 , \ldots ,  x_{k-1} $ of a  trail $w$ are distinct, $k \geq 2$, and $x_1 = x_k$, then $w$ is a cycle.
We say that a vertex $x$ is reachable  from a vertex $y$ if there exists a walk  $w$  from vertex $y$ to vertex $x$.
By definition, every vertex is reachable from itself. We say that a  vertex $x$ is positive-step reachable  from a vertex $y$ 
 if there exists a walk  $w$ of positive length  from vertex $y$ to vertex $x$.
 Observe that for any  vertices $x$ and $y$, the vertex $x$ is positive-step reachable  from the vertex $y$ if and only if there exists a 
 path of positive length or a cycle from the vertex $x$ to the vertex $y$.

The adjacency-list representation of a dpwpa  $G=(V, A)$ is an array  $Adj$ of $|V |$ lists, one list for each vertex in $V $ \cite{cormen}. For each $x\in V$, 
the adjacency list $Adj(x)$ contains all the vertices $y$ such that there is a directed arc $(x, y) \in A$.  That is, $Adj(x)$ 
consists of all the vertices adjacent to $x$ in $G$.  In pseudocode we treat the list $Adj$ as an attribute of the graph.  Let us assume that we are given
the set
$V$ and the set of arcs.  
It is well known that there exists an algorithm that builds the adjacency-list representation of the dpwpa $G$  with a time complexity of  $O(V)+O(A)$.

 \paragraph{Congruence closure on subterm dags \rm{\cite{jsc/Snyder93}}} Let $\Sigma$  be a signature. A subterm dag over $\S$   is a directed, labeled, and acyclic multigraph $G = ( V, A, \lambda)$, where $V$ is the set of vertices,
  $A \subseteq   V\times V $ is the multiset of directed arcs and
  $\lambda: V \rightarrow \S$
  is the labeling function such that if $outdegree(v)=m$
  for some vertex $v \in V$, 
then $\lambda(v)\in \S_m$. The multiset of  directed arcs leaving a vertex is ordered. We denote by $v/i$ the $i$th
successor of a vertex $v$.
The vertices of $G$ can be interpreted as ground terms over $\S$ by the function $\widehat{\; }: V \rightarrow W$ which is defined by recursion:
for a vertex $v \in V$, we have $\widehat{v}= \lambda(v)$ if $outdegree
(v)=0$ and
we have $\widehat{v}= \s(\widehat{v/1}, \ldots, \widehat{v/m})$
 if $outdegree (v)=m$
 for some $m > 0$ and $\lambda(v)=\s$. Let $E$  be  a GTES over  $\Sigma$.
       We say that  $G$
 is a subterm dag for a GTES  $E$  if \hspace{1 mm}  $\widehat{\; }$\hspace{1 mm}   is a bijection
 between $ V$ and $ST \langle E \rangle$. 
    Let $E$  and $F$ be  GTESs over  $\Sigma$.
  We say that  $G$
 is a subterm dag for the  GTES  $E$ and GTES $F$  if \hspace{1 mm}  $\widehat{\; }$\hspace{1 mm}   is a bijection
 between $ V$ and $ST \langle E \rangle \cup ST \langle F \rangle $.

 \begin{df}\rm
Let $E$ and $F$  be  GTESs over  a   signature   $\Sigma$,  $G = ( V, A, \lambda)$   be  a subterm dag for  $E$ and $F$. Then let
 $\tau\langle E, F\rangle=\{\, (u, v)\in V\times V \mid (\hat{u}, \hat{v}) \in E\, \}$
and  let $\rho\langle E, F\rangle$ be the congruence  closure of  $\tau\langle E, F\rangle$.
\end{df} 

We now cite 
Proposition 3.4 in \cite{jsc/Snyder93}.
\begin{prop}[{\cite[Proposition 3.4]{jsc/Snyder93}}]\label{haho}
Let $E$  be  a GTES over  a   signature   $\Sigma$, and   $n=\mathrm{size}(E)$.
 We can construct 
 in $O(n)$ time a subterm dag for the GTES $E$ and 
the relation
$\tau \langle E\rangle $.
\end{prop}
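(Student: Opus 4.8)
The plan is to build the subterm dag in two phases: parse the equations of $E$ into an initial forest, and then merge nodes that denote the same term (hash-consing) so that the interpretation map $\widehat{\;}$ becomes a bijection onto $ST(E)$. First I would scan the symbol occurrences of $E$ once and produce, for each side of each equation $u\doteq v$, its tree representation; since each parse costs time linear in the size of the term, this phase runs in $O(n)$ time and yields a forest with at most $n$ nodes, one per symbol occurrence, together with a pointer to the root of each of the two sides of every equation.

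The core of the argument is the merging phase, which must produce the vertex set $V$, the ordered arc multiset $A$, and the labeling $\lambda$ so that $\widehat{\;}\colon V\to ST(E)$ is a bijection. I would first compute the height of every node bottom-up in $O(n)$ time, and then process the nodes level by level in order of increasing height. When the current level is reached, every child of a node at that level has strictly smaller height and has therefore already been assigned a canonical vertex identifier; I attach to each node the signature consisting of its label followed by the canonical identifiers of its ordered children. Two nodes denote the same subterm exactly when their signatures coincide, so I dedupe the level by grouping equal signatures and allocating one fresh canonical vertex per group, recording its label and the arcs to its children.

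The hard part will be carrying out this duplicate elimination within the claimed $O(n)$ \emph{worst-case} bound, rather than the $O(n\log n)$ that comparison-based sorting gives or the merely expected bound that plain hashing gives. Here I would exploit that the signature $\Sigma$ is fixed, so both $|\Sigma|$ and the maximal arity $mar_\Sigma$ are constants: after grouping the nodes of a level first by label (which fixes the arity, hence the signature length), each label-group can be deduped by radix sort built from a constant number of counting-sort passes over the child identifiers, which are bounded integers. Each such pass runs in time linear in the number of fields it touches, and the total number of signature fields across all levels equals the number of nodes plus the number of arcs, both $O(n)$; hence the whole merging phase runs in $O(n)$ time. The resulting $(V,A,\lambda)$ is a subterm dag for $E$ by construction, since every subterm of every term in $E$ is represented by exactly one canonical vertex.

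Finally I would read off $\tau\langle E\rangle=\{\,(u,v)\in V\times V\mid(\hat u,\hat v)\in E\,\}$. During the merging phase each top-level term of each equation received a canonical vertex, so for every equation $s\doteq t$ in $E$ I simply add the pair $(u_s,v_t)$ of root vertices to $\tau\langle E\rangle$. Each equation contributes one pair in $O(1)$ time and there are at most $n$ equations, so this last phase costs $O(n)$, giving the claimed $O(n)$ bound overall.
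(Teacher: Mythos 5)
The paper does not actually prove this proposition: it is quoted verbatim as Proposition~3.4 of Snyder's paper \cite{jsc/Snyder93}, so there is no in-text proof to compare yours against. Your construction (parse into a forest, canonicalize bottom-up level by level using signatures $(\lambda(v),\,\mathrm{id}(v/1),\dots,\mathrm{id}(v/m))$, then read off $\tau\langle E\rangle$ from the canonical roots) is sound as a combinatorial argument: equal ground terms have equal heights, so deduplicating within each height level with children already canonicalized does make $\widehat{\;}$ a bijection onto $ST(E)$, and the arc multiset and labels come out right.

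The one step that does not hold as written is the complexity claim for the per-level deduplication. You dedupe each level by a constant number of counting-sort passes over child identifiers, and assert that ``each such pass runs in time linear in the number of fields it touches.'' Standard counting sort over keys drawn from a range of size $K$ on $m$ items costs $\Theta(K+m)$, and here the canonical identifiers range over $[1,n]$, so $K=\Theta(n)$. A term of height $\Theta(n)$ (e.g.\ over a unary signature) has $\Theta(n)$ levels, and paying $\Theta(n)$ per level to initialize and scan the bucket array gives $\Theta(n^2)$ overall, not $O(n)$. The missing idea is the standard sparse-bucket refinement (as in the Aho--Hopcroft--Ullman tree-canonization technique): keep one persistent bucket array of size $n$ across all levels, and during each pass record the list of buckets actually occupied so that only those are visited and reset afterwards; then each pass genuinely costs $O(m_{\mathrm{level}})$ and the total is $O(n)$. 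With that one sentence added, your argument establishes the claimed $O(n)$ bound; without it, the stated cost of the merging phase is unjustified and the naive implementation fails the bound.
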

By Proposition \ref{haho}, we have the following result. 
\begin{sta}\label{szolo}
Let $E$  and $F$ be  GTESs over  a   signature   $\Sigma$, and  $n=\mathrm{size}(E)+\mathrm{size}(F)$. We can construct 
 in $O(n)$ time a subterm dag for the GTESs $E$ and $ F$. 
\end{sta}

\begin{df}\rm 
Let $G=(V, A, \lambda)$ be a subterm dag over $\Sigma$. An equivalence relation $\rho$ on $V$ is a congruence on $V$ if for each ordered pair of vertices 
$u, v \in V$, the conditions $\lambda(u)= \lambda(v)$  and $u/i\rho v/i$ for each $1\leq i \leq outdegree(v)$  imply that $u\rho v$. 
 Given a relation $\tau$  on $V$, the smallest congruence on $V$ containing
 $\tau$  is called the congruence closure of $\tau$.
 \end{df}
 Let $\Sigma$ be a signature,  and  $ G=( V, A, \lambda)$ be a subterm dag.
      Then   for any relation  $\tau$ on $V$,   there exists a unique congruence closure of $\tau$ on $V$,
 and algorithms for calculating congruence closures
are given in \cite{jacm/NelsonO80} and \cite{jacm/DowneyST80}.
 The latter can be implemented with a worst-case time of
$O(|A|\, \mathrm{log}\, |A|)$;
 the former runs in $O(|A|^2)$ time but seems to be faster in practice. We now apply  the congruence closure  algorithm of  \cite{jacm/DowneyST80}.
 Assume that it performs  the weighted union operation \cite{journals/jacm/Tarjan75} on equivalence classes.  Then the congruence closure algorithm 
  keeps track the cardinality of each equivalence class during its run, and can output  the cardinality of all  $\rho$-classes. We rename the congruence classes 
  produced by the congruence closure algorithm from $1$ to the number of the congruence classes in $O(|V|)$ time. 
  
\begin{prop}\label{lezaras}  {\rm \cite{jacm/DowneyST80}} 
Let $E$ be a GTES over $\Sigma$. 
 There exists an algorithm which, given a subterm dag  $ G=( V, A, \lambda)$
 for $E$, and a  relation  $\tau$ on $V$, produces the congruence closure
 $\rho$  of $\tau$ on $V$ in
 $O(n \, \mathrm{log} \, n)$ time,
 where $n=\mathrm{size}(E)$. The algorithm gives each  $\rho$-class a unique name in the set $\{\, 1, \ldots,|V/\rho|\,  \}$, and also outputs
 an array of the cardinalities  of all $\rho$-classes.
  \end{prop}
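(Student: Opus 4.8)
The plan is to obtain the statement as an application of the congruence-closure algorithm of Downey, Sethi, and Tarjan \cite{jacm/DowneyST80}, whose worst-case running time on a subterm dag $G=(V,A,\lambda)$ is $O(|A|\,\mathrm{log}\,|A|)$, together with a careful reduction of all the relevant parameters to $n=size(E)$. First I would bound the size of the dag. Since $G$ is a subterm dag for $E$, the interpretation map $\widehat{\;}$ is a bijection between $V$ and $ST(E)$, so $|V|=|ST(E)|$; by Statement \ref{meret} this gives $|V|\le size(E)=n$. For the directed arcs, each arc of $G$ records an immediate-subterm relation $v\mapsto v/i$, so $|A|=\sum_{v\in V}outdegree(v)=\sum_{v\in V}\mathrm{arity}(\lambda(v))$. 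Because each such arc corresponds to a distinct non-root position of the shared representation of $ST(E)$, this sum is bounded by the number of symbol occurrences, whence $|A|\le n$ and $O(|A|\,\mathrm{log}\,|A|)=O(n\,\mathrm{log}\,n)$.

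Second, I would run the algorithm of \cite{jacm/DowneyST80} on the pair $(G,\tau)$. By the definition of the congruence closure on a subterm dag given above, its output is precisely the smallest congruence $\rho$ on $V$ containing $\tau$; the cited complexity bound together with the parameter bounds of the first step yields running time $O(n\,\mathrm{log}\,n)$.

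Third, I would account for the two bookkeeping outputs required by the statement. For the cardinalities, I would assume that the underlying union--find structure performs the weighted (union-by-size) merge of \cite{journals/jacm/Tarjan75}: at each merge the size of the new class is the sum of the sizes of the two merged classes, so the algorithm maintains the cardinality of every current class at no extra asymptotic cost and, at termination, outputs the array of cardinalities of all $\rho$-classes. For the naming, a single scan over the at most $|V|$ class representatives assigns to the classes pairwise distinct names in $\{\,1,\ldots,|V/\rho|\,\}$ in $O(|V|)=O(n)$ time. Summing the three contributions gives the claimed $O(n\,\mathrm{log}\,n)$ bound.

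Since the heavy lifting is delegated to \cite{jacm/DowneyST80}, there is no deep obstacle here; the points requiring care are the identification of ``congruence closure on $G$'' with the least congruence containing $\tau$, the bound $|A|=O(n)$ rather than the naive $O(mar_\S\cdot n)$, and the verification that switching to weighted union preserves the stated time bound while also supplying the class cardinalities. These are the places where I would concentrate the attention of the proof.
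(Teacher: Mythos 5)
Your proposal is correct and follows essentially the same route as the paper, which does not give a formal proof but justifies the proposition in the surrounding text by citing the $O(|A|\,\mathrm{log}\,|A|)$ bound of Downey--Sethi--Tarjan, assuming the weighted union operation of \cite{journals/jacm/Tarjan75} so that class cardinalities are maintained for free, and renaming the classes $1,\ldots,|V/\rho|$ by a final $O(|V|)$ scan. Your explicit bounds $|V|\le n$ and $|A|\le n$ are left implicit in the paper but are exactly the right way to convert the dag-size bound into the stated $O(n\,\mathrm{log}\,n)$.
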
 
    Let $E $ be a  GTES over $\Sigma$, and $G=(V, A, \lambda)$
 be a subterm dag for  $E$, and    $\rho$  be a  congruence on $V$.
 We shall use a function $\mathrm{FIND}_\rho (x)$ that returns a unique name of
the congruence class of $\rho$ containing $ x$ for each $x \in V$.
A possible implementation of
such a function was described in \cite{journals/jacm/Tarjan75}.
%%
%%%%%%%%%%%%%%%%%%%

\section{Running Examples for the Main Cases}\label{runningpelda}

Along the proof of our main result, 
we shall distinguish four main cases. 
We define these  four main cases, and then  we present eight  running examples to illustrate our concepts and results. In each main case, a running  example presents  GTESs $E$ and $F$
such that   $\tthue {E\cup F} \subseteq \tthue E\cup \tthue F$ and a running  example presents  $E$ and $F$ such that    $\tthue {E\cup F} \not  \subseteq \tthue E\cup \tthue F$.
  From now on throughout the paper, $E$ and $F$ are GTESs over a  signature   $\S$, and  $W$ stands for
$ST\langle E \cup F\rangle$.

\begin{itemize} 
\item [Main Case 1:] $\S$ is a unary  signature.  Example \ref{1oneegy} and its sequels,   Examples \ref{1oneegyfolytatas},  \ref{1oneegymasodikfolytatas},   \ref{1oneegyharmadikfolytatas}. and  \ref{1oneegynegyedikfolytatas}, and \ref{1oneegyotodikfolytatas}, also  Example \ref{2szepen}  and its sequels,  \ref{2szepenfolytatas},   \ref{2szepenmasodikfolytatas},  \ref{2szepenharmadikfolytatas},  \ref{2szepennegyedikfolytatas}, and   \ref{2szepenotodikfolytatas} 
    illustrate this case.
\item  [Main Case 2:] Both GTRS $R\langle E, F\rangle$ and GTRS $R\langle F, E\rangle$ are  total. 
Example \ref{3szilva} and its sequels 
 Examples \ref{3szilvafolytatas},  \ref{3szilvamasodikfolytatas}, 
 further
 Example  \ref{4u2staring} and its sequels, Examples \ref{4u2staringfolytatas},  \ref{4u2staringmasodikfolytatas}, 
 exemplify this case.
\item  [Main Case 3:] 
   $\S_k\neq \emptyset$ for some $k\geq 2$, and  at least  one of $R\langle E, F\rangle$ 
and   $R\langle F, E\rangle$ is  total. Example \ref{5uborka} and its sequels, Examples \ref{5uborkafolytatas},  \ref{5uborkamasodikfolytatas}, \ref{5uborkaharmadikfolytatas},  \ref{5uborkanegyedikfolytatas},  and \ref{5uborkaotodikfolytatas}, moreover 
 Example \ref{6korte}, and its sequels,  Examples \ref{6kortefolytatas}, \ref{6kortemasodikfolytatas}, \ref{6korteharmadikfolytatas}, 
 \ref{6kortenegyedikfolytatas}, and 
 \ref{6korteotodikfolytatas} present this case.  

\item   [Main Case 4:]  $\S$ has a symbol of arity at least $2$, 
  and GTRS  $ R \langle E, F \rangle$ and GTRS $ R \langle F, E \rangle$  are not total.   
  Example
\ref{7twoketto} and its  sequel Example   \ref{7twokettofolytatas},  furthermore, Example 
 \ref{8threeharom} and its sequels Example \ref{8threeharomfolytatas} and Example \ref{8threeharommasodikfolytatas}
demonstrate  this case.
 \end{itemize}
 Main  Case 2 overlaps with Main Case 1 and Main Case 3, and is a subcase of the union of 
 Main Case 1 and Main Case 3.
 
For each running example,
we  compute the sets $ST\langle E \rangle$,  $ST\langle F \rangle $, $W$ and the  relations $\tthue E$, $\tthue F$, $\tthue {E \cup F}$,   $\Theta\langle E, F \rangle$, $\Theta\langle F, E \rangle$,
$\Theta\langle E \cup  F , \emptyset \rangle$, and state whether 
  $\tthue {E\cup F} \subseteq \tthue E\cup \tthue F$ or  $\tthue {E\cup F} \not  \subseteq \tthue E\cup \tthue F$.

%%%%%%%%%%%%%%%%

Recall that 
$E$ and $F$ are GTESs over a
 signature   $\S$, and  $W$ stands for
$ST\langle E \cup F\rangle$.
\begin{exa}\label{1oneegy}\rm 
Let 
$\S=\S_0\cup \S_1$, $\S_0=\{\, \#,\; \$\, \}$,
$\S_1=\{\, f, \;  g \, \}$,
$E=\{\,  f(\#)\doteq g(\#) \, \}$, and 
$F=\{\, f(\$)\doteq g(\$) \, \}$.
Observe the following equations:
\begin{quote}
 $\tthue E = \{\,(\delta(s), \delta (t))\mid \delta \in CON_{\Sigma, 1}, s, t \in \{\, f(\#), \;  g(\#) \, \}\, \}$,
\newline
$\tthue F = \{\,(\delta(s), \delta (t))\mid \delta \in CON_{\Sigma, 1}, s, t \in \{\, f(\$), \;  g(\$) \, \}\, \}$, 
\newline
 $\tthue {E\cup F} = \tthue E \cup \tthue F$, 
 \newline
 $ST\langle E \rangle=\{\, \#, \;   f(\#), \; g(\#) \, \}   $, \, 
 $ST\langle F \rangle=\{\, \$, \;   f(\$), \; g(\$) \, \}   $, and
$W=\{\, \#, \;\$, \;  f(\#), \;   f(\$),
  \; g(\#), \;   g(\$) \, \}   $.
\newline
The equivalence classes of  $\Theta\langle E, F \rangle$ are
     $\{\, \#\,\}$, $\{\,   \$\, \}$,  $\{\, f(\#), \;g(\#) \, \}$,
 $\{\, f(\$)\,\}$, $\{\,  g(\$) \, \}$.
 \newline   The equivalence classes of  $\Theta\langle F, E \rangle$ are
   $\{\, \#\,\}$, $\{\,   \$\, \}$, 
    $\{\, f(\#)\,\}$, $\{\,  g(\#) \, \}$, 
    $\{\, f(\$), \;g(\$) \, \}$. \newline
  The equivalence classes of  $\Theta\langle E \cup  F, \emptyset \rangle$ are
    $\{\, \#\,\}$, $\{\,   \$\, \}$, 
    $\{\, f(\#), \;  g(\#) \, \}$, 
    $\{\, f(\$), \;  g(\$) \, \}$.
    \end{quote}
 We note that 
for each $t\in W $, 
\begin{quote} 
$ t/_{\Theta\langle {E\cup F}, \emptyset\rangle}=
  t/_{\Theta\langle E, F \rangle}  $ or 
  $ t/_{\Theta\langle {E\cup F}, \emptyset \rangle}=  t/_{\Theta\langle {F}, E \rangle}$.
  \end{quote}
 That is,  each $\Theta\langle E\cup F, \emptyset \rangle$ equivalence class is equal to a 
 $\Theta\langle E, F \rangle$ equivalence class or a $\Theta\langle F, E \rangle$  equivalence class.

\end{exa}

\begin{exa}\label{2szepen}\rm 
Let 
$\S=\S_0\cup \S_1$, $\S_0=\{\, \#,\; \$\, \}$,
$\S_1=\{\, f\, \}$,
$E=\{\,  f(f(\#))\doteq \# \, \}$, and 
$F=\{\, f(f(f(\#)))\doteq \#    \, \}$.

For every $n\in \Nat$, we define $f^n(\#)$ as follows:
\begin{itemize}

\item[(i)] $f^0(\#)=\#$ and 

\item[(ii)] $f^n(\#)=f(f^{n-1}(\#))$ for $n\geq 1$.
\end{itemize}
Observe the following equations:
\begin{quote}
 $\tthue E = \{\,(f^m(\#), f^n(\#))\mid m, n\in \Nat,\; m-n \mbox{ is divisible by } 2\,  \}$,
\newline
$\tthue F =  \{\,(f^m(\#), f^n(\#))\mid m, n\in \Nat, \;   m-n  \mbox{ is   divisible by }  3\,  \}$,
\newline
 $\tthue {E\cup F} = \ts \times \ts  $, 
 \newline
 $\tthue E\cup \tthue F\subset \tthue {E\cup F} $,\newline
 $ST\langle E \rangle=\{\, \#, \;   f(\#), \; f^2(\#)  \, \}   $,\;
 $ST\langle F \rangle= \{\, \#, \;   f(\#), \; f^2(\#), \;f^3(\#) \, \}   $,  \ and
 $W=ST\langle F \rangle$.
 \newline
The equivalence classes of  $\Theta\langle E, F \rangle$ are
     $\{\, \#, \;  f^2(\#)\}$, $\{\, f(\#), \; f^3(\#)\, \}$.
  \newline
  The equivalence classes of  $\Theta\langle F, E \rangle$ are
    $\{\, \#, \;  f^3(\#)\}$, $\{\, f(\#)\, \}$, \; $\{\, f^2(\#)\, \}$.
\newline
  The only equivalence class of  $\Theta\langle E \cup  F, \emptyset  \rangle$ is 
    $\{\, \#, \; f(\#), \;  f^2(\#), \;f^3(\#)\, \}$.
     \end{quote}
  We note that 
  \begin{quote} 
for each $t\in W $, 
$\left( t/_{\Theta\langle E, F \rangle }  \subset t/_{\Theta\langle {E\cup F},  \emptyset  \rangle}
 \mbox{ and } t/_{\Theta\langle {F}, E \rangle} \subset 
   t/_{\Theta\langle {E\cup F}, \emptyset  \rangle}\right)$.
  \end{quote}
\end{exa}

\begin{exa}\label{3szilva} \rm  

Let 
$\S=\S_0\cup \S_2$, $\S_0=\{\, \#,\,  \$, \;\mathsterling \, \}$, 
$\S_2=\{\, f \, \}$. Before 
 defining  the  GTESs $E$ and $F$, we 
present statements 1 -- 4  as intuitive explanations.

    1. $\tthue E$ has four congruence classes:
 \begin{itemize}
 \item The singleton $\{\, \#\, \}$. Because there is no equation in $E$ with left-hand side or right-hand side $\#$. We consider  $\#$ as the representative
of this class. 

 \item The set of all ground terms which are different from the constant $\#$, and contain only  the symbols $f$ and $\#$. 
Applying  the first three equations,  $E$ rewrites each ground term $t$  of height $2$ in the above set to $f(\#, \#)$. 
Hence  for  each ground term $t$ in this set, we have  $t \tthue E f(\#, \#)$. We consider  $f(\#, \#)$  as the representative
of this class. 

\item  The set of all ground terms containing only the symbols $f$ and $\$$.  Applying the fourth equation of $E$ 
 for  each such ground term $t$ in this set, we have  $t \tthue E \$$.  We consider  $ \$$ as  the representative
of this class. 

\item  The set of all ground terms  which contain the constant $\mathsterling$ or at least two different constants. 
Let   $s, t \in  \{\, \#, \,f(\#, \#),  \$, \mathsterling\, \}$. If   the term 
 $f(s,t)$  contains the constant $\mathsterling$ or at least two different constants, then $f(s, t)$ is the left-hand side of one  of the  5th -- 15th equations of $E$.
 Hence
 for  each ground term $p$ which contains the constant $\mathsterling$ or at least two different constants, we have  $p \tthue E \mathsterling$. 
  \end{itemize}

2. We obtain GTES $F$ from GTES $E$ by swapping the constants $\#$ and $\$$. Intuitively we can think of  $F$ as a \enquote{mirror image} of $E$. 

3.  $\tthue F$ has four congruence classes:
 \begin{itemize}
 \item the singleton $\{\, \$\, \}$,
 \item the set of all ground terms which are different from the constant  $\$$, and contain only  the symbols $f$ and $\$$, 
 \item the set of all ground terms containing only the symbols $f$ and $\#$, and 
\item  the set of all ground terms which contain  the constant $\mathsterling$ or  at least two different constants.
  \end{itemize}

4. $\tthue {E\cup F}$ has three congruence classes:
 \begin{itemize}
 \item  the set of all ground terms which  contain only the symbols $f$ and $\$$, 
 \item the set of all ground terms containing only the symbols $f$ and $\#$, and 
 \item the set of all ground terms which contain  the constant $\mathsterling$ or  at least two different constants.
  \end{itemize}

Let 
\begin{quote}
$E=\{\, f(\#, f(\#, \#)) \doteq f(\# , \#), \;  f(f(\#, \#), \#) \doteq f(\# , \#), \; f(f(\#,\#), f(\#, \#)) \doteq f(\# , \#), \;\newline
  f(\$, \$)\doteq \$, \;\newline
 f(\#, \$)\doteq \mathsterling, \; f(\#, \mathsterling)\doteq \mathsterling, \; \newline
 f(\$, \#)\doteq \mathsterling, \;  f(\$, \mathsterling)\doteq \mathsterling, \; \newline
  f(\mathsterling, \#)\doteq \mathsterling, \;   f(\mathsterling, \$)\doteq \mathsterling, \; f(\mathsterling, \mathsterling)\doteq \mathsterling,
    \newline
 f(\$, f(\#, \#))\doteq \mathsterling, \;   f(\mathsterling, f(\#, \#))\doteq \mathsterling, \; \newline
   f(f(\#, \#), \$)\doteq \mathsterling, \; f(f(\#, \#), \mathsterling)\doteq \mathsterling
    \, \}$,
    \end{quote}
   \begin{quote}
$F=
\{\, f(\$, f(\$, \$)) \doteq f(\$ , \$), \;  f(f(\$, \$), \$) \doteq f(\$ , \$), \; f(f(\$,\$), f(\$, \$)) \doteq f(\$ , \$), \;\newline
f(\#, \#)\doteq \#, \; \newline
 f(\$, \#)\doteq \mathsterling, \; f(\$, \mathsterling)\doteq \mathsterling, \; \newline
   f(\#, \$)\doteq \mathsterling, \; f(\#, \mathsterling)\doteq \mathsterling, \; \newline
  f(\mathsterling, \#)\doteq \mathsterling, \;   f(\mathsterling, \$)\doteq \mathsterling, \; f(\mathsterling, \mathsterling)\doteq \mathsterling
    \newline
 f(\#, f(\$, \$))\doteq \mathsterling, \;   f(\mathsterling, f(\$, \$))\doteq \mathsterling, \; \newline
   f(f(\$, \$), \#)\doteq \mathsterling, \; f(f(\$, \$), \mathsterling)\doteq \mathsterling
    \, \}$.
\end{quote} 
    Note the following equations:
\begin{itemize}

  \item 
 $ST\langle E \rangle= \{\, \#, \;\$, \;\mathsterling, \;\newline
 f(\#, \#), \; f(\#, \$), \; f(\#, \mathsterling), \; \newline
 f(\$, \#), \;   f(\$, \$), \; f(\$, \mathsterling), \; \newline
  f(\mathsterling, \#), \;   f(\mathsterling, \$), \; f(\mathsterling, \mathsterling), \;\newline
   f(\#, f(\#, \#)) , \;  f(f(\#, \#), \#) , \; \newline 
   f(f(\#,\#), f(\#, \#)), \;
  f(\$, f(\#, \#)), \;   f(\mathsterling, f(\#, \#)), \; \newline
   f(f(\#, \#), \$), \; f(f(\#, \#), \mathsterling)
   \, \}$, 
\item
 $ST\langle F \rangle= \{\, \$, \;\#, \;\mathsterling, \;\newline
 f(\$, \$), \; f(\$, \#), \; f(\$, \mathsterling), \; \newline
 f(\#, \$), \;   f(\#, \#), \; f(\#, \mathsterling), \; \newline
  f(\mathsterling, \$), \;   f(\mathsterling, \#), \; f(\mathsterling, \mathsterling), \;\newline
     f(\$, f(\$, \$)) , \;  f(f(\$, \$), \$) , \; 
   f(f(\$,\$), f(\$, \$)), \;\newline 
    f(\#, f(\$, \$)), \;   f(\mathsterling, f(\$, \$)), \; \newline 
   f(f(\$, \$), \#), \; f(f(\$, \$), \mathsterling)
   \, \}$,

\item
 $W=
 \{\, \#, \;\$, \;\mathsterling, \;\newline
 f(\#, \#), \; f(\#, \$), \; f(\#, \mathsterling), \; \newline
 f(\$, \#), \;   f(\$, \$), \; f(\$, \mathsterling), \; \newline
  f(\mathsterling, \#), \;   f(\mathsterling, \$), \; f(\mathsterling, \mathsterling), \;\newline
    f(\#, f(\#, \#)) , \;  f(f(\#, \#), \#) , \; \newline 
   f(f(\#,\#), f(\#, \#)), \;
  f(\$, f(\$, \$)) , \;  \newline
    f(f(\$, \$), \$) , \; 
   f(f(\$,\$), f(\$, \$)), \;\newline 
     f(\$, f(\#, \#)), \;   f(\mathsterling, f(\#, \#)), \;
   f(f(\#, \#), \$), \; f(f(\#, \#), \mathsterling), \;\newline
  f(\#, f(\$, \$)), \;   f(\mathsterling, f(\$, \$)), \; 
   f(f(\$, \$), \#), \; f(f(\$, \$), \mathsterling)
   \, \}$. 
\end{itemize}

\begin{quote}
$\tthue E=  \{\, (\#, \#)\, \} \cup (T_{\{\, \#, f\, \} }  \setminus \{\, \#\, \})  \times (T_{\{\, \#, f\, \} }  \setminus \{\, \#\, \}) \cup
(T_{\{\, \$, f\, \} } \times T_{\{\, \$, f\, \} })\cup  \newline
\left(T_\S\setminus  (T_{\{\, \#, f\, \}} \cup    T_{\{\, \$, f\, \} })\right)\times \left(T_\S\setminus  (T_{\{\, \#, f\, \}} )\cup    T_{\{\, \$, f\, \} }) \right)$.

  $\tthue E$ has the congruence  classes  $ \{\, \#\, \}$, \, $T_{\{\, \#, f\, \} }  \setminus \{\, \#\, \}$, \,
   $T_{\{\, \$, f\, \} }$, \,  and 
  $T_\S\setminus  (T_{\{\, \#, f\, \}} \cup T_{\{\, \$, f\, \} })$.
  
Then
\begin{itemize}
\item  $T_{\{\, \#, f\, \} } $ is the set of all ground terms containing only  the symbols $f$ and $\#$. 
\item  $T_{\{\, \$, f\, \} } $ is the set of all ground terms containing only  the symbols $f$ and $\$$. 
\item
 $T_{\{\, \#, f\, \}} \cup    T_{\{\, \$, f\, \} }$  is the set of all ground terms which contain only  the symbols $f$ and $\#$
or only the symbols $f$ and $\$$.
\item  $T_\S\setminus  (T_{\{\, \#, f\, \}} \cup    T_{\{\, \$, f\, \} })$ is the set of all ground terms which contain 
   the constant $\mathsterling$ or  at least two different constants.
 \end{itemize}

    $\tthue F=(T_{\{\, \#, f\, \} })\times T_{\{\, \#, f\, \} })\cup 
\left( T_{\{\, \$, f\, \} }  \setminus \{\, \$\, \}\right)\times \left( T_{\{\, \$, f\, \} }  \setminus \{\, \$\, \}\right) \cup \newline
  \left(T_\S\setminus  (T_{\{\, \#, f\, \}} \cup    T_{\{\, \$, f\, \} }\right)\times 
  \left(T_\S\setminus  (T_{\{\, \#, f\, \}} \cup    T_{\{\, \$, f\, \} }))\right)
  \cup
     \{\, (\$, \$)\, \} $.

  $\tthue F$ has the congruence classes  $T_{\{\, \#, f\, \} } $, \, $\{\, \$\, \}$,  \,
   $T_{\{\, \$, f\, \} } \setminus \{\, \$\, \}$,\, and 
  $T_\S\setminus  (T_{\{\, \#, f\, \}} \cup    T_{\{\, \$, f\, \} })$.

  $\tthue {E \cup  F}= 
  (T_{\{\, \#, f\, \} })\times T_{\{\, \#, f\, \} }) \cup 
  (T_{\{\, \$, f\, \} }\times T_{\{\, \$, f\, \} }) \cup \newline 
   \left(T_\S\setminus  (T_{\{\, \#, f\, \}} \cup    T_{\{\, \$, f\, \} })\right)\times 
 \left( (T_\S\setminus  (T_{\{\, \#, f\, \}} \cup    T_{\{\, \$, f\, \} })\right)
   $.

  $\tthue {E \cup  F}$ has the congruence classes 
 $T_{\{\, \#, f\, \} }$, \,   $T_{\{\, \$, f\, \} }$, \, and $T_\S\setminus  (T_{\{\, \#, f\, \}} \cup    T_{\{\, \$, f\, \} })$.
  
 \end{quote}

Observe that $E \not \subseteq \tthue F$ and $F \not\subseteq \tthue E$ and $\tthue E \cup \tthue F = \tthue {E \cup F}$.

%%%

  The equivalence classes of  $\Theta\langle E, F \rangle$ are
  \begin{quote}
  $   \{\, \#\, \},  \;  \newline
 \{\, \$, \;  f(\$, \$) , \;
 f(f(\$,\$),\$), \; f(\$,f(\$,\$)),\; f(f(\$,\$),f(\$,\$)) \, \}, 
\newline 
  \{\,f(\#, \#), \;   f(\#, f(\#, \#)), \;  f(f(\#, \#), \#), \; f(f(\#,\#), f(\#, \#))   \, \},  
   \newline\;\{\,\mathsterling, \;  f(\#, \$), \; f(\#, \mathsterling), \;   f(\$, \#), \; 
  f(\$, \mathsterling), \; 
  f(\mathsterling, \#), \;   f(\mathsterling, \$), \; f(\mathsterling, \mathsterling), \; \newline
   f(\$, f(\#, \#)), \;   f(\mathsterling, f(\#, \#)), \;
   f(f(\#, \#), \$), \; f(f(\#, \#), \mathsterling), \;\newline
  f(\#, f(\$, \$)), \;   f(\mathsterling, f(\$, \$)), \; 
   f(f(\$, \$), \#), \; f(f(\$, \$), \mathsterling)
  \, \}$.
   \end{quote}
The equivalence classes of  $\Theta\langle  F, E \rangle$  are
 \begin{quote}
 $ \{\, \#, \;   f(\#, \#), \; f(f(\#,\#),\#), \; f(\#,f(\#,\#)), \; f(f(\#,\#),f(\#,\#)) \, \}, \;\newline  
 \{\, \$ \,\}, \newline \{\, f(\$, \$), \; f(\$, f(\$, \$)), \;  f(f(\$, \$), \$), \; f(f(\$,\$), f(\$, \$)) \}, \;\newline
  \{\, \mathsterling, \; 
 f(\#, \$), \; f(\#, \mathsterling), \;     f(\$, \#), \; f(\$, \mathsterling), \; 
   f(\mathsterling, \#), \; f(\mathsterling, \$), \;  f(\mathsterling, \mathsterling), \; \newline
   f(\$, f(\#, \#)), \;   f(\mathsterling, f(\#, \#)), \;
   f(f(\#, \#), \$), \; f(f(\#, \#), \mathsterling), \;\newline
  f(\#, f(\$, \$)), \;   f(\mathsterling, f(\$, \$)), \; 
   f(f(\$, \$), \#), \; f(f(\$, \$), \mathsterling)
    \, \}$.
    \end{quote}
   The only equivalence class  of  $\Theta\langle E\cup F, \emptyset \rangle$ is $W$.

\end{exa}

\begin{exa}\label{4u2staring}\rm We now define the  GTESs $E$ and $F$.
  Intuitively,  applying the first two rules, GTES $E$ rewrites each occurrence  of $\$$ and  $\mathsterling$ to $\#$ in every ground term $t\in \ts$. Then applying the last three rules, GTES $E$ rewrites
 the ground term $t\in \ts$ with $height(t)\geq 1$ to $\#$.    GTES $F$ rewrites each occurrence of
  $\flat$  to $\mathsterling$ in every ground term. In this way, it  rewrites each ground term $t\in \ts$ with $height(t)= 1$ to $\mathsterling$. Hence GTES $F$ rewrites each ground term $t\in \ts$ with $height(t)\geq 1$ to $\mathsterling$.

 Let 
$\S=\S_0\cup \S_2$, $\S_0=\{\, \#,\;  \$, \;\mathsterling, \; \flat \, \}$,
$\S_2=\{\, f \, \}$, 
\begin{quote}
  $E=\{\, \#\doteq \$, \;\#\doteq \mathsterling,\;  \#\doteq f(\#, \#), \;     \#\doteq f(\#, \flat), \;   
 \#\doteq f(\flat, \#)
  \, \}$, 
\end{quote}
and
\begin{quote}
  $F=\{\, \mathsterling\doteq \flat, 
  \newline
  \mathsterling\doteq f(\#, \#), \;
  \mathsterling\doteq f(\#, \$), \;
 \mathsterling\doteq f(\#, \mathsterling), \;
     \newline
 \mathsterling\doteq f(\$, \#), \;
   \mathsterling\doteq f(\$, \$), \;
  \mathsterling\doteq f(\$, \mathsterling), \;
     \newline
  \mathsterling\doteq f(\mathsterling, \#), \;
  \mathsterling\doteq f(\mathsterling, \$), \;
   \mathsterling\doteq f(\mathsterling, \mathsterling)
      \, \}$.
\end{quote}
We note that  for each ground term $t\in \ts$ with $height(t)\geq 1$,
 \begin{quote}
$t\tthue E  \#$ and $t\tthue F \mathsterling$.
 \end{quote}
Therefore, 
\begin{quote}
$\tthue E=
\{\,  (\flat, \flat)   \, \}
\cup
((\ts\setminus \{\, \flat \, \})
\times (\ts\setminus \{\, \flat \, \}))$, 

$\tthue F =\{\, 
(\#, \#), \; (\$, \$) \, \} \cup
((\ts\setminus \{\, \#, \;   \$ \, \})
\times( \ts\setminus \{\, \#,  \; \$     \, \}))$, and \newline 
    $\tthue {E \cup F}=\ts \times \ts $.
  \end{quote}
  Note that 
 $(\#, \flat)\not\in \tthue E\cup \tthue F$. Consequently, 
 \begin{quote}
$\tthue E\cup \tthue F\subset \tthue {E\cup F}$.
\end{quote}
We notice  the following equations:
\begin{quote}
    $W=\{\, \#, \; \$, \;\mathsterling, \; \flat, \;
    \newline
     f(\#, \#), \;
    f(\#, \$), \;
    f(\#, \mathsterling), \;
  f(\#, \flat), \;
 \newline
  f(\$, \#), \;
  f(\$, \$), \;
   f(\$, \mathsterling), \;
  f(\$, \flat), \;
    \newline
   f(\mathsterling, \#), \;
    f(\mathsterling, \$), \;
  f(\mathsterling, \mathsterling), \;
    f(\mathsterling, \flat), \;
     \newline
 f(\flat, \#), \;
 f(\flat, \$), \;
 f(\flat, \mathsterling), \;
 f(\flat, \flat)
 \, \}$.
   \end{quote}
  The equivalence classes of  $\Theta\langle E, F \rangle$ are
  \begin{quote}
    $\{\, \#, \;  \$, \;  \mathsterling, \;\newline
    \, f(\#, \#), \;   
  f(\#, \$), \;
 f(\#, \mathsterling), \;
f(\#, \flat), \; \newline
 f(\$, \#), \;   
  f(\$, \$), \;
 f(\$, \mathsterling), \;
 f(\$, \flat), \;
  \newline
 f(\mathsterling, \#), \;   
  f( \mathsterling, \$), \;
  f(\mathsterling, \flat),
   f(\mathsterling, \mathsterling), \;
\newline
 f(\flat, \#), \;   
  f(\flat, \$), \;
 f(\flat, \mathsterling), \;
 f(\flat, \flat)\, \},\;\newline
 \{\, \flat\, \}$.
   \end{quote}
   The equivalence classes of  $\Theta\langle F, E \rangle$ are
    \begin{quote}
    $\{\,  \mathsterling, \; \flat, \; \newline 
    \, f(\#, \#), \;   
  f(\#, \$), \;
 f(\#, \mathsterling), \;
f(\#, \flat), \;
\newline
 f(\$, \#), \;   
  f(\$, \$), \;
 f(\$, \mathsterling), \;
 f(\$, \flat), \;
  \newline
 f(\mathsterling, \#), \;   
  f( \mathsterling, \$), \;
  f(\mathsterling, \flat),
   f(\mathsterling, \mathsterling), \;
\newline
 f(\flat, \#), \;   
  f(\flat, \$), \;
 f(\flat, \mathsterling), \;
 f(\flat, \flat)\, \},\newline
 \{\, \#\, \}, \{\, \$\, \}$.
\end{quote}
  The only equivalence class of  $\Theta\langle E\cup F, \emptyset \rangle$ is $W$. 
    We note that 
    \begin{quote} 
$\#/_{\Theta\langle E, F \rangle} \subset \#/_{\Theta\langle {E\cup F},  \emptyset \rangle}$  
 and $ \#/_{\Theta\langle {F}, E \rangle}
  \subset  \#/_{\Theta\langle {E\cup F},  \emptyset \rangle}$.
\end{quote}
\end{exa}

\begin{exa}\label{5uborka}\rm 
Let 
$\S=\S_0\cup \S_2$, $\S_0=\{\, \#, \;\$  \, \}$,
$\S_2=\{\, f \, \}$,
\begin{quote}
$E=\{\, f(\#, \#) \doteq \$, \; f(\#, \$) \doteq \$, \;
 f(\$, \#) \doteq \$, \; f(\$, \$) \doteq \$  \, \}$ 
and 
 $F=\{\, f(\#, \#) \doteq \#   \, \}$.
\end{quote}
Note the following equations:
  \begin{quote}
 $ST\langle E \rangle=
    \{\,  \#, \; \$, \;  f(\#, \#), \; 
    f(\#, \$), \; f(\$, \#), \;f(\$, \$)\, \}$,\;
 $ST\langle F \rangle=\{\, \#, \; f(\#, \#)  \, \}$,\;
 $W=ST\langle E \rangle$.
\end{quote}
  \begin{quote}
  The equivalence classes of  $\Theta\langle E, F \rangle$ are
$    \{\,  \# \,\}$,  
   $\{\, \$, \; f(\#, \#), \; f(\#, \$), \; f(\$, \#), \;   f(\$, \$) \, \}$.
 \newline
The equivalence classes of  $\Theta\langle F, E \rangle$ are
$\{\, \#, \;  f(\#, \#)\,\}, \; \{\, \$\, \}, \;
     \{\, f(\#, \$)\, \}, \; \{\, f(\$, \#)\, \}, \; \{\,  f(\$, \$)\, \}$.
        \newline The only  equivalence class of  $\Theta\langle E\cup  F, \emptyset \rangle$  is
 $    \{\,  \#, \;  \$, \; f(\#, \#), \;f(\#, \$), \; f(\$, \#), \;   f(\$, \$) \, \}$.
  \end{quote}
Observe that $E \not \subseteq \tthue F$ and $F \not\subseteq \tthue E$. Since $(\#, \$)\not \in \tthue E \cup \tthue F$, and 
 $(\#, \$)\in\tthue {E \cup F}$, we have $\tthue E \cup \tthue F \subset \tthue {E \cup F}$.
\end{exa}

\begin{exa}\label{6korte} \rm  
 Let 
$\S=\S_0\cup \S_1\cup \S_2$, $\S_0=\{\, \#, \;\$  \, \}$, $\S_1=\{\, g  \, \}$,
$\S_2=\{\, f \, \}$.  We now define the  GTESs $E$ and $F$. Intuitively,  applying the first five rules, GTES $E$ rewrites each ground term $t$ -- where  $height(t) \geq 1$ and $t$  contains only the symbols $f$, $g$, and $\#$ --  to $g(\#)$. 
Aplying the last four rules,  GTES $E$ rewrites each ground term $t$ containing an occurrence of $\$$ to $\$$. 
Let 
\begin{quote}
$E=\{\, f(\#, \#) \doteq g(\#), \;  f(\#, g(\#)) \doteq g(\#), \;
 f(g(\#), \#) \doteq g(\#), \; f(g(\#), g(\#)) \doteq g(\#), \; g(g(\#)) \doteq g(\#), \; \newline 
 f(\#, \$)\doteq \$, \; 
 f(\$, \#)\doteq \$, \; 
  f(\$, \$)\doteq \$, \;
 g(\$)\doteq \$ \, \}$ and 
 
$F=\{\, f(\#, \#) \doteq \#, \;g(\#)\doteq \# \,   \}$.
\end{quote}
Intuitively, $F$ can simulate the first five equations of $E$, that is, the congruence generated by the first five equations  of $E$ is a proper subset of $\tthue F$.
We need the constant $\$$ and the  last four equations of $E$  to ensure that 
$\tthue E \not \subseteq \tthue F$. 
Note the following equations:
\begin{itemize}
  \item 
 $ST\langle E \rangle=
   \{\,  \#, \;  \$, \;  f(\#, \#), \;
  f(\#, \$), \; f(\$, \#), \;   f(\$, \$), \;g(\#), \;g(\$), \;
  f(\#, g(\#)), \; f(g(\#), \#), \; f(g(\#), g(\#)), \;  
    g(g(\#)) \, \}$,

\item
 $ST\langle F \rangle= \{\,  \#, \;f(\#, \#), \;g(\#)\,\}$,

\item
 $W=ST\langle E \rangle$.
\end{itemize}
  The equivalence classes of  $\Theta\langle E, F \rangle$ are
  \begin{quote}
$  \{\, \#\, \},   \{\,    f(\#, \#), \;   f(\#, g(\#)), \;
 f(g(\#), \#), \; f(g(\#), g(\#)), \; g(\#), \;g(g(\#))  \, \}, \; \newline
  \{\, \$, \;  f(\#, \$), \; f(\$, \#), \;f(\$, \$), \;g(\$), \; g(g(\$)) \, \}$,
 \end{quote}
 The equivalence classes of  $\Theta\langle  F, E \rangle$  are
 \begin{quote}
 $    \{\,  \#, \; f(\#, \#), \;   f(\#, g(\#)), \;
 f(g(\#), \#), \; f(g(\#), g(\#)), \; g(\#), \;g(g(\#))  \, \}$, \newline
  $  \{\, \$ \, \}, \; \{\,  f(\#, \$)\, \}, \; \{\, f(\$, \#)\, \}, \;\{\,  f(\$, \$)\, \}, \;\{\, g(\$)\, \}, \; \{\, g(g(\$)) \, \}$.

  \end{quote}
   The equivalence classes of  $\Theta\langle E\cup F, \emptyset \rangle$ are
  \begin{quote}
$  \{\, \#, \;  f(\#, \#), \;   f(\#, g(\#)), \;
 f(g(\#), \#), \; f(g(\#), g(\#)), \; g(\#), \;g(g(\#))  \, \}, \; \newline
  \{\, \$, \;  f(\#, \$), \; f(\$, \#), \;f(\$, \$), \;g(\$), \; g(g(\$)) \, \}$.
 \end{quote}
  Observe that $E \not \subseteq \tthue F$ and $F \not\subseteq \tthue E$.
  \begin{quote}
  $\$/_{\tthue E}= \$/_{\tthue {E\cup F}}=\{\, t\in \ts \mid \mbox{ there exists } \alpha \in POS(t) \mbox{ such that } t|_\alpha=\$ \, \}$.\newline 
  $\#/_{\tthue F}= \#/_{\tthue {E\cup F}}=\{\, t\in \ts \mid \mbox{ there does not exist } \alpha \in POS(t) \mbox{ such that } t|_\alpha=\$ \, \}$.\newline
    Accordingly    $\tthue {E \cup F} \subseteq \tthue E \cup \tthue F$.
\end{quote}
\end{exa}

\begin{exa}\label{7twoketto}\rm 
Let 
$\S=\S_0\cup \S_2$, $\S_0=\{\, \#,\; \$\, \}$,
$\S_2=\{\, f\, \}$,
\begin{quote}
  $E=\{\,f(\#, \#)\doteq f(\#, \$), \; f(\#, \#)\doteq f(\$, \#), \;
  f(\#, \#)\doteq f(\$, \$) \, \}$ and
$F=\{\,\#\doteq \$  \, \}$.
\end{quote}
Observe the following equations and inclusion:
\begin{quote}
 $ST\langle E \rangle=\{\, \#, \; \$, \;
  f(\#, \#), \; f(\#, \$), \;f(\$, \#), \;  f(\$, \$)  \, \}$,\;
$ST\langle F \rangle=  \{\, \#,\$ \, \}$, \;
 $W=ST\langle E \rangle$,\;
 $E \subseteq \tthue F$, \;
 $\tthue {E\cup F}=\tthue F$,\;
 $\tthue {E \cup F} \subseteq \tthue E \cup \tthue F$.

%99999999999999999
\end{quote}
Since $E \subseteq \tthue F$, each  equivalence class of  $\Theta\langle E, F \rangle$ is a subset of some equivalence class of  $\Theta\langle F, E \rangle$.

\begin{quote}
The equivalence classes of  $\Theta\langle E, F \rangle$ are
$\{\, \#\, \}, \;\{\, \$ \, \}, \; 
  \{\, f(\#, \#), \; f(\#, \$), \;  f(\$, \#), \;
    f(\$, \$)  \, \}$.
\newline
    The equivalence classes of  $\Theta\langle F, E\rangle$ are
$\{\, \#, \; \$ \, \}, \; 
  \{\, f(\#, \#), \; f(\#, \$), \;  f(\$, \#), \;
    f(\$, \$)  \, \}$.
\newline 
The equivalence classes of  $\Theta\langle E\cup F, \emptyset \rangle$ are
$\{\, \#, \; \$ \, \}, \; 
  \{\, f(\#, \#), \; f(\#, \$), \;  f(\$, \#), \;
    f(\$, \$)  \, \}$.
\end{quote}
\end{exa}

\begin{exa}\label{8threeharom}\rm 
Let 
$\S=\S_0\cup \S_2$, $\S_0=\{\, \#, \;\flat, \;\$,\; \mathsterling \, \}$,
$\S_2=\{\, f \, \}$,
\begin{quote}
$E=\{\,  \#\doteq \$  \, \}$, 
and 
  $F=\{\,   \mathsterling \doteq \flat  \, \}$.
\end{quote}
We remark  that
$f(\#, \mathsterling) \tthue {E\cup F} f(\$, \flat)$,\;
$(f(\#, \mathsterling), f(\$, \flat))\not\in  \tthue E  $, \; and 
$(f(\#, \mathsterling), f(\$, \flat))\not\in  \tthue F  $.
Thence
\begin{quote}
$\tthue E\cup \tthue F\subset \tthue {E\cup F}$.
\end{quote}
Observe the following equations:
\begin{quote}
  $ST\langle E \rangle=\{\, \#, \; \$ \, \}$, \;
 $ST\langle F \rangle=\{\, \mathsterling, \; \flat \, \}$, \;
 $W =\{\, \#, \; \$, \;\mathsterling, \; \flat \, \}$.
\end{quote}
  \begin{quote}
  The equivalence classes of  $\Theta\langle E, F \rangle$ are
 $    \{\, \#, \; \$ \, \}$,
    $\{\, \mathsterling\,\}$, $\{\,   \flat\, \}$.
       \newline 
The equivalence classes of  $\Theta\langle F, E \rangle$ are
  $\{\, \#\,\}$, $\{\,  \$ \, \}$,
    $\{\, \mathsterling, \; \flat\, \}$.
      \newline
The equivalence classes of  $\Theta\langle E\cup  F, \emptyset  \rangle$ are
  $\{\, \#, \; \$ \, \}$,
    $\{\, \mathsterling, \; \flat\, \}$.
\end{quote} 
Notice  that $E \not \subseteq \tthue F$ and $F \not\subseteq \tthue E$.
\end{exa}

\section{Fast Ground Completion Introducing Additional Constants}\label{gyors}

From now on throughout the paper, let $\Sigma$ be a signature,  $E$ and $F$ be  GTESs over $\S$, 
$W=ST \langle E\cup F \rangle$, 
and   $n= \mathrm{size}(E)+\mathrm{size}(F)$.  
We recall  the fast ground completion  algorithm of F\"ul\"op and V\'agv\"olgyi \cite[Algorithm 3.6.]{eatcs/FuloopV91}. Its input is $\S$, $E$, and
$F $.  Their fast ground completion  algorithm \cite[Algorithm 3.6.]{eatcs/FuloopV91} produces in 
$O(n \,\mathrm{log}\, n)$ time a finite set of  new  constants and 
a reduced GTRS  $ R \langle E, F \rangle  $
over $\S$ extended with the new constants
such  that $\tthue E=\tthue {R\langle E, F  \rangle } \cap \ts \times \ts $.
 Their algorithm calls a
 congruence closure algorithm  on  the  subterm dag for $E\cup F$ \cite{jacm/DowneyST80}, and runs in 
$O(n \, \mathrm{log}\, n)$ time by Proposition \ref{lezaras}. 

We now cite the  Claim on \cite[p.~167]{Kozen77}.
  \begin{prop}[{\cite [Claim on page 167]{Kozen77}}] \label{kozen}
     $\Theta\langle E, F \rangle= \{\, (\widehat{x}, \widehat{y}) \mid (x, y)\in \rho
\langle E, F  \rangle\, \}$.
  \end{prop}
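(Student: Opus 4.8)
The plan is to prove the two inclusions between $\Theta\langle E,F\rangle$ and the relation $\Phi=\{\,(\widehat{x},\widehat{y})\mid (x,y)\in\rho\langle E,F\rangle\,\}$ on $W$ separately. Both are relations on $W$, since $\widehat{\;}$ is a bijection onto $W=ST(E\cup F)$ and $\Theta\langle E,F\rangle\subseteq W\times W$. The heart of the matter is the inclusion $\Theta\langle E,F\rangle\subseteq\Phi$, because a chain witnessing $\widehat{x}\tthue E\widehat{y}$ may pass through intermediate terms lying outside $W$, so one cannot simply induct on its length.

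For the easy inclusion $\Phi\subseteq\Theta\langle E,F\rangle$, I would introduce the relation $\Theta'=\{\,(u,v)\in V\times V\mid \widehat{u}\tthue E\widehat{v}\,\}$ on the vertex set of the subterm dag $G=(V,A,\lambda)$ and check that $\Theta'$ is a congruence on $V$ in the subterm-dag sense and contains $\tau\langle E,F\rangle$. It is an equivalence relation because $\tthue E$ is one and $\widehat{\;}$ is a function; it contains $\tau\langle E,F\rangle$ because $(u,v)\in\tau\langle E,F\rangle$ means $(\widehat u,\widehat v)\in E\subseteq\,\tthue E$; and it respects the dag structure because, whenever $\lambda(u)=\lambda(v)=\sigma$ and $\widehat{u/i}\tthue E\widehat{v/i}$ for every successor index $i$, the fact that $\tthue E$ is a congruence on ${\cal T}(\Sigma)$ forces $\sigma(\widehat{u/1},\ldots)\tthue E\sigma(\widehat{v/1},\ldots)$, i.e. $\widehat u\tthue E\widehat v$. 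Since $\rho\langle E,F\rangle$ is by definition the \emph{smallest} congruence on $V$ containing $\tau\langle E,F\rangle$, minimality gives $\rho\langle E,F\rangle\subseteq\Theta'$, which is exactly $\Phi\subseteq\Theta\langle E,F\rangle$.

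For the reverse inclusion I would exhibit a congruence on the whole ground term algebra that extends $\Phi$ and contains $E$, and then invoke the minimality of $\tthue E$. Concretely, using that distinct vertices $u,v$ with $\lambda(u)=\lambda(v)$ and $u/i\,\rho\,v/i$ for all $i$ satisfy $u\,\rho\,v$ (the defining property of the congruence closure), I would define a deterministic bottom-up evaluation $\mathrm{eval}\colon\ts\to (V/\rho)\cup\{\ast\}$: a constant, or a term $\sigma(c_1,\ldots,c_m)$ built from classes $c_1,\ldots,c_m$, is sent to the class $v/_\rho$ of any vertex $v$ with $\lambda(v)=\sigma$ and $(v/i)/_\rho=c_i$ for all $i$, and to the sink $\ast$ when no such vertex exists or some $c_i=\ast$. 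Well-definedness of this transition is precisely where the congruence-closure property is used. A straightforward induction on height, crucially using that $W=ST(E\cup F)$ is closed under subterms, then shows $\mathrm{eval}(\widehat v)=v/_\rho$ for every $v\in V$; hence the kernel $\equiv$ of $\mathrm{eval}$ is a congruence on ${\cal T}(\Sigma)$ whose restriction to $W$ equals $\Phi$ (two terms of $W$ evaluate to genuine, equal classes exactly when their vertices are $\rho$-related), and it contains $E$ because each equation $(\widehat u,\widehat v)\in E$ has $(u,v)\in\tau\langle E,F\rangle\subseteq\rho\langle E,F\rangle$.

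Finally, since $\tthue E$ is the least congruence on ${\cal T}(\Sigma)$ containing $E$, we obtain $\tthue E\subseteq\,\equiv$, and restricting to $W$ gives $\Theta\langle E,F\rangle=\,\tthue E\cap W\times W\subseteq\,\equiv\cap\,W\times W=\Phi$. Combined with the first inclusion this yields $\Theta\langle E,F\rangle=\Phi$. I expect the main obstacle to be the construction and verification of the extension congruence $\equiv$ in the second inclusion: checking that the bottom-up transition function is well defined (where the congruence-closure hypothesis is indispensable) and that its restriction to $W$ recovers $\rho$ rather than something strictly coarser.
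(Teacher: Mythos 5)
Your proof is correct. Note that the paper itself offers no proof of this proposition: it is imported verbatim as the Claim from Kozen's 1977 report, so there is no in-paper argument to compare against. What you have written is essentially the standard correctness proof for congruence closure on subterm dags (and matches the classical argument): the easy inclusion by pulling $\tthue E$ back to a dag-congruence $\Theta'$ containing $\tau\langle E,F\rangle$ and invoking minimality of $\rho\langle E,F\rangle$, and the hard inclusion by extending $\rho\langle E,F\rangle$ to a congruence on all of ${\cal T}(\Sigma)$ via a deterministic bottom-up evaluation into $(V/\rho)\cup\{\ast\}$ and invoking minimality of $\tthue E$. The two points you flag as delicate --- well-definedness of the transition (which is exactly where the dag-congruence property of $\rho$ is used) and the height induction showing $\mathrm{eval}(\widehat v)=v/_\rho$ (which needs $W$ to be subterm-closed) --- are handled correctly, and the kernel of a bottom-up evaluation is indeed a congruence on the ground term algebra. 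No gaps.
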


\begin{df}\label{szeder}  \rm We put
$C\langle E, F  \rangle = \{\,  t/_{\Theta\langle E, F  \rangle}  \mid t \in W \, \}$.
\end{df} 
By Proposition \ref{kozen}, 
for each equivalence class $\vartheta \in W /_{  \Theta\langle E, F \rangle }$, there exists exactly one   equivalence  class $\eta \in V/_{\rho\langle E, F  \rangle  } $ such that 
$\vartheta= \{\, \widehat{x} \mid x\in \eta\, \}$.
To simplify our notations, we give  $\vartheta$
 the  unique name of $\eta$.  The elements of $ C\langle E, F  \rangle$
are viewed as constants, and we  define the  signature   $\Sigma \cup  C\langle E, F  \rangle$.
By Definition \ref{szeder}, we have the following statement. 
\begin{sta}\label{darabsz}
$|C\langle E, F   \rangle|\leq |W|\leq \mathrm{size}(W )\leq n$. 
\end{sta}

\begin{df}[{\cite[Section 3]{eatcs/FuloopV91}}]\label{szabaly} \rm
Let
 \begin{quote} $R\langle E, F  \rangle= \newline
  \{\, \s( t_1/_{\Theta\langle E, F \rangle}, \ldots,  t_m/_{\Theta\langle E, F  \rangle}) \dot{\rightarrow}  \s(t_1, \ldots,t_m)/_{\Theta\langle E, F  \rangle}
  \mid \s\in \S_m, m\in \Nat, t_1, \ldots, t_m \in \ts,  \s(t_1, \ldots, t_m)\in W \, \}$ 
  \end{quote}
  be a GTRS over the signature  $\S \cup C\langle E, F  \rangle$.
  \end{df}
Note that for any $\s(t_1, \ldots, t_m) \in W$ with  $\s\in \S_m$, $m\in \Nat$,
and $t_1, \ldots, t_m \in \ts  $ there exists at most one
rewrite rule 
 with left-hand side $ \s( t_1/_{\Theta\langle E, F  \rangle}, \ldots,  t_m/_{\Theta\langle E, F  \rangle})$ in $R \langle E, F \rangle $.
  \begin{df}\label{csurig} \rm
  We say that GTRS $R \langle E, F \rangle $
is  total if for any  $\s\in \S_m$ with $m\in \Nat$, and      ${b}_1, \ldots, {b}_m \in C\langle E, F \rangle$,  there exists a rewrite rule in
 GTRS $R \langle E, F \rangle $ with left-hand side 
$\s({b}_1, \ldots, {b}_m)$. 
\end{df}
F\"ul\"op and V\'agv\"olgyi \cite{eatcs/FuloopV91}
 showed the following properties of $R\langle E, F  \rangle$.

\begin{prop}[{\cite[Proposition 3.1]{eatcs/FuloopV91}}]\label{elso}
   $R\langle E, F  \rangle$ is reduced.
\end{prop}
By Proposition \ref{elso} and  Proposition \ref{mozart}, we have the following result.
 \begin{prop}\label{rhcp}{\em \cite{eatcs/FuloopV91}}
   $R\langle E, F  \rangle$ is convergent.
\end{prop}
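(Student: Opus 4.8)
The plan is to obtain this as an immediate corollary of the two propositions that precede it, so essentially no new work is required. First I would invoke Proposition~\ref{elso}, which asserts that $R\langle E, F\rangle$ is reduced. Then I would apply Proposition~\ref{mozart}, the general fact that every reduced GTRS is convergent, instantiated at the particular GTRS $R\langle E, F\rangle$ of Definition~\ref{szabaly}. Chaining these two gives the claim directly, and that is the whole argument.

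If instead one wanted a self-contained derivation rather than quoting Proposition~\ref{mozart}, I would unfold convergence into its two defining components and check each. For the Noetherian property, I would observe that every rewrite rule of $R\langle E, F\rangle$ has the form $\s(a_1,\ldots,a_m)\dot{\rightarrow} a$ with $\s\in\S_m$, $a_1,\ldots,a_m,a\in C\langle E, F\rangle$, so a single $\red{R\langle E, F\rangle}$ step replaces a subterm of height $1$ (over the constants in $C\langle E, F\rangle$) by a single constant; hence each step strictly decreases the number of occurrences of symbols of $\S$ in the term, which rules out any infinite rewrite sequence. For confluence, I would use the fact that reducedness means no left-hand side is reducible by another rule and no right-hand side is reducible at all, so distinct applicable redexes cannot overlap and cannot interfere; there are therefore no nontrivial critical pairs to resolve, giving local confluence, and Newman's Lemma combines local confluence with the Noetherian property to yield confluence.

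I do not expect any genuine obstacle here. The statement is a one-line consequence of Propositions~\ref{elso} and~\ref{mozart}, and the only point requiring a moment's care is that both of those propositions apply verbatim to $R\langle E, F\rangle$ as defined in Definition~\ref{szabaly}; since Proposition~\ref{elso} is stated precisely for this GTRS and Proposition~\ref{mozart} holds for every reduced GTRS, the hypotheses are met without qualification.
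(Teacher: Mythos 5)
Your main argument is exactly the paper's proof: the result is stated as an immediate consequence of Proposition~\ref{elso} (reducedness of $R\langle E, F\rangle$) and Proposition~\ref{mozart} (every reduced GTRS is convergent). The additional self-contained sketch is a correct but unnecessary elaboration; the one-line chaining suffices and is what the paper does.
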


 \begin{prop}[{\cite[Proposition 3.2]{eatcs/FuloopV91}}] \label{masodik}
   For every $t\in W  $, we have $t \tred {R\langle E, F  \rangle}  t/_{\Theta\langle E, F  \rangle}$.
\end{prop}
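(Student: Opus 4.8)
The plan is to argue by structural induction on $t \in W$ (equivalently, by induction on $height(t)$), using two structural facts: that $W = ST\langle E \cup F\rangle$ is closed under the formation of subterms, and that the relation $\red{R\langle E, F\rangle}$ is, by its very definition, closed under contexts.

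The inductive step — which subsumes the base case as its $m = 0$ instance — would go as follows. I would write $t = \s(\seq t1m)$ with $m\in \Nat$, $\s \in \S_m$, and $\seq t1m \in \ts$. Since $t \in W$ and $W$ is closed under subterms, each argument $t_i$ again lies in $W$, so the induction hypothesis supplies $t_i \tred{R\langle E, F\rangle} t_i/_{\Theta\langle E, F\rangle}$ for every $i \in \{\, 1, \ldots, m\, \}$ (vacuously when $m=0$). Rewriting the arguments one at a time, and using closure under contexts to propagate each subterm reduction into $t$, I would obtain
\[
t = \s(\seq t1m) \tred{R\langle E, F\rangle} \s(t_1/_{\Theta\langle E, F\rangle}, \ldots, t_m/_{\Theta\langle E, F\rangle}).
\]

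Then, because $t = \s(\seq t1m) \in W$, Definition \ref{szabaly} furnishes the rewrite rule
\[
\s(t_1/_{\Theta\langle E, F\rangle}, \ldots, t_m/_{\Theta\langle E, F\rangle}) \dot{\rightarrow} \s(\seq t1m)/_{\Theta\langle E, F\rangle} = t/_{\Theta\langle E, F\rangle}
\]
in $R\langle E, F\rangle$. Applying this single rule at the root and concatenating with the previous reduction yields $t \tred{R\langle E, F\rangle} t/_{\Theta\langle E, F\rangle}$, closing the induction. In the $m=0$ case the argument collapses to the one-step reduction $t \red{R\langle E, F\rangle} t/_{\Theta\langle E, F\rangle}$ via the constant rule $t \dot{\rightarrow} t/_{\Theta\langle E, F\rangle}$ of Definition \ref{szabaly}.

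The only point needing genuine verification is the subterm-closure of $W$, since this is precisely what allows the induction hypothesis to reach each $t_i$; it holds immediately from $W = ST\langle E \cup F\rangle$, because any subterm of a term occurring in $E \cup F$ is itself a subterm occurring in $E \cup F$. Beyond this bookkeeping I do not expect any real obstacle: the result is a clean structural induction driven entirely by the shape of the rules in Definition \ref{szabaly}.
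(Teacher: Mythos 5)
Your proof is correct. The paper itself gives no proof of Proposition \ref{masodik} — it is cited as a known result from F\"ul\"op and V\'agv\"olgyi \cite{eatcs/FuloopV91} — so there is nothing internal to compare against; your structural induction (subterm-closure of $W$ to reach the arguments, context-closure of $\red{R\langle E, F\rangle}$ to lift the reductions, then one root application of the rule supplied by Definition \ref{szabaly}) is the standard argument and is exactly the pattern the paper itself uses in the inductions of Lemmas \ref{abecsikapu} and \ref{ezredik}.
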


\begin{prop} \label{beszur}
{\em \cite{eatcs/FuloopV91}} Every $t\in W $ has a unique $ R\langle E, F  \rangle$ normal form.
  For every $t\in W $, we have $t{\downarrow}_{R\langle E, F  \rangle}=
   t/_{\Theta\langle E, F  \rangle}$.
\end{prop}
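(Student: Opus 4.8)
The plan is to derive both assertions directly from the convergence of $R\langle E, F\rangle$ together with Proposition \ref{masodik}, so that no fresh rewriting argument is needed. The statement splits into two parts: first, that each $t\in W$ possesses a unique normal form, and second, that this normal form is precisely the constant $t/_{\Theta\langle E, F\rangle}$. The first part will be immediate, and the second will reduce to a short irreducibility check.

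For the existence and uniqueness part, I would invoke Proposition \ref{rhcp}, which states that $R\langle E, F\rangle$ is convergent. Since $\red{R\langle E, F\rangle}$ is then a convergent relation on the set of ground terms over $\S\cup C\langle E, F\rangle$, and every $t\in W$ is in particular such a ground term, Proposition \ref{nulladik} yields at once that $t$ has a unique normal form $t{\downarrow}_{R\langle E, F\rangle}$.

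For the identification part, Proposition \ref{masodik} supplies a reduction $t\tred{R\langle E, F\rangle} t/_{\Theta\langle E, F\rangle}$. To conclude $t{\downarrow}_{R\langle E, F\rangle}=t/_{\Theta\langle E, F\rangle}$, by uniqueness of the normal form it suffices to verify that $t/_{\Theta\langle E, F\rangle}$ is itself irreducible. The key observation, which is the only real content, is structural: by Definition \ref{szabaly} every left-hand side of a rule in $R\langle E, F\rangle$ has the form $\s(a_1,\ldots,a_m)$ with $\s\in\S_m$ at its root, whereas $t/_{\Theta\langle E, F\rangle}$ is a single new constant of $C\langle E, F\rangle$ and hence contains no occurrence of a symbol from $\S$. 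A single new constant has only the trivial context available, so matching a left-hand side would force that left-hand side to equal the constant; but left-hand sides are headed by $\S$-symbols and the new constants lie outside $\S$, a contradiction. Therefore $t/_{\Theta\langle E, F\rangle}$ is irreducible, i.e. a normal form of $t$, and uniqueness forces the desired equality.

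I do not anticipate a genuine obstacle here: the hard work, namely termination and confluence packaged as convergence in Proposition \ref{rhcp}, together with the reachability asserted in Proposition \ref{masodik}, has already been carried out. The only point demanding a moment of care is the irreducibility of the new constant, which hinges entirely on the disjointness of $C\langle E, F\rangle$ from $\S$ and on the fact that rule heads belong to $\S$; once this is noted, the proof closes immediately.
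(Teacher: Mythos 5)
Your proof is correct and follows essentially the same route as the paper: invoke Proposition \ref{masodik} for the reduction $t\tred{R\langle E,F\rangle} t/_{\Theta\langle E,F\rangle}$, observe by inspection of Definition \ref{szabaly} that the constant $t/_{\Theta\langle E,F\rangle}$ is irreducible, and combine this with convergence (Propositions \ref{rhcp} and \ref{nulladik}) to identify it as the unique normal form. Your explicit justification of the irreducibility step (rule heads lie in $\S$ while the new constants lie outside $\S$) merely spells out what the paper calls ``direct inspection.''
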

\begin{proof} Let $t\in W  $. 
 By Proposition \ref{masodik}, 
 we have $t \tred {R\langle E, F  \rangle}  t/_{\Theta\langle E, F  \rangle}$. 
 By direct inspection of the rewrite rules of $R\langle E, F  \rangle$, we get that for each
$t\in W $,
$ t/_{\Theta\langle E, F  \rangle}$ is irreducible for $R\langle E, F  \rangle$. Accordingly $t/_{\Theta\langle E, F  \rangle}$ is a $\red {R\langle E, F  \rangle} $ normal form 
 of $t$.
By Proposition \ref{nulladik} and  Proposition \ref{rhcp},
 $t\in W $ has a unique normal form, 
 $t{\downarrow}_{R\langle E, F  \rangle}$.  Then we have $t{\downarrow}_{R\langle E, F  \rangle}=
   t/_{\Theta\langle E, F  \rangle}$.
\end{proof}

\begin{prop}[{\cite[Lemma 3.3]{eatcs/FuloopV91}}]\label{harmadik}
  For all $s\in \ts$
  and $t \in W $, 
  \begin{quote}
 if $s \tred {R\langle E, F  \rangle}  t/_{\Theta\langle E, F  \rangle}$,
  then $s \tthue E t$.
  \end{quote} 
\end{prop}

\begin{prop}[{\cite[Theorem 3.4]{eatcs/FuloopV91}}]\label{hatodik} 
{\em \cite{eatcs/FuloopV91}}
  $\tthue E =\tthue {R\langle E, F  \rangle} \cap \ts\times \ts$.
\end{prop}

\begin{prop}[{\cite[Theorem 3.5]{eatcs/FuloopV91}}]    \label{negyedik}
  For all $s, t \in \ts$,  
  \begin{quote} 
   $s \tthue E t$ if and only if $s{\downarrow}_{R\langle E, F  \rangle}= t{\downarrow}_{R\langle E, F  \rangle}$.
  \end{quote}
\end{prop}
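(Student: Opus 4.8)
The plan is to reduce the claimed equivalence to the sharper identity
\[
\tthue E=\tthue {R\langle E, F\rangle}\cap(\ts\times\ts),
\]
and to exploit that $R\langle E, F\rangle$ is convergent. Writing $R=R\langle E, F\rangle$ and $\Theta=\Theta\langle E, F\rangle$, Proposition~\ref{rhcp} tells us that $R$ is convergent, so the standard Church--Rosser characterization of convergent relations \cite{baader1999term} yields $u\tthue R v$ iff $u{\downarrow}_R=v{\downarrow}_R$ for all terms $u,v$ over $\S\cup C\langle E, F\rangle$. Granting the displayed identity, the proposition is then immediate: for $s,t\in\ts$ we would have $s\tthue E t$ iff $s\tthue R t$ iff $s{\downarrow}_R=t{\downarrow}_R$. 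So the real work is to prove the two inclusions making up the displayed identity.

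For the inclusion $\tthue E\subseteq\tthue R$ I would first observe that each equation already follows from $R$. If $l\doteq r\in E$, then $l,r\in ST\langle E\rangle\subseteq W$ and $l\tthue E r$, so $(l,r)\in\Theta$ and hence $l/_{\Theta}=r/_{\Theta}$; Proposition~\ref{beszur} then gives $l{\downarrow}_R=l/_{\Theta}=r/_{\Theta}=r{\downarrow}_R$, so $l\tthue R r$. Thus $E\subseteq\tthue R$. Since $\tthue R$ is a congruence on the term algebra over $\S\cup C\langle E, F\rangle$, its restriction $\tthue R\cap(\ts\times\ts)$ is a congruence on ${\cal T}(\S)$ containing $E$, and because $\tthue E$ is the least congruence on ${\cal T}(\S)$ containing $E$, this inclusion follows.

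The reverse inclusion $\tthue R\cap(\ts\times\ts)\subseteq\tthue E$ is the step I expect to be the main obstacle, because it forces me to pull the auxiliary constants of $C\langle E, F\rangle$ back into $\ts$ without destroying equational information. My plan is to fix, for each $a\in C\langle E, F\rangle$, a representative $\mathrm{rep}(a)\in a$ (legitimate, as $a$ is a nonempty $\Theta$-class, hence a subset of $W\subseteq\ts$), and to let $\varphi$ be the homomorphism from the terms over $\S\cup C\langle E, F\rangle$ to $\ts$ that fixes every symbol of $\S$ and sends each $a$ to $\mathrm{rep}(a)$. The crux is to show that a single rewrite step is $\tthue E$-invariant under $\varphi$, that is, $u\red R v$ implies $\varphi(u)\tthue E\varphi(v)$. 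By Definition~\ref{szabaly} a rule reads $\s(a_1,\dots,a_m)\dot{\rightarrow}a$ with $a_i=t_i/_{\Theta}$ and $a=\s(t_1,\dots,t_m)/_{\Theta}$ for suitable $t_1,\dots,t_m\in\ts$ with $\s(t_1,\dots,t_m)\in W$ and each $t_i\in W$ (as $W$ is subterm-closed); then $\mathrm{rep}(a_i)\tthue E t_i$ and $\mathrm{rep}(a)\tthue E\s(t_1,\dots,t_m)$, so
\[
\varphi(\s(a_1,\dots,a_m))=\s(\mathrm{rep}(a_1),\dots,\mathrm{rep}(a_m))\tthue E\s(t_1,\dots,t_m)\tthue E\mathrm{rep}(a)=\varphi(a).
\]
Propagating $\varphi$ through the surrounding context (the congruence property of $\tthue E$) establishes the invariance for an arbitrary step, and symmetry and transitivity lift it to $u\tthue R v\Rightarrow\varphi(u)\tthue E\varphi(v)$.

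Finally, for $s,t\in\ts$ neither term contains a symbol of $C\langle E, F\rangle$, so $\varphi(s)=s$ and $\varphi(t)=t$; hence $s\tthue R t$ forces $s\tthue E t$, completing the reverse inclusion. Combining the two inclusions with the Church--Rosser characterization proves the proposition. I expect the first inclusion and the convergence reduction to be routine once Proposition~\ref{beszur} is in hand; the only point demanding real attention is the well-definedness and step-invariance of the representative homomorphism $\varphi$, which the subterm-closure of $W$ and Definition~\ref{szabaly} make tractable.
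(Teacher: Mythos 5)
Your proof is correct. Note, though, that the paper does not prove Proposition~\ref{negyedik} at all: it is imported verbatim from \cite{eatcs/FuloopV91}, as is the identity $\tthue E=\tthue {R\langle E, F\rangle}\cap(\ts\times\ts)$ that you use as your pivot (that identity is itself stated as Proposition~\ref{hatodik}, also by citation). So your submission is strictly more self-contained than the paper: you could have obtained the statement in two lines by combining the cited Proposition~\ref{hatodik} with Proposition~\ref{rhcp} and the Church--Rosser characterization, but instead you also re-derive Proposition~\ref{hatodik}. Both halves of that derivation are sound: the inclusion $\tthue E\subseteq\tthue{R\langle E,F\rangle}$ via Proposition~\ref{beszur} and minimality of the generated congruence is routine, and the reverse inclusion via the representative homomorphism $\varphi$ is the genuinely nontrivial part --- your key observations (that each $\Theta\langle E,F\rangle$-class is nonempty and contained in $W\subseteq\ts$, that $W$ is subterm-closed so the $t_i$ in Definition~\ref{szabaly} lie in $W$, and that a single $R\langle E,F\rangle$-step is sent by $\varphi$ to a $\tthue E$-chain compatible with contexts) are exactly what is needed, and the argument closes correctly because $\varphi$ fixes terms of $\ts$. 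The only stylistic caveat is that within the logic of this paper your work duplicates two black-box citations; as a standalone verification of the Fülöp--Vágvölgyi result it is a complete and valid proof.
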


%%%%%%%%%

Proposition \ref{beszur} and Proposition \ref{negyedik} imply the following proposition.
\begin{prop}\label{otodik}
  For all $s\in \ts$ and
  $t \in W $,
  \begin{quote}   
   if   $s \tthue E t$, then  $s \tred {R\langle E, F  \rangle}   t/_{\Theta\langle E, F  \rangle}$.
   \end{quote} 
\end{prop}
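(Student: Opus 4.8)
The plan is to treat this statement as an easy consequence of the normal-form characterization of $\tthue E$ that has already been set up, rather than to argue directly about rewrite sequences. The point is that Propositions \ref{negyedik} and \ref{beszur} together pin down, for terms of $W$, exactly what the $R\langle E, F\rangle$-normal form is, while convergence (Proposition \ref{rhcp}) turns an equality of normal forms into an actual reduction. So I would aim to compute $s{\downarrow}_{R\langle E, F\rangle}$ via the hypothesis and then read off the reduction from convergence.

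Concretely, first I would note that $W\subseteq \ts$, so both $s$ and $t$ lie in $\ts$, and the hypothesis $s \tthue E t$ lets me invoke the forward direction of Proposition \ref{negyedik}; this gives $s{\downarrow}_{R\langle E, F\rangle} = t{\downarrow}_{R\langle E, F\rangle}$. Next, since $t\in W$, Proposition \ref{beszur} identifies the right-hand side as $t{\downarrow}_{R\langle E, F\rangle} = t/_{\Theta\langle E, F\rangle}$, whence $s{\downarrow}_{R\langle E, F\rangle} = t/_{\Theta\langle E, F\rangle}$. Finally, because $R\langle E, F\rangle$ is convergent, every term has a unique normal form reachable by reduction, so $s \tred{R\langle E, F\rangle} s{\downarrow}_{R\langle E, F\rangle}$, and substituting the computed value of $s{\downarrow}_{R\langle E, F\rangle}$ yields $s \tred{R\langle E, F\rangle} t/_{\Theta\langle E, F\rangle}$, as required.

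I do not expect a genuine obstacle here: all the real work is carried out by the earlier propositions, and this statement is essentially the packaging step that converts the ``equal normal forms'' criterion of Proposition \ref{negyedik} into the ``reduces to the class constant'' form convenient for later use. The only things worth checking explicitly are the inclusion $W\subseteq \ts$ (needed so that Proposition \ref{negyedik}, stated for pairs in $\ts$, applies to both $s$ and $t$) and the existence of the normal form $s{\downarrow}_{R\langle E, F\rangle}$, which is guaranteed by convergence. One may also observe that this proposition is precisely the converse of Proposition \ref{harmadik}, so together they record that for $t\in W$ the relations $s \tthue E t$ and $s \tred{R\langle E, F\rangle} t/_{\Theta\langle E, F\rangle}$ are equivalent.
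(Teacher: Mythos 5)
Your argument is correct as a derivation within the toolkit this paper sets up: Proposition \ref{negyedik} gives $s{\downarrow}_{R\langle E, F\rangle}=t{\downarrow}_{R\langle E, F\rangle}$, Proposition \ref{beszur} identifies $t{\downarrow}_{R\langle E, F\rangle}$ with $t/_{\Theta\langle E, F\rangle}$ for $t\in W$, and convergence (Propositions \ref{rhcp} and \ref{nulladik}) guarantees $s\tred{R\langle E, F\rangle} s{\downarrow}_{R\langle E, F\rangle}$, so the conclusion follows. Note, however, that the paper gives no proof of this statement at all: Propositions \ref{harmadik}--\ref{hatodik} are quoted verbatim from F\"ul\"op and V\'agv\"olgyi \cite{eatcs/FuloopV91} as black boxes, so there is no in-paper argument to compare against. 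The one caveat worth keeping in mind is that your derivation is not an independent proof but a consistency check among the cited facts: in the source reference the logical order is typically the reverse (one first establishes the reachability statements such as \ref{masodik}, \ref{harmadik}, \ref{otodik} directly from the construction of $R\langle E, F\rangle$, and then packages them into the normal-form equivalence \ref{negyedik}), so deducing \ref{otodik} from \ref{negyedik} would be circular if offered as a reconstruction of the original proof. Within the present paper, where all four propositions are taken as given, your packaging is sound, and your closing observation that \ref{harmadik} and \ref{otodik} together yield the equivalence recorded as Proposition \ref{szazadik} matches exactly how the paper uses them.
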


%%%%%%%%%%%%%

  By Proposition \ref{harmadik} and Proposition
  \ref{otodik}, we have the following result.
\begin{prop}\label{szazadik}
  For all $s\in \ts$   and $t \in W $,
  \begin{quote} 
  $s \tred {R\langle E, F  \rangle}  t/_{\Theta\langle E, F  \rangle}$
 if and only if  $s \tthue E t$.
 \end{quote} 
\end{prop}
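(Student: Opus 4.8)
The plan is to prove the biconditional by splitting it into its two implications, each of which is already available verbatim as a previously established result, so that the entire argument reduces to quoting two propositions and remarking that their hypotheses are met.

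For the forward implication, suppose $s \tred {R\langle E, F  \rangle}  t/_{\Theta\langle E, F  \rangle}$. Since $s\in \ts$ and $t\in W$, this is precisely the hypothesis of Proposition \ref{harmadik}, whose conclusion is $s \tthue E t$. For the converse implication, suppose $s \tthue E t$. Again $s\in \ts$ and $t\in W$, so this is exactly the hypothesis of Proposition \ref{otodik}, which yields $s \tred {R\langle E, F  \rangle}  t/_{\Theta\langle E, F  \rangle}$. Conjoining the two implications gives the claimed equivalence for all $s\in \ts$ and $t\in W$.

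There is no genuine obstacle to clear here: the statement is a pure corollary obtained by pairing the "only if" direction (Proposition \ref{harmadik}) with the "if" direction (Proposition \ref{otodik}). All of the substantive work — namely verifying that reductions of the convergent, reduced system $R\langle E, F  \rangle$ to a class representative coincide with $\tthue E$-equivalence — was already carried out in establishing those two propositions, which are attributed to F\"ul\"op and V\'agv\"olgyi \cite{eatcs/FuloopV91}. The role of the present proposition is simply to package the two one-directional statements into the symmetric biconditional form that will be convenient for later use. Accordingly, I would keep the proof to a single sentence of the form "By Proposition \ref{harmadik} and Proposition \ref{otodik}," letting the reader supply the trivial combination.
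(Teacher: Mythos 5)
Your proof is correct and matches the paper exactly: the paper introduces Proposition \ref{szazadik} with the words ``By Proposition \ref{harmadik} and Proposition \ref{otodik}, we have the following result,'' which is precisely your pairing of the two one-directional statements. Nothing further is needed.
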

\begin{sta}\label{zala}
If $R \langle E, F \rangle =\emptyset$, then $R \langle E, F \rangle $ is not total. 
\end{sta}
\begin{proof}
Let $\s\in \S_0$. There does not exists a rule in  $R \langle E, F \rangle $ with left-hand side $\s$. Consequently, 
 $R \langle E, F \rangle $ is not total. 
\end{proof}
 \begin{lem} \label{abecsikapu} 
If   \,$\bigcup W /_{\tthue E}=\ts$,
 then  
  $R\langle E, F  \rangle$ is
 total. 
\end{lem}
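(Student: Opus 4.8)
The plan is to verify the totality condition of Definition~\ref{csurig} directly: I must show that for every $\sigma\in\Sigma_m$ with $m\in\Nat$ and every choice of constants $a_1,\ldots,a_m\in C\langle E, F\rangle$, the GTRS $R\langle E, F\rangle$ contains a rewrite rule whose left-hand side is $\sigma(a_1,\ldots,a_m)$. Since each $a_i$ lies in $C\langle E, F\rangle$, by Definition~\ref{szeder} I may fix a representative $s_i\in W$ with $a_i=s_i/_{\Theta\langle E, F\rangle}$, and then form the genuine ground term $u=\sigma(s_1,\ldots,s_m)\in\ts$. The idea is to use the hypothesis to force $u$ to rewrite to a single constant, and then transfer this reducibility down to $\sigma(a_1,\ldots,a_m)$.

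First I would invoke the hypothesis $\bigcup W/_{\tthue E}=\ts$: since $u\in\ts$, there is some $t\in W$ with $u\tthue E t$. Proposition~\ref{otodik} then yields $u\tred{R\langle E, F\rangle} t/_{\Theta\langle E, F\rangle}$, and $t/_{\Theta\langle E, F\rangle}\in C\langle E, F\rangle$ is a new nullary symbol. The crucial observation is that every element of $C\langle E, F\rangle$ is irreducible for $R\langle E, F\rangle$, because each left-hand side of a rule is headed by a symbol of $\Sigma$, which is disjoint from $C\langle E, F\rangle$. Hence, using that $R\langle E, F\rangle$ is convergent (Proposition~\ref{rhcp}), $t/_{\Theta\langle E, F\rangle}$ is the unique normal form of $u$.

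Next I would descend to the subterms. By Proposition~\ref{masodik}, each $s_i\tred{R\langle E, F\rangle} s_i/_{\Theta\langle E, F\rangle}=a_i$, and since the rewrite relation is compatible with the term operations I obtain $u=\sigma(s_1,\ldots,s_m)\tred{R\langle E, F\rangle}\sigma(a_1,\ldots,a_m)$. Thus $\sigma(a_1,\ldots,a_m)$ is a reduct of $u$, so by confluence its normal form coincides with that of $u$, namely $t/_{\Theta\langle E, F\rangle}$; in particular $\sigma(a_1,\ldots,a_m)\tred{R\langle E, F\rangle} t/_{\Theta\langle E, F\rangle}$. The two terms are distinct, since their root symbols lie in the disjoint sets $\Sigma$ and $C\langle E, F\rangle$, and therefore $\sigma(a_1,\ldots,a_m)$ is reducible.

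Finally I would locate the applicable rule. The proper subterms of $\sigma(a_1,\ldots,a_m)$ are exactly the constants $a_1,\ldots,a_m\in C\langle E, F\rangle$, each of which is irreducible by the observation above; hence the only position at which $\sigma(a_1,\ldots,a_m)$ can be rewritten is the root $\varepsilon$. As all terms involved are ground, a root step requires a rule whose left-hand side is precisely $\sigma(a_1,\ldots,a_m)$, which is exactly what totality demands. The case $m=0$ is covered by the same reasoning, with $\sigma(a_1,\ldots,a_m)=\sigma\in\Sigma_0$ and no proper subterms to worry about. The one point that needs care — and the only real obstacle — is ensuring that reducibility cannot hide at a proper subterm; this is precisely what the irreducibility of the $C\langle E, F\rangle$-constants rules out.
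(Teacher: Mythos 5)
Your proof is correct and follows essentially the same route as the paper's: pick representatives $s_i\in W$ of the classes $a_i$, use the hypothesis to find $t\in W$ with $\sigma(s_1,\ldots,s_m)\tthue E t$, reduce $\sigma(s_1,\ldots,s_m)$ both to $t/_{\Theta\langle E,F\rangle}$ and to $\sigma(a_1,\ldots,a_m)$, and conclude by irreducibility of the new constants that the final step must be a root rewrite with left-hand side $\sigma(a_1,\ldots,a_m)$. You merely spell out explicitly (via confluence and the localization of the redex at the root) what the paper compresses into ``by direct inspection of the form of the rewrite rules of $R\langle E, F\rangle$''.
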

\begin{proof}
 Let  $\s\in \S_m$ with $m\in \Nat$, and let  $t_1/_{\Theta\langle E, F  \rangle}, \ldots,  t_m/_{\Theta\langle E, F  \rangle}$
 be  equivalence classes
of $\Theta\langle E, \; F \rangle$ for some $t_1, \ldots t_m \in W $.  By our assumption, there exists $s\in W $ such that  
$\s(t_1, \ldots, t_m)\tthue E s$. By Proposition \ref{negyedik}, 
$\s(t_1, \ldots, t_m){\downarrow}_{R\langle E, F  \rangle}= s{\downarrow}_{R\langle E, F  \rangle}$. By Proposition
\ref{beszur}, 
$s{\downarrow}_{R\langle E, F  \rangle}=   s/_{\Theta\langle E, F  \rangle}$. Thus 
$\s(t_1, \ldots, t_m){\downarrow}_{R\langle E, F  \rangle}= s/_{\Theta\langle E, F  \rangle}$.
Therefore, 
\begin{quote}
 $\s(t_1, \ldots, t_m)\tred {R\langle E, F  \rangle}
\s(t_1{\downarrow}_{R\langle E, F  \rangle}, \ldots, t_m{\downarrow}_{R\langle E, F  \rangle})= $ (by Proposition \ref{masodik})\newline
$\s( t_1/_{\Theta\langle E, F  \rangle}, \ldots,  t_m/_{\Theta\langle E, F  \rangle}) \red {R\langle E, F  \rangle} 
 s/_{\Theta\langle E, F  \rangle}$ \hspace{3 mm} (by direct inspection of the form of the rewrite rules of $R\langle E, F  \rangle$).
\end{quote}
By direct inspection of the rewrite rules of $R$, we get that
in the last step of the sequence of reductions we applied the rewrite rule \begin{quote}
  $\s( t_1/_{\Theta\langle E, F  \rangle}, \ldots,  t_m/_{\Theta\langle E, F  \rangle}) \dot{\rightarrow}  s/_{\Theta\langle E, F  \rangle}$
     in $R\langle E, F  \rangle$. \qedhere \end{quote}
\end{proof}

\begin{lem}\label{ezredik} If 
$R\langle E, F  \rangle$  is a total GTRS,  then  for each $s\in \ts$, 
there exists $t \in W $ such that 
$s{\downarrow}_ {R\langle E, F  \rangle}=t/_{\Theta\langle E, F  \rangle}$.
 \end{lem}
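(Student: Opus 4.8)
The plan is to prove the sharper and cleaner statement that for every $s\in \ts$ the normal form $s{\downarrow}_{R\langle E, F  \rangle}$ is a \emph{single} constant in $C\langle E, F  \rangle$; since $C\langle E, F  \rangle=\{\, t/_{\Theta\langle E, F  \rangle}\mid t\in W\, \}$ by Definition \ref{szeder}, this at once produces a witness $t\in W$ with $s{\downarrow}_{R\langle E, F  \rangle}=t/_{\Theta\langle E, F  \rangle}$. I would argue by induction on the structure of $s$, using convergence of $R\langle E, F  \rangle$ (Proposition \ref{rhcp}) to speak of the unique normal form, and using totality to drive the rewrite at the root.

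First I would record the auxiliary observation that every constant in $C\langle E, F  \rangle$ is irreducible with respect to $R\langle E, F  \rangle$: by Definition \ref{szabaly} the left-hand side of every rule has the form $\s(a_1, \ldots, a_m)$ with a root symbol $\s\in \S$, so a bare constant from $C\langle E, F  \rangle$ matches no left-hand side at any position. Hence once a term has been reduced to such a constant, it is already in normal form.

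For the induction I would treat all arities $m\ge 0$ uniformly. Let $s=\s(s_1, \ldots, s_m)$ with $\s\in \S_m$ and $s_1, \ldots, s_m\in \ts$. By the induction hypothesis each argument reduces to a constant $a_i=s_i{\downarrow}_{R\langle E, F  \rangle}\in C\langle E, F  \rangle$ (for $m=0$ there are no arguments, so this is vacuous and $s$ is a constant of $\S_0$). Applying these reductions at the respective argument positions gives $s\tred {R\langle E, F  \rangle}\s(a_1, \ldots, a_m)$. Now totality of $R\langle E, F  \rangle$ supplies a rule with left-hand side $\s(a_1, \ldots, a_m)$, which by Definition \ref{szabaly} has the shape $\s(t_1/_{\Theta\langle E, F  \rangle}, \ldots, t_m/_{\Theta\langle E, F  \rangle}) \dot{\rightarrow} \s(t_1, \ldots, t_m)/_{\Theta\langle E, F  \rangle}$ with $\s(t_1, \ldots, t_m)\in W$ and $a_i=t_i/_{\Theta\langle E, F  \rangle}$. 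Firing it yields $s\tred {R\langle E, F  \rangle} b$ where $b=\s(t_1, \ldots, t_m)/_{\Theta\langle E, F  \rangle}\in C\langle E, F  \rangle$. By the auxiliary observation $b$ is irreducible, and by uniqueness of the normal form (Proposition \ref{nulladik} together with Proposition \ref{rhcp}) I conclude $s{\downarrow}_{R\langle E, F  \rangle}=b=t/_{\Theta\langle E, F  \rangle}$ with $t=\s(t_1, \ldots, t_m)\in W$.

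The only point that needs care, and which I regard as the crux, is the base case folded into the uniform step: for $m=0$ totality (Definition \ref{csurig} with no arguments) must still furnish a rule for each $\s\in \S_0$, which forces $\s\in W$ and hence $\s{\downarrow}_{R\langle E, F  \rangle}=\s/_{\Theta\langle E, F  \rangle}$. Thus totality is precisely what guarantees that a root rewrite is available once all arguments have been reduced to constants. Everything else is the routine combination of invoking the induction hypothesis at the argument positions and using confluence to identify the resulting $C\langle E, F  \rangle$-constant as the unique normal form of $s$.
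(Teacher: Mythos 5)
Your proof is correct and follows essentially the same route as the paper's: reduce the arguments to constants by the induction hypothesis, invoke totality to obtain a root rule of the shape given in Definition \ref{szabaly}, and observe that the resulting constant $\s(t_1,\ldots,t_m)/_{\Theta\langle E,F\rangle}$ is irreducible, hence the unique normal form. The only cosmetic differences are that you use structural induction in place of induction on $height(s)$ and fold the base case into the uniform step.
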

 \begin{proof}
 We proceed by induction on
$height(s)$.

{\em Base Case:} Let 
$height(s)=0$.
Then $s=\s$ for some $\s\in \S_0$. By our assumption that  $R\langle E, F  \rangle$ is  a total GTES and by Definitions \ref{szabaly}  and \ref{csurig},
there is a rewrite  rule
$\s\dot{\rightarrow}  \s/_{\Theta\langle E, F  \rangle}$ in $R\langle E, F  \rangle$. 
Then $\s\in W $ and $s{\downarrow}_ {R\langle E, F  \rangle} = \s/_{\Theta\langle E, F  \rangle}$.

{\em Induction Step:} 
Let   $k\in \Nat$, and let us assume that we have shown the
lemma for all integers less than or equal to $k$. Let $height(s)=k+1$.
Then $s=\s(s_1, \ldots,s_m)$ for some $\s \in \S_m$, $m\geq 1$,  and $s_1, \ldots,s_m\in \ts$.
By the induction hypothesis, for each $i \in \{\, 1, \ldots, m\,\}$, 
 there exist $t_i \in W   $ such that
 $s_i{\downarrow}_ {R\langle E, F  \rangle}= t_i/_{\Theta\langle E, F  \rangle}$.
 By our assumption that  $R\langle E, F  \rangle$  
  is  a total GTRS, there is a rewrite rule 
  with left-hand side     $\s( t_1/_{\Theta\langle E, F  \rangle}, \ldots,  t_m/_{\Theta\langle E, F  \rangle})$ in 
  $R\langle E, F  \rangle$.  
  By Definition \ref{szabaly}, this rule is of the form
  $$\s( t_1/_{\Theta\langle E, F  \rangle}, \ldots,  t_m/_{\Theta\langle E, F  \rangle}) \dot{\rightarrow}  \s(p_1, \ldots, p_m)/_{\Theta\langle E, F  \rangle}$$ 
 in $R\langle E, F  \rangle$ for some $p_i\in t_i/_{\Theta\langle E, F  \rangle}$  for every $i\in \{\, 1, \ldots, m\, \}$. 
 Moreover  $\s(p_1, \ldots, p_m)\in W$.
 Then 
 \begin{quote}
  $s=\s(s_1, \ldots,s_m)\tred {R\langle E, F  \rangle} s( t_1/_{\Theta\langle E, F  \rangle}, \ldots,  t_m/_{\Theta\langle E, F  \rangle})
  \red {R\langle E, F  \rangle}  \s(p_1, \ldots, p_m)/_{\Theta\langle E, F  \rangle}$.
  \end{quote} 
By direct inspection of  Definition \ref{szabaly}, we get that $\s(p_1, \ldots, p_m)/_{\Theta\langle E, F  \rangle}$ is irreducible for   $R\langle E, F \rangle $.
   Thus 
$s{\downarrow}_ {R\langle E, F  \rangle}=   \s(p_1, \ldots, p_m)/_{\Theta\langle E, F  \rangle}$. \qedhere
\end{proof}
The following proposition is a 
 consequence of Lemma \ref{abecsikapu}  and Lemma \ref{ezredik} and Proposition \ref{negyedik}.
 \begin{prop} \label{halle} 
  $\bigcup W /_{\tthue E}=\ts$ if and only if 
  $R\langle E, F  \rangle$ is
 total. 
\end{prop}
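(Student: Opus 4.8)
The plan is to prove the two implications separately, relying on the two preceding lemmas together with the normal-form machinery of Section \ref{gyors}. The forward direction is already in hand: assuming $\bigcup W /_{\tthue E}=\ts$, Lemma \ref{abecsikapu} immediately yields that $R\langle E, F \rangle$ is total, so nothing further is needed there.

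For the converse, I would assume that $R\langle E, F \rangle$ is total and establish the nontrivial inclusion $\ts \subseteq \bigcup W /_{\tthue E}$; the reverse inclusion is trivial, since every $\tthue E$ class is a subset of $\ts$. Fix an arbitrary $s \in \ts$. First I would invoke Lemma \ref{ezredik} to obtain a subterm $t \in W$ with $s{\downarrow}_{R\langle E, F \rangle}= t/_{\Theta\langle E, F \rangle}$. Next I would apply Proposition \ref{beszur}, which gives $t{\downarrow}_{R\langle E, F \rangle}= t/_{\Theta\langle E, F \rangle}$, so that $s$ and $t$ share the same $R\langle E, F \rangle$ normal form, $s{\downarrow}_{R\langle E, F \rangle}= t{\downarrow}_{R\langle E, F \rangle}$. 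Finally, Proposition \ref{negyedik} converts equality of normal forms into congruence, $s \tthue E t$, whence $s \in t/_{\tthue E} \subseteq \bigcup W /_{\tthue E}$. Since $s$ was arbitrary, this closes the inclusion and completes the proof.

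There is no genuine obstacle here; the argument is a short composition of established facts. The only point to keep straight is that the equivalence in Proposition \ref{negyedik} rests on $R\langle E, F \rangle$ being convergent (Proposition \ref{rhcp}), which guarantees that each term has a unique normal form and hence that ``having the same normal form'' is a well-defined, meaningful condition. Totality is used solely to ensure, via Lemma \ref{ezredik}, that the normal form of an arbitrary ground term $s$ actually has the shape $t/_{\Theta\langle E, F \rangle}$ for some $t \in W$; without totality a ground term built from symbols realizing operations not covered by the rules of $R\langle E, F \rangle$ could remain stuck outside every class represented in $W$, breaking the final step.
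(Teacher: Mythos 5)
Your proof is correct and follows essentially the same route as the paper, whose own one-line proof cites exactly the ingredients you use: Lemma \ref{abecsikapu} for the forward direction, and Lemma \ref{ezredik} together with Proposition \ref{negyedik} for the converse. Your explicit invocation of Proposition \ref{beszur} to identify $t{\downarrow}_{R\langle E, F \rangle}$ with $t/_{\Theta\langle E, F \rangle}$ merely spells out a step the paper leaves implicit.
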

\begin{sta}\label{olvaseler} GTRS
  $ R \langle E, F  \rangle$ reaches a constant     ${b}\in C\langle E , F \rangle$ from   a ground term $t$ over the signature $\Sigma \cup C\langle E , F \rangle$
    if  and only if $t{\downarrow}_ {R\langle E, F \rangle} ={b}$. 
   \end{sta}
   \begin{proof}
   $(\Rightarrow)$
 The constant      ${b}\in C\langle E , F \rangle$  is irreducibe for  $ R \langle E, F  \rangle$ by direct inspection of  Definition \ref{szabaly}.   By Proposition \ref{rhcp}, 
      ${R\langle E, F  \rangle} $ is convergent.
 Accordingly
       ${b}\in C\langle E , F \rangle$ is the unique $ R \langle E, F  \rangle$-normal form of $t$.    
    
     $(\Leftarrow)$ This direction is trivial.
   \end{proof}
  \begin{df}\label{zug} \rm Let
\begin{quote} 
 $BSTEP\langle E, F  \rangle =\{\, ({b}, {c})\in C\langle E, F \rangle \times C\langle E, F \rangle \mid \mbox{  there exists  a rewrite rule }
 \s({b}_1, \ldots, {b}_m) \dot{\rightarrow} {b} \in R\langle E, F  \rangle \mbox{ and }  1\leq i \leq m \mbox{ such that } {c}={b}_i \, \}$.
 \end{quote}
 We call an element of  $BSTEP\langle E, F  \rangle$  a transition  backstep.
  \end{df}
       \begin{sta}\label{drrszam} Let  $n=\mathrm{size}(E)+\mathrm{size}(F)$. Then 
       \begin{quote}
 1. $|R\langle E, F   \rangle|\leq |W|\leq \mathrm{size}(W) \leq size (E \cup F) \leq \mathrm{size}(E) + \mathrm{size}(F)= n $. \newline
 2. $\mathrm{size}(R\langle E,  F \rangle)\leq \mathrm{size}(W)+ |W|\leq 2 n$\, and \, \newline 
3. $|BSTEP\langle E, F   \rangle|\leq \mathrm{size}(R\langle E,  F \rangle)\leq 2 n$.
\end{quote}
\end{sta} 
\begin{proof}
By direct inspection of  Definition \ref{szabaly} and by Statement \ref{meret}  we have Statement 1. 

We now show Statement 2. By Definition \ref{szabaly}, 
for each rule  $\s({b}_1, \ldots, {b}_m) \dot{\rightarrow} {b}\in R\langle E, F \rangle$, there exists a ground term $\s(t_1, \ldots, t_m)\in W$
such that    ${b}_i=t_i/_ {\Theta\langle E, F \rangle}$ for every $i\in \{\, 1, \ldots, m\, \}$.
Then we assign to the occurrence of    ${b}_i$ the root symbol of the ground term $t_i$  for every $i\in \{\, 1, \ldots, m\, \}$. Consequently $\mathrm{size}(R\langle E,  F \rangle)\leq \mathrm{size}(W)+ |W|$. By Statement 1, $\mathrm{size}(W)+ |W|\leq 2 n$.

By direct inspection of   Definition \ref{zug} and by Statement 2  we have Statement 3. 
\end{proof}
We now adopt the fast ground completion  algorithm of 
 F\"ul\"op and V\'agv\"olgyi \cite{eatcs/FuloopV91}
  that produces the GTRS $R\langle E, F  \rangle$.
 \begin{df}[{\cite[Algorithm 3.6]{eatcs/FuloopV91}}]\label{alma}\rm 
Algorithm Fast Ground Completion Introducing Additional Constants (FGC for short) 
\newline
Input: A  signature   $\S$,  GTES $E$ and GTES $F$ over $\Sigma$.
 \newline
Output:\begin{itemize}
\item 
 the relation $\rho\langle E, F \rangle$, 
 \item the  set of nullary symbols $C\langle E, F \rangle$, 
 \item 
 the GTRS $R\langle E, F \rangle$ stored without duplicates  in a  red-black tree $RBT1$,  and 
 \item the relation 
 $BSTEP\langle E, F  \rangle$ stored  without duplicates  in a  red-black tree $RBT2$.
  \end{itemize}
Data structures: 
The congruence closure algorithm appearing in Proposition 
\ref{lezaras},  gives each  $\rho$-class a unique name, which is an element of the set $\{\, 1, \ldots, |W|\, \}$.
We assign to each symbol of $\Sigma$ a unique  number in the set
\begin{quote}
 $\{\, |W|+1, \ldots, 2\cdot |W|\, \}$. 
  \end{quote} 
Let
\begin{quote}
$\ell= \lceil\mathrm{log} _2|W|\rceil+1$. 
  \end{quote} 
 For  each symbol $\s\in \Sigma\langle E, F \rangle $,  we  
pad the binary form of its number
with leading zeros to length  $\ell$. The binary sequence  obtained in this way is called 
the binary representation 
of  the  symbol $\s\in \Sigma\langle E, F \rangle $, and is  denoted as $\phi(\s)$.
Similarly,  for  each constant    ${b}\in  C\langle E, F \rangle  $, we  
pad the  binary form of its number
with leading zeros to length  $\ell$. The binary sequence  obtained in this way is called 
the binary representation 
of  the constant    ${b}\in  C\langle E, F \rangle  $, and is  denoted as $\phi({b})$.

The binary representation of a 
rule $\s({b}_1, \ldots, {b}_m) \dot{\rightarrow} {b}\in R\langle E, F \rangle$ is the string  $\phi(\s)\phi({b}_1), \ldots, \phi({b}_m)\phi({b})$. 
We store the binary representations of the  rewrite rules in $R\langle E, F \rangle$  without duplicates in the   red-black tree  $RBT1$ in lexicographic order.
 The binary representation of an ordered pair  $({b}, {c})\in BSTEP\langle E ,  F \rangle$ is the ordered pair $(\phi({b}), \phi({c}))$. 
We store the binary representations of 
the elements of  $BSTEP\langle E, F \rangle$ without duplicates in the   red-black tree  $RBT2$ in lexicographic order.
Here 
for every ordered pair 
$({b}, {c})\in BSTEP\langle E , F\rangle$, the search key is the first component $\phi({b})$, the second component, $\phi({c})$ is  satellite data.\newline
 {\bf var}  $i$, $m$: $\Nat$; $\s$:  $\Sigma$;    ${b},  {b}_1, \ldots, {b}_{mar_{\S}}: C \langle E,  F  \rangle$; $RBT1$, $RBT2$: red-black tree;
 \begin{lstlisting}
(*1. Create a subterm dag $ G=( V, A, \lambda)$ for  $E$ and $ F $.*)
(*By  Proposition \ref{lezaras}, compute the congruence  closure $\rho\langle E, F \rangle$  of the relation $\tau\langle E, F \rangle=\{\, (u, v)\mid (\hat{u}, \hat{v}) \in E\, \}.$*)
(* Compute $C\langle E, F \rangle= \{\, x/ _{\rho\langle E, F \rangle} \mid x \in V\,\}$.*)
(*2. \{ Computing   the GTRS $R\langle E, F \rangle$ \}*)
(*$R\langle E, F \rangle:=\emptyset$;*)
(*$BSTEP\langle E, F  \rangle:=\emptyset$;*)
(*For each $x\in V$ {\bf do}*)
  (*{\bf begin}*)
    (*$\s:= \lambda(x)$;*)
    (*$m := outdegree(x);$*)
    (*   ${b} := \mathrm{FIND}_{ \rho\langle E, F  \rangle  }(x)$;*)
    (*for each $1 \leq i \leq m $ {\bf do}*)
      (*   ${b}_i := \mathrm{FIND}_{  \rho\langle E, F  \rangle }(x/i)$;*)
    (*$R\langle E, F \rangle:=R\langle E, F \rangle\cup \{\, \s({b}_1, \ldots, {b}_m) \dot{\rightarrow} {b}\, \}$;*)
    (*$BSTEP\langle E, F  \rangle:=BSTEP\langle E, F  \rangle\cup \{\, ({b}_1, {b}), \ldots, ({b}_m, {b})\, \}$;*)
  (*{\bf end}*)
 \end{lstlisting}
 \end{df}
 By Statement \ref{drrszam} and Proposition \ref{redblack}, the time complexity of 
storing $R\langle E, F   \rangle$ without duplicates  in the  red-black tree  $RBT1$
is $O(n\, \mathrm{log}\, n)$, and 
the time complexity of storing 
$BSTEP\langle E, F   \rangle$ without duplicates  in the   red-black tree  $RBT2$ is $O(n\, \mathrm{log}\, n)$, where $n=\mathrm{size}(E)+\mathrm{size}(F)$.

\begin{prop}[{\cite[Algorithm 3.6]{eatcs/FuloopV91}}] \label{kecskemet}
  Algorithm FGC takes as input a signature $\Sigma$, and  GTESs  $E$ and $F$  over  $\S$. Let $n=\mathrm{size}(E)+\mathrm{size}(F)$. 

1. Algorithm FGC  produces  in $O(n\, \mathrm{log}\, n)$ time the relation $\rho\langle E, F \rangle$ and 
the  set of nullary symbols $C\langle E, F \rangle$.
 
 2.   In $O(n\, \mathrm{log}\, n)$ time,   Algorithm FGC computes
  the  GTRS $R\langle E, F  \rangle$ over $\Sigma \cup C\langle E, F \rangle$,  
  and stores $R\langle E, F \rangle$ without duplicates  in the  lexicographic order $<_{lex}$ on binary sequences  
 in the   red-black tree  $RBT1$. 
 
 3. In $O(n\, \mathrm{log}\, n)$ time,  Algorithm FGC computes the relation
 $BSTEP\langle E, F  \rangle$, and  
 stores $BSTEP\langle E, F   \rangle$  without duplicates  
 in the   red-black tree  $RBT2$. The search key is the first component, the second component  is  satellite data.
  \end{prop}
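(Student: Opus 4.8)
The plan is to walk through Algorithm FGC phase by phase, each time checking first that the object produced is exactly the one prescribed by Definitions~\ref{szeder}, \ref{szabaly}, and \ref{zug}, and then that the phase respects the stated bound; throughout $n=size(E)+size(F)$, and the cardinality estimates come from Statements~\ref{meret}, \ref{darabsz}, and \ref{drrszam}. Correctness is the easy part and rests entirely on the bijection $\widehat{\ }$ between the vertex set $V$ of the subterm dag and $W=ST(E\cup F)$; the substance of the proof is the timing, and the one genuinely delicate point is bounding the red-black-tree work when the keys are binary strings of unequal length.

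For Part~1, I would build the subterm dag $G=(V,A,\lambda)$ for $E$ and $F$ in $O(n)$ time by Statement~\ref{szolo}, reading off at the same time the relation $\tau\langle E,F\rangle=\{\,(u,v)\mid(\widehat u,\widehat v)\in E\,\}$. The congruence-closure algorithm of Proposition~\ref{lezaras} then returns $\rho\langle E,F\rangle$ in $O(n\log n)$ time and assigns every class a distinct name in $\{\,1,\dots,|V/\rho\langle E,F\rangle|\,\}$; harvesting those names yields $C\langle E,F\rangle$ in $O(|V|)=O(n)$ time. That $C\langle E,F\rangle$ is the correct set is exactly Kozen's identity (Proposition~\ref{kozen}): since $\Theta\langle E,F\rangle=\{\,(\widehat x,\widehat y)\mid(x,y)\in\rho\langle E,F\rangle\,\}$, the $\rho$-classes correspond bijectively to the $\Theta\langle E,F\rangle$-classes, so the named classes faithfully represent $C\langle E,F\rangle=\{\,t/_{\Theta\langle E,F\rangle}\mid t\in W\,\}$ of Definition~\ref{szeder}.

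For Part~2, I would analyse the single loop over the vertices $x\in V$. For such an $x$ we have $\widehat x=\sigma(\widehat{x/1},\dots,\widehat{x/m})\in W$ with $\sigma=\lambda(x)$ and $m=outdegree(x)$, and the calls $a:=\mathrm{FIND}(x)$, $a_i:=\mathrm{FIND}(x/i)$ return the names of $\widehat x/_{\Theta\langle E,F\rangle}$ and $\widehat{x/i}/_{\Theta\langle E,F\rangle}$. Hence the emitted rule $\sigma(a_1,\dots,a_m)\dot{\rightarrow}a$ is precisely the rule attached to $\widehat x$ in Definition~\ref{szabaly}; conversely, because $\widehat{\ }$ is a bijection every rule of $R\langle E,F\rangle$ comes from exactly one vertex, and two vertices with pairwise-congruent children yield the same rule (congruence property), so discarding duplicates reconstructs $R\langle E,F\rangle$ exactly. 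For the bound, the number of FIND calls is $\sum_{x\in V}(1+outdegree(x))=|V|+|A|=O(n)$, since the subterm dag has $O(n)$ vertices and arcs, and this cost is dominated by the tree work; by Statement~\ref{drrszam} there are at most $n$ rules, each insertion into $RBT1$ being preceded by a search (Proposition~\ref{keres}), so Proposition~\ref{redblack} gives $O(n\log n)$.

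Part~3 rides on the same loop: for every rule we insert the backstep pairs of Definition~\ref{zug} into $RBT2$, keyed on the first component with the second as satellite data, and by Statement~\ref{drrszam} we have $|BSTEP\langle E,F\rangle|\le size(R\langle E,F\rangle)\le 2n$, so the same estimate yields $O(n\log n)$. The step I expect to be the main obstacle is the red-black-tree accounting just invoked: a stored key is a concatenation $\phi(\sigma)\phi(a_1)\cdots\phi(a_m)\phi(a)$ whose length grows with the arity of $\sigma$, so a single key comparison is not $O(1)$ and Proposition~\ref{redblack} cannot be applied naively. The way I would close the gap is to charge the $O(\log n)$ comparisons made while inserting a rule of size $s$ against that rule's own length, for a cost of $O(s\log n)$ per insertion; summing over all rules and using $\sum_{\text{rules}}s=size(R\langle E,F\rangle)=O(n)$ from Statement~\ref{drrszam} keeps the total at $O(n\log n)$, and the identical argument applies to the pairs of $BSTEP\langle E,F\rangle$, completing all three bounds.
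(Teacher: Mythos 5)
Your proposal is correct and follows essentially the same route as the paper's proof, which likewise builds the subterm dag, invokes Proposition~\ref{lezaras} for the congruence closure, reads the rules off the vertex loop, and bounds the storage cost via Statement~\ref{drrszam} and Proposition~\ref{redblack}; the correctness details you spell out are the ones the paper delegates to the cited construction of F\"ul\"op and V\'agv\"olgyi. The one place you go beyond the paper is the charging argument for comparisons of variable-length keys $\phi(\s)\phi(a_1)\cdots\phi(a_m)\phi(a)$: the paper applies Proposition~\ref{redblack} as if each comparison were constant-time, and your observation that the per-insertion cost should be charged as $O(s\log n)$ against the rule's own length, with $\sum s = size(R\langle E,F\rangle)=O(n)$ keeping the total at $O(n\log n)$, is a legitimate refinement rather than a deviation.
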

\begin{proof} 
F\"ul\"op and V\'agv\"olgyi \cite{eatcs/FuloopV91} showed that we can construct the GTRS $R\langle E, F \rangle$ in  $O(n \,\mathrm{log}\, n)$ time.
Hence we can construct the  relation $BSTEP\langle E, F  \rangle $ in  $O(n \,\mathrm{log}\, n)$ time as well. 
In addition to the result of  F\"ul\"op and V\'agv\"olgyi \cite{eatcs/FuloopV91}, we now compute the cost of storing  the GTRS $R\langle E, F \rangle$,
and the cost of storing the  relation $BSTEP\langle E, F  \rangle $.
 First we show Statement 1. 
 Algorithm FGC produces the  set of nullary symbols $C\langle E, F \rangle$. By Definition \ref{szeder}, 
for every  equivalence class $\vartheta\in \Theta\langle E, F \rangle$, 
Algorithm FGC gives a unique name  for $\vartheta$ which is the  unique name of the $\rho\langle E, F  \rangle $   equivalence class $\eta$ such that 
$\vartheta= \{\, \widehat{x} \mid x\in \eta\, \}$.

We now show Statement 2. 
Algorithm FGC produces the GTRS $R\langle E, F \rangle$ apart
from that, 
 for every  equivalence class $\vartheta\in \Theta\langle E, F \rangle$,
Algorithm FGC writes the unique name  for $\vartheta$ \cite{eatcs/FuloopV91}.  
 By Statement \ref{drrszam}, 
 $\mathrm{size}(R\langle E,  F \rangle)\leq 2\cdot  |W |\leq 2 n$. 
Then by Proposition \ref{redblack}, 
 the time-complexity of inserting the  elements
of $R\langle E\cup F ,  \emptyset \rangle$ into an empty  red-black tree, not allowing  duplicate values in the resulting red-black tree
$RBT1$, is $O(n \,\mathrm{log}\, n)$,

We now show Statement 3.   
In the relation  $BSTEP\langle E, F  \rangle \subseteq C\langle E, F \rangle \times C\langle E, F \rangle$, we collect all pairs $({b}_i , {b})$ such that there is a rewrite rule 
 $\s({b}_1, \ldots, {b}_m) \dot{\rightarrow}{b}$ in  $R\langle E, F \rangle$.
   By Statement \ref{drrszam}, $|BSTEP\langle E, F\rangle|\leq 2 n$.
Then by Proposition \ref{redblack}, 
the time-complexity of inserting the  elements
of  $BSTEP\langle E, F \rangle$ into an empty  red-black tree, not allowing  duplicate values in the resulting red-black tree, is $O(n \,\mathrm{log}\, n)$.
 \end{proof}

We give the GTRS  $R\langle E, F \rangle  $ as  input to an algorithm by giving the  red-black tree $RBT1$ 
output by Algorithm FGC storing the GTRS $R\langle E,  F \rangle$. 
  We now define an algorithm deciding whether $R\langle E,  F \rangle$ is  total.

 \begin{df}\label{dectot}\rm Let 
 $E$ and $F$ be GTESs over  a signature $\S$.
   \newline
Algorithm Deciding whether  GTRS $R\langle E,  F \rangle$   is Total (DTOT for short).  
 \newline
Input: The signature $\Sigma$, the Boolean value $FLAG\langle E, F \rangle$,   the set of constants $ C\langle E, F \rangle$, 
and  the  red-black tree $RBT$ storing the GTRS $R\langle E,  F \rangle$. 
  \newline
Output: Boolean.
\{\ $\mathrm{true}$ if $R\langle E,  F \rangle$ total. Otherwise $\mathrm{false}$. \ \} \newline
{\bf var} $i$, $m, count: \Nat$;
\newline
If $FLAG\langle E, F\rangle= \mathrm{false}$, then    return $\mathrm{false}$. If $R\langle E, F \rangle=\emptyset$, then return $\mathrm{false}$.
Otherwise, traverse the red-black tree $RBT$ in in-order  traversal, and read all binary sequences  stored in the red-black tree $RBT$
in the  lexicographic order $<_{lex}$ on binary sequences.
 For every $m\in \Nat$, $\s\in \Sigma\langle R\langle E, F \rangle _m$,  read the rules with $\s$ appering on the left-hand side 
one after another. 
During this process,
\begin{lstlisting}
(*{\bf for every} $\s\in \S\langle E, F \rangle$ {\bf do}*)
  (*{\bf begin}*)
    (*count  all rules with $\s$ appearing on the left-hand side, and store the result in the variable $count$;*)
    (*$m:=\mathrm{arity}(\s)$;*)
    (*   ${\bf if }\; count< |C\langle E, F \rangle|^m$*)
      (*{\bf then} {\bf return} $\mathrm{false}$*) 
  (*{\bf end}*)
(*   ${\bf output}(\mathrm{true})$*)    
\end{lstlisting}
\end{df}

\begin{lem}\label{decidetotal} 
Algorithm DTOT decides  in $O(n)$ time  whether  $R\langle E,  F\rangle$ is total, where $n=\mathrm{size}(E)+\mathrm{size}(F)$.
\end{lem}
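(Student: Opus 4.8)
The plan is to prove the two assertions of the lemma separately: first that Algorithm DTOT returns $\mathrm{true}$ exactly when $R\langle E, F\rangle$ is total, and then that it runs in $O(n)$ time. For correctness I would reduce the global condition in Definition \ref{csurig} to a per-symbol counting test. The starting point is the observation recorded right after Definition \ref{szabaly} that $R\langle E, F\rangle$ contains at most one rule with any fixed left-hand side. Hence, for a symbol $\s\in\S_m$, the value $\mathit{count}$ computed by the algorithm is exactly the number of distinct left-hand sides of the form $\s(a_1,\ldots,a_m)$ occurring in $R\langle E, F\rangle$, and this number satisfies $\mathit{count}\le |C\langle E, F\rangle|^m$, with equality iff every tuple $(a_1,\ldots,a_m)\in C\langle E, F\rangle^m$ is realized by a rule. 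By Definition \ref{csurig}, $R\langle E, F\rangle$ is total iff this equality holds simultaneously for every $\s\in\S$.

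Next I would verify the two short-circuit branches. If $FLAG\langle E, F\rangle=\mathrm{false}$, then $\S\neq\S\langle E, F\rangle$, so some $\s\in\S$ does not occur in $E\cup F$; by Definition \ref{szabaly} no rule of $R\langle E, F\rangle$ has $\s$ at its root, so no left-hand side $\s(a_1,\ldots,a_m)$ is present and $R\langle E, F\rangle$ fails to be total, hence returning $\mathrm{false}$ is correct. If $R\langle E, F\rangle=\emptyset$, then, as already noted for any constant $\s\in\S_0$, there is no rule with left-hand side $\s$, so $R\langle E, F\rangle$ is again not total. In the remaining branch we have $\S=\S\langle E, F\rangle$ and $R\langle E, F\rangle\neq\emptyset$; here every $\s\in\S\langle E, F\rangle$ actually occurs on some left-hand side, since an occurrence of $\s$ in $E\cup F$ forces a subterm $\s(t_1,\ldots,t_m)\in W$ and hence a rule with root $\s$, so scanning the tree encounters all of $\S$. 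The algorithm returns $\mathrm{false}$ precisely when some $\s$ yields $\mathit{count}<|C\langle E, F\rangle|^m$, that is, precisely when totality fails for that $\s$, and otherwise returns $\mathrm{true}$; this matches the characterization above.

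For the running time I would argue as follows. The two initial tests cost $O(1)$. Since the rules are stored in $RBT$ in lexicographic order of their binary representations and $\phi(\s)$ is a fixed-length prefix, an in-order traversal reads all rules with a common root symbol consecutively; by Statement \ref{drrszam} the tree holds $|R\langle E, F\rangle|\le n$ nodes, and reading the root prefix of each rule and incrementing a counter costs $O(1)$ per node in the RAM model, for $O(n)$ overall. The only remaining cost is evaluating the thresholds $|C\langle E, F\rangle|^m$. Because $\mathit{count}$ never exceeds the total number of rules, which by Statement \ref{drrszam} is at most $n$, it suffices to compute each power capped at $n+1$: this uses $O(m)$ multiplications on values bounded by $n$, each $O(1)$ in the RAM model. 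Summing over the symbols, the total exponentiation cost is $O\!\left(\sum_{\s\in\S\langle E, F\rangle}\mathrm{arity}(\s)\right)$, and since each distinct symbol occurs at least once in $E\cup F$ this sum is bounded by the number of parent-child edges in the parse forest of $E\cup F$, hence by $size(E\cup F)\le n$. Adding the contributions gives the claimed $O(n)$ bound.

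I expect the main obstacle to be the complexity argument rather than the correctness reduction. Two points need care: one must not compute $|C\langle E, F\rangle|^m$ naively, since it can be astronomically large, so it must be capped at $n+1$ to keep each arithmetic step $O(1)$; and the per-symbol exponentiation cost must be shown to telescope to $O(n)$ via the edge-counting bound $\sum_\s\mathrm{arity}(\s)\le n$ rather than the naive estimate $|\S|\cdot mar_\S$, which would only give $O(n^2)$. The correctness step is comparatively routine once the single-rule-per-left-hand-side property is invoked to equate $\mathit{count}$ with the number of realized left-hand sides.
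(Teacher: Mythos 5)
Your proof is correct and follows essentially the same route as the paper: reduce totality to a per-symbol count compared against $|C\langle E,F\rangle|^m$ (using the at-most-one-rule-per-left-hand-side property), handle the $FLAG$ and empty-GTRS short-circuits, and bound the total number of multiplications by $\sum_{\s}\mathrm{arity}(\s)\le size(R\langle E,F\rangle)\le n$ rather than by $|\S|\cdot mar_\S$. Your extra remark about capping $|C\langle E,F\rangle|^m$ at $n+1$ is a sensible refinement, though under the paper's stated RAM model (each multiplication is one time step) it is not strictly needed.
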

\begin{proof}
If  $R\langle E, F \rangle=\emptyset$, then by Statement \ref{zala}, we are done. 

Assume that $R\langle E, F \rangle\neq\emptyset$. If $FLAG\langle E, F\rangle= \mathrm{false}$, that is,  $|\S\langle E, F \rangle|< |\S|$,
 then 
 $R\langle E,  F \rangle$ is not total.   
 Assume that $FLAG\langle E, F\rangle= \mathrm{true}$, that is,  $|\S\langle E, F \rangle|=|\S|$. Then for each  symbol $\s\in \S$, 
we  count the rewrite rules in $R\langle E,  F \rangle$ with $\s$ appearing in the left-hand side and store the result in the variable $count$.
If  there exists a symbol $\s\in \S_m$, $m\in \Nat$, 
such that  for $\s$, $count <   |\langle  C\langle E, F  \rangle\rangle|^m $, 
 then  there   exist constants      ${b}_1, \ldots, {b}_m \in C\langle E, F  \rangle$ such that 
 there does not exist a  rewrite rule with left-hand side
 $\s({b}_1, \ldots, {b}_m) $
 in  $R\langle E, F  \rangle$. 
  Therefore, $R\langle E,  F \rangle$ is not total. 
    Conversely if $R\langle E,  F \rangle$ is not total, then there exists a symbol $\s\in \S_m$ and constants    ${b}_1, \ldots, {b}_m \in C\langle E, F  \rangle$ such that 
 there does not exist   a rewrite rule with left-hand side  $\s({b}_1, \ldots, {b}_m) $
 in  $R\langle E, F  \rangle$. Then for $\s$, $ count <   |\langle  C\langle E, F  \rangle\rangle|^m $.  
  Consequently if $R\langle E,  F\rangle$ is  total, then  Algorithm DTOT   outputs true; otherwise it returns false.

 We now consider the  time complexity of the Algorithm DTOT. 
 If $|\S\langle E, F \rangle|< |\S|$  or  $R\langle E, F \rangle=\emptyset$,   then Algorithm DTOT takes less than $O(n)$ time.  
 Assume that $|\S\langle E, F \rangle|= |\S|$ and  $R\langle E, F \rangle\neq \emptyset$.
  Then 
Algorithm DTOT traverse the red-black tree $RBT$ in in-order  traversal, and it reads all binary sequences  stored in the red-black tree $RBT$
in the  lexicographic order $<_{lex}$ on binary sequences. For every $\s\in \Sigma\langle R\langle E, F \rangle  \rangle$, it reads the rules with $\s$ appering on the left-hand side 
one after another, and 
 counts the rewrite rules in $R\langle E,  F\rangle$ with $\s$ appearing in the left-hand side  for all symbols $\s\in \Sigma$. This takes $O(n)$ time.

  If there exists a symbol $\s \in \Sigma$ such that  $\s$ does   not appear in $R\langle E,  F\rangle$,  then Algorithm DTOT outpouts false, and 
  does not evaluate the expression  $count< |C\langle E, F\rangle|^m$.
   Otherwise, for every  symbol $\s \in \Sigma_m $, $m\in \Nat$,   we carry out  $m$ multiplications when computing $|C\langle E, F\rangle|^m$. 
Because each symbol 
$\s\in \Sigma\langle R\langle E, F \rangle  \rangle_m $, $m\in \Nat$,  has  $m $ constants in its arguments in the left-hand side of some rewrite rule in 
$R\langle E, F\rangle  $.  
The number of multiplications   is less than or equal to 
$\sum _{m\in \Nat} \sum_ {\s\in \Sigma\langle R\langle E, F  \rangle \rangle_m} m\leq $
$\mathrm{size}(R\langle E,  F\rangle)\leq 2 n$
by Statement \ref{drrszam}. 
Consequently, the number of multiplications is less than or equal to $2  n$.
  The number of comparisons of nonnegative integers is $|\Sigma|\leq n$.
Therefore the  time complexity of algorithm DTOT is
 $O(n )$. 
           \end{proof}

\section{Running Examples Continued}\label{peldak}
We now illustrate  the concepts of Section \ref{gyors} by continuing our running examples. 
We compute the sets of constants $ C\langle E, F \rangle$,
$C\langle F, E\rangle$,  $C\langle E \cup F, \emptyset\rangle$, and the GTRSs  $ R \langle E, F \rangle$,  $ R \langle F, E \rangle$, 
$R\langle E\cup F,  \emptyset \rangle$. 

Recall that $E$ and $F$ are GTESs over a
 signature   $\S$, and  $W$ stands for
$ST\langle E \cup F\rangle$.
\begin{exa}\label{1oneegyfolytatas} \rm We continue Example \ref{1oneegy}.
\begin{quote}
    $C\langle E, F\rangle=\{\, \{\, \#\,\}$, $\{\,   \$\, \}$,  $\{\, f(\#), \;g(\#) \, \}$,
 $\{\, f(\$)\,\}$,  $\{\, g(\$)\,\}\, \}$.
\end{quote}

 \begin{quote}     
$R\langle E, F\rangle=\{\,  
\# \dot{\rightarrow}   \#/_{\Theta\langle E ,F \rangle}, \;
\$ \dot{\rightarrow}   \$/_{\Theta\langle E ,F \rangle}, \;\newline
f(   \#/_{\Theta\langle E ,F \rangle}) \dot{\rightarrow}
  f(\#)/_{\Theta\langle E ,F \rangle}, \;
f(   \$/_{\Theta\langle E ,F \rangle}) \dot{\rightarrow}
  f(\$)/_{\Theta\langle E ,F \rangle}, \;\newline
g(\#/_{\Theta\langle E ,F \rangle}) \dot{\rightarrow}
  f(\#)/_{\Theta\langle E ,F \rangle}, \;
 g(\$/_{\Theta\langle E ,F \rangle}) \dot{\rightarrow}, \;
  g(\$)/_{\Theta\langle E ,F \rangle}\;\}$,
\end{quote}

\begin{quote}
  $C\langle F, E\rangle=\{\, \{\, \#\,\}$, $\{\,   \$\, \}$,
  $\{\, f(\#)  \, \} $,  $\{\, g(\#) \, \}$,
 $\{\, f(\$), \;g(\$)\, \}\}$, 
\end{quote}

 \begin{quote}     
   $R\langle F, E\rangle=\{\,
\# \dot{\rightarrow}   \#/_{\Theta\langle F ,E \rangle}, \;
\$ \dot{\rightarrow}   \$/_{\Theta\langle F ,E \rangle}, \;\newline
f(   \#/_{\Theta\langle F ,E \rangle}) \dot{\rightarrow}
  f(\#)/_{\Theta\langle F ,E \rangle}, \;
f(   \$/_{\Theta\langle F ,E \rangle}) \dot{\rightarrow}
  f(\$)/_{\Theta\langle F ,E \rangle}, \;\newline
  g(   \#/_{\Theta\langle F ,E \rangle}) \dot{\rightarrow}
  g(\#)/_{\Theta\langle F ,E \rangle}, \;
   g(\$/_{\Theta\langle F ,E \rangle}) \dot{\rightarrow}
  f(\$)/_{\Theta\langle F ,E \rangle}\;\}$,
\end{quote}

\begin{quote}
  $C\langle E \cup F, \emptyset\rangle=\{\, \{\, \#\,\}$, $\{\,   \$\, \}$,
  $\{\, f(\#), \;  g(\#) \, \}$,
 $\{\, f(\$), \; \;g(\$)\, \}\, \}$,
\end{quote}

 \begin{quote}     
   $R\langle E \cup F, \emptyset\rangle=\{\,
\# \dot{\rightarrow}   \#/_{\Theta\langle E \cup F ,\emptyset \rangle}, \;
\$ \dot{\rightarrow}   \$/_{\Theta\langle E \cup F ,\emptyset \rangle}, \;
\newline
f(   \#/_{\Theta\langle E \cup F ,\emptyset  \rangle}) \dot{\rightarrow}
  f(\#)/_{\Theta\langle E\cup F, \emptyset \rangle}, \;
f(   \$/_{\Theta\langle E \cup F ,\emptyset  \rangle}) \dot{\rightarrow}
  f(\$)/_{\Theta\langle E \cup F ,\emptyset \rangle}, \;
\newline
g(   \#/_{\Theta\langle E \cup F ,\emptyset  \rangle}) \dot{\rightarrow}
  f(\#)/_{\Theta\langle E \cup F ,\emptyset  \rangle},
g(   \$/_{\Theta\langle E \cup F ,\emptyset  \rangle}) \dot{\rightarrow}
  f(\$)/_{\Theta\langle E \cup F ,\emptyset  \rangle} \,\}$.
\end{quote}

\end{exa}

\begin{exa}\label{2szepenfolytatas} \rm We continue Example \ref{2szepen}.
\begin{quote}
    $C\langle E, F\rangle= \{\,  \{\, \#, \;  f^2(\#)\}, \;\{\, f(\#), \;f^3(\#)\, \}\, \}$,
\end{quote}

 \begin{quote}     
$R\langle E, F\rangle=\{\,  
\# \dot{\rightarrow}   \#/_{\Theta\langle E ,F \rangle}, \;
f(   \#/_{\Theta\langle E ,F \rangle}) \dot{\rightarrow}
  f(\#)/_{\Theta\langle E ,F\rangle}, \;
f(  f(\#)/_{\Theta\langle E ,F \rangle})
\dot{\rightarrow}
\#/_{\Theta\langle E ,F \rangle} \, 
 \}$,
\end{quote}

\begin{quote}
  $C\langle F, E\rangle=\{\, \{\, \#, \;f^3(\#) \,\}$, 
  $\{\, f(\#)\, \}$,
 $\{\, f^2(\#)\, \}\}$, 
\end{quote}

 \begin{quote}     
$R\langle F, E\rangle=\{\,  
\# \dot{\rightarrow}   \#/_{\Theta\langle F ,E \rangle}, \;
f(   \#/_{\Theta\langle F ,E \rangle}) \dot{\rightarrow}
  f(\#)/_{\Theta\langle F ,E\rangle}, \;
f(f(\#)/_{\Theta\langle F ,E\rangle})
\dot{\rightarrow} f^2(\#)/_{\Theta\langle F ,E\rangle}, \;\;\newline
f(f^2(\#)/_{\Theta\langle F ,E\rangle})
\dot{\rightarrow} \#/_{\Theta\langle F ,E\rangle}\, \}$,
\end{quote}

 \begin{quote}     
 $C\langle E\cup F, \emptyset\rangle=\{\, \{\, \#, \; f(\#), \;  f^2(\#), \;f^3(\#)\, \}\, \}=\{\, W\, \}$,  
     \end{quote}

 \begin{quote}     
$R\langle E\cup F, \emptyset \rangle=\{\,  
\# \dot{\rightarrow}   \#/_{\Theta\langle E \cup F,  \emptyset  \rangle}, \;f( \#/_{\Theta\langle E \cup F,  \emptyset \rangle}) \dot{\rightarrow}
  \#/_{\Theta\langle E \cup F,  \emptyset \rangle}
 \, \}$.
\end{quote}
We note that $R\langle E, F\rangle$, $R\langle F, E\rangle$, and 
 $R\langle E\cup F, \emptyset\rangle$ 
 are total.
\end{exa}

\begin{exa}\label{3szilvafolytatas} \rm We continue Example \ref{3szilva}.

\begin{quote}
    $C\langle E, F\rangle= \{\,     
     \{\, \#\, \},  \;  \newline \{\, \$, \;  f(\$, \$), \;f(\$, f(\$, \$)) , \; 
    f(f(\$, \$), \$) , \; 
   f(f(\$,\$), f(\$, \$))       \,  \}, \;\newline 
  \{\,f(\#, \#), \;   f(\#, f(\#, \#)), \;  f(f(\#, \#), \#), \; f(f(\#,\#), f(\#, \#))   \, \},  
   \newline\;\{\,\mathsterling, \;  f(\#, \$), \; f(\#, \mathsterling), \;   f(\$, \#), \; 
  f(\$, \mathsterling), \; 
  f(\mathsterling, \#), \;   f(\mathsterling, \$), \; f(\mathsterling, \mathsterling), \; \newline
   f(\$, f(\#, \#)), \;   f(\mathsterling, f(\#, \#)), \;
   f(f(\#, \#), \$), \; f(f(\#, \#), \mathsterling), \;\newline
  f(\#, f(\$, \$)), \;   f(\mathsterling, f(\$, \$)), \; 
   f(f(\$, \$), \#), \; f(f(\$, \$), \mathsterling)
  \, \}
    \, 
  \}$,
  \end{quote}

 \begin{quote}     
$R\langle E, F\rangle=\{\,
 \# \dot{\rightarrow} \#/_{\Theta\langle E ,F\rangle}, \;\;
\$ \dot{\rightarrow} \$/_{\Theta\langle E ,F\rangle}, \;\; 
\mathsterling \dot{\rightarrow} \mathsterling/_{\Theta\langle E ,F\rangle}, \;\; \newline  
 f( \#/_{\Theta\langle E ,F\rangle}, \#/_{\Theta\langle E ,F\rangle}) \dot{\rightarrow} 
  f(\#, \#)/_{\Theta\langle E ,F\rangle}, \;\;
   f( \#/_{\Theta\langle E ,F\rangle}, \$/_{\Theta\langle E ,F\rangle}) \dot{\rightarrow} 
 \mathsterling/_{\Theta\langle E ,F\rangle}, \;\;  \newline   
   f( \#/_{\Theta\langle E ,F\rangle}, \mathsterling/_{\Theta\langle E ,F\rangle}) \dot{\rightarrow} 
 \mathsterling/_{\Theta\langle E ,F\rangle}, \;\;  
 f( \$/_{\Theta\langle E ,F\rangle}, \#/_{\Theta\langle E ,F\rangle}) \dot{\rightarrow} 
 \mathsterling/_{\Theta\langle E ,F\rangle}, \;\;\newline  
   f( \$/_{\Theta\langle E ,F\rangle}, \$/_{\Theta\langle E ,F\rangle}) \dot{\rightarrow} 
 \$/_{\Theta\langle E ,F\rangle}, \;\;
   f( \$/_{\Theta\langle E ,F\rangle}, \mathsterling/_{\Theta\langle E ,F\rangle}) \dot{\rightarrow} 
 \mathsterling/_{\Theta\langle E ,F\rangle}, \;\;\newline 
  f( \mathsterling/_{\Theta\langle E ,F\rangle}, \#/_{\Theta\langle E ,F\rangle}) \dot{\rightarrow} 
 \mathsterling/_{\Theta\langle E ,F\rangle}, \;\;
   f( \mathsterling/_{\Theta\langle E ,F\rangle}, \$/_{\Theta\langle E ,F\rangle}) \dot{\rightarrow} 
 \mathsterling/_{\Theta\langle E ,F\rangle}, \;\;\newline  
  f( \mathsterling/_{\Theta\langle E ,F\rangle}, \mathsterling/_{\Theta\langle E ,F\rangle}) \dot{\rightarrow} 
 \mathsterling/_{\Theta\langle E ,F\rangle}, \;\; 
    f(\#/_{\Theta\langle E ,F\rangle}, f(\#, \#)/_{\Theta\langle E ,F\rangle})\dot{\rightarrow} 
 f(\#, \#)/_{\Theta\langle E ,F\rangle}, \;\;\newline  
   f(f(\#, \#)/_{\Theta\langle E ,F\rangle}, \#/_{\Theta\langle E ,F\rangle})\dot{\rightarrow} 
 f(\#, \#)/_{\Theta\langle E ,F\rangle}, \;\; \,
   f(f(\#,\#)/_{\Theta\langle E ,F\rangle}, f(\#, \#))/_{\Theta\langle E ,F\rangle}\dot{\rightarrow} 
 f(\#, \#)/_{\Theta\langle E ,F\rangle}  \;\; \newline
       f(\$/_{\Theta\langle E ,F\rangle}, f(\#, \#)/_{\Theta\langle E ,F\rangle})\dot{\rightarrow} 
\mathsterling/_{\Theta\langle E ,F\rangle}, \;\;  
 f(\mathsterling /_{\Theta\langle E ,F\rangle}, f(\#, \#)/_{\Theta\langle E ,F\rangle})\dot{\rightarrow} 
 \mathsterling/_{\Theta\langle E ,F\rangle}, \;\; \newline
  f(f(\#, \#)/_{\Theta\langle E ,F\rangle}, \$/_{\Theta\langle E ,F\rangle})\dot{\rightarrow} 
 \mathsterling/_{\Theta\langle E ,F\rangle}, \;\; 
  f(f(\#, \#)/_{\Theta\langle E ,F\rangle}, \mathsterling /_{\Theta\langle E ,F\rangle})\dot{\rightarrow}  \mathsterling /_{\Theta\langle E ,F\rangle}
       \, \}$.

\end{quote}

\begin{quote}  
 
  $C\langle F, E\rangle= \{\,     
     \{\, \$\, \},  \;  \newline \{\, \#, \;  f(\#, \#), \,  f(\#, f(\#, \#)) , \;  f(f(\#, \#), \#) , \;    f(f(\#,\#), f(\#, \#))
   \,  \}, \;\newline 
  \{\,f(\$, \$), \;   f(\$, f(\$, \$)), \;  f(f(\$, \$), \$), \; f(f(\$,\$), f(\$, \$))   \, \},  
   \newline\;\{\,\mathsterling, \;  f(\$, \#), \; f(\$, \mathsterling), \;   f(\#, \$), \; 
  f(\#, \mathsterling), \; 
  f(\mathsterling, \$), \;   f(\mathsterling, \#), \; f(\mathsterling, \mathsterling), \; \newline
   f(\#, f(\$, \$)), \;   f(\mathsterling, f(\$, \$)), \;
   f(f(\$, \$), \#), \; f(f(\$, \$), \mathsterling), \;\newline
  f(\$, f(\#, \#)), \;   f(\mathsterling, f(\#, \#)), \; 
   f(f(\#, \#), \$), \; f(f(\#, \#), \mathsterling)
  \, \}
    \, 
  \}$,
  
  \end{quote}

 \begin{quote}     
$R\langle F, E\rangle=\{\,
 \$ \dot{\rightarrow} \$/_{\Theta\langle F ,E\rangle}, \;\;
\# \dot{\rightarrow} \#/_{\Theta\langle F ,E\rangle}, \;\; 
\mathsterling \dot{\rightarrow} \mathsterling/_{\Theta\langle F ,E\rangle}, \;\; \newline  
 f( \$/_{\Theta\langle F ,E\rangle}, \$/_{\Theta\langle F ,E\rangle}) \dot{\rightarrow} 
  f(\$, \$)/_{\Theta\langle F ,E\rangle}, \;\;
   f( \$/_{\Theta\langle F ,E\rangle}, \#/_{\Theta\langle F ,E\rangle}) \dot{\rightarrow} 
 \mathsterling/_{\Theta\langle F ,E\rangle}, \;\;  \newline   
   f( \$/_{\Theta\langle F ,E\rangle}, \mathsterling/_{\Theta\langle F ,E\rangle}) \dot{\rightarrow} 
 \mathsterling/_{\Theta\langle F ,E\rangle}, \;\;  
 f( \#/_{\Theta\langle F ,E\rangle}, \$/_{\Theta\langle F ,E\rangle}) \dot{\rightarrow} 
 \mathsterling/_{\Theta\langle F ,E\rangle}, \;\;\newline  
   f( \#/_{\Theta\langle F ,E\rangle}, \#/_{\Theta\langle F ,E\rangle}) \dot{\rightarrow} 
 \#/_{\Theta\langle F ,E\rangle}, \;\;
   f( \#/_{\Theta\langle F ,E\rangle}, \mathsterling/_{\Theta\langle F ,E\rangle}) \dot{\rightarrow} 
 \mathsterling/_{\Theta\langle F ,E\rangle}, \;\;\newline 
  f( \mathsterling/_{\Theta\langle F ,E\rangle}, \$/_{\Theta\langle F ,E\rangle}) \dot{\rightarrow} 
 \mathsterling/_{\Theta\langle F ,E\rangle}, \;\;
   f( \mathsterling/_{\Theta\langle F ,E\rangle}, \#/_{\Theta\langle F ,E\rangle}) \dot{\rightarrow} 
 \mathsterling/_{\Theta\langle F ,E\rangle}, \;\;\newline  
  f( \mathsterling/_{\Theta\langle F ,E\rangle}, \mathsterling/_{\Theta\langle F ,E\rangle}) \dot{\rightarrow} 
 \mathsterling/_{\Theta\langle F ,E\rangle}, \;\; 
    f(\$/_{\Theta\langle F ,E\rangle}, f(\$, \$)/_{\Theta\langle F ,E\rangle})\dot{\rightarrow} 
 f(\$, \$)/_{\Theta\langle F ,E\rangle}, \;\;\newline  
   f(f(\$, \$)/_{\Theta\langle F ,E\rangle}, \$/_{\Theta\langle F ,E\rangle})\dot{\rightarrow} 
 f(\$, \$)/_{\Theta\langle F ,E\rangle}, \;\; \,
   f(f(\$,\$)/_{\Theta\langle F ,E\rangle}, f(\$, \$))/_{\Theta\langle F ,E\rangle}\dot{\rightarrow} 
 f(\$, \$)/_{\Theta\langle F ,E\rangle}  \;\; \newline
       f(\#/_{\Theta\langle F ,E\rangle}, f(\$, \$)/_{\Theta\langle F ,E\rangle})\dot{\rightarrow} 
 \mathsterling/_{\Theta\langle F ,E\rangle}, \;\;  
 f(\mathsterling /_{\Theta\langle F ,E\rangle}, f(\$, \$)/_{\Theta\langle F ,E\rangle})\dot{\rightarrow} 
 \mathsterling/_{\Theta\langle F ,E\rangle}, \;\; \newline
  f(f(\$, \$)/_{\Theta\langle F ,E\rangle}, \#/_{\Theta\langle F ,E\rangle})\dot{\rightarrow} 
 \mathsterling/_{\Theta\langle F ,E\rangle}, \;\; 
  f(f(\$, \$)/_{\Theta\langle F ,E\rangle}, \mathsterling /_{\Theta\langle F ,E\rangle})\dot{\rightarrow}  \mathsterling /_{\Theta\langle F ,E\rangle}
       \, \}$.

\end{quote}

\begin{quote}  
  
       $C\langle E \cup F, \emptyset\rangle=  \{\,
       \{\, \#, \;\$, \;\mathsterling, \;\newline
 f(\#, \#), \; f(\#, \$), \; f(\#, \mathsterling), \; \newline
 f(\$, \#), \;   f(\$, \$), \; f(\$, \mathsterling), \; \newline
  f(\mathsterling, \#), \;   f(\mathsterling, \$), \; f(\mathsterling, \mathsterling), \;\newline
    f(\#, f(\#, \#)) , \;  f(f(\#, \#), \#) , \; 
   f(f(\#,\#), f(\#, \#)), \;
  \;\newline  
  f(\$, f(\$, \$)) , \;      f(f(\$, \$), \$) , \; 
   f(f(\$,\$), f(\$, \$)), \;\newline 
   f(\$, f(\#, \#)), \;   f(\mathsterling, f(\#, \#)), \;
   f(f(\#, \#), \$), \; f(f(\#, \#), \mathsterling), \;\newline
  f(\#, f(\$, \$)), \;   f(\mathsterling, f(\$, \$)), \; 
   f(f(\$, \$), \#), \; f(f(\$, \$), \mathsterling)
   \, \}
     \, \}$,
   \end{quote}

\begin{quote}     
$R\langle E \cup F, \emptyset\rangle=\{\,  
\# \dot{\rightarrow}   \#/_{\Theta\langle   E\cup F, \emptyset\rangle}, \;
\$ \dot{\rightarrow}   \#/_{\Theta\langle   E\cup F, \emptyset\rangle}, \;
\mathsterling \dot{\rightarrow}   \#/_{\Theta\langle   E\cup F, \emptyset\rangle}, \;
f(  \#/_{\Theta\langle   E\cup F, \emptyset\rangle}, \#/_{\Theta\langle   E\cup F, \emptyset\rangle})\dot{\rightarrow}  \#/_{\Theta\langle   E\cup F, \emptyset\rangle}\, \}$.
\end{quote}

  We observe  that $R\langle E, F\rangle$, $R\langle F, E\rangle$, and 
 $R\langle E\cup F, \emptyset \rangle$ 
 are total.
  
\end{exa}

  \begin{exa}\label{4u2staringfolytatas}\rm 
  We now continue Example \ref{4u2staring}.  
  \begin{quote}  
  $C\langle E, F\rangle=\{\,   \#/_{\Theta\langle E ,F\rangle},  \;
     \flat/_{\Theta\langle E ,F\rangle }\, \}$,
  \end{quote}
   \begin{quote}     
$R\langle E, F\rangle=\{\, 
\# \dot{\rightarrow}   \#/_{\Theta\langle E ,F\rangle}, \;
\$ \dot{\rightarrow}   \#/_{\Theta\langle E ,F\rangle}, \;
\mathsterling \dot{\rightarrow}   \#/_{\Theta\langle E ,F\rangle}, \;
\flat \dot{\rightarrow}   \flat/_{\Theta\langle E ,F\rangle}, \;\newline
   f(   \#/_{\Theta\langle E ,F\rangle},    \#/_{\Theta\langle E ,F\rangle}) \dot{\rightarrow}
 \#/_{\Theta\langle E ,F\rangle}, \; f(   \#/_{\Theta\langle E ,F\rangle},    \flat/_{\Theta\langle E ,F\rangle}) \dot{\rightarrow}
 \#/_{\Theta\langle E ,F\rangle}, \; 
   f(   \flat/_{\Theta\langle E ,F\rangle},    \#/_{\Theta\langle E ,F\rangle}) \dot{\rightarrow}
 \#
   \, \}$.
  \end{quote}

\begin{quote}  
$C\langle F, E\rangle=\{\,   \#/_{\Theta\langle F ,E\rangle},
 \$/_{\Theta\langle F ,E\rangle},  \mathsterling/_{\Theta\langle F ,E\rangle}\, \}$,
    \end{quote}

\begin{quote}     
$R\langle F, E\rangle=\{\,  
\# \dot{\rightarrow}   \#/_{\Theta\langle F,E\rangle}, \;
\$ \dot{\rightarrow}   \$/_{\Theta\langle F,E\rangle}, \;
\mathsterling \dot{\rightarrow}   \mathsterling/_{\Theta\langle F,E\rangle}, \;
\flat \dot{\rightarrow}   \mathsterling/_{\Theta\langle F,E\rangle}, \;\newline
   f(   \#/_{\Theta\langle F,E\rangle},    \#/_{\Theta\langle F,E\rangle}) \dot{\rightarrow}
 \mathsterling/_{\Theta\langle F,E\rangle}, \;
 f(   \#/_{\Theta\langle F,E\rangle},    \$/_{\Theta\langle F,E\rangle}) \dot{\rightarrow}
 \mathsterling/_{\Theta\langle F,E\rangle}, \;
 f(   \#/_{\Theta\langle F,E\rangle},    \mathsterling /_{\Theta\langle F,E\rangle}) \dot{\rightarrow}
 \mathsterling/_{\Theta\langle F,E\rangle}, \;
 \newline 
  f(   \$/_{\Theta\langle F,E\rangle},    \#/_{\Theta\langle F,E\rangle}) \dot{\rightarrow}
 \mathsterling/_{\Theta\langle F,E\rangle}, \;
 f(   \$/_{\Theta\langle F,E\rangle},    \$/_{\Theta\langle F,E\rangle}) \dot{\rightarrow}
 \mathsterling/_{\Theta\langle F,E\rangle}, \;
 f(   \$/_{\Theta\langle F,E\rangle},    \mathsterling /_{\Theta\langle F,E\rangle}) \dot{\rightarrow}
 \mathsterling/_{\Theta\langle F,E\rangle}, \;
 \newline 
  f(   \mathsterling/_{\Theta\langle F,E\rangle},    \#/_{\Theta\langle F,E\rangle}) \dot{\rightarrow}
 \mathsterling/_{\Theta\langle F,E\rangle}, \;
 f(   \mathsterling/_{\Theta\langle F,E\rangle},    \$/_{\Theta\langle F,E\rangle}) \dot{\rightarrow}
 \mathsterling/_{\Theta\langle F,E\rangle}, \;
 f(   \mathsterling/_{\Theta\langle F,E\rangle},    \mathsterling /_{\Theta\langle F,E\rangle}) \dot{\rightarrow}
 \mathsterling/_{\Theta\langle F,E\rangle} 
  \, \}$.
  \end{quote}

\begin{quote}  
  $C\langle E \cup F, \emptyset\rangle=\{\,   \#/_{\Theta\langle   E\cup F, \emptyset\rangle}\, \}$.
    \end{quote}

\begin{quote}     
$R\langle E \cup F, \emptyset\rangle=\{\,  
\# \dot{\rightarrow}   \#/_{\Theta\langle   E\cup F, \emptyset\rangle}, \;
\$ \dot{\rightarrow}   \#/_{\Theta\langle   E\cup F, \emptyset\rangle}, \;
\mathsterling \dot{\rightarrow}   \#/_{\Theta\langle   E\cup F, \emptyset\rangle}, \;
\flat \dot{\rightarrow}   \#/_{\Theta\langle   E\cup F, \emptyset\rangle}, \;
\newline
f(  \#/_{\Theta\langle   E\cup F, \emptyset\rangle}, \#/_{\Theta\langle   E\cup F, \emptyset\rangle})\dot{\rightarrow}  \#/_{\Theta\langle   E\cup F, \emptyset\rangle}\, \}$.
\end{quote}
 We note that $R\langle E, F\rangle$, $R\langle F, E\rangle$, and 
 $R\langle E\cup F, \emptyset \rangle$ 
 are total.
\end{exa}

\begin{exa}\label{5uborkafolytatas} \rm 
  We now continue Example \ref{5uborka}.  
  \begin{quote}  
  $C\langle E, F\rangle=\{\,   \#/_{\Theta\langle E ,F\rangle}, \;  
   \$/_{\Theta\langle E ,F\rangle}\, \}$.
  \end{quote}
  
      \begin{quote}     
$R\langle E, F\rangle=\{\,  
\# \dot{\rightarrow}   \#/_{\Theta\langle E ,F\rangle}, \;\$ \dot{\rightarrow}   \$/_{\Theta\langle E ,F\rangle}, \;\newline 
f(\#/_{\Theta\langle E ,F\rangle}, \#/_{\Theta\langle E ,F\rangle})
\dot{\rightarrow}   \$/_{\Theta\langle E ,F\rangle}, \;
f(\#/_{\Theta\langle E ,F\rangle}, \$/_{\Theta\langle E ,F\rangle})
\dot{\rightarrow}  \$/_{\Theta\langle E ,F\rangle}, \; \newline 
f(\$/_{\Theta\langle E ,F\rangle}, \#/_{\Theta\langle E ,F\rangle})
\dot{\rightarrow}  \$/_{\Theta\langle E ,F\rangle}, \; 
f(\$/_{\Theta\langle E ,F\rangle}, \$/_{\Theta\langle E ,F\rangle})
\dot{\rightarrow}  \$/_{\Theta\langle E ,F\rangle}\; \}$. 
 \end{quote}
Observe that $R\langle E, F\rangle$ is total.

  \begin{quote}  
  $C\langle F, E\rangle=
 \{\,  \{\, \#, \; f(\#, \#)\,\}, \; \{\, \$\, \}, \;
     \{\, f(\#, \$)\, \}, \; \{\, f(\$, \#)\, \}, \; \{\,  f(\$, \$)\, \} \, \}$.
        \end{quote}
   \begin{quote}     
 $R\langle F, E\rangle=\{\, 
  \# \dot{\rightarrow}   \#/_{\Theta\langle F ,E\rangle},\; 
   \$ \dot{\rightarrow}   \$/_{\Theta\langle F ,E\rangle}, \; \newline
       f(\#/_{\Theta\langle F ,E\rangle}, \#/_{\Theta\langle F ,E\rangle})
\dot{\rightarrow}   \#/_{\Theta\langle F ,E\rangle}, 
    f(\#/_{\Theta\langle F ,E\rangle}, \$/_{\Theta\langle F ,E\rangle}) \dot{\rightarrow}  f(\#, \$)/_{\Theta\langle F ,E\rangle},\;\newline
   f(\$/_{\Theta\langle F ,E\rangle}, \#/_{\Theta\langle F ,E\rangle}) \dot{\rightarrow}   f(\$, \#)/_{\Theta\langle F ,E\rangle},\;
   f(\$/_{\Theta\langle F ,E\rangle}, \$/_{\Theta\langle F ,E\rangle})\dot{\rightarrow} f(\$, \$)/_{\Theta\langle F ,E\rangle}
   \, \}$.
  \end{quote}
 Note that  $R\langle F, E\rangle$ is not total.
   \begin{quote} $C\langle E \cup F, \emptyset\rangle=
         \{\, \{\,  \#, \; \$, \;  f(\#, \#), \; 
    f(\#, \$), \; f(\$, \#), \;f(\$, \$)\, \}\, \}$.   
         \end{quote}     
      \begin{quote}     
$R\langle E\cup F, \emptyset\rangle=\{\,  
\# \dot{\rightarrow}   \#/_{\Theta\langle E \cup F, \emptyset \rangle}, \; 
\$ \dot{\rightarrow}   \#/_{\Theta\langle E \cup F, \emptyset\rangle}, \; 
f(\#/_{\Theta\langle E \cup F , \emptyset \rangle}, \#/_{\Theta\langle E \cup F, \emptyset\rangle})
\dot{\rightarrow}  
\#/_{\Theta\langle E \cup F, \emptyset\rangle}\, \}$.
\end{quote}
We see that 
$R\langle E \cup F, \emptyset\rangle$ is total.

\end{exa}

\begin{exa}\label{6kortefolytatas} \rm 
  We now continue Example \ref{6korte}.  
Recall that  the set of equivalence classes of  $\Theta\langle E, F \rangle$ is:
\begin{quote}
$C\langle E, F \rangle =\{\,  \{\, \#\, \},  \;  \{\,    f(\#, \#), \;   f(\#, g(\#)), \;
 f(g(\#), \#), \; f(g(\#), g(\#)), \; g(\#), \;g(g(\#))  \, \}, \; \newline
  \{\, \$, \;  f(\#, \$), \; f(\$, \#), \;f(\$, \$), \;g(\$), \; g(g(\$)) \, \}\, \}$.
 \end{quote}
     That is,  
   $C\langle E, F\rangle=\{\,   \#/_{\Theta\langle E ,F\rangle}, \;
  \$/_{\Theta\langle E ,F\rangle}, \,  f(\#, \#)/_{\Theta\langle E ,F\rangle}\, \} $.
         \begin{quote}     
$R\langle E, F\rangle=\{\,  
\# \dot{\rightarrow}   \#/_{\Theta\langle E ,F\rangle}, \;
 \$ \dot{\rightarrow}   \$/_{\Theta\langle E ,F\rangle}, \; \newline 
 f(\#/_{\Theta\langle E ,F\rangle}, \#/_{\Theta\langle E ,F\rangle})
\dot{\rightarrow}  
g(\#)/_{\Theta\langle E ,F\rangle}, \; 
f(\#/_{\Theta\langle E ,F\rangle}, g(\#)/_{\Theta\langle E ,F\rangle})
\dot{\rightarrow}  
g(\#)/_{\Theta\langle E ,F\rangle}, \; \newline
f(g(\#)/_{\Theta\langle E ,F\rangle}, \#/_{\Theta\langle E ,F\rangle})
\dot{\rightarrow}  
g(\#)/_{\Theta\langle E ,F\rangle}, \; 
f(g(\#)/_{\Theta\langle E ,F\rangle},g( \#)/_{\Theta\langle E ,F\rangle})
\dot{\rightarrow}  
g(\#)/_{\Theta\langle E ,F\rangle}, \;  \newline
g(\#/_{\Theta\langle E ,F\rangle})
\dot{\rightarrow}  
g(\#)/_{\Theta\langle E ,F\rangle}, \; 
g(g(\#)/_{\Theta\langle E ,F\rangle})
\dot{\rightarrow}  
g(\#)/_{\Theta\langle E ,F\rangle}, \;  \newline
f(\#/_{\Theta\langle E ,F\rangle}, \$/_{\Theta\langle E ,F\rangle})
\dot{\rightarrow}  
\$/_{\Theta\langle E ,F\rangle}, \;
f(\$/_{\Theta\langle E ,F\rangle}, \#/_{\Theta\langle E ,F\rangle})
\dot{\rightarrow}  \$/_{\Theta\langle E ,F\rangle},\;\newline
f(\$/_{\Theta\langle E ,F\rangle}, \$/_{\Theta\langle E ,F\rangle})
\dot{\rightarrow}  \$/_{\Theta\langle E ,F\rangle},  \;  
g(\$/_{\Theta\langle E ,F\rangle})\dot{\rightarrow}  g( \$)/_{\Theta\langle E ,F\rangle}, \;
g( g( \$)/_{\Theta\langle E ,F\rangle})
\dot{\rightarrow}   \$/_{\Theta\langle E ,F\rangle}$.
 \end{quote}
Note that  $R\langle E, F\rangle$ is total.

    Recall that 
the set of equivalence classes of  $\Theta\langle  F, E \rangle$ is:
 \begin{quote}
  $C\langle F, E\rangle= \{\,  \{\,  \#, \; f(\#, \#),  \;   f(\#, g(\#)), \; 
 f(g(\#), \#), \;  f(g(\#), g(\#)), \;  g(\#), \;g(g(\#))  \, \}$, \; \newline
  $  \{\, \$\, \},\;\{\, g(\$)\, \},\; \{\,  f(\#, \$)\, \}, \;  \{\, f(\$, \#)\, \}, \;\{\,  f(\$, \$)\, \}, \; \{\, g(g(\$)) \, \}
\, \}$. 
 \end{quote}
That is, 
  \begin{quote}  
  $C\langle F, E\rangle=\{\, \{\,   \#/_{\Theta\langle F,E\rangle}, \;
    \$/_{\Theta\langle F,E\rangle}, \; g(\$)/_{\Theta\langle F ,E\rangle}, \; f(\#, \$)/_{\Theta\langle F, E \rangle}, \;
     f(\$, \#)/_{\Theta\langle F, E \rangle}, \;\newline
     f(\$, \$)/_{\Theta\langle F, E \rangle}, \;
     g(g(\$))/_{\Theta\langle F, E \rangle} \, \}
    \, \} $.
  \end{quote}
     
      \begin{quote}     
$R\langle F, E\rangle=\{\,  
 \# \dot{\rightarrow}   \#/_{\Theta\langle F ,E\rangle}, \;
 \$ \dot{\rightarrow}   \$/_{\Theta\langle F ,E\rangle}, \; \newline 
f(\#/_{\Theta\langle F ,E\rangle}, \#/_{\Theta\langle F ,E\rangle})
\dot{\rightarrow}  
\#/_{\Theta\langle F ,E\rangle}, \;
f(\#/_{\Theta\langle F ,E\rangle}, g(\#)/_{\Theta\langle F ,E\rangle})
\dot{\rightarrow}  \#/_{\Theta\langle F ,E\rangle}, \;\newline
f(g(\#)/_{\Theta\langle F,E\rangle}, \#/_{\Theta\langle F ,E\rangle})
\dot{\rightarrow} \#/_{\Theta\langle F ,E\rangle}, \;
f(g(\#)/_{\Theta\langle F ,E\rangle}, g(\#)/_{\Theta\langle F ,E\rangle})
\dot{\rightarrow} \#/_{\Theta\langle F ,E\rangle}, \; \newline
g(\#/_{\Theta\langle F ,E\rangle})  \dot{\rightarrow} \#/_{\Theta\langle F ,E\rangle}, \;
g(\$/_{\Theta\langle F ,E\rangle})  \dot{\rightarrow} g(\$)/_{\Theta\langle F ,E\rangle}, \; \newline
f(\#/_{\Theta\langle F ,E\rangle}, \$/_{\Theta\langle F ,E\rangle})
\dot{\rightarrow}  f(\#, \$)/_{\Theta\langle F ,E\rangle}, \,
f(\$/_{\Theta\langle F ,E\rangle}, \#_{\Theta\langle F ,E\rangle})
\dot{\rightarrow}  f(\$, \#)/_{\Theta\langle F ,E\rangle}\newline
f(\$/_{\Theta\langle F ,E\rangle}, \$_{\Theta\langle F ,E\rangle})
\dot{\rightarrow}  f(\$, \$)/_{\Theta\langle F ,E\rangle}, \;
g(g(\$)/_{\Theta\langle F ,E\rangle})  \dot{\rightarrow} g(g(\$))/_{\Theta\langle F ,E\rangle}\, \}$.
 \end{quote}
 Note that 
$R\langle F, E\rangle$ is not total.

  Recall that 
the set of equivalence classes of  $\Theta\langle E \cup F, \emptyset \rangle$ is:
  \begin{quote}
$ C\langle E \cup F, \emptyset\rangle=\{\, \  \{\, \#, \;  f(\#, \#), \; f(\#, g(\#)), \;
 f(g(\#), \#), \; f(g(\#), g(\#)), \; g(\#), \;g(g(\#))  \, \}, \; \newline
  \{\, \$, \;  f(\#, \$), \; f(\$, \#), \;f(\$, \$), \;g(\$), \; g(g(\$)) \, \}\, \}$.
 \end{quote}
 That is, 
  $C\langle E\cup F, \emptyset\rangle=\{\, \#/_{\Theta\langle E\cup F, \emptyset \rangle}, \$/_{\Theta\langle E\cup F, \emptyset \rangle}  \, \}$.
    \begin{quote}
$R\langle E\cup F, \emptyset\rangle=\{\, \# \dot{\rightarrow}   \#/_{\Theta\langle E\cup F, \emptyset \rangle}, \;
\$\dot{\rightarrow} \$/_{\Theta\langle E\cup F, \emptyset \rangle}, \; \newline
f(\#/_{\Theta\langle E \cup F ,\emptyset  \rangle}, \#/_{\Theta\langle E \cup F ,\emptyset  \rangle})
\dot{\rightarrow} \#/_{\Theta\langle   E\cup F, \emptyset\rangle}, \; 
f(\#/_{\Theta\langle E \cup F ,\emptyset  \rangle}, \$/_{\Theta\langle E \cup F ,\emptyset  \rangle})
\dot{\rightarrow} \$/_{\Theta\langle   E\cup F, \emptyset\rangle}, \; \newline 
f(\$_{\Theta\langle E \cup F ,\emptyset  \rangle}, \#/_{\Theta\langle E \cup F ,\emptyset  \rangle})
\dot{\rightarrow} \$/_{\Theta\langle   E\cup F, \emptyset\rangle}, \; 
f(\$/_{\Theta\langle E \cup F ,\emptyset  \rangle}, \$/_{\Theta\langle E \cup F ,\emptyset  \rangle})
\dot{\rightarrow}  
\$/_{\Theta\langle   E\cup F, \emptyset\rangle}, \;\newline 
g(\#/_{\Theta\langle E \cup F, \emptyset \rangle}) \dot{\rightarrow} \#/_{\Theta\langle E \cup F, \emptyset \rangle}, \;
g(\$/_{\Theta\langle   E\cup F, \emptyset\rangle}) \dot{\rightarrow}  \$/_{\Theta\langle   E\cup F, \emptyset\rangle}
\, \}$.
\end{quote}
Observe  that  $R\langle E \cup F, \emptyset \rangle$ is total.
\end{exa}

\begin{exa}\label{7twokettofolytatas} \rm We continue Example \ref{7twoketto}.
\begin{quote}
  $C\langle E, F\rangle=\{\, \{\, \# \, \}, \;\{\,  \$\, \}, \; 
 \{\,  f( \#, \#), \;   f( \#, \$), \;f( \$, \#), \;   f( \$, \$) \,
 \}\, \}$.
\end{quote}
 \begin{quote}
$R\langle E, F\rangle=\{\,  
\# \dot{\rightarrow}   \#/_{\Theta\langle E ,F\rangle}, \;
\$ \dot{\rightarrow}   \$/_{\Theta\langle E ,F\rangle}, \;\newline 
f(  \#/_{\Theta\langle E ,F\rangle},  \#/_{\Theta\langle E ,F\rangle})
\dot{\rightarrow}  f(\#, \#)/_{\Theta\langle E ,F\rangle}, \;
f(  \#/_{\Theta\langle E ,F\rangle},  \$/_{\Theta\langle E ,F\rangle})
\dot{\rightarrow}  f(\#, \#)/_{\Theta\langle E ,F\rangle}, \; \newline 
f(  \$/_{\Theta\langle E ,F\rangle},  \#/_{\Theta\langle E ,F\rangle})
\dot{\rightarrow}  f(\#, \#)/_{\Theta\langle E ,F\rangle}, \; 
f(  \$/_{\Theta\langle E ,F\rangle},  \$/_{\Theta\langle E ,F\rangle})
\dot{\rightarrow}  f(\#, \#)/_{\Theta\langle E ,F\rangle}\, \},
$
\end{quote}
\begin{quote}
  $C\langle F, E\rangle=\{\, \{\, \# , \;\$\, \}, \;
 \{\,  f( \#, \#), \;   f( \#, \$), \;f( \$, \#), \;   f( \$, \$) \,
 \}\, \}$,
\end{quote}
\begin{quote}
$R\langle F, E\rangle=\{\,  
\# \dot{\rightarrow}   \#/_{\Theta\langle F ,E\rangle}, \;
\$ \dot{\rightarrow}   \#/_{\Theta\langle F ,E\rangle}, \;
f(  \#/_{\Theta\langle F ,E\rangle},  \#/_{\Theta\langle F,E\rangle})
\dot{\rightarrow}  f(\#, \#)/_{\Theta\langle F,E\rangle}\,\}$,
\end{quote}
\begin{quote}
$C\langle E \cup F, \emptyset\rangle=C\langle F, E\rangle$, \,
  $R\langle E \cup F, \emptyset\rangle=R\langle F, E\rangle$.
 \end{quote} 
We notice that $R\langle E, F\rangle$, $R\langle F, E\rangle$, and 
$R\langle E\cup F, \emptyset \rangle$ are not total.

\end{exa}

\begin{exa}\label{8threeharomfolytatas} \rm 
  We now continue Example \ref{8threeharom}.  
   \begin{quote}  
     $C\langle E, F \rangle=\{\,  \{\, \#, \;  \$\, \}, \;
     \{\, \mathsterling\, \}, \;
     \{\,  \flat \, \}\, \}$,
  \end{quote}
  
     \begin{quote}     
$R\langle E, F\rangle=\{\,  
\# \dot{\rightarrow}   \#/_{\Theta\langle E ,F\rangle}, \;
\$ \dot{\rightarrow}   \#/_{\Theta\langle E ,F\rangle}, \;
\mathsterling \dot{\rightarrow}   \mathsterling/_{\Theta\langle E ,F\rangle}, \;
\flat \dot{\rightarrow}   \flat/_{\Theta\langle E ,F\rangle}  \,\}$,

 \end{quote}

\begin{quote}  
$C\langle F, E \rangle=\{\, \{\, \# \, \}, \;
  \{\, \$ \,\}, \; \{\, \mathsterling, \;\flat \, \}  \, \}$,
    \end{quote}

\begin{quote}     
$R\langle F, E\rangle=\{\,  
\# \dot{\rightarrow}   \#/_{\Theta\langle F ,E\rangle}, \;
\$ \dot{\rightarrow}   \$/_{\Theta\langle F ,E\rangle}, \;
\mathsterling \dot{\rightarrow}   \mathsterling/_{\Theta\langle F ,E\rangle}, \;
\flat \dot{\rightarrow}   \mathsterling/_{\Theta\langle F ,E\rangle} \,\}$,
\end{quote}
  
\begin{quote}  
  $C\langle E \cup F, \emptyset \rangle=
\{\,  \{\, \#, \;  \$\, \}, \;
\, \{\, \mathsterling, \;\flat \, \}  \, \}$, and
\end{quote}
\begin{quote}     
$R\langle E \cup F, \emptyset\rangle=\{\,  
\# \dot{\rightarrow}   \#/_{\Theta\langle   E\cup F, \emptyset\rangle}, \;
\$ \dot{\rightarrow}   \#/_{\Theta\langle   E\cup F, \emptyset\rangle}, \;
\mathsterling \dot{\rightarrow}
               \mathsterling/_{\Theta\langle   E\cup F, \emptyset\rangle}, \;
\flat \dot{\rightarrow}   \mathsterling/_{\Theta\langle E \cup F , \emptyset \rangle}
\,\}$.
\end{quote}
Observe that GTRS  $ R \langle E, F \rangle$,  GTRS $ R \langle F, E \rangle$, and GTRS 
$R\langle E\cup F,  \emptyset \rangle$ are not total.
\end{exa}

\section{The Auxiliary  dpwpa  for the GTESs $E$ and $F$}
\label{three-way} 
Recall that  $E$ and $F$ are GTESs over a  signature   $\S$, $n=\mathrm{size}(E)+\mathrm{size}(F)$, and  $W$ stands for
$ST\langle E \cup F\rangle$. 
 We  define the auxiliary dpwpa $ AUX[ E;  F ]=
(C\langle E\cup F, \emptyset \rangle, A[ E;  F])$ 
for the GTESs $E$ and $F$.
To simplify our notations, each vertex  of the dpwpa $ AUX [E; F] $ is a constant     ${b}\in C\langle E\cup F, \emptyset \rangle$,
i.e, an  equivalence class of  $\Theta\langle E\cup F, \emptyset\rangle$. 
 Furthermore, we study the $R\langle {E \cup F}, \emptyset \rangle$ reachability of a constant     ${b}\in C\langle E\cup F,  \emptyset \rangle$
   from    a proper extension    of a constant     ${c}\in C\langle E\cup F,  \emptyset \rangle$.

 \begin{df}\label{mimika}\rm  Let
 $E$ and $F$ be  GTESs over  a signature $\Sigma$. 
 Then  let 
\begin{quote} 
$INC( E\cup F, E) = \{\, ({b}, {c}) \mid {b}\in C \langle E \cup F, \emptyset \rangle,  {c}\in C\langle E, F\rangle,
  \mbox{  and } {c}\subseteq {b} \, \}$,  \newline
   $NUM(  E\cup F, E) = \{\, ({b}, k) \mid {b}\in C \langle E \cup F, \emptyset \rangle, k=|\{ {c}\in C\langle E, F\rangle \mid 
   {c}\subseteq {b}\, \}| \, \}$,  \newline
   $INC( E\cup F, F) = \{\, ({b}, {c}) \mid {b}\in C\langle E \cup F, \emptyset\rangle, {c}\in C \langle F, E \rangle \mbox{  and } {c}\subseteq {b} \, \}$, and \newline
  $NUM( E\cup F, F) = \{\, ({b}, k) \mid {b}\in C \langle E \cup F, \emptyset \rangle, k=|\{ {c}\in C\langle F, E\rangle\mid 
  {c}\subseteq {b}\, \}| \, \}$.
 \end{quote}
 \end{df}
 Intuitively, $INC( E\cup F, E)$ shows the inclusions between the elements of $C \langle E \cup F, \emptyset \rangle$ and $C\langle E, F\rangle$. 
 $NUM( E\cup F, E)$ consists of all pairs $({b}, k)$, where the second component $k$ is the number of all pairs 
 in $INC( E\cup F, E)$, with first component    ${b}$. Similar statements hold true for $INC(E \cup F, F)$ and $NUM(E \cup F, F)$ as well.
 
 \begin{sta} \label{csorvas} Let
 $E$ and $F$ be  GTESs over  a signature $\Sigma$. Then 
 $|INC( E\cup F, E)|\leq n $ and $|INC( E\cup F, F)|\leq n $. 
 \end{sta}
 \begin{proof}
 By Statement \ref{darabsz}, $|C\langle E, F  \rangle|\leq n$. For each    ${c}\in C \langle E, F \rangle $, there exists at most one
   ${b}\in C \langle E \cup F, \emptyset \rangle $ such that $ ({b}, {c}) \in INC( E\cup F, E)  $.
Therefore,   $|INC( E\cup F, E)|\leq |C\langle E, F  \rangle|\leq n $. 

We can show similarly that  $|INC( E\cup F, F)|\leq |C\langle F, E  \rangle|\leq n $. 
\end{proof}
  We get the following statement by direct inspection of Definition \ref{mimika}.
 \begin{sta}\label{dan}
 1.  For every    ${b}\in C \langle E \cup F, \emptyset \rangle $,  there exists a unique ordered pair $ ({b}, {c}) \in INC( E\cup F, E) $
 if and only if    ${b}\in C \langle E, F \rangle $. Similarly, for every   ${b}\in C \langle E \cup F, \emptyset \rangle $,  there exists a unique ordered pair $ ({b}, {c}) \in 
 INC( E\cup F, F) $
 if and only if    ${b}\in C \langle F, E \rangle $. 
 
 2. For every       ${b}\in C \langle E \cup F, \emptyset \rangle $,  there exists a unique ordered pair $ ({b}, k) \in NUM (E\cup F, E) $. Here  $k=1$
 if and only if    ${b}\in C \langle E, F \rangle $. Similarly, for every   ${b}\in C \langle E \cup F, \emptyset \rangle $,  there exists a unique ordered pair $ ({b}, \ell) \in 
 NUM (E\cup F, F) $. Here  $\ell=1$ 
 if and only if    ${b}\in C \langle F, E \rangle $. 
 
 \end{sta}

 \begin{df}\label{lepesttart} \rm Let
 $E$ and $F$ be  GTESs over  a signature $\Sigma$, and     ${b}\in C\langle E\cup F, \emptyset\rangle$. 
     We say that GTRS $R\langle E, F \rangle$  keeps up with the  GTRS 
$R\langle E\cup F, \emptyset \rangle$ writing the constant     ${b}\in C\langle E\cup F, \emptyset\rangle$ if 
\begin{quote}
for every rewrite rule  of the form 
 $\s({b}_1, \ldots, {b}_m) \dot{\rightarrow} {b}\in  R\langle {E \cup F}, \emptyset \rangle$ with \newline
$m\in \Nat$, $\sigma \in \Sigma^{(m)}$, and    ${b}_1,\ldots,{b}_m \in C \langle E \cup F, \emptyset \rangle$, \newline 
 for every     ${c}_1, \ldots, {c}_m \in C\langle E, F\rangle$ with    ${c}_i \subseteq  {b}_i$
 for every $i\in \{\, 1, \ldots, m\, \}$,  \newline
  there exists a rewrite rule in $ R\langle E, F\rangle$ of the form 
 $\s({c}_1, \ldots, {c}_m )\dot{\rightarrow} {c}$ for some    ${c}\in C\langle E, F\rangle$.\end{quote} 
 \end{df}
 Note that  for the constant    ${c}\in C\langle E, F\rangle$ appearing in the definition above, we have  
      ${c} \subseteq{b}$. 
Definition \ref{lepesttart} also defines the notion that GTRS $R\langle F, E \rangle$  keeps up with the  GTRS 
$R\langle E\cup F, \emptyset \rangle$ writing the constant     ${b}\in C\langle E\cup F, \emptyset\rangle$.

\begin{df}\label{inaskodik}\rm  Let 
 $E$ and $F$ be  GTESs over  a signature  $\Sigma$. 
The auxiliary dpwpa
for the GTESs $E$ and $F$  is  $ AUX  [ E;  F] = (C\langle E\cup F, \emptyset\rangle,  A[E; F])$, where 
$ A[E; F]=BSTEP\langle E\cup F, \emptyset \rangle$.
      We assign the attributes 
      \begin{quote} $equal_E,\; equal_F,\;  keeps_E, \;  keeps_F, \;       
       visited\_outer\_loop, \;
      visited\_inner\_loop_E, \;
      visited\_inner\_loop_F, \; and \;
      visited: \mathrm{Boolean}$ 
      \end{quote} to the  vertices of $C\langle E\cup F, \emptyset \rangle$.  
     Furthermore, we  assign the satellite data
      $ counter_E, counter_F:  \Nat$
      to the rewrite rules of $ R \langle E\cup F, \emptyset \rangle$.      
 \end{df}

  For each vertex     ${b} \in C \langle E \cup F, \emptyset \rangle$,  our  decision algorithm computes the values of the above attributes such that the following conditions hold:  
  \begin{itemize}
\item [1.]   ${b}.equal_E=\mathrm{true}$ if and only if 
$  {b} \in C\langle E, F\rangle$. 
\item [2.]   ${b}.equal_F=\mathrm{true}$ if and only if 
$  {b} \in C\langle F, E\rangle$. 
\item [3.]   ${b} .keeps_E= \mathrm{true}$ if and only if  $ R \langle E, F \rangle$   keeps up with  $ R \langle E\cup F, \emptyset \rangle$ writing    ${b} $.
\item [4.]   ${b} .keeps_F= \mathrm{true}$ if and only if  $ R \langle F, E \rangle$   keeps up with  $ R \langle E\cup F, \emptyset \rangle$ writing    ${b} $.
\end{itemize}
Recall  that for each vertex    ${b}\in C\langle E \cup F, \emptyset\rangle $, 
list $Adj({b})$  consists of all  vertices  $  {c}\in C\langle E \cup F, \emptyset\rangle $ such that  there
exists a rewrite rule $\s({b}_1, \ldots, {b}_m)\dot{\rightarrow}{b}$ in $ R \langle E \cup  F, \emptyset\rangle$ with $\s\in \S_m$, $m\geq 1$, and    ${c}={b}_i$ for some $i\in \{\, 1, \ldots, m \, \}$.

\begin{sta}\label{iszap} Let 
 $E$ and $F$ be  GTESs over a signature   $\Sigma$. Then 
  $|A[E; F]|\leq 2  n$.
 \end{sta}
 \begin{proof} 
 By Definition \ref{inaskodik}, Statement \ref{drrszam},  we have 
 $|A[E; F]|=
 |BSTEP\langle E\cup F, \emptyset \rangle| \leq 2 n$.
 \end{proof}

  Intuitively, the basic idea of the proof of the following lemma  is similar to that of the pumping lemma for regular languages \cite{sipser2006},  
 context-free languages  \cite{sipser2006}, and regular tree languages  \cite{gs}, \cite{eng2}.  The main difference from the proof of the pumping lemma is that we iterate contracting the ground term rather than pumping it.
  We stop when we get such a  small ground term that  we cannot contract it any more.  
If  $ R \langle E\cup F, \emptyset \rangle$ 
 reaches a constant     ${b}\in C\langle E\cup F,  \emptyset \rangle$ from   a proper extension   of a constant       ${c}\in C\langle E\cup F,  \emptyset \rangle$,
then there exists a  $1$-context 
 $\zeta$  over $\S$ such that 
such that \begin{itemize}
\item $\zeta\neq \diamondsuit$,
\item $ R \langle E\cup F, \emptyset \rangle$ reaches     ${b}$ from  $\zeta[{c}]$, and 
\item  $ R \langle E\cup F, \emptyset \rangle$  reduces the subterms of $\zeta[{c}]$ at different  prefixes of $\mathrm{addr}(\zeta)$ -- except possibly for the pair  
   of $\varepsilon$ and   $\mathrm{addr}(\zeta)$ --
  to different constants in  $ C\langle E\cup F,  \emptyset \rangle$.
   \end{itemize}
  Thus the length of  $\mathrm{addr}(\zeta)$ is less than equal to  $|C\langle E\cup F, \emptyset \rangle| $.
 
 \begin{lem} \label{kokusz}  Let 
  $E$ and $F$  be  GTESs over a  signature   $\S$.
 Let 
  $ R \langle E\cup F, \emptyset \rangle$ reach a constant     ${b}\in C\langle E\cup F,  \emptyset \rangle$ from a proper extension  of a constant      ${c}\in C\langle E\cup F,  \emptyset \rangle$. 
          Then  there exists a $1$-context $\zeta$ over $\Sigma$, where
   \begin{itemize}
   \item [{\rm ({b})}] $\zeta\neq \diamondsuit$,
      \item  [{\rm ({c})}] $k\geq 1$ denotes   $\mathrm{length}(\mathrm{addr}(\zeta) )$, 
  \item [{\rm (c)}]
  for every $i\in\{\, 0, \ldots, k\,\}$,   $\alpha_i$ denotes  the prefix of $\mathrm{addr}(\zeta) $ of length $i$ 
  \end{itemize}   such that  
    \begin{itemize}
       \item  [{\rm (d)}] $\zeta[{c}]{\downarrow}_ {R\langle E\cup F, \emptyset \rangle} ={b}$,
       \item  [{\rm (e)}] $\zeta[{c}]|_{\alpha_i }  {\downarrow}_ {R\langle E\cup F, \emptyset \rangle} \neq \zeta[{c}]|_{\alpha_j } {\downarrow}_ {R\langle E\cup F, \emptyset \rangle}$  for any  $0\leq i < j \leq k$ such that 
   $1\leq i $ or $  j \leq k-1$,
      and 
  \item  [{\rm (f)}] $k \leq |C\langle E\cup F, \emptyset \rangle |$.
    \end{itemize}        
 \end{lem}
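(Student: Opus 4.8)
The plan is to construct the desired $1$-context $\zeta$ from an arbitrary proper extension witnessing reachability, and then repeatedly shorten it until the distinctness condition (e) holds, which will force the bound (f). We start from the hypothesis: $R\langle E\cup F, \emptyset\rangle$ reaches $a$ from some proper extension $\delta[b]$ of $b$, where $\delta$ is a $1$-context over $\Sigma$ with $\delta\neq\diamondsuit$. By Statement~\ref{olvaseler}, this reachability is equivalent to $\delta[b]\downarrow_{R\langle E\cup F,\emptyset\rangle}=a$, which is convergent by Proposition~\ref{rhcp}, so normal forms are unique. The strategy is to take among all $1$-contexts $\zeta$ satisfying (a)~$\zeta\neq\diamondsuit$ and (d)~$\zeta[b]\downarrow_{R\langle E\cup F,\emptyset\rangle}=a$ one whose address length $k=\mathrm{length}(\mathrm{addr}(\zeta))$ is \emph{minimal}. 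Such a $\zeta$ exists because the set of such contexts is nonempty (it contains $\delta$) and the address lengths form a nonempty subset of $\Nat$, and $k\geq 1$ since $\zeta\neq\diamondsuit$, giving (b) and (c).

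\textbf{The key step is to show that this minimal $\zeta$ satisfies the distinctness condition (e).} I would argue by contradiction: suppose there exist $0\leq i<j\leq k$ with ($1\leq i$ or $j\leq k-1$) such that the subterms $\zeta[b]|_{\alpha_i}$ and $\zeta[b]|_{\alpha_j}$ have the same $R\langle E\cup F,\emptyset\rangle$ normal form, say $c\in C\langle E\cup F,\emptyset\rangle$. The positions $\alpha_0=\varepsilon, \alpha_1,\ldots,\alpha_k=\mathrm{addr}(\zeta)$ lie along the spine from the root to the hole of $\zeta$. I would then excise the segment of the spine between $\alpha_i$ and $\alpha_j$: concretely, build a new $1$-context $\zeta'$ whose spine skips the portion strictly between depth $i$ and depth $j$, splicing the hole-bearing part hanging below $\alpha_j$ directly under the context above $\alpha_i$. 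Because both spliced subterms reduce to the same constant $c$ under the convergent system, and reduction is compatible with contexts, the normal form of $\zeta'[b]$ remains $a$, so $\zeta'$ still satisfies (d). The side condition ($1\leq i$ or $j\leq k-1$) guarantees $\zeta'\neq\diamondsuit$: we never collapse the entire spine, because at least one of the two "endpoints" being cut is strictly interior, so a nontrivial piece of context survives above the hole. But $\mathrm{length}(\mathrm{addr}(\zeta'))=k-(j-i)<k$, contradicting minimality. Hence (e) holds.

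\textbf{Finally, the bound (f) follows from (e) by a pigeonhole argument.} Consider the $k+1$ normal forms $\zeta[b]|_{\alpha_0}\downarrow,\ldots,\zeta[b]|_{\alpha_k}\downarrow$, all of which are constants in $C\langle E\cup F,\emptyset\rangle$ (each such subterm reduces to a constant along the reduction to $a$, again by convergence and inspection of the rule shapes in Definition~\ref{szabaly}). Condition (e) says these normal forms are pairwise distinct \emph{except} possibly for the single pair $(\alpha_0,\alpha_k)=(\varepsilon,\mathrm{addr}(\zeta))$. Thus among the $k+1$ values there are at least $k$ distinct constants, so $k\leq|C\langle E\cup F,\emptyset\rangle|$, which is (f).

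\textbf{The main obstacle I anticipate} is formalizing the "splicing" surgery in the induction step cleanly: one must be careful that the subterm of $\zeta[b]$ at $\alpha_j$ actually contains the hole (it does, since all $\alpha_i$ are prefixes of $\mathrm{addr}(\zeta)$) so that the result is again a genuine $1$-context, and that substituting $b$ back and reducing genuinely yields $a$ rather than some other constant. This requires invoking that $\red{R\langle E\cup F,\emptyset\rangle}$ is closed under contexts and that replacing a subterm by another subterm with the same unique normal form preserves the overall normal form, which is precisely where convergence (Proposition~\ref{rhcp}) and the uniqueness of normal forms (Proposition~\ref{nulladik}) do the real work. I would phrase the surgery via an explicit description of $\mathrm{addr}(\zeta)$ as a word over $\Nat$ and the corresponding decomposition of $\zeta$, so that cutting the factor between positions $i$ and $j$ is a straightforward word/context manipulation.
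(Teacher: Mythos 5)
Your proposal is correct and follows essentially the same route as the paper: the paper iteratively contracts the context by replacing the subterm at $\alpha_i$ with the subterm at $\alpha_j$ whenever two spine positions (other than the excluded pair $(\varepsilon,\mathrm{addr}(\zeta))$) share a normal form, terminating because the address length strictly decreases, whereas you package the same surgery as a minimal-length/extremal argument. Your explicit pigeonhole derivation of (f) from (e), and your observation that the side condition ($1\leq i$ or $j\leq k-1$) is exactly what keeps the spliced context from collapsing to $\diamondsuit$, fill in details the paper only asserts.
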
 
 \begin{proof}   
  It is sufficient to prove that there exists a $1$-context $\zeta$ over $\Sigma$ such that  
Conditions (a)--(e) hold. Because for any  $1$-context $\zeta$ over $\Sigma$ satisfying Conditions (a)--(e),  Condition (f) holds as well.
 By Statement \ref{olvaseler}, 
     there exists a $1$-context $\delta$ over $\Sigma$ such that $\delta\neq\diamondsuit$ and $\delta[{c}]{\downarrow}_ {R\langle E\cup F, \emptyset \rangle} ={b}$, i.e. 
     Conditions (a)--(d) hold for $\zeta=\delta $.
    We now distinguish two cases:
    
 {\em Case 1:}    
 Conditions  (e)  holds for $\zeta=\delta $. Then we are done.

    {\em Case 2:} Condition (e) does not hold for $\zeta=\delta $. 
    For every $i\in\{\, 0, \ldots, n\,\}$,   let $\beta_i$ denote the prefix of $\mathrm{addr}(\delta) $ of length $i$.
 There exist $i, j \in \{\, 0, \ldots, k \, \}$ 
  such that  
 \begin{quote}
 $\left(0\leq i < j \leq k-1  \mbox{ or } 1\leq i < j \leq k\right)$ and 
   $\zeta[{c}]|_{\beta_i }  {\downarrow}_ {R\langle E\cup F, \emptyset \rangle} = \zeta[{c}]|_{\beta_j } {\downarrow}_ {R\langle E\cup F, \emptyset \rangle}$.
 \end{quote}  
        We define  $\delta'\in CON_{\Sigma, 1}$  from  $\delta$  by replacing  the subterm $ \delta|_{\beta_i}$ at $\beta_i$ by the subterm   $\delta|_{\beta_j}$
   at $\beta_j$.  Then 
     \begin{quote}
     $\delta'\neq \diamondsuit$, 
    $\mathrm{length}(\mathrm{addr}(\delta')) < \mathrm{length}(\mathrm{addr}(\delta))$,  and 
     $\delta'[{c}]{\downarrow}_ {R\langle E\cup F, \emptyset \rangle} ={b}$ i.e.   Conditions (a)--(d) hold for $\zeta=\delta' $.
      \end{quote}
  We now distinguish two cases:

    {\em Case 2.1:} Condition (e) holds for $\zeta=\delta'$. Then Conditions (a)--(e) hold for $\zeta=\delta'$. In this case we are done.

    {\em Case 2.2:}  Condition (e) does  not hold for $\zeta=\delta'$. In this case we let $\delta =\delta'$ and repeat the procedure of contracting $\delta$  described above
    finitely many times until Condition (e) holds for $\zeta=\delta'$.  Then Conditions (a)--(e) hold for $\zeta=\delta'$. 
 \end{proof}
 Intuitively, 
 we now consider two constants,    ${b}$ and    ${c}$ in $C\langle E\cup F, \emptyset\rangle$   and a sequence of rewrite rules in  $R\langle {E \cup F}, \emptyset \rangle$
 of which consecutive applications leads  us from     ${c}$  to 
    ${b}$.
   Let    ${b}, {c} \in C\langle E\cup F, \emptyset \rangle$, $k\geq 1$,  and
  for each $j=1, \ldots, k$, let 
  \begin{itemize}
 \item  $\s_j({b}_{j1}, \ldots, {b}_{jm_j}) \dot{\rightarrow} {b}_j\in  R\langle {E \cup F}, \emptyset \rangle$, $m_j\geq 1$, 
  be a rewrite rule and 
\item  $i_j \in \{\, 1, \ldots, m_j\, \}$ be an integer 
 \end{itemize}
 such that \begin{itemize}
 \item    ${c}= {b}_{1i_1}$, 
 \item 
for each $j=1, \ldots, k-1$,    ${b}_j= {b}_{(j+1)i_{j+1}}$, and 
 \item 
    ${b}={b}_k$.
 \end{itemize}
The above  sequence of rewrite rules  is connected in the sense that   we can apply them one after another, the right-hand side of each rule appears on the left-hand side of the next rule.
  Furthermore,  the consecutive applications of these rewrite rules leads us from the constant     ${c}$  to the constant
    ${b}$ since     ${c}$ appears on the left-hand side of the first rule, and    ${b}$ is the right-hand side of the last rule.
 We can apply the above rewrite rules one after another in the following way: 
  \newline    ${c}$ appears in the left-hand side of the first rewrite rule, 
   \newline
   we apply the first rewrite rule to its left-hand side, $\s_1({b}_{11}, \ldots, {b}_{1m_1}) $, we get the right-hand side    ${b}_1$, 
    which appears on  the left-hand side of the second rewrite rule, 
\newline we apply the second  rewrite rule to its left-hand side, $\s_2({b}_{21}, \ldots, {b}_{2m_2}) $, we get the right-hand side    ${b}_2$,
    which appears on  the left-hand side of the third rewrite rule, and son on.\newline
   Finally, the right-hand side    ${b}_{k-1}$ of the last but one rewrite rule appears in the left-hand side of the last rewrite rule,  \newline
   we apply the last rewrite rule to 
    its left-hand side, $\s_k({b}_{k1}, \ldots, {b}_{km_k}) $, we get its right-hand side    ${b}_k={b}$.
   
  We now state formally and show the following intuitive statement: if   $ R \langle E\cup F, \emptyset \rangle$ reaches a constant     ${b}\in C\langle E\cup F,  \emptyset \rangle$ from   a proper extension   of a  constant       ${c}\in C\langle E\cup F,  \emptyset \rangle$, then 
  there exists a connected sequence of   rewrite rules in  $ R \langle E\cup F, \emptyset \rangle$ that  leads us from  the constant    ${c}$  to the constant    ${b}$.
   Moreover,    ${c}$ and the 
  intermediate 
  constants between    ${c}$ and    ${b}$, appearing as the right-hand sides  of the applied rewrite rules except for  the last one, are pairwise different. 
  \begin{lem} \label{banan} 
   Let 
  $E$ and $F$  be  GTESs over a  signature   $\S$.
  Let 
  $ R \langle E\cup F, \emptyset \rangle$ reach a constant     ${b}\in C\langle E\cup F, \emptyset \rangle$ from   a proper extension   of a constant       ${c}\in C\langle E\cup F,  \emptyset \rangle$. 
          Then there exists an integer     $1\leq  k \leq |C\langle E\cup F, \emptyset \rangle |$ and 
    for each $j=1, \ldots, k$, there exists  a rewrite rule 
 $\s_j({b}_{j1}, \ldots, {b}_{jm_j}) \dot{\rightarrow} {b}_j\in  R\langle {E \cup F}, \emptyset  \rangle$, $m_j\geq 1$, and an integer $i_j \in \{\, 1, \ldots, m_j\, \}$ 
 such that 
  \begin{itemize}
 \item    ${c}= {b}_{1i_1}$, 
  \item 
for each $j=1, \ldots, k-1$,     ${b}_j= {b}_{(j+1)i_{j+1}}$, 
  \item    ${c} \not\in \{\, {b}_1, \ldots, {b}_{k-1}\, \}$ and 
  for any  $ 1 \leq  i < j \leq  k-1$,     ${b}_i \neq {b}_j$, 
 and 
 \item 
    ${b}={b}_k$.
 \end{itemize}
  \end{lem}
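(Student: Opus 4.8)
The plan is to read the desired connected sequence of rewrite rules directly off the contracted $1$-context supplied by Lemma \ref{kokusz}. First I would apply Lemma \ref{kokusz} to obtain a $1$-context $\zeta$ over $\S$ together with $k=\mathrm{length}(\mathrm{addr}(\zeta))\geq 1$ and the prefixes $\alpha_0=\varepsilon,\alpha_1,\ldots,\alpha_k=\mathrm{addr}(\zeta)$ of $\mathrm{addr}(\zeta)$ satisfying Conditions (a)--(f). For each $i\in\{\,0,\ldots,k\,\}$ I set $c_i=\zeta[b]|_{\alpha_i}{\downarrow}_{R\langle E\cup F,\emptyset\rangle}$. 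Since the hole of $\zeta$ sits at $\alpha_k$ and $b$ is irreducible, $c_k=b$, while Condition (d) gives $c_0=a$.

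The first genuine step is to show that every $c_i$ is in fact a constant of $C\langle E\cup F,\emptyset\rangle$. Here I would use that $R\langle E\cup F,\emptyset\rangle$ is convergent (Proposition \ref{rhcp}) and that, by Definition \ref{szabaly}, each of its rules has a flat left-hand side $\s(a_1,\ldots,a_m)$ with all arguments constants; thus $R\langle E\cup F,\emptyset\rangle$ behaves as a deterministic bottom-up tree automaton. If some subterm $\zeta[b]|_{\alpha_i}$ had a non-constant normal form, then, as that normal form is irreducible and no rule can fire at a node having a non-constant child, no node on the path between $\alpha_i$ and the root could ever collapse to a constant, so $\zeta[b]{\downarrow}_{R\langle E\cup F,\emptyset\rangle}$ could not be the constant $a$, contradicting Condition (d). Hence each $c_i$ is a constant, and the unique reduction of $\zeta[b]$ assigns $c_i$ to the subterm at $\alpha_i$.

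Next I would read off, for each node $\alpha_i$ with $0\le i\le k-1$, the rule applied there. Writing $\alpha_{i+1}=\alpha_i\cdot j_i$ and $\zeta[b]|_{\alpha_i}=\s^{(i)}(u_1,\ldots,u_{m_i})$ (an internal node, so $m_i\geq 1$), the rule fired at $\alpha_i$ is $\s^{(i)}(u_1{\downarrow},\ldots,u_{m_i}{\downarrow})\dot{\rightarrow} c_i$, and its $j_i$-th argument is exactly $c_{i+1}$. I then re-index in reverse, letting the $j$-th rule of the claimed sequence be the rule at $\alpha_{k-j}$, so that $\s_j=\s^{(k-j)}$, $m_j=m_{k-j}\geq 1$, its right-hand side is $a_j:=c_{k-j}$, and its distinguished argument is $a_{j i_j}=c_{k-j+1}$. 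A direct check then gives $b=a_{1i_1}=c_k$, the chaining identity $a_j=a_{(j+1)i_{j+1}}=c_{k-j}$ for $1\le j\le k-1$, and $a=a_k=c_0$.

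Finally, the distinctness clauses come for free from Condition (e). Since $\{\,a_1,\ldots,a_{k-1}\,\}=\{\,c_1,\ldots,c_{k-1}\,\}$ and $b=c_k$, the requirements $a_i\neq a_j$ for $1\le i<j\le k-1$ and $b\notin\{\,a_1,\ldots,a_{k-1}\,\}$ are precisely the instances $c_{i'}\neq c_{j'}$ of Condition (e) in which the smaller index $i'$ is at least $1$ (the only pair excluded by (e) being $(c_0,c_k)$, which corresponds to the harmless possibility $a=b$). The bound $1\le k\le|C\langle E\cup F,\emptyset\rangle|$ is supplied by Conditions (b) and (f). The main obstacle is the intermediate claim of the second paragraph, that every subterm along the path reduces to a constant; once the deterministic bottom-up reading of $R\langle E\cup F,\emptyset\rangle$ is in place, everything else is bookkeeping transported verbatim from Lemma \ref{kokusz}.
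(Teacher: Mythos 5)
Your proposal is correct and follows essentially the same route as the paper: invoke Lemma \ref{kokusz}, take the normal forms of the subterms of $\zeta[b]$ at the prefixes of $\mathrm{addr}(\zeta)$ as the intermediate constants, read off the rules applied bottom-up along the path, and derive the distinctness clauses from Condition (e). Your explicit argument that each intermediate normal form is in fact a constant of $C\langle E\cup F,\emptyset\rangle$ is a detail the paper leaves implicit in its displayed reduction sequence, and it is a welcome addition rather than a deviation.
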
 
 \begin{proof}
 By Lemma \ref{kokusz},
 there exists a $1$-context $\zeta$ over $\Sigma$, where
   \begin{itemize}
   \item  [{\rm ({b})}] $\zeta\neq \diamondsuit$, 
      \item  [{\rm ({c})}] $k\geq 1$ denotes   $\mathrm{length}(\mathrm{addr}(\zeta) )$, 
  \item  [{\rm (c)}]
  for every $i\in\{\, 0, \ldots, k\,\}$,   $\alpha_i$ denotes  the prefix of $\mathrm{addr}(\zeta) $ of length $i$ 
  \end{itemize}   such that 
   \begin{itemize}
       \item  [{\rm (d)}] $\zeta[{c}]{\downarrow}_ {R\langle E\cup F, \emptyset \rangle} ={b}$,
       \item  [{\rm (e)}] $\zeta[{c}]|_{\alpha_i }  {\downarrow}_ {R\langle E\cup F, \emptyset \rangle} \neq \zeta[{c}]|_{\alpha_j } {\downarrow}_ {R\langle E\cup F, \emptyset \rangle}$  for any  $0\leq i < j \leq k$ such that 
   $1\leq i $ or $  j \leq k-1$,
      and 
  \item  [{\rm (f)}] $k \leq |C\langle E\cup F, \emptyset \rangle |$.
    \end{itemize} 
    As $\mathrm{addr}(\xi)=\alpha_k$, we have $\xi[{c}]|_{\alpha_k} ={c}$.
   \begin{quote}  
For each $i\in \{1, \ldots, k-1\, \}$, let 
     ${b}_i=\zeta[{c}]|_{\alpha_i }  {\downarrow}_ {R\langle E\cup F, \emptyset \rangle} $. 
  \end{quote} 
  Since 
     ${c}=\zeta[{c}]|_{\alpha_k }  {\downarrow}_ {R\langle E\cup F, \emptyset \rangle} $,   
   by Condition (e) 
 we have  \begin{quote}  
    ${c} \not\in \{\, {b}_1, \ldots, {b}_{k-1}\, \}$ and 
  for any  $ 1 \leq  i < j \leq  k-1$,     ${b}_i \neq {b}_j$.
    \end{quote} 
     Then consider a sequence of  $ R\langle {E \cup F}, \emptyset \rangle$   reductions, where we indicate the rewrite rules applied at the positions 
     $\alpha_{k-1}$, $\alpha_{k-2}$, $\ldots$, $\alpha_0$:
  \begin{quote}
$\zeta[{c}]=t_0 \tred {  R\langle {E \cup F}, \emptyset \rangle} s_1     \red {[ \alpha_{k-1}, \s_1({b}_{11}, \ldots, {b}_{1m_1}) \dot{\rightarrow} {b}_1]}t_1 
  \tred {  R\langle {E \cup F}, \emptyset \rangle} s_2 \red {[ \alpha_{k-2}, \s_2({b}_{21}, \ldots, {b}_{2m_2}) \dot{\rightarrow} {b}_2]}
  t_2 \tred {  R\langle {E \cup F}, \emptyset \rangle} \newline
  \cdots
      \red {[ \alpha_1, \s_{k-1}({b}_{(k-1)1}, \ldots, {b}_{(k-1)m_1}) \dot{\rightarrow} {b}_{k-1}]}t_{k-1} 
   \tred {  R\langle {E \cup F}, \emptyset \rangle} s_k \red {[ \alpha_0, \s_k({b}_{k1}, \ldots, {b}_{km_k}) \dot{\rightarrow} {b}_k ]} {b}_k= {b}$.
\end{quote}
Here
 \begin{itemize}
 \item $s_i\in T_{\Sigma \cup C\langle E\cup F, \emptyset \rangle }  $ for every $i\in \{1, \ldots, k\, \}$, and $t_i \in 
  T_{\Sigma \cup C\langle E\cup F, \emptyset \rangle} $ for every $i\in \{0, \ldots, k-1\, \}$,
 \item for each $i\in \{1, \ldots, k\, \}$,
     ${b}_i=\zeta[{c}]|_{\alpha_{k-i }}  {\downarrow}_ {R\langle E\cup F, \emptyset \rangle} $, 
 
 \item    ${c}= {b}_{1i_1}$ for some  $i_1\in \{\, 1, \ldots, m_1 \, \}$, 
      \item for each $j\in \{1, \ldots, k\, \}$,
   $\s_j({b}_{j1}, \ldots, {b}_{jm_j}) \dot{\rightarrow} {b}_j\in  R\langle {E \cup F}, \emptyset \rangle$, $m_j\geq 1$,  

 \item 
 for each $j=1, \ldots, k-1$, there exists  $i_{j+1}\in \{\, 1, \ldots, m_{j+1} \, \}$ such that 
    ${b}_j= {b}_{(j+1)i_{j+1}}$, and
 
 \item 
    ${b}={b}_k$. \qedhere
   \end{itemize} 
 \end{proof} 
    We have the following result by Lemma \ref{banan},  Statement  \ref{olvaseler},     
     and by the definition of $A[E; F]$ in Definition \ref{inaskodik}. 
 \begin{lem}\label{napsutes} Let 
  $E$ and $F$  be  GTESs over a  signature   $\S$.
 If  $ R \langle E\cup F, \emptyset \rangle$ 
 reaches a constant     ${b}\in C\langle E\cup F,  \emptyset \rangle$ from   a proper extension   of a constant       ${c}\in C\langle E\cup F,  \emptyset \rangle$, then there exists 
  a path of positive length or a cycle   from    the vertex    ${b}$ to the vertex    ${c}$  in the auxiliary dpwpa   
$ AUX[ E;  F] =
(C\langle E \cup F, \emptyset\rangle,A[E; F])$.
    \end{lem}

 We now show the converse of Lemma \ref{napsutes}. 
Intuitively,  
  if there exists a connected sequence of  rewrite rules in  $ R \langle E\cup F, \emptyset \rangle$ 
  that leads us from a constant    ${c}\in C\langle E\cup F,  \emptyset \rangle$  to a constant    ${b}\in C\langle E\cup F,  \emptyset \rangle$,
 then   $ R \langle E\cup F, \emptyset \rangle$ reaches the  constant     ${b}$  from a proper extension of the constant     ${c}$.  
 By Definition \ref{szeder}, Definition \ref{szabaly},  and Proposition \ref{masodik} we have the following lemma. 
  \begin{lem}\label{millio} Let 
  $E$ and $F$  be  GTESs over a  signature   $\S$.
 Let 
 $\s({b}_1, \ldots, {b}_m) \dot{\rightarrow} {b}\in  R\langle {E \cup F}, \emptyset \rangle$
 be a rewrite rule, where $m\geq 1$,    ${b}_1, \ldots, {b}_m, {b}\in C\langle E \cup F, \emptyset\rangle$. Let $i\in \{\, 1, \ldots, m\, \}$.
 Then there exist ground terms $s_1, \ldots, s_{i-1}, s_{i+1}, \ldots, s_m\in \ts$ such that 
 \begin{quote}
   $\s(s_1, \ldots, s_{i-1},{b}_i,  s_{i+1}, \ldots, s_m )\tred {R\langle E \cup F, \emptyset\rangle}\s({b}_1, \ldots, {b}_m) \red {R\langle E \cup F, \emptyset\rangle}{b}$.
   \end{quote}
   \end{lem}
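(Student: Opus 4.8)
The plan is to split the claimed two-phase reduction into its two pieces and handle each separately. The second step, $\s(a_1, \ldots, a_m) \red{R\langle E \cup F, \emptyset\rangle} a$, is immediate: it is nothing but the application at the root of the given rewrite rule $\s(a_1, \ldots, a_m) \dot{\rightarrow} a$, which by hypothesis lies in $R\langle E \cup F, \emptyset\rangle$. So all the work is in constructing the ground terms $s_j$ for $j \neq i$ and verifying the first (many-step) reduction.

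For the first phase, I would, for each index $j \in \{1, \ldots, m\} \setminus \{i\}$, produce a ground term $s_j \in \ts$ over the \emph{original} signature that rewrites to the constant $a_j$. The key point is that $a_j \in C\langle E \cup F, \emptyset\rangle$, so by Definition \ref{szeder} there is a representative $t_j \in W$ with $a_j = t_j/_{\Theta\langle E \cup F, \emptyset\rangle}$. Since $W = ST\langle E \cup F\rangle \subseteq \ts$, each such $t_j$ is a genuine ground term over $\Sigma$ (containing none of the new constants), which is exactly what the statement requires of the $s_j$. Setting $s_j = t_j$ and applying Proposition \ref{masodik} with $E \cup F$ in the role of $E$ and $\emptyset$ in the role of $F$, I get $t_j \tred{R\langle E \cup F, \emptyset\rangle} t_j/_{\Theta\langle E \cup F, \emptyset\rangle} = a_j$.

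It then remains to lift these independent argument-level reductions to the top-level term. Because $\tred{R\langle E \cup F, \emptyset\rangle}$ is closed under contexts, performing the reductions $s_j \tred{R\langle E \cup F, \emptyset\rangle} a_j$ one at a time at the respective argument positions $j \neq i$, while leaving the constant $a_i$ untouched in position $i$, yields
\[
\s(s_1, \ldots, s_{i-1}, a_i, s_{i+1}, \ldots, s_m) \tred{R\langle E \cup F, \emptyset\rangle} \s(a_1, \ldots, a_{i-1}, a_i, a_{i+1}, \ldots, a_m) = \s(a_1, \ldots, a_m).
\]
Concatenating this with the root step $\s(a_1, \ldots, a_m) \red{R\langle E \cup F, \emptyset\rangle} a$ gives the asserted reduction sequence, completing the argument.

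I do not expect a genuine obstacle here; the lemma is a bookkeeping consequence of the reachability of class representatives. The one point deserving care — and the only place where the hypotheses are really used — is the insistence that each $s_j$ be a ground term over $\Sigma$ rather than over the extended signature $\Sigma \cup C\langle E \cup F, \emptyset\rangle$. This is precisely why I choose the $t_j$ from $W$ (whose elements are honest $\Sigma$-terms) and invoke Proposition \ref{masodik}, rather than simply taking $s_j = a_j$, which would not be a term over $\Sigma$. Everything else is a routine appeal to the fact that $\tred{}$ is a congruence and to the definition of the constant set.
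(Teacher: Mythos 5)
Your proof is correct and follows essentially the same route as the paper's, which simply cites Definition \ref{szeder}, Definition \ref{szabaly}, and Proposition \ref{masodik}: pick representatives $t_j \in W$ of the classes $a_j$, reduce each to its class by Proposition \ref{masodik}, and finish with the root application of the given rule. Your explicit attention to the $s_j$ being $\Sigma$-terms (elements of $W$) rather than extended-signature terms is exactly the point the paper's citations are meant to cover.
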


 \begin{lem}\label{trillio}  Let 
  $E$ and $F$  be  GTESs over a  signature   $\S$.
 Let   $k\geq 1$,   and 
  $({b}_k, {b}_{k-1}), \, ({b}_{k-1}, {b}_{k-2}),\,  \ldots , \, ({b}_1, {b}_0) \in A[E; F]$. 
  Then there exists a $1$-context $\delta$ over $\Sigma$ such that  $\delta\neq \diamondsuit$  and 
  $\delta[{b}_0]\tred {R\langle E\cup F, \emptyset\rangle} {b}_k$.
     \end{lem}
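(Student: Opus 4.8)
The plan is to prove Lemma \ref{trillio} by induction on the number $k$ of arcs, realizing each arc individually as a sequence of reductions by means of Lemma \ref{millio} and then stacking the resulting single-arc contexts on top of one another. The key observation is that an arc of $A[E;F]$ is essentially a rewrite rule read backwards, and Lemma \ref{millio} already tells us how to fill in the remaining arguments of that rule with ground terms so that the rule can actually fire.

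First I would unwind each arc. Since $A[E;F]=BSTEP\langle E\cup F, \emptyset\rangle$ by Definition \ref{inaskodik}, each arc $(a_j, a_{j-1})\in A[E;F]$ witnesses, by Definition \ref{zug}, a rewrite rule $\s_j(b_{j1}, \ldots, b_{jm_j}) \dot{\rightarrow} a_j\in R\langle E\cup F, \emptyset\rangle$ with $m_j\geq 1$ together with an index $i_j\in\{\,1,\ldots,m_j\,\}$ such that $a_{j-1}=b_{ji_j}$. Applying Lemma \ref{millio} to this rule and this index, I obtain ground terms $s_{j1}, \ldots, s_{j(i_j-1)}, s_{j(i_j+1)}, \ldots, s_{jm_j}\in\ts$ over $\Sigma$ with
$$\s_j(s_{j1}, \ldots, s_{j(i_j-1)}, a_{j-1}, s_{j(i_j+1)}, \ldots, s_{jm_j}) \tred{R\langle E\cup F, \emptyset\rangle} a_j.$$
I then set $\delta_j=\s_j(s_{j1}, \ldots, s_{j(i_j-1)}, \diamondsuit, s_{j(i_j+1)}, \ldots, s_{jm_j})$, a $1$-context over $\Sigma$ whose root is $\s_j$ of positive arity, so that $\delta_j\neq\diamondsuit$ and $\delta_j[a_{j-1}] \tred{R\langle E\cup F, \emptyset\rangle} a_j$.

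Next I would compose these single-arc contexts by nesting them inside one another. Writing $\delta_j[\gamma]$ for the result of replacing the hole of $\delta_j$ by a $1$-context $\gamma$ (which is again a $1$-context), I define $\gamma_1=\delta_1$ and $\gamma_j=\delta_j[\gamma_{j-1}]$ for $j=2,\ldots,k$, and take $\delta=\gamma_k$. Since the root symbol of $\delta$ is $\s_k$ with $m_k\geq 1$, we have $\delta\neq\diamondsuit$. I would then show by induction on $j$ that $\gamma_j[a_0]\tred{R\langle E\cup F, \emptyset\rangle} a_j$: the base case $j=1$ is exactly $\delta_1[a_0]\tred{R\langle E\cup F, \emptyset\rangle} a_1$, and in the step one has $\gamma_j[a_0]=\delta_j[\gamma_{j-1}[a_0]] \tred{R\langle E\cup F, \emptyset\rangle} \delta_j[a_{j-1}] \tred{R\langle E\cup F, \emptyset\rangle} a_j$, where the first reduction applies the induction hypothesis $\gamma_{j-1}[a_0]\tred{R\langle E\cup F, \emptyset\rangle} a_{j-1}$ inside the context $\delta_j$ and the second reduction is the single-arc reduction for $\delta_j$. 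Taking $j=k$ gives $\delta[a_0]\tred{R\langle E\cup F, \emptyset\rangle} a_k$, which is the claim.

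The only point needing a word of justification, though it is no real obstacle, is the closure of the rewrite relation under contexts, namely that $s\tred{R\langle E\cup F, \emptyset\rangle} t$ implies $\gamma[s]\tred{R\langle E\cup F, \emptyset\rangle} \gamma[t]$ for every $1$-context $\gamma$. This is immediate from the definition of $\red{R\langle E\cup F, \emptyset\rangle}$ via contexts: a single step $s=\delta'[l]\red{R\langle E\cup F, \emptyset\rangle}\delta'[r]=t$ lifts to $\gamma[s]=(\gamma[\delta'])[l]\red{R\langle E\cup F, \emptyset\rangle}(\gamma[\delta'])[r]=\gamma[t]$ because $\gamma[\delta']$ is again a $1$-context, and the property passes to the reflexive--transitive closure step by step. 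With this in hand the entire argument is purely the bookkeeping of nesting the contexts $\delta_1,\ldots,\delta_k$, so the lemma reduces to the clean induction on $k$ sketched above.
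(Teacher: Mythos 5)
Your proof is correct and follows essentially the same route as the paper's: both arguments unwind each arc of $A[E;F]$ into a rewrite rule via Definition \ref{zug}, invoke Lemma \ref{millio} to fill the remaining argument positions with ground terms, and then induct on the length of the arc sequence, nesting the resulting one-step contexts. The only cosmetic difference is that you build all the single-arc contexts up front and compose them afterwards, whereas the paper constructs the outer context inside the induction step; the underlying use of Lemma \ref{millio} and of closure of $\tred{R\langle E\cup F, \emptyset\rangle}$ under contexts is identical.
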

 \begin{proof} By Definition \ref{inaskodik} and Definition \ref{zug}, 
  for each $j=1, \ldots, k$, there exists a rewrite rule 
 $\s_j({b}_{j1}, \ldots, {b}_{jm_j}) \dot{\rightarrow} {b}_j\in  R\langle {E \cup F}, \emptyset \rangle$, $m_j\geq 1$, 
 such that 
 \begin{quote}
   ${b}_0\in \{\, {b}_{11}, \ldots, {b}_{1m_1} \, \}$ and 
for each $j=1, \ldots, k-1$, there exists  $i_{j+1}\in \{\, 1, \ldots, m_{j+1} \, \}$ such that 
    ${b}_j= {b}_{(j+1)i_{j+1}}$. \end{quote} 
 We proceed by induction on $k$.
 
   {\em Base Case:} $k=1$. By Lemma \ref{millio}, we are done.  
 
 {\em Induction Step:}  Let $k\geq 1$, and  assume that the statement of the lemma is true for $k$. We now show that it is true for $k+1$.
 By the induction hypothesis, 
  there exists a $1$-context $\delta$ over $\Sigma$ such that  
   \begin{quote}
  $\delta\neq \diamondsuit$  and $\delta[{b}_0]\tred {R\langle E \cup F, \emptyset\rangle} {b}_k$. 
\end{quote}
By Lemma \ref{millio},  there exist ground terms $s_1, \ldots, s_{i-1}, s_{i+1}, \ldots,  s_{m_{k+1}}\in \ts$ such that 
\begin{quote}
$\s_{k+1}(s_1, \ldots, s_{i_{k+1} -1},  {b}_k, s_{i_{k+1} +1}, \ldots, s_{m_{k+1}})= 
\s_{k+1}(s_1, \ldots, s_{i_{k+1} -1},   {b}_{(k+1)i_{k+1} }, s_{i_{k+1} +1}, \ldots, s_{m_{k+1}})
\tred {R\langle E\cup F, \emptyset\rangle}
\newline 
 \s_{k+1}({b}_{(k+1)1},  \ldots, {b}_{(k+1)i_{k+1} -1},   {b}_{(k+1)i_{k+1} }, {b}_{(k+1)i_{k+1} +1}, \ldots , {b}_{(k+1)m_{k+1}})
\red {R\langle E\cup F, \emptyset\rangle}{b}_{k+1}$.
\end{quote}
Thence we have 
 \begin{quote}
$\s_{k+1}(s_1, \ldots, s_{i_{k+1} -1},  \delta[{b}_0], s_{i_{k+1} +1}, \ldots, s_{m_{k+1}})\tred {R\langle E\cup F, \emptyset\rangle} 
\s_{k+1}(s_1, \ldots, s_{i_{k+1} -1},  {b}_k, s_{i_{k+1} +1}, \ldots, s_{m_{k+1}})= \newline
\s_{k+1}(s_1, \ldots, s_{i_{k+1} -1},   {b}_{(k+1)i_{k+1} }, s_{i_{k+1} +1}, \ldots, s_{m_{k+1}})
\tred {R\langle E\cup F, \emptyset\rangle}\newline
\s_{k+1}({b}_{(k+1)1}, \ldots, {b}_{ (k+1) i_{k+1} -1},   {b}_{(k+1)i_{k+1} }, {b}_{(k+1) i_{k+1} +1}, \ldots,  {b}_{(k+1)m_{k+1}})
\red {R\langle E\cup F, \emptyset\rangle}
 {b}_{k+1}$. \qedhere
\end{quote}
 \end{proof}

    We have the following statement 
by    Lemma \ref{trillio}.        
 \begin{lem}\label{napkelte}  Let 
  $E$ and $F$  be  GTESs over a  signature   $\S$.
  If there exists 
  a path of positive length or a cycle   from   a vertex    ${b}$ to a vertex    ${c}$  in the auxiliary dpwpa   
$ AUX[ E;  F] =
(C\langle E \cup F, \emptyset\rangle,A[E; F])$, then 
 $ R \langle E\cup F, \emptyset \rangle$ 
 reaches the constant     ${b}\in C\langle E\cup F,  \emptyset \rangle$ from a proper extension of the constant     ${c}\in C\langle E\cup F,  \emptyset \rangle$.
     \end{lem}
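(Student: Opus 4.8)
The plan is to deduce the statement directly from Lemma~\ref{trillio}, after translating the graph-theoretic hypothesis into exactly the sequence of arcs that that lemma consumes. All of the genuine rewriting work --- constructing a nontrivial $1$-context that reduces to the target constant by stacking the rules attached to consecutive arcs --- has already been packaged into Lemma~\ref{trillio} (which itself rests on Lemma~\ref{millio}). What remains here is essentially the mirror image of the bookkeeping used to prove the converse, Lemma~\ref{napsutes}, so I expect the argument to be short.

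First I would unwind the phrase \emph{path of positive length or a cycle from $a$ to $b$}. In either case this is a walk of positive length from $a$ to $b$ in $AUX[E;F]$, so it produces vertices $a=c_0,c_1,\ldots,c_k=b$ with $k\geq 1$ and $(c_{j-1},c_j)\in A[E;F]$ for every $j\in\{1,\ldots,k\}$. When the hypothesis is a cycle we merely have $a=b$, and the very same sequence of arcs is produced, so no separate case analysis is needed.

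Next I would reindex this sequence so that it matches the hypothesis of Lemma~\ref{trillio} verbatim. Setting $a_i:=c_{k-i}$ for $i=0,\ldots,k$ gives $a_0=b$ and $a_k=a$, and the arcs $(c_0,c_1),\ldots,(c_{k-1},c_k)$ become precisely $(a_k,a_{k-1}),(a_{k-1},a_{k-2}),\ldots,(a_1,a_0)\in A[E;F]$. Lemma~\ref{trillio} then supplies a $1$-context $\delta$ over $\Sigma$ with $\delta\neq\diamondsuit$ and $\delta[a_0]\tred{R\langle E\cup F,\emptyset\rangle}a_k$, that is, $\delta[b]\tred{R\langle E\cup F,\emptyset\rangle}a$.

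Finally I would read off the conclusion. Since $\delta$ is a $1$-context different from the hole and $b\in C\langle E\cup F,\emptyset\rangle$ is a constant, the ground term $\delta[b]$ is a proper extension of $b$; and $\delta[b]\tred{R\langle E\cup F,\emptyset\rangle}a$ is exactly the assertion that $R\langle E\cup F,\emptyset\rangle$ reaches $a$ from $\delta[b]$. Hence $R\langle E\cup F,\emptyset\rangle$ reaches $a$ from a proper extension of $b$, as required. I do not anticipate any real obstacle: the only two points that demand care are the orientation of the arcs of $A[E;F]=BSTEP\langle E\cup F,\emptyset\rangle$ (so that the reindexing lands in the exact form required by Lemma~\ref{trillio}) and the degenerate cycle case $a=b$, and both are routine.
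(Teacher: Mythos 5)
Your proof is correct and follows the same route as the paper, whose entire proof of this lemma is the single citation of Lemma~\ref{trillio}; you have merely written out the routine translation (walk $\to$ arc sequence, reindexing $a_i:=c_{k-i}$, and the observation that $\delta[b]$ with $\delta\neq\diamondsuit$ is a proper extension of $b$) that the paper leaves implicit.
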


 By Lemma \ref{napsutes} and Lemma \ref{napkelte}, we  have the following result. 
\begin{lem}\label{rozmaring} Let
 $E$ and $F$ be  GTESs over  a signature $\Sigma$.
  $ R \langle E\cup F, \emptyset \rangle$ reaches a constant    ${b}\in C\langle E\cup F, \emptyset \rangle$ from   a proper extension   of a constant       ${c}\in C\langle E\cup F,   \emptyset \rangle$
     if and only if there is a path of positive length or a cycle   from  
  the vertex    ${b}$ to the vertex    ${c}$
 in the auxiliary dpwpa   
$ AUX[ E;  F] =
(C\langle E \cup F, \emptyset\rangle, A[E; F])$.
   
 \end{lem}

\section {Running Examples Further Continued}\label{folytat}
 We  continue our running examples, for each one of them we  present the  dpwpa $AUX[ E;  F] $. 
\begin{exa}\label{1oneegymasodikfolytatas}\rm
We now continue Examples \ref{1oneegy} and  \ref{1oneegyfolytatas}. We construct
 the auxiliary  dpwpa $ AUX[ E;  F] =
(C\langle E\cup F,  \emptyset \rangle, A[E; F])$ for the GTESs $E$ and $F$.
Recall that
\begin{quote}
  $C\langle E\cup F, \emptyset  \rangle=\{\, \{\, \#\,\}$, $\{\,   \$\, \}$,
  $\{\, f(\#), \;  g(\#) \, \}$,
 $\{\, f(\$), \;g(\$)\, \}\, \}$.
\end{quote}
Observe that
 $A[E; F]=\{\, (f(\#)/_{\Theta\langle E\cup F, \emptyset \rangle}, \#/_{\Theta\langle E \cup F ,\emptyset\rangle}),\;
 (f(\$)/_{\Theta\langle E\cup F, \emptyset \rangle},   \;\$/_{\Theta\langle E \cup F ,\emptyset \rangle})\,\}$.
  \begin{itemize}
\item [1.] $equal_E,\; equal_F,\; keeps_E,\; keeps_F$ are 
 attributes of the   vertices in  $C\langle E\cup F, \emptyset  \rangle$. 
\item [2.] $  \#/_{\Theta\langle E \cup F ,\emptyset \rangle}.equal_E= \mathrm{true}$,\;
$  \$/_{\Theta\langle E \cup F ,\emptyset \rangle}.equal_E= \mathrm{true}$, \;
 $f(\#)/_{\Theta\langle E\cup F, \emptyset \rangle}.equal_E= \mathrm{true}$, and \newline 
$f(\$)/_{\Theta\langle E\cup F, \emptyset \rangle}.equal_E= \mathrm{false}$.

 $  \#/_{\Theta\langle E \cup F ,\emptyset \rangle}.equal_F= \mathrm{true}$,\;
$  \$/_{\Theta\langle E \cup F ,\emptyset \rangle}.equal_F= \mathrm{true}$, \;
 $f(\#)/_{\Theta\langle E\cup F, \emptyset \rangle}.equal_F= \mathrm{false}$, and  \newline 
$f(\$)/_{\Theta\langle E\cup F, \emptyset \rangle}.equal_F= \mathrm{true}$.

\item [3.] $  \#/_{\Theta\langle E \cup F ,\emptyset \rangle}.keeps_E= \mathrm{true}$,\;
$  \$/_{\Theta\langle E \cup F ,\emptyset \rangle}.keeps_E= \mathrm{true}$, \;
 $f(\#)/_{\Theta\langle E\cup F, \emptyset \rangle}).keeps_E= \mathrm{true}$, and  \newline 
$f(\$)/_{\Theta\langle E\cup F, \emptyset \rangle}).keeps_E= \mathrm{true}$.

 $  \#/_{\Theta\langle E \cup F ,\emptyset \rangle}.keeps_F= \mathrm{true}$,\; 
$  \$/_{\Theta\langle E \cup F ,\emptyset \rangle}.keeps_F= \mathrm{true}$, \;
 $f(\#)/_{\Theta\langle E\cup F, \emptyset \rangle}.keeps_F= \mathrm{true}$, and  \newline 
$f(\$)/_{\Theta\langle E\cup F, \emptyset \rangle}.keeps_F= \mathrm{true}$.

\end{itemize}
    \end{exa}

\begin{exa}\label{2szepenmasodikfolytatas}\rm 
We now continue Examples \ref{2szepen} and  \ref{2szepenfolytatas}. We construct
 the auxiliary  dpwpa $ AUX[ E;  F] =
(C\langle E\cup F,  \emptyset \rangle, A[E; F])$ for the GTESs $E$ and $F$.
Recall that 
 \begin{quote}     
 $C\langle E\cup F, \emptyset  \rangle=\{\, \{\, \#, \; f(\#), \;  f^2(\#), \;f^3(\#)\, \}\, \}= \{\, W\, \}$. 
     \end{quote}
Observe that 
 $A[E; F]=\{\, (\#/_{\Theta\langle E \cup F ,\emptyset \rangle}, \#/_{\Theta\langle E\cup F, \emptyset \rangle})  \,\}$.
  \begin{itemize}
\item [1.] 
$equal_E,\; equal_F,\; keeps_E,\; keeps_F$ are 
 attributes of the   vertices in  $(C\langle E\cup F, \emptyset\rangle, A[E; F])$. 

\item [2.] $  \#/_{\Theta\langle E \cup F ,\emptyset \rangle}.equal_E= \mathrm{false}$ \;  and \; 
  $\#/_{\Theta\langle E \cup F ,\emptyset \rangle}.equal_F= \mathrm{false}$.

\item [3.] $  \#/_{\Theta\langle E \cup F ,\emptyset \rangle}.keeps_E= \mathrm{false}$ \; and \; 
 $  \#/_{\Theta\langle E \cup F ,\emptyset \rangle}.keeps_F= \mathrm{false}$.

\end{itemize}
    \end{exa}

\begin{exa}\label{5uborkamasodikfolytatas}\rm 
We now continue Example \ref{5uborka} and  its sequel \ref{5uborkafolytatas}. We construct
 the auxiliary dpwpa $ AUX[ E;  F] =
(C\langle E\cup F, \emptyset  \rangle, A[E; F])$ for the GTESs $E$ and $F$.
 Recall that 
\begin{quote} $C\langle E \cup F, \emptyset\rangle=
         \{\, \{\,  \#, \; \$, \;  f(\#, \#), \; 
    f(\#, \$), \; f(\$, \#), \;f(\$, \$)\, \}\, \}$.   
         \end{quote}   
Observe  that  
      \begin{quote}
$ A[E; F])=
\{\,( \#/_{\Theta\langle E \cup F, \emptyset \rangle}, \#/_{\Theta\langle E \cup F ,\emptyset \rangle}) \,\}$. 
 \end{quote}
         By direct inspection of $C\langle E, F\rangle$,  $ C\langle F, E\rangle$, and   $C\langle E \cup F, \emptyset\rangle$, we get that
         \begin{quote} 
$ \#/_{\Theta\langle E \cup F, \emptyset \rangle}\not\in C\langle E, F\rangle$ hence $ \#/_{\Theta\langle E \cup F, \emptyset \rangle}.equal_E=\mathrm{false}$,\newline
      $ \#/_{\Theta\langle E \cup F, \emptyset \rangle}\not\in C\langle F, E\rangle$ hence $ \#/_{\Theta\langle E \cup F, \emptyset \rangle}.equal_F=\mathrm{false}$.
      \end{quote}     
By direct inspection of $C\langle E, F \rangle$,   $R\langle E, F \rangle  $,
$C\langle E \cup F, \emptyset \rangle $, and    $R\langle E \cup F, \emptyset \rangle  $, we get that  
$R\langle E, F\rangle$  keeps up with $R\langle E\cup F, \emptyset\rangle$ writing 
$\#/_{\Theta\langle E\cup F  ,\emptyset \rangle}$, hence        
$ \#/_{\Theta\langle   E \cup F, \emptyset \rangle}.keeps_E= \mathrm{true}$.

   The rewrite rule 
$f(\#/_{\Theta\langle E\cup F  ,\emptyset \rangle}, \#/_{\Theta\langle E\cup F  ,\emptyset \rangle})
\dot{\rightarrow}  
\#/_{\Theta\langle E\cup F  ,\emptyset \rangle}$
is in $R\langle E\cup F, \emptyset\rangle$ and
 $ f(\#, \$)/_{\Theta\langle F  ,E  \rangle}\subseteq \#/_{\Theta\langle E\cup F  ,\emptyset \rangle}$. 
However,   in $R\langle F, E\rangle$
there does not exist a rewrite  rule with left-hand side 
$f( f(\#, \$)/_{\Theta\langle F  ,E  \rangle}, f(\#, \$)/_{\Theta\langle F  ,E \rangle})$. Hence $R\langle F, E\rangle$ does not keep up with $R\langle E\cup F, \emptyset\rangle$ writing 
$\#/_{\Theta\langle E\cup F  ,\emptyset \rangle}$, and          
$ \#/_{\Theta\langle   E \cup F, \emptyset \rangle}.keeps_F = \mathrm{false}$.

   \end{exa}
    \begin{exa}\label{6kortemasodikfolytatas}\rm 
We now continue Examples \ref{6korte}  and  \ref{6kortefolytatas}. We construct
 the auxiliary dpwpa $ AUX[ E;  F] =
(C\langle E\cup F, \emptyset\rangle, A[E; F])$ for the GTESs $E$ and $F$.
Observe  that
    \begin{quote}
$A[E; F]=
\{\,( \#/_{\Theta\langle E \cup F, \emptyset \rangle}, \#/_{\Theta\langle E \cup F ,\emptyset \rangle}), \;
( \$/_{\Theta\langle E \cup F, \emptyset \rangle}, \#/_{\Theta\langle E \cup F ,\emptyset \rangle}), \;
( \$/_{\Theta\langle E \cup F, \emptyset \rangle}, \$/_{\Theta\langle E \cup F ,\emptyset \rangle})
\,\}$. 
 \end{quote}

    By direct inspection of $C\langle E, F\rangle$,  $ C\langle F, E\rangle$, and   $C\langle E \cup F, \emptyset\rangle$, we get that
 \begin{quote} 
   $\#/_{\Theta\langle E \cup F, \emptyset \rangle}.equal_E=\mathrm{false}$,
    $\#/_{\Theta\langle E \cup F, \emptyset \rangle}.equal_F =\mathrm{true}$,\newline
   $\$/_{\Theta\langle E \cup F, \emptyset \rangle}.equal_E=\mathrm{true}$, 
    $\$/_{\Theta\langle E \cup F, \emptyset \rangle}.equal_F =\mathrm{false}$,    
       
 \end{quote}

 We now show that GTRS $R\langle F, E \rangle$  keeps up with GTRS 
$R\langle E\cup F, \emptyset \rangle$ writing $\#/_{\Theta\langle E \cup F, \emptyset \rangle}$.
Recall that
  \begin{itemize} 
 \item    $\#/_{\Theta\langle F ,E \rangle}= \#/_{\Theta\langle E \cup F ,\emptyset \rangle}$. Furthermore, $\#/_{\Theta\langle F, E \rangle}$ is the only element of $C \langle F, E  \rangle $ which is a subset of $\#/_{\Theta\langle E \cup F, \emptyset \rangle}$.
   
\item  For the rewrite rule   $\# \dot{\rightarrow}   \#/_{\Theta\langle E\cup F ,\emptyset\rangle}\in R\langle E\cup F, \emptyset\rangle$,  there is a rewrite rule with left-hand side $\#$  
 in $R\langle F, E\rangle$.  
   
 \item   For the rewrite rule   
$f(\#/_{\Theta\langle E \cup F ,\emptyset\rangle}, \#/_{\Theta\langle E \cup F ,\emptyset\rangle})
\dot{\rightarrow}  
\#/_{\Theta\langle   E\cup F, \emptyset\rangle}   \in R\langle E\cup F, \emptyset\rangle$,  there is a rewrite rule with left-hand side 
   $f(\#/_{\Theta\langle F ,E\rangle}, \#/_{\Theta\langle F ,E \rangle})$  in $R\langle  F, E\rangle$. 
     \end{itemize}
   Consequently, 
     GTRS $R\langle F, E \rangle$  keeps up with GTRS 
$R\langle E\cup F, \emptyset \rangle$ writing $\#$, and \begin{quote} $\#/_{\Theta\langle E \cup F, \emptyset \rangle}.keeps_F =\mathrm{true}$.  \end{quote}

    We now show that  GTRS $R\langle F, E \rangle$ does not  keep up with  GTRS 
$R\langle E\cup F, \emptyset \rangle$ writing $\$/_{\Theta\langle E \cup F, \emptyset \rangle}$.
 Recall that  
   \begin{itemize}
  \item  $g(\$/_{\Theta\langle E\cup F, \emptyset\rangle}) \dot{\rightarrow}  \$/_{\Theta\langle E\cup F, \emptyset\rangle}\in  R\langle E \cup F, \emptyset \rangle$, 
   
 \item   $g(g(\$))/_{\Theta\langle F, E \rangle}) \subseteq \$/_{\Theta\langle E \cup F, \emptyset \rangle})$, and 
   
  \item  in  $R\langle E \cup F, \emptyset \rangle$, there does not exist a rewrite rule with left-hand side 
   $g(g(g(\$))/_{\Theta\langle F, E \rangle})$. 
   \end{itemize}
      Thus 
    GTRS $R\langle F, E \rangle$ does not   keep up with  GTRS 
$R\langle E\cup F, \emptyset \rangle$ writing $\$/_{\Theta\langle E \cup F, \emptyset \rangle}$, and  
\begin{quote}
 $\$/_{\Theta\langle E \cup F, \emptyset \rangle}.keeps_F =\mathrm{false}$.  
  \end{quote}

  \end{exa}
  
 \section{Constructing the Auxiliary  dpwpa  for the GTESs $E$ and $F$}\label{epit}

We adopt and slightly modify the fast ground completion  algorithm of 
 F\"ul\"op and V\'agv\"olgyi in Definition \ref{szabaly}. This way, we produce  the GTRSs  $R\langle E, F  \rangle$, GTRS $R\langle  F, E\rangle$, and $R\langle E\cup F, \emptyset\rangle$, and 
construct the  sets and relations and dpwpa, which we introduced in Section \ref{three-way}  and our decision procedure needs. 

We define a mapping $ counter_E: R \langle E \cup F, \emptyset\rangle \rightarrow \Nat$, 
to each rewrite rule $\s({b}_1, \ldots, {b}_m) \dot{\rightarrow}{b}$, $m\in \Nat$, 
in $ R \langle E \cup  F,  \emptyset \rangle$, we assign the  number of rules $\s({c}_1, \ldots, {c}_m) \dot{\rightarrow} {c}$ in $ R \langle  E, F\rangle $,
where
   ${c}_i \subseteq  {b}_i$ for every  $i\in \{\, 1, \ldots, m\, \}$.
 In pseudocode we treat the nonnegative integer $counter_E(\s({b}_1, \ldots, {b}_m) \dot{\rightarrow} {b})   $ as  satellite data 
 of   the rewrite rule
 $\s({b}_1, \ldots, {b}_m) \dot{\rightarrow}{b}$, and denote it as  
 $\s({b}_1, \ldots, {b}_m) \dot{\rightarrow} {b}.counter_E$. In this way, we store the mapping  $counter_E$.
    We define the  mapping $ counter_F: R \langle E \cup  F, \emptyset \rangle \rightarrow \Nat$ and treat $counter_F(\s({b}_1, \ldots, {b}_m) \dot{\rightarrow} {b})   $  in a similar way.

\begin{df}\label{augbarack}\rm 
Algorithm Constructing Auxiliary   dpwpa $ AUX[ E;  F] $
(CAD for short)\newline 
Input: 
 A signature $ \Sigma$  and  GTESs $E$ and $F$ over $\S$.
\newline 
Output:
\begin{enumerate}
\item the relations  $\rho\langle E, F\rangle$, $\rho\langle F, E\rangle$,  and $\rho\langle E\cup F, \emptyset\rangle$;
\item the  set of nullary symbols  $C\langle E, F\rangle$, $C\langle F, E\rangle$, and
$C\langle E \cup F, \emptyset\rangle$;
\item  $INC( E\cup F, E)$ stored  without duplicates in a red-black tree saved in a  variable  $RBT1$,
 for every ordered pair$({b}, {c})\in INC\langle E\cup F, E \rangle$, the search key is the first component    ${b}$, the second component,    ${c}$ is  satellite data;
\item  $INC(E\cup F, F)$ stored in a red-black tree  saved   without duplicates in a  variable    $RBT2$,
for every ordered pair$({b}, {c})\in INC( E\cup F, F)$, the search key is the first component    ${b}$, the second component,    ${c}$ is  satellite data;
\item $NUM(  E\cup F, E)$ stored in a red-black tree  saved   without duplicates in a  variable    $RBT3$,
for every ordered pair$({b}, k)\in NUM(E\cup F, E )$, the search key is the first component    ${b}$, the second component, $k$ is  satellite data;
\item $NUM( E\cup F, F)$ stored in a red-black tree  saved   without duplicates in a  variable    $RBT4$
for every ordered pair$({b}, k)\in NUM(E\cup F, F )$, the search key is the first component    ${b}$, the second component, $k$ is  satellite data;
\item the GTRS $R\langle E\cup F, \emptyset\rangle$  stored in a red-black tree saved   without duplicates in a  variable  $RBT5$;
 each node  in  $RBT5$ contains a serach key which is  a rule in $R\langle E\cup F, \emptyset\rangle$, and satellite data $counter_E$ and $counter_F$;
\item $BSTEP\langle E\cup F, \emptyset \rangle$ stored in a red-black tree saved   without duplicates in a  variable   $RBT6$; 
for every ordered pair$({b}, {c})\in BSTEP\langle E\cup F, \emptyset \rangle$, the search key is the first component    ${b}$, the second component,    ${c}$ is  satellite data;
 \item the GTRS $R\langle E, F\rangle$ stored in a red-black tree saved   without duplicates in a  variable  $RBT7$;
  \item the GTRS $R\langle  F, E\rangle$ stored in a red-black tree  saved   without duplicates in a  variable   $RBT8$; 
 \item 
the auxiliary dpwpa $ AUX[ E;  F] =
(C\langle E\cup F, \emptyset \rangle, A[E; F])$ for the GTESs $E$ and $F$, and 
   for every vertex    ${b}\in C\langle E \cup  F,  \emptyset \rangle$,
the values of  the  attributes    ${b}.equal_E$,    ${b}.equal_F$,    ${b}.keeps_E$, and    ${b}.keeps_F$;
\end{enumerate} 
\begin{quote} {\bf var} $i$, $m$: integer;  $\s$:  symbol in $\Sigma$; \newline
   ${b}, {b}_1, \ldots, {b}_{mar_{\Sigma}}: C \langle E \cup F, \emptyset \rangle$;\;
   ${c}, {c}_1,  \ldots, {c}_ {mar_{\Sigma}}:C \langle E, F \rangle$; \;
   ${d},{d}_1,  \ldots, {d}_{mar_{\Sigma}}:C \langle F, E \rangle$; \newline
$RBT1$, $RBT2$, $RBT3$, $RBT4$, $RBT6$, $RBT5$, $RBT7$, $RBT8$: red-black trees.
\end{quote}
(Here comes the main part of the algorithm CAD. We run  the fast ground completion  algorithm 
  in Definition \ref{szabaly} for each of GTESs $E$, $F$, and $E\cup F$, in this way we compute data items 1, 2, 7,  9, and 10 above. During the run of the algorithm CAD, it  computes  also  the other data items listed above. Our algorithm consists of six steps:
 \begin{enumerate}
\item We create a subterm dag $ G=( V, A, \lambda)$ for  $E$  and $F$. 
\item  We compute the relations  $INC(  E\cup F, E )$,  $INC( E\cup F , F )$, $NUM( E\cup F, E)$, $NUM(E\cup F, F)$, $BSTEP\langle E\cup F, \emptyset\rangle$, the GTRSs $R\langle E \cup F, \emptyset\rangle$, $R\langle E, F\rangle$, $R\langle F, E\rangle$, the values of the attributes  $equal_E$, $equal_F$ for every vertex in $C\langle E, F\rangle$,  and the   satellite data $counter_E$, and $counter_F$ for every rule in $R\langle E \cup F, \emptyset\rangle$.
\item We compute the relations  $NUM( E\cup F, E)$ and   $NUM(E\cup F, F)$. 
\item We create the  adjacency-list representation of the auxiliary dpwpa $ AUX[ E;  F] =
(C\langle E\cup F, \emptyset \rangle, A[E; F])$ for the GTESs $E$ and $F$. 
\item  We compute the values of the attribute $keeps_E$. \item   We compute the values of the attribute  $keeps_F$.)
\end{enumerate}

\begin{lstlisting}
(*1.\{ Create a subterm dag $ G=( V, A, \lambda)$ for  $E$  and $F$. \}*)
(*By  Proposition \ref{lezaras}, compute the congruence  closure $\rho\langle E, F\rangle$  of the relation $\tau\langle E, F\rangle=\{\, (u, v)\mid (\hat{u}, \hat{v}) \in E\, \}.$*)
(*$C\langle E, F\rangle:= \{\, x/ _{\rho\langle E, F\rangle} \mid x\in V \,\}$.*)
(*By  Proposition \ref{lezaras}, compute the congruence  closure $\rho\langle F, E\rangle$  of the relation $\tau\langle F, E\rangle=\{\, (u, v)\mid (\hat{u}, \hat{v}) \in F\, \}.$*)
(*$C\langle F, E\rangle:= \{\, x/ _{\rho\langle F, E\rangle} \mid x \in V\,\}$.*)
(*Compute  $\tau\langle E\cup F, \emptyset\rangle$ using the equality  $\tau\langle E\cup F, \emptyset\rangle=\tau\langle E, F\rangle \cup \tau\langle F, E\rangle$.*) 
(*By  Proposition \ref{lezaras}, compute the congruence  closure $\rho\langle E\cup F, \emptyset\rangle$  of the relation $\tau\langle E\cup F, \emptyset\rangle=\{\, (u, v)\mid (\hat{u}, \hat{v}) \in E\cup F\, \}.$*)
(*$C\langle E\cup F, \emptyset\rangle:= \{\, x/ _{\rho\langle E\cup F, \emptyset\rangle} \mid x \in V \,\}$.*)

(* 2.\{ Computing the relations  $INC(  E\cup F, E )$,  $INC( E\cup F , F )$, $NUM( E\cup F, E)$, $NUM(E\cup F, F)$, $BSTEP\langle E\cup F, \emptyset\rangle$, the GTRSs $R\langle E \cup F, \emptyset\rangle$, $R\langle E, F\rangle$, $R\langle F, E\rangle$, the values of the attributes  $equal_E$, $equal_F$ for every vertex in $C\langle E, F\rangle$,  and the   satellite data $counter_E$, and $counter_F$ for every rule in $R\langle E \cup F, \emptyset\rangle$. \}*)
 (*{\bf begin}*)
    (*$R\langle E, F\rangle:=\emptyset$;*)
    (*$R\langle F, E\rangle:=\emptyset$;*)  
    (*$R\langle E \cup F, \emptyset\rangle:=\emptyset$;*) 
    (*$INC( E\cup F, E):=\emptyset$;*)
    (*$INC(  E\cup F, F):=\emptyset$;*)
    (*$NUM( E\cup F, E):=\emptyset$;*)
    (*$NUM(  E\cup F, F):=\emptyset$;*)
    (*$BSTEP\langle E\cup F, \emptyset\rangle :=\emptyset$*)
 (*{\bf end}*)
 (*{\bf for} each    ${d}\in C\langle E\cup F, \emptyset \rangle $ {\bf do}*) 
   (*{\bf begin}*)
     (*   ${d}.equal_E=\mathrm{\mathrm{false}};$*)
     (*   ${d}.equal_F=\mathrm{\mathrm{false}};$*) 
   (*{\bf end}*)  
 (*{\bf for} each $x\in V$ {\bf do}*)
   (*{\bf begin}*)
     (*$\s:= \lambda(x)$;*)
     (*$m := outdegree(x);$*)
     (*   ${b} := \mathrm{FIND}_{ \rho\langle E, F \rangle  }(x)$;*)
     (*   ${c} := \mathrm{FIND}_{ \rho\langle F, E \rangle  }(x)$;*)
     (*   ${d} := \mathrm{FIND}_{ \rho\langle E \cup F, \emptyset \rangle  }(x)$;*)
     (*$INC(   E\cup F, E):=INC( E\cup F, E )\cup \{\, ( {d},{b}) \, \}$;*)
     (*(store $INC(  E\cup F, E)$ without duplicates   in the  red-black tree $RBT1$.)*)
     (*$INC( E\cup F, F):=INC(E\cup F, F )\cup \{\, ({d},{c}) \, \}$;*)
     (*(store $INC( E\cup F, F)$ without duplicates   in the  red-black tree $RBT2$.)*)
     (*{\bf if} $|{b}|=|{d}|$*)
       (*   ${\bf then}$    ${d}.equal_E:=\mathrm{\mathrm{true}};$*)  
     (*{\bf if} $|{c}|=|{d}|$*)
       (*   ${\bf then}$    ${d}.equal_F:=\mathrm{\mathrm{true}};$*) 
     (*(By Proposition  \ref{lezaras} we know the cardinalities of all  $W/_{\Theta\langle E, F \rangle}$-classes,  $W/_{\Theta\langle F, E\rangle}$-classes, and  $W/_{\Theta\langle E \cup F, \emptyset\rangle}$-classes.)*)    
     (*{\bf for} {\bf each} $1 \leq i \leq m $ {\bf do}*)
       (*{\bf begin}*)
         (*   ${b}_i := \mathrm{FIND}_{  \rho\langle E, F \rangle }(x/i)$;*)
         (*   ${c}_i := \mathrm{FIND}_{  \rho\langle F, E \rangle }(x/i)$;*)
         (*   ${d}_i := \mathrm{FIND}_{  \rho\langle E \cup F, \emptyset \rangle }(x/i)$;*)
       (*{\bf end}*)  
       (*{\bf if} $\s({d}_1, \ldots, {d}_m) \dot{\rightarrow} {d}\not \in R\langle E \cup F, \emptyset\rangle$*)
         (*{\bf then begin}*)
           (*$R\langle E \cup F, \emptyset\rangle:=R\langle E \cup F, \emptyset\rangle\cup \{\, \s({d}_1, \ldots, {d}_m) \dot{\rightarrow} {d} \, \}$;*)
           (*(Store $R\langle E \cup F, \emptyset\rangle$ without duplicates   in the  red-black tree $RBT5$.)*)
           (*$\s({d}_1, \ldots, {d}_m) \dot{\rightarrow} {d}.counter_E :=0$;*)
           (*$ \s({d}_1, \ldots, {d}_m) \dot{\rightarrow} {d}.counter_F:=0$;*)
           (*$BSTEP\langle  E\cup F, \emptyset\rangle :=BSTEP\langle  E\cup F, \emptyset\rangle \cup \{\, ({d}_1, {d}), \ldots, ({d}_m, {d})\, \}$;*)
           (*(Store $BSTEP\langle  E\cup F, \emptyset\rangle $  without duplicates in the  red-black tree $RBT6$)*)
         (*{\bf end}*)  
       (*{\bf if} $\s({b}_1, \ldots, {b}_m) \dot{\rightarrow} {b}\not \in R\langle E, F\rangle$*)
         (*{\bf then begin}*)
           (*$R\langle E, F\rangle:=R\langle E, F\rangle\cup \{\, \s({b}_1, \ldots, {b}_m) \dot{\rightarrow} {b}\, \}$;*)
           (*(Store $R\langle E, F\rangle$ without duplicates   in the  red-black tree $RBT7$.)*)
           (*$ \s({d}_1, \ldots, {d}_m) \dot{\rightarrow} {d}.counter_E  :=   \s({d}_1, \ldots, {d}_m) \dot{\rightarrow} {d}.counter_E+1$;*)
         (*{\bf end}*)
       (*{\bf if} $\s({c}_1, \ldots, {c}_m) \dot{\rightarrow} {c}\not \in R\langle F, E\rangle$*)
         (*{\bf then begin}*)
           (*$R\langle F, E\rangle:=R\langle F, E\rangle\cup \{\, \s({c}_1, \ldots, {c}_m) \dot{\rightarrow} {c}\, \}$;*)
           (*(Store $R\langle F, E\rangle$ without duplicates   in the  red-black tree $RBT8$.)*)
           (*$ \s({d}_1, \ldots, {d}_m) \dot{\rightarrow} {d}.counter_F  :=   \s({d}_1, \ldots, {d}_m) \dot{\rightarrow} {d}.counter_F+1$;*)
         (*{\bf end}*)
   (*{\bf end}*)

(*3. \{ Computing the relations  $NUM( E\cup F, E)$ and   $NUM(E\cup F, F)$. \}*)
(*Read the  nodes  of the red-black tree $RBT1$ in search key  order, during this process create the red-black tree $RBT3$ containing without duplicates the relation $NUM(  E\cup F, E)$;*)
(*Read the   nodes  of the red-black tree $RBT2$ in search key  order, during this process create the red-black tree $RBT4$ containing  without duplicates the relation $NUM( E\cup F, F)$;*)

(*4. \{ Creating the  adjacency-list representation of the auxiliary dpwpa $ AUX[ E;  F] =
(C\langle E\cup F, \emptyset \rangle, A[E; F])$ for the GTESs $E$ and $F$ \}*)
(* We build the adjacency-list representation of the dpwpa    $ AUX[ E;  F] =
(C\langle E\cup F, \emptyset\rangle,A[E; F])$ by reading in search key  order the nodes  of the black-red tree   $RBT6$ storing the relation  $BSTEP\langle E\cup F, \emptyset\rangle$. Recall that for every ordered pair $({b}, {c})\in BSTEP\langle E\cup F, \emptyset \rangle$, the search key is the first component    ${b}$, the second component,    ${c}$ is  satellite  data.*)

(*5. \{ Computing the values of the attribute $keeps_E$. \}*) 
(*{\bf for} {\bf each}     ${b}\in C\langle E\cup F, \emptyset\rangle$  {\bf do} *)
  (*   ${b}.keeps_E:=\mathrm{true}$;*)
(*{\bf for} {\bf each} rewrite rule $\s({b}_1, \ldots, {b}_m) \dot{\rightarrow} {b}\in R\langle E \cup F,   \emptyset \rangle  $  with $m\in\Nat$ {\bf do}*)  
  (*{\bf if} $\s({b}_1, \ldots, {b}_m) \dot{\rightarrow} {b}.counter_E  \neq \Pi_{i=1} ^ m |{b}_i/_{\Theta \langle E, F\rangle}|$*)
      (*{\bf then}    ${b}.keeps_E:=\mathrm{false}$*)
   
(*6. \{ Computing the values of the attribute  $keeps_F$. \}*) 
(*{\bf for} {\bf each}     ${b}\in C\langle E \cup  F, \emptyset \rangle$  {\bf do} *)
  (*   ${b}.keeps_F:=\mathrm{true}$;*)
(*{\bf for} {\bf each} rewrite rule $\s({b}_1, \ldots, {b}_m) \dot{\rightarrow} {b}\in R\langle E \cup F,  \emptyset  \rangle  $ with $m\in\Nat$ {\bf do}*)  
  (*{\bf if} $\s({b}_1, \ldots, {b}_m) \dot{\rightarrow} {b}.counter_F  \neq \Pi_{i=1} ^ m |{b}_i/_{\Theta \langle F, E\rangle}|$*)
      (*{\bf then}    ${b}.keeps_F:=\mathrm{false}$*)
\end{lstlisting}
\end{df}
\begin{lem}\label{eso} Let 
 $E$ and $F$ be  GTESs over  a signature  $\Sigma$. 
 \newline
  1. Algorithm CAD on $\Sigma$, $E$, and $F$ produces    
  \begin{itemize}
\item the relations  $\rho\langle E, F\rangle$, $\rho\langle F, E\rangle$,  and $\rho\langle E\cup F, \emptyset\rangle$;
\item the  set of nullary symbols  $C\langle E, F\rangle$, $C\langle F, E\rangle$, and
$C\langle E \cup F, \emptyset\rangle$;
\item  $INC( E\cup F, E) $ stored  without  duplicate values  in a red-black tree   saved in a  variable  $RBT1$;
\item  $INC( E\cup F, F)$ stored  without  duplicate values  in a red-black tree  saved in a  variable    $RBT2$;
\item $NUM(E\cup F, E)$ stored  without  duplicate values  in a red-black tree  saved in a  variable    $RBT3$;
\item $NUM(E\cup F, F)$ stored  without  duplicate values  in a red-black tree  saved in a  variable    $RBT4$;
\item the GTRS $R\langle E\cup F, \emptyset\rangle$  stored without  duplicate values in a red-black tree saved in a  variable  $RBT5$,
 each node  in  $RBT5$ contains a search key  which is  a rule in $R\langle E\cup F, \emptyset\rangle$, and satellite data $counter_E$ and $counter_F$;
 \item $BSTEP\langle E\cup F, \emptyset \rangle$ stored without  duplicate values in a red-black tree saved in a  variable   $RBT6$; 
for every ordered pair$({b}, {c})\in BSTEP\langle E\cup F, \emptyset \rangle$, the search key is the first component    ${b}$, the second component,    ${c}$ is  satellite data;
  \item the GTRS $R\langle E, F\rangle$ stored without  duplicate values in a red-black tree saved in a  variable  $RBT7$;
  \item the GTRS $R\langle  F, E\rangle$ stored without  duplicate values in a red-black tree  saved in a  variable   $RBT8$; 
 \item 
the auxiliary dpwpa $ AUX[ E;  F] =
(C\langle E\cup F, \emptyset \rangle, A[E; F])$ for the GTESs $E$ and $F$.
\end{itemize}  
   
 2.    Algorithm CAD on $\Sigma$, $E$, and $F$  decides
 for every constant    ${b}\in C\langle E\cup F, \emptyset \rangle$, 
 whether     ${b}\in C\langle E, F \rangle$ and whether     ${b}\in C\langle F, E \rangle $, and  
    sets the value of the attribute    ${b}.equal_E$ to true if and only if    ${b}\in C\langle E, F \rangle$, and sets the value of the attribute    ${b}.equal_F$ to true if and only if    ${b}\in C\langle F, E \rangle$.

   3.  Algorithm CAD on $\Sigma$, $E$, and $F$ decides for any    ${d}\in  R\langle {E \cup F}, \emptyset\rangle$ and any rewrite rule 
 $\s({d}_1, \ldots, {d}_m) \dot{\rightarrow} {d}\in  R\langle {E \cup F}, \emptyset \rangle$ with right-hand side    ${d}$,
 whether  for all    ${b}_1, \ldots, {b}_m  \in C\langle E, F\rangle$  such that 
    ${b}_i \subseteq  {d}_i$ for every  $i\in \{\, 1, \ldots, m\, \}$, 
 there exists a  rewrite rule  in $R\langle E, F \rangle$ with left-hand side 
 $\s({b}_1, \ldots, {b}_m) $. If so, then it sets  the value of the attribute    ${d}.keeps_E$ to $\mathrm{true}$, otherwise it sets 
 the value of the attribute    ${d}.keeps_E$ to $\mathrm{false}$.

   4. Algorithm CAD on $\Sigma$, $E$, and $F$ decides for any      ${d}\in  R\langle {E \cup F}, \emptyset \rangle$ and any rewrite rule 
 $\s({d}_1, \ldots, {d}_m) \dot{\rightarrow} {d}\in  R\langle {E \cup F}, \emptyset\rangle$ with right-hand side    ${d}$,
     whether  for all    ${c}_1, \ldots, {c}_m  \in C\langle F, E\rangle$  such that 
    ${c}_i \subseteq  {d}_i$ for every  $i\in \{\, 1, \ldots, m\, \}$, 
 there exists a  rewrite rule  in $R\langle F, E \rangle$ with left-hand side 
 $\s({c}_1, \ldots, {c}_m) $. If so, then it sets  the value of the attribute    ${d}.keeps_F$ to $\mathrm{true}$, otherwise it sets 
 the value of the attribute    ${d}.keeps_F$ to $\mathrm{false}$. 
 
  \end{lem}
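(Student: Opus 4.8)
The plan is to verify the correctness of Algorithm CAD block by block, reducing each claimed output to the definitions of Section \ref{three-way} and to the properties of $R\langle E,F\rangle$, $R\langle F,E\rangle$, and $R\langle E\cup F,\emptyset\rangle$ recalled in Section \ref{gyors}. The one structural fact underlying all four parts is the inclusion $\Theta\langle E,F\rangle \subseteq \Theta\langle E\cup F,\emptyset\rangle$ (and symmetrically for $F$): since $\tthue E$ is the least congruence containing $E\subseteq E\cup F$, we have $\tthue E\subseteq \tthue {E\cup F}$, and intersecting with $W\times W$ gives the claim. Consequently, for each vertex $x$ of the subterm dag the three names $a=\mathrm{FIND}_{\rho\langle E,F\rangle}(x)$, $b=\mathrm{FIND}_{\rho\langle F,E\rangle}(x)$, $c=\mathrm{FIND}_{\rho\langle E\cup F,\emptyset\rangle}(x)$ attached to $\widehat x$ satisfy $a\subseteq c$ and $b\subseteq c$ as subsets of $W$, which I will use repeatedly.

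For Part 1 I would first invoke Statement \ref{szolo} and Proposition \ref{lezaras} to justify that Step 1 builds one subterm dag for $E$ and $F$ and computes the three congruence closures together with their class cardinalities, noting that $\tau\langle E\cup F,\emptyset\rangle=\tau\langle E,F\rangle\cup\tau\langle F,E\rangle$ directly from the definition of $\tau$. Proposition \ref{kozen} then identifies the $\rho$-classes of the dag with the $\Theta$-classes on $W$, so $C\langle E,F\rangle$, $C\langle F,E\rangle$, $C\langle E\cup F,\emptyset\rangle$ are produced correctly. The remaining objects are read off Step 2: as $x$ ranges over all vertices, $\widehat x=\s(\widehat{x/1},\ldots,\widehat{x/m})$ ranges over all of $W$, so the triples $(\s(a_1,\ldots,a_m),a)$, $(\s(b_1,\ldots,b_m),b)$, $(\s(c_1,\ldots,c_m),c)$ range exactly over the rules of $R\langle E,F\rangle$, $R\langle F,E\rangle$, $R\langle E\cup F,\emptyset\rangle$ prescribed by Definition \ref{szabaly}; the insertions into the $INC$ relations and into $BSTEP\langle E\cup F,\emptyset\rangle$ match Definitions \ref{mimika} and \ref{zug}, Step 3 converts the $INC$ trees into the $NUM$ relations of Definition \ref{mimika}, and Step 4 builds the adjacency list of $AUX[E;F]$ from $A[E;F]=BSTEP\langle E\cup F,\emptyset\rangle$ per Definition \ref{inaskodik}, with duplicate suppression justified by Proposition \ref{redblack}.

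For Part 2, at the vertex $x$ with $\widehat x=t$ we have $a=t/_{\Theta\langle E,F\rangle}\subseteq c=t/_{\Theta\langle E\cup F,\emptyset\rangle}$; since these are finite subsets of $W$ whose cardinalities are available from Proposition \ref{lezaras}, the test $|a|=|c|$ holds if and only if $a=c$, that is, if and only if the set $c$ is a single $\Theta\langle E,F\rangle$-class, i.e. $c\in C\langle E,F\rangle$. This is exactly the characterisation in Statement \ref{dan}. Because every $c\in C\langle E\cup F,\emptyset\rangle$ is realised as $\widehat x/_{\Theta\langle E\cup F,\emptyset\rangle}$ for some $x$, and because the flag once set stays set, $c.equal_E$ is set to true precisely when $c\in C\langle E,F\rangle$, and symmetrically for $c.equal_F$.

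The technical core is Parts 3 and 4, and this is where I expect the main obstacle to lie. I will show that for each rule $\s(a_1,\ldots,a_m)\dot{\rightarrow} a\in R\langle E\cup F,\emptyset\rangle$ the datum $counter_E$ equals the number of rules $\s(b_1,\ldots,b_m)\dot{\rightarrow} b\in R\langle E,F\rangle$ with $b_i\subseteq a_i$ for all $i$. The delicate point is that the increment in Step 2 is triggered by the $E,F$-rule but attributed to the $E\cup F$-rule computed at the same vertex, so I must check that this attribution is well defined and exhaustive: if two vertices yield the same $E,F$-rule then, by $\Theta\langle E,F\rangle\subseteq\Theta\langle E\cup F,\emptyset\rangle$, each left component $a_i\supseteq b_i$ and the right component $a\supseteq b$ are forced, so they yield the same $E\cup F$-rule; conversely, for any $E,F$-rule with $b_i\subseteq a_i$ the witnessing subterm lies in the class $a_i$ (the classes being disjoint), so it is attributed to this very rule. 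As each rule of $R\langle E,F\rangle$ is inserted only once and has a distinct left-hand side (the remark after Definition \ref{szabaly}), $counter_E$ counts exactly those tuples $(b_1,\ldots,b_m)$ with $b_i\subseteq a_i$ that carry an $E,F$-rule, while $\Pi_{i=1}^m |a_i/_{\Theta\langle E,F\rangle}|$ (with the empty-product convention handling the nullary case $m=0$) counts all such tuples. Hence for a fixed rule the two quantities agree if and only if every admissible left-hand side carries an $E,F$-rule, and taking the conjunction over all rules with right-hand side $a$ gives, by Definition \ref{lepesttart}, that $a.keeps_E$ is set to true exactly when $R\langle E,F\rangle$ keeps up with $R\langle E\cup F,\emptyset\rangle$ writing $a$. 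Part 4 is identical with $F$ in place of $E$.
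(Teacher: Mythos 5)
Your proof is correct and follows essentially the same route as the paper's: Part 1 by direct inspection of the algorithm together with the cited construction propositions, Part 2 via the cardinality test $|a|=|c|$ for nested classes, and Parts 3--4 by comparing the value of $counter_E$ with the product $\Pi_{i=1}^m |c_i/_{\Theta\langle E,F\rangle}|$. Your explicit check that each counter increment is attributed to a well-defined and unique rule of $R\langle E\cup F,\emptyset\rangle$ (using that every $\Theta\langle E,F\rangle$-class lies in exactly one $\Theta\langle E\cup F,\emptyset\rangle$-class and that left-hand sides determine rules uniquely) is a detail the paper's proof leaves implicit, but it does not alter the argument.
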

\begin{proof}
By Proposition \ref{kecskemet} and by direct inspection of Definition \ref{augbarack}, we obtain Statement 1. 

By Proposition \ref{lezaras},  we know the cardinality of all  $W/_{\Theta\langle E, F \rangle}$-classes,  $W/_{\Theta\langle F, E\rangle}$-classes, and 
 $W/_{\Theta\langle E \cup F, \emptyset\rangle}$-classes. For all     ${b} \in R\langle E,F  \rangle$
 and 
     ${d} \in R\langle E\cup F, \emptyset  \rangle$ such that    ${b}\cap {d} \neq \emptyset$, we have    ${b}={d}$ if and only if $|{b}|=|{d}|$.
  Hence we have Statement 2.
 
We now show Statement 3.
 For every  rewrite rule $\s({d}_1, \ldots, {d}_m)\dot{\rightarrow} {d} \in R\langle E\cup F, \emptyset  \rangle$, we decide 
whether for all    ${b}_1, \ldots, {b}_m\in C\langle E, F\rangle$ with     ${b}_1\subseteq {d}_1, \ldots, {b}_m\subseteq {d}_m$,
there exists a rewrite rule with left-hand side 
 $\s({b}_1, \ldots, {b}_m)$ in $R\langle E, F \rangle$.  When constructing the 
 GTRS $R\langle E, F \rangle$, whenever we add to  $R\langle E, F \rangle$ a new rule with left-hand side $\s({b}_1, \ldots, {b}_m)$,
 where    ${b}_1\subseteq {b}_1, \ldots, {b}_m\subseteq {d}_m$,
 we add $1$ to the
 attribute $ \s({d}_1, \ldots, {d}_m) \dot{\rightarrow} {d}.counter_E $. 
  In this way, we count the rules  in $R\langle  E, F\rangle$  with left-hand side 
 $\s({b}_1, \ldots, {b}_m)$,
  where    ${b}_1\subseteq {d}_1, \ldots, {b}_m\subseteq {d}_m$,
   and store the result in the attribute $ \s({d}_1, \ldots, {d}_m) \dot{\rightarrow} {d}.counter_E $. 
  Then $\s({d}_1, \ldots, {d}_m) \dot{\rightarrow} {d}.counter_E  =\Pi_{i=1} ^ m |{d}_i/_{\Theta \langle E, F\rangle}|$ if and only if 
  for all    ${b}_1, \ldots, {b}_m\in C\langle E, F\rangle$
 with     ${b}_1\subseteq {d}_1, \ldots, {b}_m\subseteq {d}_m$,
 there exists a  rewrite rule with left-hand side $\s({b}_1, \ldots, {b}_m)$
 in $ R\langle E, F \rangle$. 
   When Algorithm   CAD terminates,  for every vertex    ${d}\in C\langle E\cup F, \emptyset \rangle$,
\begin{quote} 
   ${d}.keeps_E=\mathrm{true}$ if and only if \newline 
 for every rewrite rule  of the form 
 $\s({d}_1, \ldots, {d}_m) \dot{\rightarrow} {d}\in  R\langle {E \cup F}, \emptyset \rangle$, \newline
 for every     ${b}_1, \ldots, {b}_m \in C\langle E, F\rangle$ with    ${b}_i \subseteq  {d}_i$
 for all $i\in \{\, 1, \ldots, m\, \}$,  \newline
  there exists a rewrite rule in $ R\langle E, F\rangle$ of the form 
 $\s({b}_1, \ldots, {b}_m )\dot{\rightarrow}{b}$ for some    ${b}\in C\langle E, F\rangle$.
\end{quote}

The proof of Satement 4 is similar to that of Statement 3.
  \end{proof}

 \begin{sta}\label{porosz}
 1. We can decide in    $O(n)$  time  whether 
  for all   ${b}\in C \langle E \cup F, \emptyset \rangle $,   ${b}\in C \langle E, F \rangle $. \newline  
  2. We can decide in    $O(n)$  time whether 
  for all   ${b}\in C \langle E \cup F, \emptyset \rangle $,    ${b}\in C \langle F, E \rangle $. 
 \end{sta}
 \begin{proof} First we show Statement 1. We read each  nodes of the   red-black tree  $RBT3$ in lexicographic order. 
 For every   ${b}\in C \langle E \cup F, \emptyset \rangle $, there exists a unique ordered pair $({b}, k) \in NUM( E\cup F, E)$. If $k=1$, 
  then 
  by Statement \ref{dan},   ${b}\in C \langle E, F \rangle $. Otherwise,   ${b} \not\in C \langle E, F \rangle $.

We can show Statement 2 similarly to Statement 1.
 \end{proof}
 %%%

  \begin{lem}\label{bonn} Let 
 $E$ and $F$ be  GTESs over  a signature  $\Sigma$. 
Algorithm CAD runs in $O(n\, \mathrm{log}\, n)$   time  on $\Sigma$, $E$, and $F$, 
  where $n= \mathrm{size}(E)+\mathrm{size}(F)$.  \end{lem}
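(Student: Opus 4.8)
The plan is to bound the running time of Algorithm CAD (Definition \ref{augbarack}) block by block, following its six numbered phases, and to argue that each phase costs $O(n\,\mathrm{log}\,n)$ time in the RAM model. The ingredients I would lean on are Proposition \ref{lezaras} (each congruence closure, together with the class cardinalities and the unique class names, is computable in $O(n\,\mathrm{log}\,n)$ time), Proposition \ref{keres} (a single red-black tree search costs $O(\mathrm{log}\,n)$), Proposition \ref{redblack} (inserting $O(n)$ elements without duplicates costs $O(n\,\mathrm{log}\,n)$), and the size bounds of Statements \ref{darabsz}, \ref{drrszam}, \ref{csorvas}, and \ref{iszap}, which guarantee that all of $C\langle\cdot\rangle$, $R\langle\cdot\rangle$, $INC(\cdot)$, $BSTEP\langle\cdot\rangle$, and $A[E;F]$ have cardinality and size $O(n)$. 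Correctness of the computed quantities is not at issue here, since it is already established in Lemma \ref{eso}; only the timing remains.

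For Phase 1 I would invoke Statement \ref{szolo} to build the subterm dag in $O(n)$ time, then apply Proposition \ref{lezaras} three times to produce $\rho\langle E,F\rangle$, $\rho\langle F,E\rangle$, and $\rho\langle E\cup F,\emptyset\rangle$ in $O(n\,\mathrm{log}\,n)$ time each, using that $\tau\langle E\cup F,\emptyset\rangle=\tau\langle E,F\rangle\cup\tau\langle F,E\rangle$ is formed in $O(n)$ time; the three sets $C\langle\cdot\rangle$ are then read off by one scan of the $O(n)$ vertices. For Phase 2, the outer loop visits each $x\in V$ once, performing $O(outdegree(x))$ calls to $\mathrm{FIND}$ and $O(outdegree(x))$ red-black tree operations. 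Since $\sum_{x}outdegree(x)$ equals the number of dag arcs, which is bounded by $size(E\cup F)=O(n)$, and since each $\mathrm{FIND}$, each membership test such as $\s(c_1,\ldots,c_m)\dot{\rightarrow} c\notin R\langle E\cup F,\emptyset\rangle$ (Proposition \ref{keres}), and each duplicate-free insertion (Proposition \ref{redblack}) costs $O(\mathrm{log}\,n)$, the cumulative cost is $O(n\,\mathrm{log}\,n)$. The comparisons $|a|=|c|$ and $|b|=|c|$ that set $equal_E$ and $equal_F$ are $O(1)$ each, using the cardinality arrays supplied by Proposition \ref{lezaras}.

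Phases 3 and 4 are lighter. Reading $RBT1$ and $RBT2$ in search-key order and aggregating counts into $RBT3$ and $RBT4$ touches $O(|INC|)=O(n)$ nodes (Statement \ref{csorvas}) at $O(\mathrm{log}\,n)$ cost per node, so $NUM(E\cup F,E)$ and $NUM(E\cup F,F)$ are produced in $O(n\,\mathrm{log}\,n)$ time; and building the adjacency-list representation of $AUX[E;F]$ from $BSTEP\langle E\cup F,\emptyset\rangle$ costs $O(|C\langle E\cup F,\emptyset\rangle|)+O(|A[E;F]|)=O(n)$ by Statements \ref{darabsz} and \ref{iszap} and the adjacency-list construction recalled in Section \ref{prel}. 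For Phases 5 and 6, initializing every $keeps_E$ and $keeps_F$ is $O(n)$, after which the loop examines each rule $\s(a_1,\ldots,a_m)\dot{\rightarrow} a\in R\langle E\cup F,\emptyset\rangle$ once, retrieving each factor $|a_i/_{\Theta\langle E,F\rangle}|$ from $RBT3$ (respectively $RBT4$) by an $O(\mathrm{log}\,n)$ search and comparing the product with the satellite datum $counter_E$ (respectively $counter_F$). Summing the six phases then yields the claimed $O(n\,\mathrm{log}\,n)$ bound.

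The hard part will be the accounting in Phases 5 and 6, since the products $\Pi_{i=1}^{m}|a_i/_{\Theta\langle E,F\rangle}|$ can be as large as $|C\langle E,F\rangle|^{m}$. I would handle this exactly as in the analysis of Algorithm DTOT in Lemma \ref{decidetotal}: in the RAM model each multiplication is a unit-cost step, and the key observation is that the total number of factor lookups and multiplications performed over all rules is $\sum_{m}\sum_{\s(a_1,\ldots,a_m)\dot{\rightarrow} a}m\le size(R\langle E\cup F,\emptyset\rangle)=O(n)$ by Statement \ref{drrszam}, rather than exponential. Together with the empty-product convention $\Pi_{i=1}^{0}=1$ (which correctly forces $keeps_E$, $keeps_F$ at nullary symbols), this makes the cost of Phases 5 and 6 equal to $O(n)$ lookups at $O(\mathrm{log}\,n)$ each plus $O(n)$ unit-cost multiplications, i.e. $O(n\,\mathrm{log}\,n)$, completing the estimate.
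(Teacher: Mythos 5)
Your proposal is correct and follows essentially the same route as the paper's proof: a phase-by-phase accounting of Algorithm CAD using Proposition \ref{lezaras} for the congruence closures, Propositions \ref{keres} and \ref{redblack} for the red-black tree operations, the size bounds of Statements \ref{darabsz}, \ref{drrszam}, \ref{csorvas}, and \ref{iszap}, and in particular the same key observation that the total number of multiplications and lookups in the $keeps_E$/$keeps_F$ computation is bounded by $size(R\langle E\cup F,\emptyset\rangle)=O(n)$ rather than by the magnitude of the products. The only cosmetic difference is that you charge the $NUM$ construction at $O(n\,\mathrm{log}\,n)$ where the paper gets $O(n)$ by an in-order sweep, which does not affect the final bound.
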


 \begin{proof} We show  Statements \ref{egy1} -- \ref{kilenc9}.

\begin{sta}\label{egy1}
   The time complexity of constructing the red-black trees $RBT1$ and $RBT2$ is $O(n)$. 
  \end{sta}
\begin{proof} 
  By Statement \ref{csorvas},   $INC( E\cup F, E)\leq n $. Consequently, 
when we construct the  red-black tree $RBT1$, we carry out at most $n$ insertion. By Proposition \ref{redblack}, 
  we construct the  red-black tree $RBT1$ in $O(n\, \mathrm{log}\, n)$ time. 
 Similarly, we construct the  red-black tree $RBT2$ in $O(n\, \mathrm{log}\, n)$ time. 
 \end{proof}

By Statement \ref{erre}, we have the following result.    
\begin{sta}\label{ketto2}
   The time complexity of constructing  the relation $\tau\langle E\cup F, \emptyset\rangle$ 
and  the  GTES $E \cup F$  stored without  duplicate values in a red-black tree saved in the  variable  $RBT5$,
  is  $O(n^2)$ time.
  \end{sta}

We carry out $2 n$ comparisons of nonnegative numbers in $O(n)$ time. Hence we have the following fact.
\begin{sta}\label{harom3}
The time complexity of computing the value attributes     ${b}.equal_E$ and     ${b}.equal_F$ for all constants    ${b}\in C\langle E\cup F, \emptyset \rangle$, is $O(n)$. 
     \end{sta}
  
By Statement 2 and Statement 3 of \ref{kecskemet}, we have the following fact. 
\begin{sta}\label{negy4}
 The time complexity of constructing the following data is  $O(n\, \mathrm{log}\, n)$.
 \begin{itemize} 
\item the GTRS $R\langle E\cup F, \emptyset\rangle$  stored without  duplicate values in a red-black tree saved in the  variable  $RBT5$,
 each node  in  $RBT5$ contains a search key  which is  a rule in $R\langle E\cup F, \emptyset\rangle$;
 \item $BSTEP\langle E\cup F, \emptyset \rangle$ stored in a red-black tree saved in the  variable   $RBT6$; 
   \item the GTRS $R\langle E, F\rangle$ stored without  duplicate values in a red-black tree saved in the  variable  $RBT7$;
  \item the GTRS $R\langle  F, E\rangle$ stored without  duplicate values in a red-black tree  saved in the  variable   $RBT8$; 
 \end{itemize}
  \end{sta}

\begin{sta}\label{ot5}
 The time complexity of constructing
\begin{itemize}
\item $NUM( E\cup F, E)$ stored  without  duplicate values  in a red-black tree  saved in a  variable    $RBT3$, and 
\item $NUM( E\cup F, F)$ stored  without  duplicate values  in a red-black tree  saved in a  variable    $RBT4$
\end{itemize} 
is $O(n)$. 
   \end{sta}
\begin{proof}
When we construct the  red-black tree $RBT3$, we read the whole red-black tree $RBT1$ in search key order in $O(n)$ time. Each  modification takes 
constant  time.
By Statement \ref{csorvas},   $|INC( E\cup F, E)|\leq n $.   Consequently, we construct the  red-black tree $RBT3$ in $O(n)$ time. 
 Similarly, we construct the  red-black tree $RBT4$ in $O(n)$ time. 
\end{proof}

 \begin{sta}\label{hat6}
  The time complexitity  of computing the values of 
 the   satellite data $counter_E$, and $counter_F$ for all rules in $R\langle E \cup F, \emptyset\rangle$
  is $O(n)$. 
  \end{sta}
\begin{proof}
 When Algorithm CAD computes the values of the  satellite data  
    $counter_E$ and $counter_F$ for all rules of  the GTRS $R\langle E\cup F, \emptyset \rangle$,  
    it carries out  at most $2  n$  additions, since
   $|  C\langle E, F\rangle|\leq    |W|\leq n$ and       $|  C\langle F, E\rangle|\leq    |W|\leq n$.   
    \end{proof}

 \begin{sta}\label{het7}
  The time complexity of constructing the 
 auxiliary dpwpa $ AUX[ E;  F] =
(C\langle E\cup F, \emptyset\rangle, A[E; F])$
  is $O(n)$.
   \end{sta}
\begin{proof}
 We build the adjacency-list representation of the dpwpa    $ AUX[ E;  F] =
(C\langle E\cup F, \emptyset\rangle,A[E; F])$ by reading in search key  order the nodes  of the black-red tree   $RBT6$
storing the relation  $BSTEP\langle E\cup F, \emptyset\rangle$. Recall that for every ordered pair
$({b}, {c})\in BSTEP\langle E\cup F, \emptyset \rangle$, the search key is the first component    ${b}$, the second component,    ${c}$ is  satellite 
data. 
Then the time complexity of building  the adjacency-list representation of the dpwpa    $ AUX[ E;  F]  $ is 
$O(|C\langle E\cup F, \emptyset\rangle|+ |A[E; F]| )$.
By Statement \ref{darabsz},
$|C\langle E\cup F, \emptyset\rangle|\leq |W|\leq n$.
By Statement \ref{iszap}, 
 $|A[E; F]|=|BSTEP\langle E\cup F, \emptyset\rangle|\leq \mathrm{size}(E) + size (F)=2 n$. 
Therefore the time complexity of building  the adjacency-list representation of the dpwpa    $ AUX[ E;  F] $ 
is    $O(n)$. 
\end{proof}
 Statement \ref{porosz} implies the following fact.
 \begin{sta}\label{nyolc8}
  The time complexitity  of computing the values of  the  attributes $equal_E$ 
 and $equal_F$   of  all constants    ${b}\in C\langle E\cup F, \emptyset \rangle$ is   $O(n)$. 
   \end{sta}

 \begin{sta}\label{kilenc9}
 The time complexitity  of computing the values of the  attributes  $keeps_E$ and 
  $keeps_F$ of  all constants    ${b}\in C\langle E\cup F, \emptyset \rangle$  is  $O(n\, \mathrm{log}\, n)$ time. 
 \end{sta}
\begin{proof}
  We now estimate the number of operations when computing 
 the values of the attributes    ${b}.keeps_E$  for all    ${b}\in C\langle E\cup F, \emptyset\rangle$.
  For
 each rewrite rule  
$\s({b}_1, \ldots, {b}_m) \dot{\rightarrow} {b}\in R\langle E \cup F, \emptyset\rangle$, 
Algorithm CAD carries out $m$ searches in the red-black tree $RBT3$
and  less than or equal to  $m$ multiplications. 
By Statement \ref{drrszam}, 
 $\mathrm{size}(R\langle E\cup F, \emptyset\rangle)\leq 2 n$.
In this way, when we compute the values of the attribute $keeps_E$ for all  rewrite rules  
$\s({b}_1, \ldots, {b}_m) \dot{\rightarrow} {b}\in R\langle E \cup F, \emptyset\rangle$, the number of searches in the red-black tree $RBT3$
is less than equal to $n$. Consequently, by Proposition \ref{keres}, 
 the  searches in the red-black tree $RBT3$ take  $O(n \, \mathrm{log} \, n )$   time.

Furthermore, 
the number of multiplications is less than or equal to 
 $\sum _{\s({b}_1, \ldots, {b}_m) \dot{\rightarrow} {b}
\in  R\langle E \cup F, \emptyset\rangle} m\leq 
\mathrm{size}(R\langle E \cup F, \emptyset \rangle)\leq 
2 n$ by Statement \ref{drrszam}.
\noindent 
Thus, the number of multiplications is less than or equal to $2 n$.

 For each    ${b}\in C\langle E\cup F, \emptyset\rangle$, the number of comparisons of nonnegative integers during computing the value of the attribute    ${b}.keeps_E$ is equal to 
the number of rules in $R\langle  E \cup F, \emptyset    \rangle$ with right-hand side    ${b}$. 
 Consequently, the number of comparisons of nonnegative integers during computing the values of the attributes    ${b}.keeps_E$  for all    ${b}\in C\langle E\cup F, \emptyset\rangle$ is equal to $|R\langle  E\cup F, \emptyset    \rangle|$. By Statement \ref{drrszam}
  $|R\langle  E\cup F, \emptyset    \rangle| \leq n $. 
  Consequently the time complexity of computing for all    ${b}\in C\langle E\cup F, \emptyset\rangle$, 
the values of the attribute    ${b}.keeps_E$ is   $O(n \, \mathrm{log} \, n )$.   
Similarly, the time complexity of computing for all    ${b}\in C\langle E\cup F, \emptyset\rangle$, the values of the attribute    ${b}.keeps_F$  is 
  $O(n \, \mathrm{log} \, n )$.  
\end{proof}

The Lemma follows from  Statements \ref{egy1} -- \ref{kilenc9}. 
\end{proof}

%%%%%%%%%%%%%%%%%%%%%%%%%%%%%%%%
%%%%%%%%%%%%%%%%%%%%%%

\section{Preparatory  Results}\label{preparatory}
 The  decidability and complexity results of the following sections require some preparatory results which we group together here. 
\begin{prop} \label{valodi}
    Let $A$ be a set,    ${b}\in A$, and let $\rho,\tau, \omega$ be 
equivalence relations on $A$ such that     ${b}/_\rho \subset  {b}/_\omega   $ and
   ${b}/_\tau \subset  {b}/_\omega$. Then    ${b}/_\rho \cup  {b}/_\tau\subset  {b}/_\omega$.
    \end{prop}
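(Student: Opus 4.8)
The plan is to separate the claim into the routine containment $a/_\rho\cup a/_\tau\subseteq a/_\omega$ and the strictness of that containment, and to concentrate essentially all of the effort on the latter. The containment is immediate: the two hypotheses give in particular $a/_\rho\subseteq a/_\omega$ and $a/_\tau\subseteq a/_\omega$, so by the defining property of union $a/_\rho\cup a/_\tau\subseteq a/_\omega$. No use of the equivalence-relation axioms is needed at this stage; it is pure set theory.

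It then remains to produce an element of $a/_\omega$ lying in neither $a/_\rho$ nor $a/_\tau$, since that is exactly what $\subset$, as opposed to $\subseteq$, requires. First I would record that reflexivity gives $a\in a/_\rho$ and $a\in a/_\tau$, so the two classes already overlap at $a$. From $a/_\rho\subset a/_\omega$ I may choose some $b\in a/_\omega\setminus a/_\rho$, and from $a/_\tau\subset a/_\omega$ some $c\in a/_\omega\setminus a/_\tau$. If $b\notin a/_\tau$ then $b$ is already the desired witness, and symmetrically if $c\notin a/_\rho$ then $c$ works; so the argument reduces to the single awkward configuration in which $b\in a/_\tau\setminus a/_\rho$ and $c\in a/_\rho\setminus a/_\tau$ hold simultaneously.

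I expect this last configuration to be the main obstacle. The bare hypotheses furnish only one escape point for each class separately, and each such point may well sit inside the other class, so a common witness outside both classes cannot in general be manufactured from $b$ and $c$ alone. To close the argument I would look to the additional structure available in the intended setting, where $\omega$ arises as the restriction of the congruence generated by a union and $a/_\omega$ is thus assembled out of the $\rho$- and $\tau$-classes, and I would try to combine this with a cardinality or partition count of $a/_\omega$ relative to $a/_\rho$ and $a/_\tau$. Determining precisely which of these facts forces a common witness to exist is where the genuine content of the statement resides, and I would settle that point before committing to the case split sketched above.
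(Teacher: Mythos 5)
You stop exactly where the difficulty lies, and your instinct about that configuration is sound --- but the conclusion to draw is stronger than you suggest: the proposition is false as stated, so no further work from the bare hypotheses can close your gap. Take $A=\{a,b,c\}$ with three distinct elements, let $\rho$ have classes $\{a,c\}$ and $\{b\}$, let $\tau$ have classes $\{a,b\}$ and $\{c\}$, and let $\omega$ have the single class $A$. Then $a/_\rho=\{a,c\}\subset A=a/_\omega$ and $a/_\tau=\{a,b\}\subset a/_\omega$, yet $a/_\rho\cup a/_\tau=A=a/_\omega$. This realizes precisely your ``awkward configuration'': the only witness $b$ for the first strict inclusion lies in $a/_\tau$, and the only witness $c$ for the second lies in $a/_\rho$.

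The paper's own proof has the same hole. It splits on whether $(b,c)\in\rho$, $(b,c)\in\tau$, or neither. Cases 1 and 2 correctly exhibit $c$ (respectively $b$) as an element of $a/_\omega$ outside both classes, and each of these cases lands in one of the two easy subcases you already dispatched. But Case 3 concludes only that $(b,c)\in\omega\setminus(\rho\cup\tau)$, which shows the \emph{relations} $\omega$ and $\rho\cup\tau$ differ --- not that the union of the two \emph{classes} is a proper subset of $a/_\omega$; the counterexample above falls squarely into Case 3. What rescues the paper's applications (Proposition~\ref{kovetkezmeny}, Lemmas~\ref{alapvetes} and~\ref{het}) is exactly the ``additional structure'' you point to: there one has $\omega\subseteq\rho\cup\tau$ as relations (the standing hypothesis is $\tthue{E\cup F}\subseteq\tthue{E}\cup\tthue{F}$), so $(b,c)\in\omega$ forces $(b,c)\in\rho\cup\tau$, Case 3 becomes vacuous, your residual bad case is likewise impossible (it forces $(b,c)\notin\rho\cup\tau$), and the two easy subcases finish the argument. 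So the statement needs the extra hypothesis $\omega\subseteq\rho\cup\tau$ (or at least its restriction to $a/_\omega\times a/_\omega$); with it, both your sketch and the paper's proof go through.
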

  \begin{proof}
 There exist    ${c}, {d}\in A$, such that    ${c}\in  {b}/_\omega \setminus  {b}/_\rho $ and
     ${d}\in  {b}/_\omega \setminus  {b}/_\tau $. 
  We now distinguish three cases.

  {\em Case 1: }
  $({c}, {d} )\in \rho $. Then  by the definition of    ${c}$, $({b}, {d})\not \in \rho$. By the definition of    ${d}$, $({b}, {d})\not \in \tau$. Thence
  $({b}, {d})\not \in \rho\cup \tau$.

  {\em Case 2: }
  $({c}, {d} )\in \tau $. Then by the definition of    ${d}$,
  $({b}, {c})\not \in \tau$.  By the definition of    ${c}$, $({b}, {c})\not \in \rho$. Consequently
  $({b}, {c})\not \in \rho\cup \tau$.

   {\em Case 3: }
   $({c}, {d} )\not \in \rho \cup \tau$. Then by the definition of    ${c}$ and    ${d}$,
   $({c}, {d} )\in  \omega$, and hence 
   $({c}, {d} )\in  \omega \setminus (\rho \cup \tau)$.

 In all  three cases we have     ${b}/_\rho \cup  {b}/_\tau\subset  {b}/_\omega$.
  \end{proof}

\begin{prop} \label{kovetkezmeny}
    Let $A$ be a set,    ${b}\in A$, and let $\rho,\tau, \omega$ be 
    equivalence relations on $A$ such that
   ${b}/_\rho \cup  {b}/_\tau=   {b}/_\omega$. Then 
       ${b}/_\rho =  {b}/_\omega$ or    ${b}/_\tau =  {b}/_\omega$. 
    \end{prop}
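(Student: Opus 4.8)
The plan is to derive this directly from Proposition \ref{valodi} by a contrapositive argument. The key observation is that since $\rho$, $\tau$, and $\omega$ are equivalence relations, reflexivity gives $a \in a/_\rho$, $a \in a/_\tau$, and $a \in a/_\omega$. From the hypothesis $a/_\rho \cup a/_\tau = a/_\omega$ we immediately read off the two inclusions $a/_\rho \subseteq a/_\omega$ and $a/_\tau \subseteq a/_\omega$, since each set on the left is contained in the union.

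First I would assume, toward a contradiction, that the conclusion fails, i.e. that both $a/_\rho \neq a/_\omega$ and $a/_\tau \neq a/_\omega$. Combining each inequality with the corresponding inclusion just noted upgrades them to proper inclusions: $a/_\rho \subset a/_\omega$ and $a/_\tau \subset a/_\omega$. These are exactly the hypotheses of Proposition \ref{valodi}, so I would invoke that proposition to obtain $a/_\rho \cup a/_\tau \subset a/_\omega$.

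This last proper inclusion contradicts the assumed equality $a/_\rho \cup a/_\tau = a/_\omega$. Hence at least one of $a/_\rho = a/_\omega$ or $a/_\tau = a/_\omega$ must hold, which is the desired statement. There is essentially no hard step here: the only thing to be careful about is correctly turning the two non-equalities into proper inclusions using the inclusions supplied by the union hypothesis, after which Proposition \ref{valodi} does all the work. In effect this proposition is the exact contrapositive of the previous one.

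\begin{proof}
By reflexivity of $\rho$, $\tau$, and $\omega$ we have $a \in a/_\rho \cap a/_\tau \cap a/_\omega$. Since $a/_\rho \cup a/_\tau = a/_\omega$, both $a/_\rho \subseteq a/_\omega$ and $a/_\tau \subseteq a/_\omega$ hold. Suppose, toward a contradiction, that $a/_\rho \neq a/_\omega$ and $a/_\tau \neq a/_\omega$. Then these inclusions are proper, that is, $a/_\rho \subset a/_\omega$ and $a/_\tau \subset a/_\omega$. By Proposition \ref{valodi}, $a/_\rho \cup a/_\tau \subset a/_\omega$, contradicting $a/_\rho \cup a/_\tau = a/_\omega$. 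Therefore $a/_\rho = a/_\omega$ or $a/_\tau = a/_\omega$.
\end{proof}
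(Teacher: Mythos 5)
Your proof is correct and follows exactly the same route as the paper's: assume both equalities fail, upgrade to proper inclusions via the union hypothesis, and invoke Proposition \ref{valodi} for the contradiction. The only difference is that you spell out the intermediate inclusions $a/_\rho \subseteq a/_\omega$ and $a/_\tau \subseteq a/_\omega$ explicitly, which the paper leaves implicit.
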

\begin{proof}
 By contradiction, assume that    ${b}/_\rho \neq   {b}/_\omega   $ and     ${b}/_\tau \neq  {b}/_\omega$.
Then      ${b}/_\rho \subset  {b}/_\omega   $ and
   ${b}/_\tau \subset  {b}/_\omega$. Consequently, by Proposition \ref{valodi}, 
   ${b}/_\rho \cup  {b}/_\tau\subset  {b}/_\omega$.
\end{proof}
\begin{lem} \label{teljes} 
 Let 
  $E$ and $F$  be  GTESs over a  signature   $\S$.
If 
GTRS $R\langle E, F\rangle$ is total, then $R\langle E\cup F, \emptyset\rangle$ 
is total as well.
\end{lem}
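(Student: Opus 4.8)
The plan is to sidestep any direct reasoning about left-hand sides of rewrite rules and instead invoke the semantic characterisation of totality supplied by Proposition \ref{halle}. Recall that for a pair of GTESs the relevant subterm set is $W = ST(E\cup F)$, and that Proposition \ref{halle} asserts that $R\langle E, F\rangle$ is total if and only if $\bigcup W/_{\tthue E} = \ts$. The first thing I would record is a bookkeeping observation: when the completion machinery of Section \ref{gyors} is applied to the degenerate pair $(E\cup F, \emptyset)$, the associated subterm set is $ST((E\cup F)\cup\emptyset) = ST(E\cup F) = W$, so the \emph{same} set $W$ governs both $R\langle E, F\rangle$ and $R\langle E\cup F, \emptyset\rangle$. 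Consequently Proposition \ref{halle}, read for the pair $(E\cup F, \emptyset)$, yields that $R\langle E\cup F, \emptyset\rangle$ is total if and only if $\bigcup W/_{\tthue{E\cup F}} = \ts$.

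With both notions of totality rephrased semantically, the statement reduces to the single inclusion $\bigcup W/_{\tthue E} \subseteq \bigcup W/_{\tthue{E\cup F}}$. For this I would use monotonicity of the generated congruence: since $E \subseteq E\cup F$ and $\tthue{E\cup F}$ is a congruence on ${\cal T}(\Sigma)$ containing $E\cup F$, it contains $E$, hence it contains the least congruence containing $E$, namely $\tthue E$. Thus $\tthue E \subseteq \tthue{E\cup F}$, so for every $t\in W$ we have $t/_{\tthue E} \subseteq t/_{\tthue{E\cup F}}$, and taking the union over $t\in W$ gives the desired inclusion.

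Finally I would assemble the pieces. Assuming $R\langle E, F\rangle$ is total, Proposition \ref{halle} gives $\bigcup W/_{\tthue E} = \ts$. Combining this with the inclusion just established and with the trivial containment $\bigcup W/_{\tthue{E\cup F}} \subseteq \ts$ forces $\bigcup W/_{\tthue{E\cup F}} = \ts$, and one further application of Proposition \ref{halle} to the pair $(E\cup F, \emptyset)$ concludes that $R\langle E\cup F, \emptyset\rangle$ is total.

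I do not expect any hard calculation here; the only point requiring genuine care is the first step, where one must confirm that the two completion runs really share the same set $W$ and that Proposition \ref{halle} is legitimately applicable to the pair $(E\cup F, \emptyset)$ whose second component is empty. Once that conceptual/notational matching is in place, the remainder is routine monotonicity of $\tthue{\cdot}$.
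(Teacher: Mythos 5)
Your proof is correct, but it takes a genuinely different route from the paper's. The paper argues directly from the definitions: given $\s\in\S_m$ and $a_1,\ldots,a_m\in C\langle E\cup F,\emptyset\rangle$, it picks representatives $t_i\in W$, uses totality of $R\langle E,F\rangle$ to obtain a rule with left-hand side $\s(t_1/_{\Theta\langle E,F\rangle},\ldots,t_m/_{\Theta\langle E,F\rangle})$, extracts from Definition \ref{szabaly} a witness term $\s(p_1,\ldots,p_m)\in W$, and then reads off the required rule of $R\langle E\cup F,\emptyset\rangle$ from that same witness, using $\tthue{E}\subseteq\tthue{E\cup F}$ to identify the classes. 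You instead route everything through the semantic characterisation of Proposition \ref{halle}, reducing the claim to $\bigcup W/_{\tthue{E}}\subseteq\bigcup W/_{\tthue{E\cup F}}$, which follows from monotonicity of generated congruences. Your two supporting observations both check out: the pair $(E\cup F,\emptyset)$ has subterm set $ST\langle(E\cup F)\cup\emptyset\rangle=ST\langle E\cup F\rangle=W$, so Proposition \ref{halle} applies to it with the same $W$; and Proposition \ref{halle} is established in Section \ref{gyors}, well before Lemma \ref{teljes}, so there is no circularity. What your approach buys is brevity and a cleaner separation of concerns --- all the combinatorics about left-hand sides is already packaged inside Proposition \ref{halle}, and only the trivial inclusion $\tthue{E}\subseteq\tthue{E\cup F}$ remains. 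What the paper's approach buys is self-containedness at the level of the rewrite rules themselves, which is closer in spirit to the other syntactic manipulations of $R\langle\cdot,\cdot\rangle$ used throughout Sections \ref{preparatory}--\ref{jointly}. Either proof is acceptable.
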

\begin{proof}
Let $\s\in \Sigma_m$, $m\in \Nat$, and    ${b}_1, \ldots, {b}_m\in  C\langle E\cup F, \emptyset\rangle$. Then there exist $ t_1, \ldots, t_m \in W$ such that
for every $i \in \{\, 1, \ldots, m\, \}$, $t_i/_ {\Theta\langle E\cup F ,  \emptyset \rangle}={b}_i$. 
As GTRS $R\langle E, F\rangle$ is total, there exists a rewrite rule 
$\s(t_1/_  {\Theta\langle E, F \rangle},  \ldots, t_m/_  {\Theta\langle E, F \rangle}) \dot{\rightarrow} t/_  {\Theta\langle E, F \rangle}$ in $R\langle E, F\rangle$. Accordingly, by the definition of $R\langle E, F\rangle$,
 $\s(t_1, \ldots, t_m)\in W$ and  
$\s(t_1, \ldots, t_m)\tthue E t$. Then $\s(t_1, \ldots, t_m)\tthue {E\cup F}  t$. 
For every $i \in \{\, 1, \ldots, m\, \}$, 
$t_i/_  {\Theta\langle E\cup F, \emptyset \rangle}\in C \langle E\cup F, \emptyset\rangle  $.
Thus, by the definition of $R\langle E\cup F,  \emptyset \rangle$, the rewrite rule 
$\s(t_1/_  {\Theta\langle E \cup F , \emptyset \rangle}, \ldots, t_m/_  {\Theta\langle E \cup F, \emptyset \rangle}) \dot{\rightarrow} t/_ {\Theta\langle E \cup F, \emptyset \rangle}$ is in $R\langle E\cup F, \emptyset\rangle$. 
\end{proof}

\begin{prop}\label{egyenlo}
  For any 
  GTESs  $E$ and $F$ over a  signature   $\S$,
  the following four statements are equivalent:
    \begin{itemize}
    \item[1.] $\tthue {E\cup F}\subseteq \tthue E\cup \tthue F$.
    \item[2.] $\tthue {E\cup F}= \tthue E\cup \tthue F$.
    \item[3.]
    There exists a GTES $H$ over $\S$ such that $\tthue H=\tthue E\cup \tthue F$.
    \item[4.] $\tthue E\cup \tthue F$ is a congruence on the ground term algebra      ${\cal T}(\Sigma)$.
      \end{itemize}
\end{prop}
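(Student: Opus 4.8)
The plan is to prove the proposition by establishing the cycle of implications $2\Rightarrow 3\Rightarrow 4\Rightarrow 2$ together with the near-trivial equivalence $1\Leftrightarrow 2$. The one structural fact that drives everything is that, as recalled just after the definition of the congruence generated by a GTES, $\tthue {E\cup F}$ is the \emph{least} congruence on ${\cal T}(\Sigma)$ containing $E\cup F$. Two monotonicity observations then come for free. First, since $E\subseteq E\cup F$ and $F\subseteq E\cup F$, minimality gives $\tthue E\subseteq \tthue {E\cup F}$ and $\tthue F\subseteq \tthue {E\cup F}$, so that $\tthue E\cup \tthue F\subseteq \tthue {E\cup F}$ holds unconditionally. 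Second, since $E\subseteq \tthue E$ and $F\subseteq \tthue F$, we always have $E\cup F\subseteq \tthue E\cup \tthue F$.

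First I would dispose of $1\Leftrightarrow 2$. Because the inclusion $\tthue E\cup \tthue F\subseteq \tthue {E\cup F}$ is automatic from the first observation, statement~1 asserts exactly the reverse inclusion, and combining the two inclusions yields the equality of statement~2; conversely, statement~2 trivially implies statement~1.

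Next, $2\Rightarrow 3$ is witnessed by the GTES $H=E\cup F$ itself: if $\tthue {E\cup F}=\tthue E\cup \tthue F$, then this $H$ satisfies $\tthue H=\tthue E\cup \tthue F$. For $3\Rightarrow 4$, any relation of the form $\tthue H$ is the congruence generated by $H$ and hence is a congruence on ${\cal T}(\Sigma)$, so if $\tthue H$ equals $\tthue E\cup \tthue F$, then the latter is a congruence. Finally, for $4\Rightarrow 2$ I would invoke minimality once more: assuming $\tthue E\cup \tthue F$ is a congruence, it contains $E\cup F$ by the second free observation, so by minimality of $\tthue {E\cup F}$ we obtain $\tthue {E\cup F}\subseteq \tthue E\cup \tthue F$, which together with the automatic reverse inclusion gives statement~2.

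There is no substantial obstacle here; once the characterization of $\tthue {E\cup F}$ as a least congruence is in hand, the argument is purely order-theoretic. The only point demanding a moment of care is that the union of two congruences is in general \emph{not} a congruence, so the implication $4\Rightarrow 2$ genuinely relies on the hypothesis of statement~4 and cannot be shortcut. Indeed, the interest of the whole paper lies precisely in the fact that $\tthue E\cup \tthue F$ may fail to be a congruence; the subsequent main cases are devoted to deciding exactly when it is one.
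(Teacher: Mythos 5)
Your proof is correct and takes essentially the same approach as the paper: both rest on the characterization of $\tthue{E\cup F}$ as the least congruence containing $E\cup F$, use $H=E\cup F$ as the witness for statement~3, and close the loop via minimality; the only difference is that you organize the argument as $1\Leftrightarrow 2$ plus the cycle $2\Rightarrow 3\Rightarrow 4\Rightarrow 2$, while the paper runs the single cycle $1\Rightarrow 2\Rightarrow 3\Rightarrow 4\Rightarrow 1$, which is an immaterial reshuffling of the same steps.
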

\begin{proof}  
$(1\Rightarrow 2)$ For the GTES  $ E\cup F$, we have
$\tthue {E\cup F}\subseteq \tthue E\cup \tthue F\subseteq \tthue {E\cup F}$.
Thence $\tthue {E\cup F}= \tthue E\cup \tthue F$.

$(2 \Rightarrow 3)$ For the GTES $E\cup F$, we have  $\tthue {E\cup F}= \tthue E\cup \tthue F$.

$(3 \Rightarrow 4)$
$\tthue H$ is a congruence on the ground term algebra      ${\cal T}(\Sigma)$, hence  $\tthue E\cup \tthue F$ is a congruence on the ground term algebra      ${\cal T}(\Sigma)$.

$(4 \Rightarrow 1)$ 
$\tthue E\cup \tthue F$ is a congruence on the ground term algebra      ${\cal T}(\Sigma)$.
 Since $\tthue {E\cup F}$ is the least congruence on $\ts$ containing $E\cup F$,
  we have 
   $\tthue {E\cup F}\subseteq \tthue E\cup \tthue F$.
   \end{proof}
In  light of Proposition \ref{egyenlo}, from now on we decide for 
GTESs  $E$ and $F$ over a  signature   $\S$,  whether $\tthue {E\cup F}\subseteq \tthue E\cup \tthue F$.

 \begin{lem} \label{alapvetes} Let $E$ and $F$ be GTESs over a unary  signature   $\Sigma$. Let $W= ST\langle E \cup F\rangle$. 
  \begin{quote}
  $\tthue {E\cup F}\subseteq \tthue E \cup \tthue F$
  if and only if 
 for each $p\in W  $, $\left( p/_{\tthue {E\cup F}}=  p/_{\tthue E} \mbox{ or }
p/_{\tthue {E\cup F}}=  p/_{\tthue F}\right)$.
\end{quote}
\end{lem}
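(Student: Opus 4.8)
The plan is to establish the two implications separately, noting that only the ``$\Leftarrow$'' direction uses that $\Sigma$ is unary.

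For ``$\Rightarrow$'' I would argue purely congruence-theoretically, without any assumption on $\Sigma$. Assuming $\tthue{E\cup F}\subseteq \tthue E\cup \tthue F$, fix $p\in W$. Since $\tthue E$ and $\tthue F$ are both contained in $\tthue{E\cup F}$, we always have $p/_{\tthue E}\cup p/_{\tthue F}\subseteq p/_{\tthue{E\cup F}}$; the hypothesis supplies the reverse inclusion, because every $q\in p/_{\tthue{E\cup F}}$ satisfies $(p,q)\in\tthue{E\cup F}\subseteq\tthue E\cup\tthue F$. Hence $p/_{\tthue{E\cup F}}=p/_{\tthue E}\cup p/_{\tthue F}$, and Proposition~\ref{kovetkezmeny} (with $A=\ts$, $a=p$, $\rho={\tthue E}$, $\tau={\tthue F}$, $\omega={\tthue{E\cup F}}$) immediately yields $p/_{\tthue{E\cup F}}=p/_{\tthue E}$ or $p/_{\tthue{E\cup F}}=p/_{\tthue F}$.

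For ``$\Leftarrow$'' I would work with the convergent system $R\langle E\cup F,\emptyset\rangle$ (Proposition~\ref{rhcp}) and write ${\downarrow}$ for ${\downarrow}_{R\langle E\cup F,\emptyset\rangle}$. Take $(s,t)\in\tthue{E\cup F}$; by Proposition~\ref{negyedik} this means $s{\downarrow}=t{\downarrow}$. The crucial structural observation, where unarity enters, is that over a unary signature every reduction proceeds up the single spine from the innermost constant, so the normal form of any ground term $s$ has the shape $\mu[a]$ with $\mu$ a (possibly empty) $1$-context and $a$ either a constant in $\Sigma_0\setminus W$ or an element of $C\langle E\cup F,\emptyset\rangle$; moreover $s$ itself decomposes as $s=\mu[s_0]$ for the \emph{same} outer context $\mu$, where $s_0$ is the largest subterm of $s$ that reduces to $a$. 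I would first dispose of the case $a\in\Sigma_0\setminus W$: then $s{\downarrow}=s$, and since $t{\downarrow}=s{\downarrow}$ must likewise carry a $\Sigma_0$-constant at its innermost position, $t$ is irreducible too, forcing $s=t$ and hence $(s,t)\in\tthue E$. In the main case $a\in C\langle E\cup F,\emptyset\rangle$, equality of normal forms gives $s=\mu[s_0]$ and $t=\mu[t_0]$ with the same $\mu$ and $s_0{\downarrow}=t_0{\downarrow}=a$. Writing $a=p/_{\Theta\langle E\cup F,\emptyset\rangle}$ for some $p\in W$ (Definition~\ref{szeder}), Proposition~\ref{beszur} gives $p{\downarrow}=a$, so by Proposition~\ref{negyedik} both $s_0\tthue{E\cup F}p$ and $t_0\tthue{E\cup F}p$. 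Applying the hypothesis to this $p$: if $p/_{\tthue{E\cup F}}=p/_{\tthue E}$ then $s_0\tthue E p\tthue E t_0$, whence $s=\mu[s_0]\tthue E\mu[t_0]=t$ by compatibility with contexts, and symmetrically $s\tthue F t$ in the other case.

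The main obstacle is the structural claim in the converse direction: that a unary signature forces the $R\langle E\cup F,\emptyset\rangle$-reduction to fill in the spine from the bottom and to become permanently stuck once some $\sigma(a)$ has no matching rule, so that the normal form is genuinely $\mu[a]$ and the decomposition $s=\mu[s_0]$ is inherited unchanged. I would isolate this as an auxiliary lemma, proved by induction on $\mathrm{height}(s)$ together with inspection of the rule shapes of $R\langle E\cup F,\emptyset\rangle$ (every left-hand side is a constant or a unary symbol applied to a single constant, so no rule can fire above a non-constant irreducible subterm). The rest of the converse is then a short assembly using Propositions~\ref{negyedik}, \ref{beszur}, and~\ref{rhcp}. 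This is exactly the step that breaks down once $\Sigma$ has a symbol of arity at least two, which is consistent with the separate Main Cases the paper reserves for such signatures.
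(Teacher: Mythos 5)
Your proposal is correct, and the two directions deserve separate comment. The forward implication is exactly the paper's: both arguments observe $p/_{\tthue{E\cup F}}=p/_{\tthue E}\cup p/_{\tthue F}$ and invoke Proposition~\ref{kovetkezmeny}. The converse, however, takes a genuinely different route. The paper argues directly on the equational derivation $s=q_0\thue{[\alpha_1,(l_1,r_1)]}\cdots\thue{[\alpha_n,(l_n,r_n)]}q_n=t$: it picks a step whose position $\alpha_i$ is shortest, notes that over a unary signature the positions of a ground term form a chain under the prefix order so that $\alpha_i$ is a prefix of every $\alpha_j$, concludes that the context $\delta$ above $\alpha_i$ is invariant along the whole derivation, and applies the hypothesis to $l_i\in W$, through which the subterm at $\alpha_i$ passes. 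This needs no completion machinery at all. You instead route through the convergent GTRS $R\langle E\cup F,\emptyset\rangle$ (Propositions~\ref{rhcp}, \ref{negyedik}, \ref{beszur}) plus an auxiliary structural lemma asserting that over a unary signature the normal form of $s\in\ts$ is $\mu[a]$ with the spine $\mu$ of $s$ above the stuck point preserved, and that $a\in\Sigma_0\setminus W$ forces $s$ irreducible while equality of normal forms in $T_\Sigma$ forces $s=t$. That lemma is true and your sketch of it (induction on height plus inspection of the rule shapes, every left-hand side being a constant or a unary symbol applied to a single new constant) is the right proof; the assembly that follows, reducing to the hypothesis at the representative $p$ with $p/_{\Theta\langle E\cup F,\emptyset\rangle}=a$, is sound. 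The trade-off: the paper's argument is shorter and self-contained, whereas yours isolates the role of unarity in one reusable structural fact about $R\langle E\cup F,\emptyset\rangle$ and leans on infrastructure the paper builds anyway in Section~\ref{gyors}; both correctly pinpoint that it is the linear spine (equivalently, the chain structure of positions) that makes the converse work and that fails once $\Sigma_k\neq\emptyset$ for some $k\geq 2$.
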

\begin{proof}
($\Rightarrow$) If $W=\emptyset$, then we are done. Assume that $W\neq \emptyset$.  
Let $p\in W$, $s\in \ts$,  and $p\tthue {E\cup F}  s$. Then
 $p\tthue E  s$ or  $p\tthue F  s$. Consequently 
 $ p/_{\tthue {E\cup F}}\subseteq  p/_{\tthue E} \cup   p/_{\tthue F}$.
 Since $p/_{\tthue E} \cup   p/_{\tthue F}\subseteq  p/_{\tthue {E\cup F}}$, we have $ p/_{\tthue {E\cup F}}= p/_{\tthue E} \cup   p/_{\tthue F}$.
By Proposition \ref{kovetkezmeny},  $ p/_{\tthue {E\cup F}}= p/_{\tthue E}$ or 
$ p/_{\tthue {E\cup F}}= p/_{\tthue F}$.

($\Leftarrow$)
Let $s, t \in \ts$ be arbitrary such that $s\tthue {E\cup F} t$. Then 
\begin{quote} 
$s=q_0
  \thue  { [\alpha_1,(l_1, r_1) ]} q_1 \thue  { [\alpha_2,(l_2, r_2) ]}
  q_2 \thue  { [ \alpha_3, (l_3, r_3) ]} \cdots  \thue  { [ \alpha_n, (l_n, r_n) ]}
  q_n =t$
  for some $n\in \Nat$,  
\end{quote} 
where for every $i\in \{\, 0, \ldots, n\, \}$, $q_i \in \ts $ and for every $i\in \{\, 1, \ldots, n\, \}$, 
 $l_i \doteq r_i \in E\cup F$.
  We now show that 
 \begin{quote}
  $s\tthue E t$ or
    $s\tthue F t$.
\end{quote} 
If $n=0$, then $s=t$ and hence $s\tthue E t$. Let us assume that $n\geq 1$.
Let $\alpha_i$ be a  shortest word in the set
$\{\, \alpha_1, \alpha_2, \ldots, \alpha_n\, \}$, where  $i\in \{\, 1, \ldots, n\, \}$. Then
there exist
$\delta\in CON_{\Sigma, 1}$,  $v, z \in \ts$  such that
\begin{itemize}
\item     $addr(\delta)= \alpha_i$, 
  \item 
$s=\delta [v]$,
   $t=\delta [z]$, and
   \item  $q_{i-1}=\delta [l_i]$,   $q_i=\delta [r_i]$.
  \end{itemize}
So,  \begin{quote} $v\tthue {E\cup F} l_i \thue {E\cup F} r_i \tthue {E\cup F} z$\,  or \, $v\tthue {E\cup F} r_i \thue {E\cup F} l_i \tthue {E\cup F} z$. \end{quote}
Consequently, 
$v, z \in  l_i/_{\tthue {E\cup F}}$.
As 
 $ l_i/_{\tthue {E\cup F}}=  l_i/_{\tthue E}$ or 
$ l_i/_{\tthue {E\cup F}}=  l_i/_{\tthue F}$, we have 
$v, z \in  l_i/_{\tthue E}$ or $v, z \in  l_i/_{\tthue F}$. 
Therefore 
\begin{quote}
  $v \tthue E z$ or   
  $v\tthue  F z$.
\end{quote}
Then 
\begin{quote}
  $s=\delta [v]\tthue E \delta [z] =t$ or
    $s=\delta [v]\tthue F \delta [z] =t$. \qedhere
\end{quote}
\end{proof}
  %%%%%%%%%%%%%%%%%%%%%%%%%

  %%%%%%%%%%%%%%%%%

By Proposition \ref{mezo} and Proposition \ref{erdo} we have  the following result. 
\begin{prop}\label{franciaeldontsimple} Let 
  $E$ and $F$  be  GTESs over a  signature   $\S$, and  $n=\mathrm{size}(E)+\mathrm{size}(F)$.
 Let $W= ST\langle E \cup F\rangle$. 
   We can decide in $O(n^3)$ time whether for each $p\in W  $, 
  \begin{quote}
  $ p/_{\tthue {E\cup F}}=  p/_{\tthue E}$ or 
$ p/_{\tthue {E\cup F}}=  p/_{\tthue F}$. \qedhere
\end{quote}
\end{prop}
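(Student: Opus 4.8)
The plan is to reduce each of the two equalities to a one-sided language-inclusion test between deterministic bottom-up tree automata and then invoke the inclusion-decision algorithm of Proposition \ref{erdo}. The first observation I would make is the monotonicity fact that $\tthue E \subseteq \tthue {E\cup F}$ and $\tthue F \subseteq \tthue {E\cup F}$: since $\tthue {E\cup F}$ is the least congruence on ${\cal T}(\Sigma)$ containing $E\cup F$, it contains both $E$ and $F$, so by minimality it contains $\tthue E$ and $\tthue F$. Consequently, for every $p\in W$ we have $p/_{\tthue E}\subseteq p/_{\tthue {E\cup F}}$ and $p/_{\tthue F}\subseteq p/_{\tthue {E\cup F}}$. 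Therefore $p/_{\tthue {E\cup F}}=p/_{\tthue E}$ holds if and only if $p/_{\tthue {E\cup F}}\subseteq p/_{\tthue E}$, and likewise for $F$; so the condition at a fixed $p$ amounts to deciding whether $p/_{\tthue {E\cup F}}\subseteq p/_{\tthue E}$ or $p/_{\tthue {E\cup F}}\subseteq p/_{\tthue F}$.

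For a fixed $p\in W$ I would then apply Proposition \ref{mezo} three times to build, in $O(n\,\mathrm{log}\,n)$ time each, the deterministic btas recognizing the tree languages $p/_{\tthue E}$, $p/_{\tthue F}$, and $p/_{\tthue {E\cup F}}$ (the last one obtained by feeding the pair $E\cup F$ and $\emptyset$ to the construction). By Statement \ref{darabsz} the state set $C\langle\cdot,\cdot\rangle$ has at most $|W|\leq n$ elements, and by Statement \ref{drrszam} the rule set has size at most $2n$; hence each of these automata has size $O(n)$. I would run the inclusion-decision algorithm of Proposition \ref{erdo} twice for this $p$, testing $L(\mathcal{A}_{E\cup F})\subseteq L(\mathcal{A}_{E})$ and $L(\mathcal{A}_{E\cup F})\subseteq L(\mathcal{A}_{F})$. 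Each test costs $O(n\cdot n)=O(n^2)$ time, and $p$ satisfies the required disjunction exactly when at least one of the two tests succeeds.

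Finally I would assemble the complexity bound. Since $|W|\leq size(W)\leq n$ by Statement \ref{darabsz}, the outer loop ranges over at most $n$ elements $p$. For each $p$ the three automaton constructions cost $O(n\,\mathrm{log}\,n)$ and the two inclusion tests cost $O(n^2)$, so processing one $p$ costs $O(n^2)$ time; summing over all $p\in W$ gives $O(n^3)$ total, as claimed. The quantified statement "for each $p\in W$" is obtained simply by conjoining the per-$p$ answers as we iterate, stopping early with a negative answer as soon as some $p$ fails both inclusions.

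The step I expect to be the main (though modest) obstacle is justifying the reduction from equality to a single inclusion direction, namely that it suffices to test $p/_{\tthue {E\cup F}}\subseteq p/_{\tthue E}$ rather than both inclusions; this hinges on the monotonicity $\tthue E\subseteq \tthue {E\cup F}$ established in the first paragraph and must be stated explicitly. Everything else is bookkeeping: verifying the $O(n)$ size bounds on the automata so that each application of Proposition \ref{erdo} genuinely runs in $O(n^2)$ time, and confirming $|W|\leq n$ so that the outer iteration contributes only a linear factor to the overall $O(n^3)$ running time.
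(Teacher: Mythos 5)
Your proposal is correct and follows essentially the same route as the paper, whose proof simply cites Proposition \ref{mezo} (to build the deterministic btas for $p/_{\tthue E}$, $p/_{\tthue F}$, and $p/_{\tthue {E\cup F}}$) and Proposition \ref{erdo} (to test inclusion in $O(n^2)$ time per pair), iterated over the at most $n$ elements of $W$. Your explicit reduction of equality to a single inclusion via the monotonicity $\tthue E\subseteq\tthue{E\cup F}$ is a correct, if optional, refinement that the paper leaves implicit.
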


  %%%%%
  
  %%%%%%%%%%%%

 \begin{lem} \label{het} Let 
  $E$ and $F$  be  GTESs over a  signature   $\S$ such that GTRS $R\langle E \cup F , \emptyset\rangle$ is total. Let $W= ST\langle E \cup F\rangle$. 
  Then $\tthue {E\cup F}\subseteq \tthue E \cup \tthue F$
  if and only if for each $p\in W  $, 
  \begin{quote}
  $ p/_{\tthue {E\cup F}}=  p/_{\tthue E}$ or 
$ p/_{\tthue {E\cup F}}=  p/_{\tthue F}$.
\end{quote}
\end{lem}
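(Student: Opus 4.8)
The plan is to mirror the two-directional argument of Lemma \ref{alapvetes}, with the totality of $R\langle E\cup F,\emptyset\rangle$ playing the structural role that the unary signature played there: it guarantees that every $\tthue {E\cup F}$-class meets $W$, which is exactly what replaces the ``shortest-address'' argument available in the unary case.

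For the forward direction I would assume $\tthue {E\cup F}\subseteq\tthue E\cup\tthue F$ and fix $p\in W$. Taking any $s\in p/_{\tthue {E\cup F}}$ gives $p\tthue {E\cup F} s$, hence $p\tthue E s$ or $p\tthue F s$, so $p/_{\tthue {E\cup F}}\subseteq p/_{\tthue E}\cup p/_{\tthue F}$. The reverse inclusion holds because $E,F\subseteq E\cup F$ forces $\tthue E,\tthue F\subseteq\tthue {E\cup F}$. Thus $p/_{\tthue {E\cup F}}=p/_{\tthue E}\cup p/_{\tthue F}$, and Proposition \ref{kovetkezmeny} (instantiated with $A=\ts$, $a=p$, $\rho={\tthue E}$, $\tau={\tthue F}$, $\omega={\tthue {E\cup F}}$) yields $p/_{\tthue {E\cup F}}=p/_{\tthue E}$ or $p/_{\tthue {E\cup F}}=p/_{\tthue F}$. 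Note that this direction never uses totality, exactly as in Lemma \ref{alapvetes}.

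For the backward direction I would first record that, since $R\langle E\cup F,\emptyset\rangle$ is total, Proposition \ref{halle} applied to the pair $(E\cup F,\emptyset)$ gives $\bigcup W /_{\tthue {E\cup F}}=\ts$; equivalently, every $\tthue {E\cup F}$-class contains some element of $W$. Then for an arbitrary pair $(s,t)$ with $s\tthue {E\cup F} t$, I would choose $p\in W$ lying in the common class $s/_{\tthue {E\cup F}}=t/_{\tthue {E\cup F}}=p/_{\tthue {E\cup F}}$. By hypothesis this class equals $p/_{\tthue E}$ or $p/_{\tthue F}$; in the first case $s,t\in p/_{\tthue E}$ so $s\tthue E t$, in the second case $s\tthue F t$. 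Either way $(s,t)\in\tthue E\cup\tthue F$, giving the desired inclusion.

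The only genuinely load-bearing step is the appeal to Proposition \ref{halle} to convert the totality hypothesis into the statement that $W$ is a transversal for the $\tthue {E\cup F}$-classes; everything else is the same class-chasing used in Lemma \ref{alapvetes} and Propositions \ref{valodi}--\ref{kovetkezmeny}. The main point to verify carefully is purely bookkeeping: when Proposition \ref{halle} is instantiated at $(E\cup F,\emptyset)$, the relevant subterm set is $ST((E\cup F)\cup\emptyset)=ST(E\cup F)=W$, so it is the same $W$ and the same $\Theta\langle E\cup F,\emptyset\rangle$ appearing in the statement; once this match is confirmed, no real obstacle remains. (If one prefers to avoid the instantiation, the transversal property can instead be obtained directly from Lemma \ref{ezredik}, Proposition \ref{beszur}, and Proposition \ref{negyedik}, which together show that each $s\in\ts$ satisfies $s\tthue {E\cup F} t$ for some $t\in W$.)
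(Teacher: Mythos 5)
Your proof is correct and follows essentially the same route as the paper: the forward direction is the identical class-chasing argument through Proposition \ref{kovetkezmeny}, and your use of Proposition \ref{halle} at $(E\cup F,\emptyset)$ to make $W$ a transversal of the $\tthue{E\cup F}$-classes is just the packaged form of the paper's direct appeal to Lemma \ref{ezredik} in the backward direction. The bookkeeping check that $ST((E\cup F)\cup\emptyset)=W$ is the right thing to verify and it goes through.
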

\begin{proof}
($\Rightarrow$)
Let $p\in W$, $s\in \ts$,  and $p\tthue {E\cup F}  s$. Then
 $p\tthue E  s$ or  $p\tthue F  s$. Consequently 
 $ p/_{\tthue {E\cup F}}\subseteq  p/_{\tthue E} \cup   p/_{\tthue F}$.  
 Observe that 
 $p/_{\tthue E} \cup   p/_{\tthue F}\subseteq  p/_{\tthue {E\cup F}}  $.  
 Thus  $ p/_{\tthue {E\cup F}}=  p/_{\tthue E} \cup   p/_{\tthue F}$. 
By Proposition \ref{kovetkezmeny},  $ p/_{\tthue {E\cup F}}= p/_{\tthue E}$ or 
$ p/_{\tthue {E\cup F}}= p/_{\tthue F}$.

($\Leftarrow$) 
Let $s, t \in \ts$ be arbitrary such that $s\tthue {E\cup F} t$. Then by Lemma \ref{ezredik},
$s \tred {R\langle E\cup F, \emptyset \rangle}  p/_{\Theta\langle E\cup F, \emptyset \rangle}$ for some $p \in W$. Consequently, 
 $s\tthue {E\cup F}p$.
 Since $s\tthue {E\cup F} t$,
 we have  $t\tthue {E\cup F}p$.
 Then by our assumption, 
 \begin{quote}
 $s\tthue E p$ and $t\tthue E p$ or 
 $s\tthue F p$ and $t\tthue F p$.
 \end{quote}
  For this reason $s\tthue E t$ or  $s\tthue F t$. Thence  $\tthue {E\cup F}\subseteq \tthue E \cup \tthue F$.
 \end{proof}

 \begin{lem} \label{duna} Let 
  $E$ and $F$  be  GTESs over a  signature   $\S$ such that GTRS
  $R\langle E, F \rangle$
  is total. Let $W= ST\langle E \cup F\rangle$. Then for each $p\in W  $,
  \begin{quote}
   $p/_{\tthue {E\cup F}}=  p/_{\tthue E} \mbox{  if and only if }
   p/_{\Theta\langle E\cup F, \emptyset \rangle}= p/_{\Theta\langle E, F \rangle}$.
   \end{quote}
\end{lem}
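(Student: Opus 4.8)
The plan is to translate both sides of the biconditional into statements purely about the congruences $\tthue E$ and $\tthue {E\cup F}$ by means of Definition \ref{alfa}, and then to prove the one nontrivial inclusion by using totality of $R\langle E, F\rangle$ to replace an arbitrary ground term by a representative that lies in $W$. First I would record the two translations: for $p\in W$ we have $p/_{\Theta\langle E, F\rangle} = p/_{\tthue E}\cap W$ and $p/_{\Theta\langle E\cup F, \emptyset\rangle} = p/_{\tthue {E\cup F}}\cap W$, so the right-hand condition $p/_{\Theta\langle E\cup F, \emptyset\rangle} = p/_{\Theta\langle E, F\rangle}$ says exactly $p/_{\tthue {E\cup F}}\cap W = p/_{\tthue E}\cap W$. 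I would also note at the outset the standard fact $\tthue E \subseteq \tthue {E\cup F}$ (since $\tthue {E\cup F}$ is a congruence containing $E$ and $\tthue E$ is the least such), which gives $p/_{\tthue E} \subseteq p/_{\tthue {E\cup F}}$ for free.

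The forward direction ($\Rightarrow$) is immediate: if $p/_{\tthue {E\cup F}} = p/_{\tthue E}$, then intersecting both sides with $W$ yields the restricted equality. The backward direction ($\Leftarrow$) is the substantive one, and by the inclusion just noted it suffices to prove $p/_{\tthue {E\cup F}} \subseteq p/_{\tthue E}$. So I would take an arbitrary $s$ with $s \tthue {E\cup F} p$ and aim to show $s \tthue E p$. Here totality enters: by Lemma \ref{ezredik} there is a $t\in W$ with $s{\downarrow}_{R\langle E, F\rangle} = t/_{\Theta\langle E, F\rangle}$, and then Proposition \ref{harmadik} (equivalently Proposition \ref{szazadik}) yields $s \tthue E t$. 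Since $\tthue E \subseteq \tthue {E\cup F}$, we get $s \tthue {E\cup F} t$, whence $t \tthue {E\cup F} p$, so $t$ lies in $p/_{\tthue {E\cup F}}\cap W$. By the restricted-equality hypothesis $t$ then lies in $p/_{\tthue E}\cap W$, i.e. $t \tthue E p$; combining with $s \tthue E t$ gives $s \tthue E p$, as desired.

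I expect the only real subtlety to be the correct use of the hypothesis: the restricted equality may only be applied to elements of $W$, so the decisive move is to pass from the arbitrary ground term $s$ to the $W$-representative $t$ in the same $\tthue E$-class \emph{before} invoking it. This replacement is precisely what totality of $R\langle E, F\rangle$ buys us; without it the $R\langle E, F\rangle$-normal form of $s$ need not be a single constant arising from a term of $W$, and the reduction to $W$ would break down. All the genuine work is thus already packaged in Lemma \ref{ezredik} together with the characterization Propositions \ref{harmadik} and \ref{szazadik}, so once the translation via Definition \ref{alfa} is in place the proof is a short chain of implications.
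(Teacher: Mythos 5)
Your proposal is correct and follows essentially the same route as the paper's proof: the forward direction by restricting the equality to $W$, and the backward direction by using Lemma \ref{ezredik} (via totality of $R\langle E, F\rangle$) to find a $W$-representative $t$ with $s \tthue E t$, then transferring membership through the restricted equality to conclude $t \tthue E p$ and hence $s \tthue E p$. No gaps.
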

\begin{proof}
($\Rightarrow$) Let $p \in W$. By the definition of    ${\Theta\langle E, F \rangle}$ and    ${\Theta\langle E\cup F, \emptyset  \rangle}$,
\begin{quote}
$p/_{\Theta\langle E\cup F, \emptyset \rangle}= p_{\tthue {E\cup F}} \cap \ts \times \ts$ and
$p/_{\Theta\langle E, F \rangle}= p_{\tthue E}  \cap \ts \times \ts$.
\end{quote}
 Thus, 
\begin{quote}
$ p/_{\Theta\langle E\cup F, \emptyset \rangle}= p/_{\Theta\langle E, F \rangle}$.
\end{quote}

 ($\Leftarrow$)
  Let $p \in W$.
 Since $\tthue E \subseteq \tthue {E\cup F}$, we have  $ p/_{\tthue E}\subseteq p/_{\tthue {E\cup F}}$. Our aim is to show that  
 $ p/_{\tthue {E\cup F}}\subseteq   p/_{\tthue E}$. To this end, let 
 $s\in \ts $ be such that $s\tthue {E \cup F} p$. We now show that   $s\tthue E  p$ as well.
As $R\langle E, F \rangle$ is total,  by Lemma  \ref{ezredik}, 
\begin{quote}
$s{\downarrow}_ {R\langle E, F \rangle} = t/_{\Theta\langle E, F \rangle}$ for some $t \in W$.
\end{quote}
By Propositon \ref{harmadik},   $s\tthue E  t$. 
Then \begin{quote}
$t \tthue  E s \tthue {E \cup F} p$. 
\end{quote}
Thus
\begin{quote}
$t \tthue {E \cup F} p$. 
\end{quote} Since $t, p \in W$, by the definition of $\Theta\langle E\cup F, \emptyset \rangle$, we have 
\begin{quote}
$(t, p)\in {\Theta\langle E\cup F, \emptyset \rangle}$.
\end{quote}
 As $\tthue E \subseteq \tthue {E\cup F}$, we have 
  \begin{quote}
 $t_{\Theta\langle E, F \rangle}  \subseteq t_{\Theta\langle E\cup F, \emptyset \rangle}=  p_{\Theta\langle E\cup F, \emptyset \rangle}= p/_{\Theta\langle E, F \rangle}$.
 \end{quote}
Consequently 
\begin{quote}
$t/_{\Theta\langle E, F \rangle}\subseteq p_{\Theta\langle E, F \rangle}$.
\end{quote}
Thence
\begin{quote}
$t/_{\Theta\langle E, F \rangle}= p_{\Theta\langle E, F \rangle}$.
\end{quote}
 Consequently,  $s\tthue E t \tthue E p$.  
 For this reason, by the definition of $s$, \begin{quote}
  $ p/_{\tthue {E\cup F}}\subseteq   p/_{\tthue E}$. 
  \end{quote}
  Thus, 
 $ p/_{\tthue {E\cup F}}=  p/_{\tthue E}$. 
\end{proof}

    \begin{lem} \label{asas}
Let $E$ and $F$ be GTESs over  a  signature   $\Sigma$ such that  
 GTRS $R\langle E, F\rangle$ and GTRS $R\langle F, E\rangle$ are  total.  
   Then 
   \begin{quote}
  $\tthue {E\cup F}\subseteq \tthue E \cup \tthue F$
  if and only if for each $p\in W  $, 
  $\left( p/_{\tthue {E\cup F}}=  p/_{\tthue E} \mbox{ or }
 p/_{\tthue {E\cup F}}=  p/_{\tthue F}\right)$.
 \end{quote}
   \end{lem}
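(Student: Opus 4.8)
The plan is to observe that Lemma \ref{asas} is a near-verbatim restatement of Lemma \ref{het}, with the hypothesis ``$R\langle E\cup F, \emptyset\rangle$ is total'' replaced by ``both $R\langle E, F\rangle$ and $R\langle F, E\rangle$ are total''. Since we already possess Lemma \ref{teljes}, which bridges precisely this gap, the natural route is to reduce the present statement to Lemma \ref{het} rather than to reprove the biconditional from scratch.

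Concretely, first I would invoke the hypothesis that GTRS $R\langle E, F\rangle$ is total. By Lemma \ref{teljes}, the totality of $R\langle E, F\rangle$ forces GTRS $R\langle E\cup F, \emptyset\rangle$ to be total as well. This is the one nontrivial-looking step, but it is already discharged by the cited lemma, so no fresh argument is required. I would remark that the totality of $R\langle F, E\rangle$, although part of the stated hypothesis (it describes the defining situation of Main Case 2), is not actually needed for this deduction; totality of either one of the two GTRSs suffices to trigger Lemma \ref{teljes}. One could equally well start from $R\langle F, E\rangle$ total.

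Having established that $R\langle E\cup F, \emptyset\rangle$ is total, I would then apply Lemma \ref{het} directly. Its hypothesis is now met, and its conclusion states exactly that
\begin{quote}
$\tthue {E\cup F}\subseteq \tthue E \cup \tthue F$ if and only if for each $p\in W$, $\left(p/_{\tthue {E\cup F}}= p/_{\tthue E} \text{ or } p/_{\tthue {E\cup F}}= p/_{\tthue F}\right)$,
\end{quote}
which is precisely the biconditional asserted by Lemma \ref{asas}. This closes the argument.

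I do not expect any genuine obstacle here: the statement is essentially a corollary obtained by chaining Lemma \ref{teljes} into Lemma \ref{het}. The only point deserving a sentence of care is making explicit that $W=ST\langle E\cup F\rangle$ is the same set in both lemmas, so that the quantification over $p\in W$ transfers without modification, and noting that the symmetric hypothesis on $R\langle F, E\rangle$ is present only to match the Main Case 2 setting and plays no role in the proof.
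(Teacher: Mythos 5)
Your proof is correct and follows essentially the same route as the paper, which likewise derives totality of $R\langle E\cup F, \emptyset\rangle$ from Lemma \ref{teljes} and then concludes via Lemma \ref{het}. The paper's proof additionally cites Lemma \ref{duna}, but since the conclusion of Lemma \ref{het} is already verbatim the conclusion of Lemma \ref{asas}, that extra citation is not needed and your leaner chain suffices.
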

   \begin{proof}
   By Lemma \ref{teljes},
  $R\langle E\cup F, \emptyset\rangle$ 
is total as well. Consequently, by Lemma \ref{het} and Lemma \ref{duna}, we have the lemma.
 \end{proof}

 \section{GTRS $R\langle E, F \rangle$ and GTRS $R\langle F, E \rangle$ 
together  simulate the  GTRS $R\langle E \cup F, \emptyset \rangle$}\label{jointly}
 We now introduce the notion of GTRS $R\langle E, F \rangle$
simulating  the  GTRS $R\langle E\cup F, \emptyset \rangle$ on a constant    ${b}\in C\langle  E \cup F, \emptyset \rangle$.
We study the case when GTRS $R\langle E, F\rangle$ is   total. 
Furthermore,  we  introduce the  
 concept of GTRS $R\langle E, F \rangle$ and GTRS $R\langle F, E \rangle$
together  simulate the  GTRS $R\langle E \cup F, \emptyset \rangle$. 
Then we describe the connection between the latter concept and 
the auxiliary   dpwpa   $ AUX[ E;  F] =
(C\langle E \cup F, \emptyset\rangle, A[E; F] )$.

  \begin{df}\label{singlesimulate}  
\rm Let $E$ and $F$ be GTESs over  a signature   $\Sigma$, and let    ${b}\in C\langle  E \cup F, \emptyset \rangle $.
  We say that GTRS $R\langle E, F \rangle$
simulates the  GTRS $R\langle E\cup F, \emptyset \rangle$ on the constant    ${b}\in C\langle  E \cup F, \emptyset \rangle $ if Conditions 1 and 2 hold:
 \begin{itemize}
\item[1.]    ${b}\in C\langle  E, F \rangle $.

\item[2.] For every    ${c}\in C\langle E\cup F, \emptyset \rangle$,
  if
 $R\langle {E \cup F}, \emptyset \rangle$ reaches      ${b}$ from    a proper extension    of the constant    ${c}$, then \newline
 for every rewrite rule  of the form
 $\s({c}_1, \ldots, {c}_m) \dot{\rightarrow} {c}\in  R\langle {E \cup F}, \emptyset \rangle$ with $m\in \Nat$, \newline
 for all    ${d}_1, \ldots, {d}_m  \in  C \langle E, F \rangle$   with
    ${d}_i \subseteq  {c}_i$ for all $i\in \{\, 1, \ldots, m\, \}$, \newline
 there exists    ${d} \in  C \langle E, F \rangle$ 
 such that the rewrite rule   
 $\s({d}_1, \ldots, {d}_m ) \dot{\rightarrow} {d}$ is in $R\langle E, F \rangle$.
 \end{itemize}
    \end{df}\begin{sta}\label{sobriety}
     ${d}\subseteq {c}$ for    ${d}, {c}$ appearing in Condition 2 in  Definition \ref{singlesimulate}.
  \end{sta}
Intuitively, in Condition 2  in  Definition \ref{singlesimulate},  the rewrite rule   
 $\s({d}_1, \ldots, {d}_m ) \dot{\rightarrow} {d}$  in $R\langle E, F \rangle$ matches the rewrite rule
 $\s({c}_1, \ldots, {c}_m) \dot{\rightarrow} {c}\in  R\langle {E \cup F}, \emptyset \rangle$.

  By direct inspection of Definition \ref{csurig} and
 Definition  \ref{singlesimulate}, we get the following statement. 
 \begin{sta} \label{hasznos}
   Let $E$ and $F$ be GTESs over  a signature   $\Sigma$ such that  
 GTRS $R\langle E, F\rangle$ is   total,   and let    ${b}\in C\langle  E \cup F, \emptyset \rangle $.
 GTRS $R\langle E, F \rangle$
simulates the  GTRS $R\langle E\cup F, \emptyset \rangle$ on the constant     ${b}\in C\langle  E \cup F, \emptyset \rangle $ if and only if       ${b}\in C\langle  E, F \rangle $.
  \end{sta}

  \begin{df}\label{defsimulate}\rm  Let $E$ and $F$ be GTESs over  a signature   $\Sigma$.
    We say that GTRS $R\langle E, F \rangle$ and GTRS $R\langle F, E \rangle$
together  simulate the  GTRS $R\langle E \cup F, \emptyset \rangle$
 if 
   for each constant    ${b}\in C\langle  E \cup F, \emptyset \rangle $, GTRS  $R\langle E, F \rangle$  or GTRS  $R\langle F, E \rangle$
    simulate  the GTRS  $R\langle E \cup F, \emptyset \rangle$ on     ${b} $. 
   \end{df}
  
 We have the following statement by direct inspection of  Definition \ref{defsimulate} and Statement \ref{hasznos}.
 \begin{sta}\label{szabill} Let  
 $E$ and $F$ be  GTESs over a signature   $\Sigma$ such that GTRS  $R\langle E, F\rangle$ is total.  
 GTRS $R\langle E, F \rangle$ and  GTRS $R\langle F, E \rangle$ together  simulate the  GTRS $R\langle E \cup F, \emptyset \rangle$  
  if and only if for each constant    ${b}\in C\langle  E \cup F, \emptyset \rangle $, 
    Conditions 1 or  2 hold: 
    \newline
    1.        $ {b}\in C\langle  E, F \rangle $. \newline  
 2.   $ {b}\in C\langle  F, E\rangle $ and   GTRS $R\langle F, E \rangle$
simulates the  GTRS $R\langle E\cup F, \emptyset \rangle$ on    ${b}$.  
 \end{sta}

  By Definitions \ref{singlesimulate} and \ref{defsimulate}, and  Lemma \ref{rozmaring}, and Definition \ref{lepesttart}, we have the following lemma.
   \begin{lem}\label{fax} Let $E$ and $F$ be GTESs over a  signature   $\Sigma$.
   GTRS $R\langle E, F \rangle$ and GTRS $R\langle F, E \rangle$
together  simulate the  GTRS $R\langle E \cup F, \emptyset \rangle$
   if and only if  for each constant    ${b}\in C\langle  E \cup F, \emptyset \rangle $, 
    Conditions 1 or  2 hold:

 1.       $ {b}\in C\langle  E, F \rangle $  and 
        for each    ${c} \in C\langle  E \cup F, \emptyset \rangle $ if there is a  path of positive length or a cycle     from  
     ${b}$ to    ${c}$
 in the auxiliary   dpwpa   $ AUX[ E;  F] =
(C\langle E \cup F, \emptyset\rangle, A[E; F] )$, then  
$ R \langle E, F \rangle$  keeps up with  $ R \langle E\cup F, \emptyset \rangle$  writing    ${c} $.

 2.  $ {b}\in C\langle  F, E \rangle $  and 
        for each    ${c} \in C\langle  E \cup F, \emptyset \rangle $ if there is a  path of positive length or a cycle     from  
     ${b}$ to    ${c}$ 
 in the auxiliary   dpwpa   $ AUX[ E;  F] =
(C\langle E \cup F, \emptyset\rangle,A[E; F])$, then  
$ R \langle F, E \rangle$  keeps up with  $ R \langle E\cup F, \emptyset \rangle$  writing    ${c} $.
  \end{lem}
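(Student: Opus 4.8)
The plan is to prove the lemma by unfolding Definition~\ref{defsimulate} into a per-vertex disjunction and then matching each disjunct, essentially verbatim, against Conditions~1 and~2 of the statement. First I would invoke Definition~\ref{defsimulate}: GTRS $R\langle E, F \rangle$ and GTRS $R\langle F, E \rangle$ together simulate $R\langle E \cup F, \emptyset \rangle$ if and only if for every constant $a\in C\langle E\cup F, \emptyset\rangle$, either $R\langle E, F\rangle$ simulates $R\langle E\cup F, \emptyset\rangle$ on $a$ or $R\langle F, E\rangle$ simulates $R\langle E\cup F, \emptyset\rangle$ on $a$. It therefore suffices to show that ``$R\langle E, F\rangle$ simulates $R\langle E\cup F, \emptyset\rangle$ on $a$'' is equivalent to Condition~1, and symmetrically that ``$R\langle F, E\rangle$ simulates on $a$'' is equivalent to Condition~2; the lemma is then immediate by taking the disjunction over the two cases for each $a$.

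The key step is to rewrite the two clauses of Definition~\ref{singlesimulate} into the graph-theoretic language of $AUX[E;F]$. Clause~1 of Definition~\ref{singlesimulate} is exactly the first conjunct $a\in C\langle E, F\rangle$ of Condition~1, so nothing is needed there. For Clause~2 I would make two observations. First, for a fixed constant $b$, the inner requirement --- that for every rule $\s(b_1,\ldots,b_m)\dot{\rightarrow} b$ in $R\langle E\cup F,\emptyset\rangle$ and every choice of $c_1,\ldots,c_m\in C\langle E,F\rangle$ with $c_i\subseteq b_i$ there is a matching rule $\s(c_1,\ldots,c_m)\dot{\rightarrow} c$ in $R\langle E,F\rangle$ --- is, after a routine renaming of bound variables, precisely the statement of Definition~\ref{lepesttart} that $R\langle E,F\rangle$ keeps up with $R\langle E\cup F,\emptyset\rangle$ writing $b$. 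Second, the outer quantifier in Clause~2 ranges over those $b$ for which $R\langle E\cup F,\emptyset\rangle$ reaches $a$ from a proper extension of $b$; by Lemma~\ref{rozmaring} this is equivalent to saying that there is a path of positive length or a cycle from the vertex $a$ to the vertex $b$ in $AUX[E;F]$. Substituting these two equivalences into Clause~2 turns it exactly into the second conjunct of Condition~1.

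Combining the two clauses, ``$R\langle E,F\rangle$ simulates $R\langle E\cup F,\emptyset\rangle$ on $a$'' becomes ``$a\in C\langle E,F\rangle$, and for each $b\in C\langle E\cup F,\emptyset\rangle$, if there is a path of positive length or a cycle from $a$ to $b$ in $AUX[E;F]$ then $R\langle E,F\rangle$ keeps up with $R\langle E\cup F,\emptyset\rangle$ writing $b$'', which is Condition~1. Running the same argument with the roles of $E$ and $F$ exchanged gives that ``$R\langle F,E\rangle$ simulates on $a$'' is equivalent to Condition~2. Here I would insert the one sentence that needs justification: the exchanged argument still uses the same dpwpa $AUX[E;F]$, because its vertex set $C\langle E\cup F,\emptyset\rangle$ and arc set $A[E;F]=BSTEP\langle E\cup F,\emptyset\rangle$ depend only on the symmetric union $E\cup F$, so Lemma~\ref{rozmaring} delivers exactly the same reachability/path equivalence in both directions.

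I do not expect a genuine obstacle; the proof is in essence a bookkeeping exercise in expanding Definitions~\ref{defsimulate}, \ref{singlesimulate}, and~\ref{lepesttart} and applying Lemma~\ref{rozmaring}. The only place that demands care is the variable matching in the second paragraph: confirming that the ``there exists $c\in C\langle E,F\rangle$ with $\s(c_1,\ldots,c_m)\dot{\rightarrow} c\in R\langle E,F\rangle$'' clause of Definition~\ref{singlesimulate} and the ``there exists a rule of the form $\s(b_1,\ldots,b_m)\dot{\rightarrow} b$ in $R\langle E,F\rangle$'' clause of Definition~\ref{lepesttart} are the same assertion once the written constant is identified with $b$ and the source-rule arguments are renamed, and checking that the full quantifier structure (for all $b$, for all rules with right-hand side $b$, for all argument tuples, there exists a fitting rule) is transported intact across the translation.
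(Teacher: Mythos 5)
Your proposal is correct and follows exactly the route the paper takes: the paper's own proof is a one-line appeal to Definitions \ref{defsimulate}, \ref{singlesimulate}, and \ref{lepesttart} together with Lemma \ref{rozmaring}, and your write-up is simply the careful expansion of that same unfolding, including the variable-renaming check between Definition \ref{singlesimulate} and Definition \ref{lepesttart} and the translation of reachability into the path/cycle condition via Lemma \ref{rozmaring}.
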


     By  Statement  \ref{szabill},  Definition \ref{singlesimulate},  Lemma \ref{rozmaring}, and Definition \ref{lepesttart}
we have the following lemma.
  \begin{lem}\label{telex}     Let 
 $E$ and $F$ be  GTESs over  a signature $\Sigma$ such that $R\langle E, F\rangle$ is total.
   GTRS $R\langle E, F \rangle$ and GTRS $R\langle F, E \rangle$
together  simulate the  GTRS $R\langle E \cup F, \emptyset \rangle$
   if and only if  for each  constant    ${b}\in C\langle  E \cup F, \emptyset \rangle $, 
    Conditions 1 or  2 hold:

 1.       $ {b}\in C\langle  E, F \rangle $.

 2.  $ {b}\in C\langle  F, E \rangle $  and 
        for each  constant    ${c} \in C\langle  E \cup F, \emptyset \rangle $ if there is a  path of positive length or a cycle     from  
     ${b}$ to    ${c}$ 
 in the auxiliary   dpwpa   $ AUX[ E;  F] =
(C\langle E \cup F, \emptyset\rangle, A[E; F])$, then  
$ R \langle F, E \rangle$  keeps up with  $ R \langle E\cup F, \emptyset \rangle$  writing    ${c} $.
  \end{lem}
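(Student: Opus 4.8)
The plan is to reduce Lemma~\ref{telex} to Lemma~\ref{fax}, exploiting the extra hypothesis that $R\langle E, F\rangle$ is total together with Statement~\ref{hasznos}. Comparing the two lemmas, their respective Condition~2 are literally identical, so the only thing that needs adjusting is Condition~1: in Lemma~\ref{fax} it reads ``$a\in C\langle E, F\rangle$ and $R\langle E, F\rangle$ keeps up with $R\langle E\cup F, \emptyset\rangle$ writing every $b$ that is positive-step reachable from $a$ in $AUX[E;F]$'', whereas in Lemma~\ref{telex} it is simply ``$a\in C\langle E, F\rangle$''. Thus it suffices to prove that, under the totality of $R\langle E, F\rangle$, for every $a\in C\langle E\cup F, \emptyset\rangle$ the Condition~1 of Lemma~\ref{fax} holds if and only if $a\in C\langle E, F\rangle$.

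First I would record the per-constant reading of Lemma~\ref{fax} that is already implicit in its proof: by Definitions~\ref{singlesimulate} and~\ref{lepesttart}, and by Lemma~\ref{rozmaring} (which identifies the statement ``$R\langle E\cup F, \emptyset\rangle$ reaches $a$ from a proper extension of $b$'' with ``there is a path of positive length or a cycle from $a$ to $b$ in $AUX[E;F]$''), the assertion that $R\langle E, F\rangle$ simulates $R\langle E\cup F, \emptyset\rangle$ on the constant $a$ is exactly the Condition~1 of Lemma~\ref{fax}. Symmetrically, the assertion that $R\langle F, E\rangle$ simulates $R\langle E\cup F, \emptyset\rangle$ on $a$ is exactly the (common) Condition~2.

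Next I would invoke Statement~\ref{hasznos}: since $R\langle E, F\rangle$ is total, $R\langle E, F\rangle$ simulates $R\langle E\cup F, \emptyset\rangle$ on $a$ if and only if $a\in C\langle E, F\rangle$. Chaining this with the previous step yields that the Condition~1 of Lemma~\ref{fax} is equivalent to $a\in C\langle E, F\rangle$, i.e.\ to the Condition~1 of Lemma~\ref{telex}. Because the two Conditions~2 coincide and the disjunction in Definition~\ref{defsimulate} is taken constant by constant, for each $a\in C\langle E\cup F, \emptyset\rangle$ the disjunction of Lemma~\ref{fax}'s conditions holds if and only if the disjunction of Lemma~\ref{telex}'s conditions holds. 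Feeding this equivalence back into Lemma~\ref{fax} gives the claim.

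The argument is essentially bookkeeping, so the only delicate point will be to make the per-constant translation of ``simulates on $a$'' into Condition~1 of Lemma~\ref{fax} fully explicit, and to check that the equivalence furnished by Statement~\ref{hasznos} matches precisely the clause we wish to discard; once these are pinned down the rest is an immediate logical simplification of a disjunction. I do not expect to need the totality of $R\langle E\cup F, \emptyset\rangle$ (Lemma~\ref{teljes}) here, since Statement~\ref{hasznos} requires only $R\langle E, F\rangle$ to be total.
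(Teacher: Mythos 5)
Your proof is correct and matches the paper's argument in substance: the paper derives the lemma from Statement~\ref{szabill} (itself a combination of Definition~\ref{defsimulate} and Statement~\ref{hasznos}) together with the translation of ``simulates on $a$'' into the path/cycle and keeps-up phrasing via Definition~\ref{singlesimulate}, Lemma~\ref{rozmaring}, and Definition~\ref{lepesttart}. Routing the same ingredients through Lemma~\ref{fax} instead of Statement~\ref{szabill} is only a cosmetic reorganization, and your use of Statement~\ref{hasznos} to collapse Condition~1 of Lemma~\ref{fax} to the set membership $a\in C\langle E, F\rangle$ is exactly the step the paper performs.
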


\begin{exa}\label{1oneegyharmadikfolytatas}\rm 
We now continue Example \ref{1oneegy}, Example \ref{1oneegyfolytatas}, and Example \ref{1oneegymasodikfolytatas}.
There exist  two paths  of positive length  in $ AUX[ E;  F] $.
   There do not exist cycles in $ AUX[ E;  F] $. 
   
 The length of the first path    is $1$, and it  starts from the  vertex   
   $f(\#)/_{\Theta\langle E\cup F  ,\emptyset \rangle}$,  and leads to   $\#/_{\Theta\langle E\cup F  ,\emptyset \rangle}$.    
   By Lemma \ref{rozmaring},
     $ R \langle E\cup F, \emptyset \rangle$ reaches   $f(\#)/_{\Theta\langle E\cup F  ,\emptyset \rangle}$
            from a proper extension of    $\#/_{\Theta\langle E\cup F  ,\emptyset \rangle}$.
           Recall that 
    \begin{quote}    
$f(\#)/_{\Theta\langle E\cup F, \emptyset \rangle}.equal_E= \mathrm{true}$ \; and \;
  $\#/_{\Theta\langle E \cup F, \emptyset \rangle}.keeps_E=\mathrm{true}$.
    \end{quote}

 The length of the second path is $1$, and it  starts from the  vertex     $f(\$)/_{\Theta\langle E\cup F  , \emptyset \rangle}$,
 and leads to   $\$/_{\Theta\langle E\cup F  ,\emptyset \rangle}$.
   By Lemma \ref{rozmaring}, 
    $ R \langle E\cup F, \emptyset \rangle$ reaches  $f(\$)/_{\Theta\langle E\cup F  ,\emptyset \rangle}$ from a proper extension of 
      $\$/_{\Theta\langle E\cup F  ,\emptyset \rangle}$.   Recall that 
 \begin{quote}  
  $f(\$)/_{\Theta\langle E \cup F, \emptyset \rangle}.equal_F=\mathrm{true}$ \; and \; 
 $  \$/_{\Theta\langle E \cup F ,\emptyset \rangle}.keeps_F= \mathrm{true}$.
    \end{quote}

Observe that Conditions 1 or 2 in Lemma \ref{fax}
hold true for each  vertex in 
$C\langle E \cup F, \emptyset\rangle=\{\, \{\, \#\,\}$, \;$\{\,   \$\, \}$,\;
  $\{\, f(\#), \;  g(\#) \, \}$,\;
 $\{\, f(\$), \;g(\$)\, \}\, \}$. Thence by Lemma \ref{fax}, GTRS $R\langle E, F \rangle$ and GTRS $R\langle F, E \rangle$
together  simulate the  GTRS $R\langle E \cup F, \emptyset \rangle$.

    \end{exa} 

\begin{exa}\label{2szepenharmadikfolytatas}\rm 
We now continue Example \ref{2szepen},  Example \ref{2szepenfolytatas}, and  Example \ref{2szepenmasodikfolytatas}. 
 There does not exists a  path  of positive length   in  $ AUX[ E;  F] $.
There exists one cycle   in  $ AUX[ E;  F] $. 
   It is of length  $1$, and it  starts from the  vertex    $\#/_{\Theta\langle E\cup F  ,\emptyset \rangle}$,  and leads to
   $\#/_{\Theta\langle E\cup F  ,\emptyset \rangle}$.
    By Lemma \ref{rozmaring},
     $ R \langle E\cup F, \emptyset \rangle$ reaches    
         $\#/_{\Theta\langle E\cup F  ,\emptyset \rangle}$     from a proper extension of   $\#/_{\Theta\langle E\cup F  ,\emptyset \rangle}$. 
 Recall that 
 \begin{quote}  
  $\#/_{\Theta\langle E \cup F, \emptyset \rangle}.equal_E=\mathrm{false}$,
    $\#/_{\Theta\langle E \cup F, \emptyset \rangle}.equal_F =\mathrm{false}$, and \newline
      $\#/_{\Theta\langle E \cup F, \emptyset \rangle}.keeps_E =\mathrm{false}$,  
      $\#/_{\Theta\langle E \cup F, \emptyset \rangle}.keeps_F =\mathrm{false}$.
 \end{quote}

Observe that neither Condition 1 in  nor Condition  2 in Lemma \ref{fax} holds
 for the  vertex $\#/_{\Theta\langle E\cup F  ,\emptyset \rangle}$.
  Thus by Lemma \ref{fax}, GTRS $R\langle E, F \rangle$ and GTRS $R\langle F, E \rangle$ 
together do not   simulate the  GTRS $R\langle E \cup F, \emptyset \rangle$.
    \end{exa}

Intuitively, assume that GTRS $R\langle E, F \rangle$ simulates the  GTRS $R\langle E \cup F, \emptyset \rangle$
on a vertex     ${d}\in C\langle  E \cup F, \emptyset \rangle $ of  $ AUX[ E;  F] $. 
Consider a term $\delta[{b}_1,  \ldots, {b}_k]$
for some $\delta\in CON_{\Sigma,k} $,      ${b}_1, \ldots,  {b}_k \in  C\langle {E\cup F}, \emptyset \rangle$ and a sequence of reductions
\begin{quote}
(i) \, 
$\delta[{b}_1,  \ldots, {b}_k]\tred {R \langle  E \cup F, \emptyset  \rangle }{d}$. 
\end{quote} 
Furthermore, consider  any  constants    ${c}_1, \ldots, {c}_k \in C \langle E, F \rangle$ such that 
    ${c}_j \subseteq {b}_j $ for every $j\in \{\, 1, \ldots, k\, \}$. We substitute    ${c}_j$ for    ${b}_j$ in $\delta[{b}_1,  \ldots, {b}_k]$, in this way, we get the term 
 $\delta[{c}_1,  \ldots, {c}_k]$. Since
   GTRS $R\langle E, F \rangle$ simulates the  GTRS $R\langle E \cup F, \emptyset \rangle$, 
 we construct  a fitting  sequence  of reductions 
 \begin{quote} (ii) \,  
$\delta[{c}_1,  \ldots, {c}_k]\tred {R \langle  E \cup F, \emptyset  \rangle }{e}$ 
\end{quote} 
for some    ${e}\in C \langle E, F \rangle$.  We obtain (ii) following step  by step the reductions in (i). For each reduction of (i), 
  we  apply a fitting rule at the same position in (ii). Then for each reduction step,  
for the two terms $s$ and  $t$ obtained in (i) and (ii), respectively,
each constant     ${c}\in C\langle  E, F \rangle $ appearing in $t$  corresponds to the constant    ${b}\in C\langle  E \cup F, \emptyset \rangle $ 
appearing  at the same position in $s$.
Moreover    ${c}$, as congruence class,  is  
 subset of the corresponding constant    ${b}$.  Finally, constant    ${e}$ corresponds to constant    ${d}$, and    ${e} \subseteq {d}$. 
Formally, we state and prove the following result.

  \begin{lem}\label{holnap} Let 
 $E$ and $F$ be  GTESs over  a signature $\Sigma$ such that   GTRS $R\langle E, F \rangle$ simulates the  GTRS $R\langle E \cup F, \emptyset \rangle$  
  on a constant      ${d}\in C\langle  E \cup F, \emptyset \rangle $.
 Let Conditions (a) and (b) hold. 
\begin{itemize}
 \item [\em{(a)}]
 $\delta_n[{b}_{n1},  \ldots, {b}_{nk_n}]
 \red {R \langle  E \cup F, \emptyset  \rangle } \delta_{n-1}[{b}_{(n-1)1},  \ldots, {b}_{(n-1)k_{n-1}}]\red {R \langle  E \cup F,  \emptyset   \rangle }
 \delta_{n-2}[{b}_{(n-2)1},  \ldots, {b}_{(n-2)k_{n-2}}]
  \red {R \langle E \cup  F, \emptyset  \rangle }  \cdots \newline 
 \red {R \langle E \cup  F, \emptyset  \rangle } \delta_1[{b}_{11},  \ldots, {b}_{1k_1}]\red {R \langle  E \cup F, \emptyset  \rangle }{d}$, \newline 
where 
 $n\geq 1$, 
 for every $i\in \{\, 1, \ldots, n\, \}$, 
 $\delta_i \in CON_{\Sigma, {k_i}} $, $k_i \in \Nat$, and      ${b}_{i1}, \ldots,  {b}_{i{k_i}}\in 
 C\langle {E\cup F}, \emptyset\rangle$.
  \end{itemize} 
 \begin{itemize}
 \item [{\rm (b)}] For every $j\in \{\, 1, \ldots, k_n\, \}$,     ${c}_{nj}\in C \langle E, F \rangle$ and     ${c}_{nj} \subseteq {b}_{nj}$.
 \end{itemize} 
  Then 
    for any  $i\in \{\, 1, \ldots, n-1\, \}$, $j\in \{\, 1, \ldots, k_i\, \}$, there exist    ${c}_{ij}\in C \langle E, F \rangle$
  with     ${c}_{ij} \subseteq {b}_{ij}$,   and there exists    ${e}\in C \langle E, F \rangle$ with    ${e} \subseteq {d}$ 
   such that the following sequence of reductions holds:
\begin{quote}
  $\delta_n[{c}_{n_1},  \ldots, {c}_{nk_n}]
 \red {R \langle  E, F \rangle }
 \delta_{n-1}[{c}_{(n-1)1},  \ldots, {c}_{(n-1)k_{n-1}}]\red {R \langle  E, F \rangle }
 \delta_{n-2}[{c}_{(n-2)1},  \ldots, {c}_{(n-2)k_{n-2}}]
  \red {R \langle E, F \rangle } 
 \cdots \red {R \langle E, F \rangle } \newline \delta_1[{c}_{11},  \ldots, {c}_{1k_1}]\red {R \langle E, F \rangle }{e}$. \qedhere
 \end{quote} 
 \end{lem}
 
 \begin{proof}
 We proceed by induction on $n$.
 
 {\em Base Case:} $n=1$.  Assume that  Conditions (a) and (b) hold. Then $\delta_1=\s(\diamondsuit, \ldots, \diamondsuit)$ for some $\s\in \Sigma_{k_1}$,
 $k_1\in \Nat$, where
  \begin{quote}
 $\delta_1[{b}_{11},  \ldots, {b}_{1k_1}]= \s({b}_{11},  \ldots, {b}_{1k_1})$, \, 
  $\s({b}_{11},  \ldots, {b}_{1k_1} ) \dot{\rightarrow}  {d}\in  R\langle {E \cup F}, \emptyset \rangle$, \, and \, 
 $\s({b}_{11},  \ldots, {b}_{1k_1} )\red {R \langle E \cup  F, \emptyset  \rangle } {d}$.
 \end{quote}
   By Condition (b), for every $j\in \{\, 1, \ldots, k_1\, \}$,      ${c}_{1j}\in C \langle E, F \rangle$ and     ${c}_{1j} \subseteq {b}_{1j}$.
  We assumed that 
  GTRS $R\langle E, F \rangle$ simulates the  GTRS $R\langle E \cup F, \emptyset \rangle$  
   on the constant    ${d}\in C\langle  E \cup F, \emptyset \rangle $. 
 Consequently  by the rewrite rule $\s({b}_{11},  \ldots, {b}_{1k_1} ) \dot{\rightarrow}  {d}\in  R\langle {E \cup F}, \emptyset \rangle$ and Definition \ref{singlesimulate},  and Statement \ref{sobriety}, there exists a rewrite rule
  \begin{quote}
  $\s({c}_{11},  \ldots, {c}_{1k_1}) \dot{\rightarrow} {c} \in R \langle E, F \rangle$, where    ${c} \subseteq {d}$.
\end{quote} Since 
 $\delta_1[{c}_{11},  \ldots, {c}_{1k_1}]= \s({c}_{11},  \ldots, {c}_{1k_1})$, we have 
  \begin{quote}
 $\delta_1[{c}_{11},  \ldots, {c}_{1k_1}]\red {R \langle E, F \rangle } {c} $,
 where    ${c}\in C \langle E, F \rangle$ and    ${c} \subseteq {d}$.
 \end{quote} 
 
 {\em Induction Step:} Let $n\geq 1$, and  assume that the statement of the claim  is true for $n$. We now show 
 that it is also true for $n+1$ as well.
 Assume that  Conditions ($\mathrm{a}^\prime$) and ($\mathrm{b}^\prime$) hold:
 \begin{itemize}
 \item [($\mathrm{a}^\prime$)] $\delta_{n+1}[{b}_{(n+1)1}, \ldots, {b}_{(n+1) k_{n+1}}]\red {R \langle  E\cup F, \emptyset  \rangle } 
  \delta_n[{b}_{n1},  \ldots, {b}_{nk_n}]
 \red {R \langle  E\cup F, \emptyset  \rangle } \delta_{n-1}[{b}_{(n-1)1},  \ldots, {b}_{(n-1)k_{n-1}}]\red {R \langle  E\cup F, \emptyset  \rangle } \newline
 \cdots 
\red {R \langle E\cup F, \emptyset  \rangle }
  \delta_2[{b}_{21},  \ldots, {b}_{2k_2}]
 \red {R \langle E\cup F, \emptyset  \rangle } 
 \delta_1[{b}_{11},  \ldots, {b}_{1k_1}]\red {R \langle E\cup F, \emptyset  \rangle }{d}$,
 \newline 
where 
  for every $i\in \{\, 0, \ldots, n+1\, \}$, 
 $\delta_i \in CON_{\Sigma, {k_i}} $, $k_i  \in \Nat $, and      ${b}_{i1}, \ldots,  {b}_{i{k_i}}\in 
 C\langle {E\cup F}, \emptyset\rangle$.
 \end{itemize}
 \begin{itemize}
 \item [($\mathrm{b}^\prime$)] For every $j\in \{\, 1, \ldots, k_{n+1}\, \}$,     ${c}_{(n+1)j}\in C \langle E, F \rangle$ and     ${c}_{(n+1)j} \subseteq 
 {b}_{(n+1)j}$.
 \end{itemize} 
  Then Conditions (c)--(f) hold:
   \begin{itemize}
 \item [{\rm (c)}] We obtain $\delta_{n+1}$ from $\delta_n$ by replacing the $j$th occurrence  from left to right of the hole 
  $\diamondsuit$  by the context 
  $\s(\diamondsuit, \ldots, \diamondsuit )$, where 
 $j \in \{\, 1, \ldots, k_n\, \}$,   $\s\in \Sigma_m$, $m \in \Nat$,  and $k_n= k_{n+1}-m+1$.
  \item [{\rm (d)}] $\delta_{n+1}
  [{b}_{(n+1)1}, \ldots,   {b}_{(n+1)k_{n+1}}]= 
  \delta_n [ {b}_{(n+1)1}, \ldots, {b}_{(n+1)(j-1)}, \s({b}_{(n+1)j}, \ldots, {b}_{(n+1)(j+m-1)}), {b}_{(n+1)(j+m)}, 
  \ldots,   {b}_{(n+1) k_{n+1}}]$. 
 \item [{\rm (e)}]    
For every $\ell\in \{1, \ldots, j-1\, \}$,     ${b}_{n\ell}= {b}_{(n+1)\ell}$ and
    for every $\ell\in \{j+1, \ldots, k_n \, \}$,       ${b}_{n\ell}= {b}_{(n+1)(\ell+m-1)}$.
  \item [{\rm (f)}] In the last reduction step  of (a') we apply the rule $\s({b}_{(n+1)j}, \ldots, {b}_{(n+1)(j+m-1)} ) \dot{\rightarrow}   {b}_{nj}\in  R\langle {E \cup F}, \emptyset  \rangle$.
 \end{itemize}  
  By Condition ($\mathrm{a}^\prime$), $R \langle E\cup F, \emptyset  \rangle $ reaches     ${d}$
 from a proper extension of    ${b}_{nj}$.   By Condition    ($\mathrm{b}^\prime$),
 for every $\ell \in \{\, j, \ldots, j+m-1\, \}$,     ${c}_{(n+1)\ell}\in C \langle E, F \rangle$ and      ${c}_{(n+1)\ell} \subseteq {b}_{n\ell}$.
  We assumed that 
  GTRS $R\langle E, F \rangle$ simulates the  GTRS $R\langle E \cup F, \emptyset \rangle$  
  on the constant     ${d}\in C\langle  E \cup F, \emptyset \rangle $.
Therefore  by the rewrite rule appearing in Condition (f) there exists a rewrite rule   
  \begin{quote}
  $\s({c}_{(n+1)j}, \ldots, {c}_{(n+1) (j+m-1)})\dot{\rightarrow} {c}$ in $R \langle E, F \rangle$ with    ${c}\subseteq {b}_{nj} $.
\end{quote}
 \begin{quote} 
For every $\ell\in \{1, \ldots, j-1\, \}$, let     ${c}_{n\ell}= {c}_{(n+1)\ell}$, \newline  let    ${c}_{nj}={c}$, \newline
  for every $\ell\in \{j+1, \ldots, k_n \, \}$, let     ${c}_{n\ell}= {c}_{(n+1)(\ell+j-1)}$.
  \end{quote}
  Thence
  \begin{quote}
 for every $\ell\in \{\, 1, \ldots, k_n\,\}$, 
    ${c}_{n\ell}\subseteq {b}_{n\ell}$ and    
 $\delta_{n+1}[{c}_{(n+1)1},  \ldots, {c}_{(n+1)k_{n+1}}]
 \red {R \langle  E, F \rangle } \delta_n[{c}_{n1},  \ldots, {c}_{nk_n}]$.
 \end{quote}
 By the sequence of reductions ($\mathrm{a}^\prime$) we have
 \begin{quote}
$\delta_n[{b}_{n1},  \ldots, {b}_{nk_n}]
 \red {R \langle  E\cup F, \emptyset  \rangle } \delta_{n-1}[{b}_{(n-1)1},  \ldots, {b}_{(n-1)k_{n-1}}]\red {R \langle  E\cup F, \emptyset  \rangle } 
 \cdots 
\red {R \langle E\cup F, \emptyset  \rangle }
  \delta_2[{b}_{21},  \ldots, {b}_{2k_2}]
 \red {R \langle E\cup F, \emptyset  \rangle } \newline
 \delta_1[{b}_{11},  \ldots, {b}_{1k_1}]\red {R \langle E\cup F, \emptyset  \rangle }{d}$.
 \end{quote}
 Consequently by the induction hypothesis, 
 \begin{quote}
  for any  $i\in \{\, 1, \ldots, n-1\, \}$ and  $\ell\in \{\, 1, \ldots, k_i\, \}$, there exist    ${c}_{i\ell}\in C \langle E, F \rangle$,
  with    ${c}_{i\ell} \subseteq {b}_{i\ell}$,    and  \newline
there exists
    ${e}\in C \langle E, F \rangle$ with    ${e} \subseteq {d}$ 
 \end{quote} such that the following sequence of reductions holds:
   \begin{quote} 
$\delta_n[{c}_{n1},  \ldots, {c}_{nk_n}]\red {R \langle  E, F \rangle }
 \delta_{n-1}[{c}_{(n-1)1},  \ldots, {c}_{(n-1)k_{n-1}}]
 \red {R \langle E, F \rangle } 
 \cdots \red {R \langle E, F \rangle } \delta_1[{c}_{11},  \ldots, {c}_{1k_1}]\red {R \langle E, F \rangle }{e}$.
   \end{quote} 
  Then we have 
     \begin{quote} 
 $\delta_{n+1}[{c}_{(n+1)1},  \ldots, {c}_{(n+1)k_{n+1}}]
 \red {R \langle  E, F \rangle } \delta_n[{c}_{n1},  \ldots, {c}_{nk_n}]\red {R \langle  E, F \rangle }
 \delta_{n-1}[{c}_{(n-1)1},  \ldots, {c}_{(n-1)k_{n-1}}]
 \red {R \langle E, F \rangle } \newline
 \cdots \red {R \langle E, F \rangle } \delta_1[{c}_{11},  \ldots, {c}_{1k_1}]\red {R \langle E, F \rangle }{e}$.\qedhere
    \end{quote}
  \end{proof}
   Lemma \ref{holnap} implies the following lemma.
  \begin{lem}\label{hurnegyszog} Let 
 $E$ and $F$ be  GTESs over  a signature $\Sigma$ such that   GTRS $R\langle E, F \rangle$ simulates the  GTRS $R\langle E \cup F, \emptyset \rangle$  
   on a constant     ${d}\in C\langle  E \cup F, \emptyset \rangle $. Let $s\in \ts$ and $s\tred {R\langle E \cup F, \emptyset \rangle}{d}$. 
  Then 
 there exists    ${e}\in C \langle E, F \rangle$ with    ${e} \subseteq {d}$ 
  such that $s\tred {R\langle E, F\rangle}{e}$. 
 \end{lem}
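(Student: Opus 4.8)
The plan is to derive the statement from Lemma~\ref{holnap} after a preprocessing step that turns the $\Sigma_0$-leaves of $s$ into constants of $C\langle E\cup F,\emptyset\rangle$, since Lemma~\ref{holnap} applies only to reductions whose starting term already carries $C\langle E\cup F,\emptyset\rangle$-constants in its holes.

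First I would record two structural facts about $R\langle E\cup F,\emptyset\rangle$. By Proposition~\ref{rhcp}, applied to the GTES $E\cup F$ and $\emptyset$, the GTRS $R\langle E\cup F,\emptyset\rangle$ is convergent, and by inspection of Definition~\ref{szabaly} each of its rules has a left-hand side $\s(a_1,\ldots,a_m)$ with $a_1,\ldots,a_m\in C\langle E\cup F,\emptyset\rangle$ and an irreducible right-hand side. Hence $c$ is the unique normal form of $s$. Moreover, a function-symbol occurrence can be removed only when all of its arguments are already $C\langle E\cup F,\emptyset\rangle$-constants, so a bottom-up induction on $s$ shows that every leaf of $s$ lies in $W$: an irreducible $\Sigma_0$-leaf outside $W$ would keep its parent unreducible, and $s$ could not reach the single constant $c$. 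In particular, for each leaf $\sigma$ the rules $\sigma\dot{\rightarrow}\sigma/_{\Theta\langle E\cup F,\emptyset\rangle}$ and $\sigma\dot{\rightarrow}\sigma/_{\Theta\langle E,F\rangle}$ belong to $R\langle E\cup F,\emptyset\rangle$ and $R\langle E,F\rangle$, respectively.

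I would dispose of the degenerate case $s\in\Sigma_0$ directly: then $c=s/_{\Theta\langle E\cup F,\emptyset\rangle}$, and $d:=s/_{\Theta\langle E,F\rangle}\in C\langle E,F\rangle$ satisfies $s\red{R\langle E,F\rangle}d$ and $d\subseteq c$, where the inclusion holds because $E\subseteq E\cup F$ gives $\tthue E\subseteq\tthue{E\cup F}$ and hence $\Theta\langle E,F\rangle\subseteq\Theta\langle E\cup F,\emptyset\rangle$ on $W$. Otherwise $s$ has height at least $1$. Let $\sigma_1,\ldots,\sigma_k$ be the leaves of $s$ read from left to right and let $\delta\in CON_{\Sigma,k}$ be the context obtained from $s$ by replacing each leaf by a hole. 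Reducing each leaf once yields
\[
s\tred{R\langle E\cup F,\emptyset\rangle}\delta[a_1,\ldots,a_k],\quad a_j=\sigma_j/_{\Theta\langle E\cup F,\emptyset\rangle}\in C\langle E\cup F,\emptyset\rangle,
\]
and likewise $s\tred{R\langle E,F\rangle}\delta[b_1,\ldots,b_k]$ with $b_j=\sigma_j/_{\Theta\langle E,F\rangle}\in C\langle E,F\rangle$ and $b_j\subseteq a_j$. Since $\delta[a_1,\ldots,a_k]$ is a reduct of $s$, convergence gives $\delta[a_1,\ldots,a_k]\tred{R\langle E\cup F,\emptyset\rangle}c$, and as $s$ has height at least $1$ this reduction takes at least one step.

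The key step is to note that this last reduction already has the exact shape demanded by hypothesis~(a) of Lemma~\ref{holnap}. Indeed, every redex of $R\langle E\cup F,\emptyset\rangle$ is a function symbol applied to $C\langle E\cup F,\emptyset\rangle$-constants, so each intermediate term along $\delta[a_1,\ldots,a_k]\tred{}c$ is again a $\Sigma$-context with $C\langle E\cup F,\emptyset\rangle$-constants in its holes, and each single rewrite step deletes exactly one function-symbol occurrence. Presenting the reduction step by step supplies the contexts $\delta_n=\delta,\delta_{n-1},\ldots,\delta_1$ and the constant tuples required by~(a), while the tuple $(b_1,\ldots,b_k)$ provides hypothesis~(b). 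Lemma~\ref{holnap} then furnishes $d\in C\langle E,F\rangle$ with $d\subseteq c$ and $\delta[b_1,\ldots,b_k]\tred{R\langle E,F\rangle}d$; concatenating with the leaf reductions $s\tred{R\langle E,F\rangle}\delta[b_1,\ldots,b_k]$ gives $s\tred{R\langle E,F\rangle}d$, as required. I expect the main obstacle to be the careful verification that the $R\langle E\cup F,\emptyset\rangle$-reduction out of $\delta[a_1,\ldots,a_k]$ can be written verbatim in the indexed context form of Lemma~\ref{holnap}(a), together with the leaf-membership argument that legitimizes the preprocessing simultaneously in both rewrite systems.
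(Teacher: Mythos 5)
Your proposal is correct and follows the same route as the paper, whose entire proof of this lemma is the single sentence that Lemma~\ref{holnap} implies it. You have simply supplied the bridging details the paper leaves implicit (reducing the $\Sigma_0$-leaves to constants of $C\langle E\cup F,\emptyset\rangle$, invoking convergence to reorder the reduction, and checking the degenerate case $s\in\Sigma_0$), and these details are sound.
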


  \begin{lem}\label{sportos}  Let 
 $E$ and $F$ be  GTESs over a signature   $\Sigma$.
 \begin{quote}
    For each   $t\in W $, 
  $\left( t/_{\tthue{ E\cup F} }=  t/_{{\tthue E}} \mbox{ or } t/_{\tthue{ E\cup F} }=  t/_{{\tthue F}} 
  \right) $
   \end{quote}
  if and only if 
  GTRS $R\langle E, F \rangle$ and GTRS $R\langle F, E \rangle$
together  simulate the  GTRS $R\langle E \cup F, \emptyset \rangle$.
 \end{lem}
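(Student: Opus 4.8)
**The plan is to prove each direction by relating the congruence-class condition to the simulation property through the normal forms of the reduced GTRSs.** The statement links a semantic condition on $\tthue{E\cup F}$-classes to the syntactic/structural notion of joint simulation. The natural bridge is Lemma \ref{holnap} (and its consequence Lemma \ref{hurnegyszog}), which says that simulation of $R\langle E\cup F,\emptyset\rangle$ on a constant $c$ lets us mirror any reduction leading to $c$ by a reduction in $R\langle E,F\rangle$ ending at some $d\subseteq c$. I would first fix an arbitrary $t\in W$ and its class $c=t/_{\Theta\langle E\cup F,\emptyset\rangle}\in C\langle E\cup F,\emptyset\rangle$, and relate the equation $t/_{\tthue{E\cup F}}=t/_{\tthue E}$ to a statement about which ground terms reduce into $c$ versus into $c\cap(\text{some }C\langle E,F\rangle\text{-class})$.

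\textbf{Forward direction ($\Rightarrow$).} Assume that for each $t\in W$ either $t/_{\tthue{E\cup F}}=t/_{\tthue E}$ or $t/_{\tthue{E\cup F}}=t/_{\tthue F}$. Fix a constant $a\in C\langle E\cup F,\emptyset\rangle$, say $a=t/_{\Theta\langle E\cup F,\emptyset\rangle}$ for some $t\in W$. By hypothesis (splitting into the two cases), suppose $t/_{\tthue{E\cup F}}=t/_{\tthue E}$; I would argue this forces $a\in C\langle E,F\rangle$ and, moreover, that $R\langle E,F\rangle$ simulates $R\langle E\cup F,\emptyset\rangle$ on $a$ in the sense of Definition \ref{singlesimulate}. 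The first part comes because equality of the two $\tthue{}$-classes restricts (by the definitions of $\Theta$, cf.\ the reasoning in Lemma \ref{duna}) to equality of the corresponding $\Theta$-classes, giving $a=t/_{\Theta\langle E,F\rangle}\in C\langle E,F\rangle$. For Condition 2 of Definition \ref{singlesimulate}, I would take any $b$ from which $R\langle E\cup F,\emptyset\rangle$ reaches $a$ via a proper extension and any rule $\s(b_1,\dots,b_m)\dot\to b$ with $c_i\subseteq b_i$, $c_i\in C\langle E,F\rangle$; using Lemma \ref{millio} and the reachability I would build a ground term reducing to $a$ through this rule, then use $t/_{\tthue{E\cup F}}=t/_{\tthue E}$ together with Proposition \ref{negyedik} and Lemma \ref{ezredik}-style totality reasoning to produce the required fitting rule $\s(c_1,\dots,c_m)\dot\to c$ in $R\langle E,F\rangle$. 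Thus $R\langle E,F\rangle$ (or symmetrically $R\langle F,E\rangle$) simulates on $a$, and since $a$ was arbitrary, joint simulation holds by Definition \ref{defsimulate}.

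\textbf{Reverse direction ($\Leftarrow$).} Assume joint simulation. Fix $t\in W$ and let $c=t/_{\Theta\langle E\cup F,\emptyset\rangle}$. By Definition \ref{defsimulate}, $R\langle E,F\rangle$ or $R\langle F,E\rangle$ simulates $R\langle E\cup F,\emptyset\rangle$ on $c$; say it is $R\langle E,F\rangle$. I would show $t/_{\tthue{E\cup F}}=t/_{\tthue E}$. The inclusion $t/_{\tthue E}\subseteq t/_{\tthue{E\cup F}}$ is immediate from $\tthue E\subseteq\tthue{E\cup F}$. For the reverse, take $s\in\ts$ with $s\tthue{E\cup F}t$; by Proposition \ref{szazadik} and Proposition \ref{masodik} we get $s\tred{R\langle E\cup F,\emptyset\rangle}c$, whence Lemma \ref{hurnegyszog} yields $d\in C\langle E,F\rangle$ with $d\subseteq c$ and $s\tred{R\langle E,F\rangle}d$. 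Then Proposition \ref{harmadik} gives $s\tthue E$ (the representative of) $d$; combining with $d\subseteq c$ and the fact that $t$ itself lies in $c$, I would conclude $s\tthue E t$, so $s\in t/_{\tthue E}$. This gives $t/_{\tthue{E\cup F}}=t/_{\tthue E}$ (the $F$-case is symmetric).

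\textbf{The main obstacle} will be the careful bookkeeping in the forward direction when upgrading a mere class-equality at $t$ into the universally-quantified simulation Condition 2, which quantifies over \emph{all} predecessors $b$ reachable by proper extensions and all compatible rule instances. The delicate point is that the single hypothesis $t/_{\tthue{E\cup F}}=t/_{\tthue E}$ must be leveraged at every intermediate constant along every reduction path into $a$, not just at $a$ itself; I expect to need the totality-free reachability machinery of Lemma \ref{kokusz}--Lemma \ref{banan} together with Lemma \ref{millio} to manufacture, for each such $b$ and rule, an explicit ground term whose reduction behavior under the two systems is forced to agree by the class equality. Keeping the $\subseteq$ relations between $C\langle E,F\rangle$- and $C\langle E\cup F,\emptyset\rangle$-classes consistent throughout this construction, and handling the symmetric $F$-disjunct uniformly, is where the real work lies.
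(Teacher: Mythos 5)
Your proposal is correct and follows essentially the same route as the paper: the forward direction builds an explicit ground term $\delta[\s(t_1,\dots,t_m)]$ from representatives of the $C\langle E,F\rangle$-classes, pushes it through $\tthue{E\cup F}$-equivalence to $t$, converts to $\tthue E$-equivalence via the class equality, and extracts the fitting rule from Proposition \ref{otodik} and convergence; the reverse direction is exactly the paper's use of Lemma \ref{hurnegyszog}. The only point to make explicit when writing it up is that in the reverse direction the step from $d\subseteq c$ to $s\tthue E t$ needs Condition 1 of Definition \ref{singlesimulate} (namely $c=t/_{\Theta\langle E,F\rangle}$, so that $d$, being a $\Theta\langle E,F\rangle$-class contained in $c$, must equal $t/_{\Theta\langle E,F\rangle}$), and that the fitting rule is obtained from Proposition \ref{otodik} rather than any totality argument in the style of Lemma \ref{ezredik}, since totality is not available here.
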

 
 \begin{proof} 
 $(\Rightarrow)$ Let
  $t\in W $.  Then by our assumption, 
  \begin{quote}
    $ t/_{\Theta\langle E\cup F, \emptyset \rangle }=  t/_{\Theta\langle E, F \rangle } \mbox{ or }
   t/_{\Theta\langle E\cup F, \emptyset \rangle }=  t/_{\Theta\langle F, E \rangle }  $.
   \end{quote}
   Assume that $ t/_{\tthue{ E\cup F} }=  t/_{{\tthue E}} $. The case  $t/_{\tthue{ E\cup F} }=  t/_{{\tthue F}} $
      is similar. We now show that  GTRS $R\langle E, F \rangle$ simulates the  GTRS $R\langle E \cup F, \emptyset \rangle$ on $ t/_{\Theta\langle E\cup F, \emptyset \rangle }$.
 To this end, let     ${d} \in C\langle E\cup F, \emptyset \rangle$ be such that 
 $R\langle {E \cup F}, \emptyset \rangle$ reaches   $t/_{\Theta\langle {E\cup F}, \emptyset \rangle}$ from   a proper extension    of the constant     ${d}$. 
 Then there exists $\delta\in CON_{\Sigma, 1}$
 such that $\delta[{d}] \tred {R\langle E \cup F, \emptyset \rangle} t/_{\Theta\langle {E\cup F}, \emptyset \rangle}$. 
 Let 
 $\s({b}_1, \ldots, {b}_m) \dot{\rightarrow} {d}\in  R\langle {E \cup F}, \emptyset\rangle$ be a rewrite rule with    ${b}_1, \ldots, {b}_m \in 
 C\langle {E \cup F}, \emptyset\rangle$, and let     ${c}_1, \ldots, {c}_m  \in C \langle E, F \rangle$   with
    ${c}_i \subseteq {b}_i$ for all $i\in \{\, 1, \ldots, m\, \}$. We now show that 
  there exists a rewrite rule  in  $R\langle E, F \rangle$ with left-hand side 
 $\s({c}_1, \ldots, {c}_m) $. 
  Let $t_1\in {c}_1, \ldots, t_m \in {c}_m$. By Definition \ref{szeder},  we have  
  \begin{quote}
  $t_1, \ldots, t_m \in W$, \newline
 $t_1/_{\Theta\langle E, F \rangle}={c}_1, \ldots,t_m/_{\Theta\langle E, F \rangle}={c}_m$, \newline
 $t_1/_{\Theta\langle {E\cup F}, \emptyset \rangle}={b}_1, \ldots,t_m/_{\Theta\langle {E\cup F} , \emptyset \rangle}={b}_m$. \newline
  Hence 
  $\s(t_1/_{\Theta\langle {E\cup F}, \emptyset\rangle}, \ldots,t_m/_{\Theta\langle {E\cup F} , \emptyset \rangle}) \dot{\rightarrow} {d}
 \in  R\langle {E \cup F}, \emptyset \rangle$.
 \end{quote}  
 Then by Proposition    \ref{masodik},  \begin{quote}
 $\delta [\s(t_1,  \ldots, t_m)] \tred {R\langle E \cup F, \emptyset \rangle} \delta [\s({b}_1,  \ldots, {b}_m)] \red {R\langle E \cup F, \emptyset \rangle} 
 \delta[{d}]\tred {R\langle E \cup F, \emptyset \rangle} t/_{\Theta\langle {E\cup F}, \emptyset \rangle}$. 
 \end{quote}  
 By Proposition \ref{harmadik}, 
 $\delta [\s(t_1,  \ldots, t_m)] \tthue { E\cup F} t$.
    By  the equation $t/_{\tthue {E\cup F}}=  t/_{\tthue E}$, we have 
   $\delta [\s(t_1,  \ldots, t_m)] \tthue E  t$. 
     By Proposition \ref{otodik}, \begin{quote}
     $\delta [\s(t_1,  \ldots, t_m)]\tred {R\langle E, F \rangle}  t/_{\Theta\langle  E, F \rangle}$. \end{quote} 
        By Proposition \ref{rhcp},  
   $R\langle E, F \rangle $ is convergent. Observe that $ t/_{\Theta\langle  E, F \rangle}$ is irreducible for   $R\langle E, F \rangle $.
   Thence  we have 
   \begin{quote}
     $\delta [\s(t_1,  \ldots, t_m)] {\downarrow}_{R\langle E, F \rangle} =t/_{\Theta\langle E, F \rangle}$.
  \end{quote}     
 By the set memberships $t_1, \ldots, t_m \in W$ and by Proposition \ref{masodik}, we have    
 $t_j \tred {R\langle E, F \rangle} t_j/_{\Theta\langle { E}, F\rangle}$ for every $j \in \{\, 1, \ldots, m\, \}$.
  Then 
\begin{quote}
$\delta [\s(t_1,  \ldots, t_m)]\tred {R\langle E, F \rangle} 
\delta[\s(t_1/_{\Theta\langle { E}, F\rangle}, \ldots,t_m/_{\Theta\langle {E}, F\rangle} ) ]$.
      \end{quote}  
       By Proposition \ref{rhcp},  
   $R\langle E, F \rangle $ is convergent.    Thence  we have    
   \begin{quote}  
    $\delta[\s(t_1/_{\Theta\langle { E}, F\rangle}, \ldots,t_m/_{\Theta\langle {E}, F\rangle} )]
    {\downarrow}_{R\langle E, F \rangle} =t/_{\Theta\langle E, F \rangle}$.
   \end{quote}    
 Therefore,  there exists
  a rewrite rule  with left-hand side $\s(t_1/_{\Theta\langle { E}, F\rangle}, \ldots,t_m/_{\Theta\langle {E}, F\rangle} ) $
   in $R\langle {E}, F\rangle$.

  $(\Leftarrow)$    Let $t \in W$.  We assumed that  GTRS $R\langle E, F \rangle$ and GTRS $R\langle F, E \rangle$
together  simulate the  GTRS $R\langle E \cup F, \emptyset \rangle$. By Definition \ref{defsimulate} GTRS $R\langle E, F \rangle$ or 
GTRS $R\langle F, E \rangle$ simulate the  GTRS $R\langle E \cup F, \emptyset \rangle$   
  on  the constant  $  t/_{\tthue {E\cup F}}\in C\langle  E \cup F, \emptyset \rangle $.  
    Assume that  GTRS $R\langle E, F \rangle$ simulates the  GTRS $R\langle E \cup F, \emptyset \rangle$   
  on $  t/_{\tthue {E\cup F}}$.   Then $t/_{\Theta\langle E\cup F, \emptyset \rangle}= t/_{\Theta\langle E, F \rangle}$.   
     Let   $s\in  t/_{\tthue {E\cup F}}$. 
 Consequently by Proposition \ref{otodik},
  $s\tred {R \langle E\cup F, E \rangle } t/_{\Theta\langle {E\cup F},\emptyset\rangle}$. 
  By Lemma \ref{hurnegyszog}, there exists     ${e}\in C \langle E, F \rangle$ such that     ${e} \subseteq  t/_{\Theta\langle {E \cup F}, \emptyset \rangle}$ and 
  $s\tred {R \langle E, F \rangle } {e}$.
   Then    ${e}=p/_{\Theta\langle E, F \rangle}$ for some $p\in W$.
   By Proposition \ref{harmadik},
   $s \tthue { E\cup F}t$ and  $s \tthue { E} p$. 
   Then $s \tthue { E\cup F}p$. Thus    
     ${e} = p/_{\Theta\langle E, F \rangle}\subseteq t/_{\Theta\langle E\cup F, \emptyset \rangle}= t/_{\Theta\langle E, F \rangle}$.  Consequently 
      ${e} = t/_{\Theta\langle E, F \rangle}$. 
    Hence 
     $s\in  t/_{\tthue E}$.  
  Consequently, 
  for any  $s\in  t/_{\tthue {E\cup F}}$, we have
    $s\in  t/_{\tthue E}$. 
Thus
  $ t/_{\tthue {E\cup F}}\subseteq  t/_{\tthue E}$, and hence  $ t/_{\tthue {E\cup F}}=  t/_{\tthue E} $. 
 We showed that if  GTRS $R\langle E, F \rangle$ simulates the  GTRS $R\langle E \cup F, \emptyset \rangle$   
  on $  t/_{\tthue {E\cup F}}$, then  $ t/_{\tthue {E\cup F}}=  t/_{\tthue E} $.

 We can show similarly that if  GTRS $R\langle F, E \rangle$ simulates the  GTRS $R\langle E \cup F, \emptyset \rangle$   
  on $  t/_{\tthue {E\cup F}}$, then  $ t/_{\tthue {E\cup F}}=  t/_{\tthue F} $. 
    Consequently,  \begin{quote}
    for each   
  $t\in W $, 
  $\left( t/_{\tthue {E\cup F}}=  t/_{\tthue E} \mbox{ or }
   t/_{\tthue {E\cup F}}=  t/_{\tthue F}\right)$. \qedhere
   \end{quote}
   \end{proof}
%%%%%%%%%%%%%%%%

 \begin{exa} \label{5uborkaharmadikfolytatas} \rm 
 We continue  Example \ref{5uborka} and its sequels, Examples \ref{5uborkafolytatas},  \ref{5uborkamasodikfolytatas}.
  There does not exist a path of positive length in  $ AUX[ E;  F] $. There exists  one  cycle   in  $ AUX[ E;  F] $. The cycle  is of length $1$, it starts from the  vertex    
   $\#/_{\Theta\langle E\cup F  ,\emptyset \rangle}$, and leads to   $\#/_{\Theta\langle E\cup F  ,\emptyset \rangle}$.
   By Lemma \ref{rozmaring}, 
    $ R \langle E\cup F, \emptyset \rangle$ reaches      $\#/_{\Theta\langle E\cup F  ,\emptyset \rangle}$
   from a proper extension of 
      $\#/_{\Theta\langle E\cup F  ,\emptyset \rangle}$. 
        Recall that 
    $\#/_{\Theta\langle E \cup F, \emptyset \rangle}.equal_E=\mathrm{false}$ and 
    $\#/_{\Theta\langle E \cup F, \emptyset \rangle}.equal_F=\mathrm{false}$ and 
    $\#/_{\Theta\langle F, E \rangle}.keeps_F =\mathrm{false}$. 
   Then  the formula 
   \begin{quote}
    $\#/_{\Theta\langle E \cup F, \emptyset \rangle}.equal_E=\mathrm{true}$  or    $\#/_{\Theta\langle E \cup F, \emptyset \rangle}.equal_F=\mathrm{true}$
    \end{quote} 
  does not hold.  
             Thence the following statement does not  hold:            
            for each    ${b}\in C\langle  E \cup F, \emptyset \rangle $, 
   \begin{itemize}
  \item
        ${b}.equal_E=\mathrm{true} $ or      ${b}.equal_F=\mathrm{true} $ 
     and   
     \item if 
        ${b}\not \in C\langle  E, F \rangle $, then    
    for each    ${c} \in C\langle  E \cup F, \emptyset \rangle $ if there is a  path of positive length or a cycle     from  
     ${b}$ to    ${c}$
 in the auxiliary dpwpa   
$  AUX[ E;  F] =
(C\langle E \cup F, \emptyset\rangle, A[E; F])$, then  
   ${c}.keeps_F=\mathrm{true}$.  
 \end{itemize}       
  By Lemma \ref{telex}, GTRS $R\langle E, F \rangle$ and GTRS $R\langle F, E \rangle$
together do not   simulate the  GTRS $R\langle E \cup F, \emptyset \rangle$.
  \end{exa}

   \begin{exa} \label{6korteharmadikfolytatas} \rm 
   We continue 
   Example \ref{6korte}, and its sequels,  Examples \ref{6kortefolytatas}, and \ref{6kortemasodikfolytatas}.    
       There exist  one path of positive length and two cycles in $ AUX[ E;  F] $.

 The first cycle:  Its length is $1$, and it  starts from the  vertex    
   $\#/_{\Theta\langle E\cup F  ,\emptyset \rangle}$ and leads to $\#/_{\Theta\langle E\cup F  ,\emptyset \rangle}$. By Lemma \ref{rozmaring}, 
  $ R \langle E\cup F, \emptyset \rangle$ reaches   
   $\#/_{\Theta\langle E\cup F  ,\emptyset \rangle}$ from a proper extension of  $\#/_{\Theta\langle E\cup F  ,\emptyset \rangle}$.
 Recall that 
 \begin{quote}   
 $\#/_{\Theta\langle E \cup F, \emptyset \rangle}.equal_E=\mathrm{false}$,
    $\#/_{\Theta\langle E \cup F, \emptyset \rangle}.equal_F =\mathrm{true}$, and
      $\#/_{\Theta\langle E \cup F, \emptyset \rangle}.keeps_F =\mathrm{true}$.   
 \end{quote}

 The path of positive length:  Its length is $1$, and it  starts from the  vertex    
   $\$/_{\Theta\langle E\cup F  ,\emptyset \rangle}$ and leads to   $\#/_{\Theta\langle E\cup F  ,\emptyset\rangle}$.  By Lemma \ref{rozmaring}, 
     $ R \langle E\cup F, \emptyset \rangle$ reaches  $\$/_{\Theta\langle E\cup F  ,\emptyset \rangle}$ from a proper extension of 
  $\#/_{\Theta\langle E\cup F  ,\emptyset \rangle}$.   Recall that 
 \begin{quote} 
   $\$/_{\Theta\langle E \cup F, \emptyset \rangle}.equal_E=\mathrm{true}$,
    $\$/_{\Theta\langle E \cup F, \emptyset \rangle}.equal_F =\mathrm{false}$, and 
      $\#/_{\Theta\langle E \cup F, \emptyset \rangle}.keeps_F =\mathrm{true}$.   
 \end{quote}

 The second cycle:   Its length is $1$, and it  starts from the  vertex    
   $\$/_{\Theta\langle E\cup F  ,\emptyset \rangle}$ and leads to  $\$/_{\Theta\langle E\cup F  ,\emptyset \rangle}$.  By Lemma \ref{rozmaring}, 
  $ R \langle E\cup F, \emptyset \rangle$ reaches      $\$/_{\Theta\langle E\cup F  ,\emptyset \rangle}$
   from a proper extension of 
      $\$/_{\Theta\langle E\cup F  ,\emptyset \rangle}$.  Recall that 
   \begin{quote} 
    $\$/_{\Theta\langle E \cup F, \emptyset \rangle}.equal_E=\mathrm{true}$, 
    $\$/_{\Theta\langle E \cup F, \emptyset \rangle}.equal_F =\mathrm{false}$, and 
      $\$/_{\Theta\langle E \cup F, \emptyset \rangle}.keeps_F =\mathrm{false}$.   
 \end{quote}

   %%%%%

       Observe that the following statement holds: 
       for each    ${b}\in C\langle  E \cup F, \emptyset \rangle $, 
   \begin{itemize}
  \item
        ${b}.equal_E=\mathrm{true} $ or      ${b}.equal_F=\mathrm{true} $ 
     and    
           \item if 
        ${b}\not \in C\langle  E, F \rangle $, then    
    for each    ${c} \in C\langle  E \cup F, \emptyset \rangle $ if there is a  path of positive length or a cycle     from  
     ${b}$ to    ${c}$
 in the auxiliary dpwpa   
$ AUX[ E;  F] =
(C\langle E \cup F, \emptyset\rangle, A[E; F])$, then  
   ${c}.keeps_F=\mathrm{true}$.  
 \end{itemize}      
     By Lemma \ref{telex}, GTRS $R\langle E, F \rangle$ and GTRS $R\langle F, E \rangle$  
together  simulate the  GTRS $R\langle E \cup F, \emptyset \rangle$.
  \end{exa}

\section{Main Cases of the Decision Algorithm}\label{esetek}

We now  recall the four main cases of our decision algorithm.  We show that we can  decide  in   $O(n \, \mathrm{log} \, n )$   time which main cases hold.
\begin{itemize} 
\item [Main Case 1:] $\S$ is a unary  signature. 
\item  [Main Case 2:] Both GTRS $R\langle E, F\rangle$ and GTRS $R\langle F, E\rangle$ are  total. 
\item  [Main Case 3:] 
   $\S_k\neq \emptyset$ for some $k\geq 2$, and  at least  one of $R\langle E, F\rangle$ 
and   $R\langle F, E\rangle$ is  total. 
\item   [Main Case 4:]  $\S$ has a symbol of arity at least $2$, 
  and GTRS  $ R \langle E, F \rangle$ and GTRS $ R \langle F, E \rangle$  are not total.   
  \end{itemize}
 Recall that Main  Case 2 overlaps with Main Case 1 and Main Case 3, and is a subcase of the union of 
 Main Case 1 and Main Case 3.

 \begin{lem}\label{juh} Let  $E$ and $F$ be GTESs over a  signature   $\Sigma$.
 We can  decide which Main Cases hold in   $O(n \, \mathrm{log} \, n )$   time.
 \end{lem}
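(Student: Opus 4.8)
The plan is to reduce the classification to three Boolean quantities and then read off the four Main Cases as simple combinations of them: let $u$ record whether $\Sigma$ is unary, $t_1$ whether $R\langle E, F\rangle$ is total, and $t_2$ whether $R\langle F, E\rangle$ is total. Once $u$, $t_1$, $t_2$ are computed, the cases are decided by pure Boolean logic, so all the real work lies in producing these three bits within the $O(n\,\mathrm{log}\,n)$ budget.

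First I would run Algorithm CAD on $\Sigma$, $E$, $F$. By Lemma \ref{bonn} this costs $O(n\,\mathrm{log}\,n)$ time, and by Lemma \ref{eso} it delivers the constant sets $C\langle E, F\rangle$ and $C\langle F, E\rangle$ together with the GTRSs $R\langle E, F\rangle$ and $R\langle F, E\rangle$ stored in the red-black trees $RBT7$ and $RBT8$ of Definition \ref{augbarack}. (Equivalently, one may invoke Algorithm FGC of Proposition \ref{kecskemet} twice, on the ordered pairs $(E, F)$ and $(F, E)$.) In addition I would compute the Boolean value $FLAG\langle E, F\rangle$ by Statement \ref{koszonom} in $O(n)$ time; since $\Sigma\langle E, F\rangle=\Sigma\langle E\rangle\cup\Sigma\langle F\rangle=\Sigma\langle F, E\rangle$, we have $FLAG\langle E, F\rangle=FLAG\langle F, E\rangle$, so one computation serves both GTRSs.

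Next I would decide totality. Feeding $\Sigma$, $FLAG\langle E, F\rangle$, $C\langle E, F\rangle$ and the tree $RBT7$ storing $R\langle E, F\rangle$ to Algorithm DTOT, Lemma \ref{decidetotal} yields $t_1$ in $O(n)$ time; an analogous call with $C\langle F, E\rangle$ and $RBT8$ yields $t_2$ in $O(n)$ time. To obtain $u$, I would scan the signature $\Sigma$ to compute $mar_\Sigma$ and thereby test whether $\Sigma$ is unary (equivalently $mar_\Sigma\le 1$) or carries a symbol of arity at least $2$; this is an $O(n)$ signature computation of the same kind already used in Statement \ref{koszonom}. Summing, the whole procedure runs in $O(n\,\mathrm{log}\,n)+O(n)=O(n\,\mathrm{log}\,n)$ time.

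With $u$, $t_1$, $t_2$ in hand I would report that Main Case 1 holds iff $u$; Main Case 2 iff $t_1\wedge t_2$; Main Case 3 iff $\neg u\wedge(t_1\vee t_2)$; and Main Case 4 iff $\neg u\wedge\neg t_1\wedge\neg t_2$, matching their definitions in Section \ref{esetek}. The one delicate point is the bookkeeping around $\Sigma$: the totality test of Lemma \ref{decidetotal} already consumes $FLAG\langle E, F\rangle$, so that a symbol of $\Sigma$ absent from $E\cup F$ correctly forces non-totality, whereas the test for $u$ genuinely inspects all of $\Sigma$ through $mar_\Sigma$. I would emphasize that the Main Case analysis refers to $\Sigma$ only through its maximal arity and through the totality of the two GTRSs, so these two readings of $\Sigma$ are precisely what the classification needs and nothing further about $\Sigma$ enters the decision.
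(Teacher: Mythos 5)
Your proposal is correct and follows essentially the same route as the paper: run Algorithm CAD in $O(n\,\mathrm{log}\,n)$ time to obtain the constant sets and the GTRSs stored in red-black trees, apply Algorithm DTOT (Lemma \ref{decidetotal}) to each of $R\langle E, F\rangle$ and $R\langle F, E\rangle$ in $O(n)$ time, and combine these with the (linear-time) test of whether $\Sigma$ is unary. Your explicit Boolean bookkeeping with $u$, $t_1$, $t_2$ and the remark that $FLAG\langle E, F\rangle=FLAG\langle F, E\rangle$ are slightly more detailed than the paper's terse argument, but the substance is identical.
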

\begin{proof}
We run Algorithm CAD  on $\Sigma$, $E$, and $F$. 
By Proposition \ref{kecskemet}, we get the following output in   $O(n \, \mathrm{log} \, n )$ time: 
\begin{itemize}
\item the relations  $\rho\langle E, F\rangle$, $\rho\langle F, E\rangle$,  and $\rho\langle E\cup F, \emptyset\rangle$;
\item the  set of nullary symbols  $C\langle E, F\rangle$, $C\langle F, E\rangle$, and
$C\langle E \cup F, \emptyset\rangle$;
\item  $INC(  E\cup F, E) $ stored in a red-black tree saved in a  variable  $RBT1$;
\item  $INC(E\cup F, F) $ stored in a red-black tree  saved in a  variable    $RBT2$;
\item $NUM( E\cup F, E)$ stored in a red-black tree  saved in a  variable    $RBT3$;
\item $NUM(  E\cup F, F)$ stored in a red-black tree  saved in a  variable    $RBT4$;
\item the GTRS $R\langle E\cup F, \emptyset\rangle$  stored in a red-black tree saved in a  variable  $RBT5$;
 each node  in  $RBT5$ contains a search key  which is  a rule in $R\langle E\cup F, \emptyset\rangle$, and satellite data $counter_E$ and $counter_F$;
\item $BSTEP\langle E\cup F, \emptyset \rangle$ stored in a red-black tree saved in a  variable   $RBT6$; 
for every ordered pair $({b}, {c})\in BSTEP\langle E\cup F, \emptyset \rangle$, the search key is the first component    ${b}$, the second component,    ${c}$ is  satellite data;
 \item the GTRS $R\langle E, F\rangle$ stored in a red-black tree saved in a  variable  $RBT7$;
  \item the GTRS $R\langle  F, E\rangle$ stored in a red-black tree  saved in a  variable   $RBT8$; 
 \item 
the auxiliary dpwpa $ AUX[ E;  F] =
(C\langle E\cup F, \emptyset \rangle,A[E; F])$ for the GTESs $E$ and $F$, and 
   for every vertex    ${b}\in C\langle E \cup  F, \emptyset  \rangle$,
the values of  the  attributes    ${b}.equal_E$,    ${b}.equal_F$,    ${b}.keeps_E$, and    ${b}.keeps_F$.
\end{itemize} 
By Lemma \ref{decidetotal}, we decide in  $O(n)$   time whether    
 $R\langle E, F\rangle$ is total, and whether    
 $R\langle F, E\rangle$ is total. 
  Then we decide  in a total of   $O(n \, \mathrm{log} \, n )$   time  which Main Cases hold.   
 \end{proof} 
 We study Main Cases 1, 2, 3, and 4 one by one in Sections \ref{tavasz}, \ref{nyar}, \ref{osz}, and \ref{tel}, respectively.
 First we consider Main Case 1.

\section{Main Case 1}\label{tavasz} 
In this main case,  $\Sigma$ is a unary  signature, and $E$ and $F$ are GTESs over   $\Sigma$. 
 We define  the  Algorithm  Nested   Partial Depth First Search (NPDFS for short) 
on the Auxiliary  dpwpa $AUX[ E;  F] $  for the GTESs $E$ and $F$.   NPDFS runs   in $O(n^2)$ time. Furthermore,   $\tthue {E\cup F}\subseteq \tthue E \cup \tthue F$ if and only if 
GTRS $R\langle E, F \rangle$ and GTRS $R\langle F, E \rangle$
together  simulate the  GTRS $R\langle E \cup F, \emptyset \rangle)$ if and only if 
  NPDFS outputs true. Thus,  
we can decide in  $O(n^2)$ time whether
   $\tthue {E\cup F}\subseteq \tthue E \cup \tthue F$.     First we note that 
  we can also apply the decision algorithm of   Champav{\`{e}}re  et al \cite{franciak}. 
  
Lemma \ref{alapvetes}, Proposition \ref{franciaeldontsimple}, and Proposition \ref{egyenlo} imply the following theorem.
\begin{thm}\label{1maincase1eldont}
Let $E$ and $F$ be GTESs over a unary  signature     $\Sigma$.
We can decide in $O(n^3)$ time whether
  $\tthue {E\cup F}= \tthue E \cup \tthue F$.
 \end{thm}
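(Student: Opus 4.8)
The plan is to assemble the statement directly from three results already proved in the excerpt, so that no new combinatorial machinery is needed. The overarching reduction is Proposition~\ref{egyenlo}: for the unary GTESs $E$ and $F$ the equality $\tthue{E\cup F}=\tthue E\cup\tthue F$ holds if and only if the single inclusion $\tthue{E\cup F}\subseteq\tthue E\cup\tthue F$ holds. Hence I would first replace the equality to be decided by this one inclusion. This is purely a logical restatement and costs nothing algorithmically, but it is the step that makes the problem tractable, since the reverse inclusion $\tthue E\cup\tthue F\subseteq\tthue{E\cup F}$ is automatic.

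Next I would invoke the characterisation specific to the unary case. Writing $W=ST\langle E\cup F\rangle$, Lemma~\ref{alapvetes} tells us that $\tthue{E\cup F}\subseteq\tthue E\cup\tthue F$ is equivalent to the finite, pointwise condition that for every $p\in W$ one of $p/_{\tthue{E\cup F}}=p/_{\tthue E}$ or $p/_{\tthue{E\cup F}}=p/_{\tthue F}$ holds. So the proof reduces the semantic inclusion between congruences to checking, class by class over the finitely many subterms in $W$, whether the $\tthue{E\cup F}$-class of $p$ collapses to its $\tthue E$-class or to its $\tthue F$-class. The final step is Proposition~\ref{franciaeldontsimple}, which states precisely that this pointwise condition over $W$ is decidable in $O(n^3)$ time. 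Chaining these three results yields the theorem, and the preprocessing needed to set everything up (building $E\cup F$ by Statement~\ref{erre}, the subterm dag by Statement~\ref{szolo}, and the relevant congruence closures) runs in $O(n\log n)$ time and is absorbed into the $O(n^3)$ bound.

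The genuine content, and the step I expect to be the main obstacle, is hidden inside Proposition~\ref{franciaeldontsimple} rather than in the assembly above. The point is that each class $p/_{\tthue E}$, $p/_{\tthue F}$, $p/_{\tthue{E\cup F}}$ is an infinite tree language, so ``$p/_{\tthue{E\cup F}}=p/_{\tthue E}$'' cannot be checked by enumeration; one must realise each class as the language of a deterministic bottom-up tree automaton via Proposition~\ref{mezo} and then decide language equality. Here I would exploit that $\tthue E\subseteq\tthue{E\cup F}$, whence $p/_{\tthue E}\subseteq p/_{\tthue{E\cup F}}$ always, so only the inclusion $p/_{\tthue{E\cup F}}\subseteq p/_{\tthue E}$ needs testing; by Proposition~\ref{erdo} each such test costs $O(\mathit{size}(\mathcal A)\cdot\mathit{size}(\mathcal B))=O(n^2)$. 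Running over the $O(n)$ choices of $p\in W$ (and the two symmetric choices of $E$ versus $F$) gives the $O(n^3)$ bound. Thus the careful part is the automaton construction and the inclusion test underlying Proposition~\ref{franciaeldontsimple}; once that is granted, Theorem~\ref{1maincase1eldont} follows by the clean three-step reduction through Propositions~\ref{egyenlo} and~\ref{franciaeldontsimple} and Lemma~\ref{alapvetes}.
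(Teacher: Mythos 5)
Your proposal is correct and follows exactly the paper's own proof, which cites precisely Lemma~\ref{alapvetes}, Proposition~\ref{franciaeldontsimple}, and Proposition~\ref{egyenlo}; your elaboration of the inner workings of Proposition~\ref{franciaeldontsimple} (via Propositions~\ref{mezo} and~\ref{erdo}) also matches how the paper justifies that proposition.
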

 We present a faster and simpler algorithm for solving our decidability problem. 
    We assume that we have already run  Algorithm CAD  on $\Sigma$, $E$, and $F$,  and have obtained all the output it produces, see Proposition \ref{kecskemet}. 
Example \ref{1oneegy} and its sequel, Example
 \ref{1oneegyfolytatas}, furthermore  Example \ref{2szepen} and its sequel,  \ref{2szepenfolytatas} 
 illustrate this main case.
 
   \begin{lem}\label{repulo} Let $E$ and $F$ be GTESs over  a unary  signature   $\Sigma$. 
   Let    ${b} \in C\langle E\cup F, \emptyset \rangle$ be
    such that GTRS $R \langle  E, F \rangle $ simulates the  GTRS $R \langle  E \cup F, \emptyset  \rangle  $ on the constant     ${b}\in C\langle  E \cup F, \emptyset \rangle $.  Assume that Conditions (a) and (b) hold:
\begin{itemize}
 \item [{\rm (a)}]
 $\delta_n[{b}_{n1}, \ldots, {b}_{nk_n} ]
 \red {R \langle  E \cup F, \emptyset  \rangle } \delta_{n-1}[{b}_{(n-1)1}, \ldots, {b}_{(n-1)k_{n-1} } ]\red {R \langle  E \cup F, \emptyset  \rangle }\delta_2[{b}_{21}, \ldots, {b}_{2k_2}]
 \red {R \langle E \cup  F, \emptyset  \rangle } 
 \cdots \newline \red {R \langle E \cup  F, \emptyset  \rangle } \delta_2[{b}_{21}, \ldots, {b}_{2k_2}]    \red {R \langle  E \cup F, \emptyset  \rangle }\delta_1[{b}_{11}, \ldots, {b}_{1k_1}]
  \red {R \langle  E \cup F, \emptyset  \rangle }{b}$, \newline 
where 
 $n\geq 1$ and
 for every $i\in \{\, 1, \ldots, n\, \}$, 
 $\left (k_i \in \Nat, \; \delta_i \in CON_{\Sigma, k_i},  \mbox{ and  } {b}_{i1}, \ldots, {b}_{ik_i} \in  C\langle {E\cup F}, \emptyset \rangle\right )$.
  \end{itemize} 
 \begin{itemize}
 \item [{\rm (b)}]    ${c}_{n1}, \ldots, {c}_{nk_n} \in C \langle E, F\rangle$ 
   such that     ${c}_{ni}  \subseteq {b}_{ni} $ for every $i\in \{\, 1, \ldots, k_n\, \}$.
 \end{itemize} 
  Then there exists    ${c}\in C \langle E, F \rangle$ with    ${c} \subseteq{b}$ and   for every  $j\in \{\, 1, \ldots, n\, \}$,
   there exists    ${c}_{j1}, \ldots, {c}_{jk_j} \in C \langle E, F \rangle$
    such that    ${c}_{ji} \subseteq {b}_{ji}$ for every $i \in \{\, 1, \ldots, k_j\, \}$, and the following sequence of reductions holds:
\begin{quote}
  $\delta_n[{c}_{n1}, \ldots, {c}_{nk_n} ]
 \red {R \langle  E \cup F, \emptyset  \rangle } \delta_{n-1}[{c}_{(n-1)1}, \ldots, {c}_{(n-1)k_{n-1} } ]\red {R \langle  E \cup F, \emptyset  \rangle }\delta_2[{c}_{21}, \ldots, {c}_{2k_2}]
 \red {R \langle E \cup  F, \emptyset  \rangle } 
 \cdots \newline \red {R \langle E \cup  F, \emptyset  \rangle } \delta_2[{c}_{21}, \ldots, {c}_{2k_2}]    \red {R \langle  E \cup F, \emptyset  \rangle }\delta_1[{c}_{11}, \ldots, {c}_{1k_1}]
  \red {R \langle  E \cup F, \emptyset  \rangle }{c}$,
 \end{quote} 
 \end{lem}
 \begin{proof}
 We proceed by induction on $n$.
 
 {\em Base Case:} $n=1$.  Then $\delta_1=\s(\diamondsuit, \ldots, \diamondsuit)$ for some $\s\in \Sigma_{k_1}$, 
 $\delta_1[{b}_{11}, \ldots, {k_1}]= \s({b}_{11}, \ldots, {k_1})$, and  
 $\s({b}_{11}, \ldots, {b}_{1k_1}) \dot{\rightarrow}  {b}\in  R\langle {E \cup F}, \emptyset \rangle$.
 We assumed that GTRS $R\langle E, F \rangle$ simulates the  GTRS $R\langle E \cup F, \emptyset \rangle$
 on the constant     ${b}\in C\langle  E \cup F, \emptyset \rangle $. 
 Hence  by Condition (b), and  the rewrite rule above  there exists a rewrite rule
  \begin{quote}
  $ \s({c}_{11}, \ldots, {c}_{1k_1})  \dot{\rightarrow} {c} \in R \langle E, F \rangle$ 
  \end{quote} 
for some     ${c}\in C \langle E, F \rangle$ with    ${c} \subseteq{b}$.
 Then we have 
  $\delta_1[{c}_{11}, \ldots, {c}_{1k_1}]\red {R \langle E, F \rangle }{c}$. 
 
 {\em Induction Step:} Let $n\geq 1$, and  assume that the statement of the lemma is true for $n$. We now show 
 that it is also true for $n+1$ as well.
   Assume that Conditions $(\mathrm{a}^\prime)$ and $(\mathrm{b}^\prime)$  hold:
 \begin{itemize}
 \item [($\mathrm{a}^\prime$)] $\delta_{n+1}[{b}_{(n+1)1}, \ldots, {b}_{(n+1)k_{(n+1)} }]
 \red {R \langle  E \cup F, \emptyset  \rangle } 
  \delta_n[{b}_{n1}, \ldots, {b}_{nk_n} ]
 \red {R \langle  E \cup F, \emptyset  \rangle } \delta_{n-1}[{b}_{(n-1)1}, \ldots, {b}_{(n-1)k_{n-1} } ]\red {R \langle  E \cup F, \emptyset  \rangle }
 \newline 
 \cdots \red {R \langle E \cup  F, \emptyset  \rangle } \delta_2[{b}_{21}, \ldots, {b}_{2k_2}]    \red {R \langle  E \cup F, \emptyset  \rangle }\delta_1[{b}_{11}, \ldots, {b}_{1k_1}]
  \red {R \langle  E \cup F, \emptyset  \rangle }{b}$, \newline 
where 
 $n\geq 1$ and
 for every $i\in \{\, 1, \ldots, n+1\, \}$, 
 $\left (k_i \in \Nat, \; \delta_i \in CON_{\Sigma, k_i},  \mbox{ and  } {b}_{i1}, \ldots, {b}_{ik_i} \in  C\langle {E\cup F}, \emptyset \rangle\right )$.
  \item [ ($\mathrm{b}^\prime$)  ]  
   ${c}_{(n+1)1}, \ldots, {c}_{(n+1)k_{n+1}} \in C \langle E, F\rangle$ 
   such that     ${c}_{(n+1)i}  \subseteq {b}_{(n+1)i} $ for every $i\in \{\, 1, \ldots, k_{n+1}\, \}$.
  
 \end{itemize}
   Then we obtain  
 $\delta_{n+1}$ from $\delta_n$ replacing the $\ell$th occurrence from left-to right of $\diamondsuit$ with the term $ \s(\diamondsuit, \ldots, \diamondsuit)$ for some $\s\in \Sigma_m$, 
 $m\in \Nat$. Consequently,
  \begin{quote}
  $k_{n+1}-m+1=k_n$, \newline
 $\delta_{n+1}[{b}_{(n+1)1}, \ldots, {b}_{(n+1)k_{(n+1)} }]= \delta_n[ {b}_{(n+1)1}, \ldots, {b}_{(n+1)(\ell-1) },\s( {b}_{(n+1)\ell}, \ldots, {b}_{(n+1)(\ell+m)}),  {b}_{(n+1)(\ell+1)}, \ldots  {b}_{(n+1)k_{(n+1)} }]$,
 \newline 
$\delta_n[{b}_{n1}, \ldots, {b}_{nk_n} ]=  \delta_n[ {b}_{(n+1)1}, \ldots, {b}_{(n+1)(\ell-1) }, {b}_{n, \ell}, {b}_{(n+1)(\ell+1)}, \ldots  {b}_{(n+1)k_{(n+1)} }]$,\newline 
     ${b}_{n1}={b}_{(n+1)1}, \ldots,
    {b}_{n(\ell-1)}={b}_{(n+1)(\ell-1)}, 
    {b}_{n(\ell+1)}={b}_{(n+1)(\ell-m+1)}, \ldots,
    {b}_{nk_n}= {b}_{(n+1)k_{(n+1)} }$
  \end{quote} 
   and 
  \begin{equation}\label{szoreg}
  \s( {b}_{(n+1)\ell}, \ldots, {b}_{(n+1)(\ell+m)}) \dot{\rightarrow}   {b}_{n\ell}\in  R\langle {E \cup F}, \emptyset\rangle.
    \end{equation}
 We assumed that 
 GTRS $R\langle E, F \rangle$ simulates the  GTRS $R\langle E \cup F, \emptyset \rangle$ 
  on  the constant     ${b}\in C\langle  E \cup F, \emptyset \rangle $. 
  By Condition ($\mathrm{a}^\prime$), $R \langle E\cup F,  \emptyset   \rangle $ reaches     ${b}$
 from a proper extension of    ${b}_{n\ell} $.   Therefore  by the rewrite rule (\ref{szoreg})  and Condition ($\mathrm{b}^\prime$),
 there exists a rewrite rule   
 \begin{equation}\label{szatymaz}
 \s( {c}_{(n+1)\ell}, \ldots, {c}_{(n+1)(\ell+m)}) \dot{\rightarrow}   {c}_{n\ell}\in  R\langle {E \cup F},\emptyset \rangle.
 \end{equation} 
 for some  $  {c}_{n\ell}
 \in  C \langle E\cup F, E \rangle $ with     ${c}_{n\ell}\subseteq {b}_{n \ell} $.
 Let 
 \begin{quote}
     ${c}_{n1}={c}_{(n+1)1}, \ldots,
    {c}_{n(l-1)}={c}_{(n+1)(l-1)}, \ldots,
    {c}_{n(l+1)}={c}_{(n+1)(l-m+1)}, \ldots,
    {c}_{nk_n}= {c}_{(n+1)k_{(n+1)} }$
  \end{quote}
 By the induction hypothesis, 
 there exists    ${c}\in C \langle E, F \rangle$ with    ${c} \subseteq{b}$ such that  the following sequence of reductions holds:
\begin{quote}
  $\delta_n[{c}_{n1}, \ldots, {c}_{nk_n} ]
 \red {R \langle  E \cup F, \emptyset  \rangle } \delta_{n-1}[{c}_{(n-1)1}, \ldots, {c}_{(n-1)k_{n-1} } ]\red {R \langle  E \cup F, \emptyset  \rangle }
 \cdots \newline \red {R \langle E \cup  F, \emptyset  \rangle } \delta_2[{c}_{21}, \ldots, {c}_{2k_2}]    \red {R \langle  E \cup F, \emptyset  \rangle }\delta_1[{c}_{11}, \ldots, {c}_{1k_1}]
  \red {R \langle  E \cup F, \emptyset  \rangle }{c}$,
 \end{quote}

   In this way  
   \begin{quote} 
   $\delta_{n+1}[{c}_{(n+1)1}, \ldots, {c}_{(n+1)k_(n+1)} ]
   =
   \newline
   \delta_n[ {c}_{(n+1)1}, \ldots, {c}_{(n+1)(\ell-1) },\s( {c}_{(n+1)\ell}, \ldots, {c}_{(n+1)(\ell+m)}),  {c}_{(n+1)(\ell+1)}, \ldots  {c}_{(n+1)k_{(n+1)} }]
  \red {R \langle  E \cup F, \emptyset  \rangle } \newline
   \delta_n[ {c}_{(n+1)1}, \ldots, {c}_{(n+1)(\ell-1) }, {c}_{n, \ell}, {c}_{(n+1)(\ell+1)}, \ldots  {c}_{(n+1)k_{(n+1)} }]=\delta_n[{c}_{n1}, \ldots, {c}_{nk_n} ]   
 \red {R \langle  E \cup F, \emptyset  \rangle } \newline 
 \delta_{n-1}[{c}_{(n-1)1}, \ldots, {c}_{(n-1)k_{n-1} } ]\red {R \langle  E \cup F, \emptyset  \rangle }
 \cdots \red {R \langle E \cup  F, \emptyset  \rangle } \delta_2[{c}_{21}, \ldots, {c}_{2k_2}]    \red {R \langle  E \cup F, \emptyset  \rangle }\delta_1[{c}_{11}, \ldots, {c}_{1k_1}]
  \red {R \langle  E \cup F, \emptyset  \rangle }{c}$.
    \qedhere
  \end{quote}
    \end{proof}

   \begin{lem}\label{tanacs} Let $E$ and $F$ be GTESs over  a unary  signature   $\Sigma$. 
  \begin{quote}
    For each   
  $t\in W $, 
  $\left( t/_{\tthue {E\cup F}}=  t/_{\tthue E} \mbox{ or }
   t/_{\tthue {E\cup F}}=  t/_{\tthue F}\right)$ 
   \end{quote}
  if and only if 
  GTRS $R\langle E, F \rangle$ and GTRS $R\langle F, E \rangle$
together  simulate the  GTRS $R\langle E \cup F, \emptyset \rangle$.
 \end{lem}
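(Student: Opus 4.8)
The plan is to recognize first that Lemma~\ref{tanacs} is nothing but the restriction of Lemma~\ref{sportos} to a unary signature: the two biconditionals are word-for-word identical, and Lemma~\ref{sportos} is already stated and proved for an arbitrary signature $\Sigma$. Hence the fastest route is simply to invoke Lemma~\ref{sportos}. I would, however, also give the self-contained argument that the machinery of this section---in particular Lemma~\ref{repulo}---is set up to support, replaying the proof of Lemma~\ref{sportos} with the unary-specific contraction lemma in place of Lemma~\ref{hurnegyszog}.

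For the $(\Rightarrow)$ direction I would fix $t\in W$ and, using the hypothesis (and Proposition~\ref{kovetkezmeny} where needed), assume without loss of generality that $t/_{\tthue{E\cup F}}=t/_{\tthue E}$, equivalently $t/_{\Theta\langle E\cup F,\emptyset\rangle}=t/_{\Theta\langle E,F\rangle}$. I would then verify that $R\langle E,F\rangle$ simulates $R\langle E\cup F,\emptyset\rangle$ on the constant $t/_{\Theta\langle E\cup F,\emptyset\rangle}$ in the sense of Definition~\ref{singlesimulate}. Given a constant $c$ from which $t/_{\Theta\langle E\cup F,\emptyset\rangle}$ is reached through a proper extension, a rule $\s(a_1,\ldots,a_m)\dot{\rightarrow}c$ of $R\langle E\cup F,\emptyset\rangle$, and sub-classes $b_i\subseteq a_i$ in $C\langle E,F\rangle$, I would choose representatives $t_i\in b_i$, lift $t_i\tred{R\langle E,F\rangle}b_i$ by Proposition~\ref{masodik}, assemble the reduction $\delta[\s(t_1,\ldots,t_m)]\tred{R\langle E\cup F,\emptyset\rangle}t/_{\Theta\langle E\cup F,\emptyset\rangle}$, transport it to $\delta[\s(t_1,\ldots,t_m)]\tthue{E\cup F}t$ via Proposition~\ref{harmadik}, pass to $\tthue E$ using the class equality, and finally extract the required rule of $R\langle E,F\rangle$ with Propositions~\ref{otodik} and~\ref{rhcp}. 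This is verbatim the $(\Rightarrow)$ computation of Lemma~\ref{sportos} and uses no property of unary signatures.

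For the $(\Leftarrow)$ direction I would fix $t\in W$ and invoke Definition~\ref{defsimulate}: one of $R\langle E,F\rangle$, $R\langle F,E\rangle$ simulates $R\langle E\cup F,\emptyset\rangle$ on $t/_{\tthue{E\cup F}}$, say $R\langle E,F\rangle$. For an arbitrary $s\in t/_{\tthue{E\cup F}}$, Proposition~\ref{otodik} gives $s\tred{R\langle E\cup F,\emptyset\rangle}t/_{\Theta\langle E\cup F,\emptyset\rangle}$, and Lemma~\ref{repulo}---serving here as the unary replacement of Lemma~\ref{hurnegyszog}---produces $d\in C\langle E,F\rangle$ with $d\subseteq t/_{\Theta\langle E\cup F,\emptyset\rangle}$ and $s\tred{R\langle E,F\rangle}d$. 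Writing $d=p/_{\Theta\langle E,F\rangle}$ for some $p\in W$ and chaining $s\tthue E p$ with $s\tthue{E\cup F}t$ through Proposition~\ref{harmadik} forces $d=t/_{\Theta\langle E,F\rangle}$, hence $s\in t/_{\tthue E}$. As $s$ was arbitrary, $t/_{\tthue{E\cup F}}=t/_{\tthue E}$; the case of $R\langle F,E\rangle$ is symmetric.

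The only genuine work is the $(\Leftarrow)$ direction, and its delicate point---identical to the one in Lemma~\ref{sportos}---is forcing the inclusion $d\subseteq t/_{\Theta\langle E\cup F,\emptyset\rangle}$ produced by Lemma~\ref{repulo} to be an equality $d=t/_{\Theta\langle E,F\rangle}$. This rests on the convergence of $R\langle E,F\rangle$ (Proposition~\ref{rhcp}) and on careful tracking of the subset relations between $\Theta\langle E\cup F,\emptyset\rangle$-classes and $\Theta\langle E,F\rangle$-classes. Because none of these ingredients uses unarity, the most economical presentation is simply to cite Lemma~\ref{sportos} and to read Lemma~\ref{repulo} as the section-local re-derivation of Lemma~\ref{hurnegyszog}.
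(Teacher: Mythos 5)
Your proposal is correct, and your main observation is the right one: Lemma~\ref{tanacs} is literally the special case of Lemma~\ref{sportos} for unary signatures, since the two biconditionals are identical and Lemma~\ref{sportos} is stated and proved for an arbitrary signature without using arity anywhere; citing it would close the matter. The paper instead re-derives the statement from scratch in the unary section: its $(\Rightarrow)$ direction is the Lemma~\ref{sportos} computation specialized to the only two arities that occur ($m=0$ and $m=1$), and its $(\Leftarrow)$ direction runs through Lemma~\ref{repulo} (concluding $d=t/_{\Theta\langle E,F\rangle}$ via Proposition~\ref{szazadik}) exactly as you describe in your backup argument, with Lemma~\ref{repulo} playing the role that Lemma~\ref{hurnegyszog}/Lemma~\ref{holnap} plays for Lemma~\ref{sportos}. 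What your primary route buys is economy and the explicit recognition that no unary-specific input is needed; what the paper's route buys is a self-contained Main Case 1 whose only structural lemma is the locally stated Lemma~\ref{repulo}. Either way the content is the same, and your replay matches the paper's proof step for step.
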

 \begin{proof} 
 $(\Rightarrow)$ 
 Let $t\in W $. Then by our assumption, $t/_{\tthue {E\cup F}}=  t/_{\tthue E}$  or 
  $ t/_{\tthue {E\cup F}}=  t/_{\tthue F}$. Assume that $ t/_{\tthue {E\cup F}}=  t/_{\tthue E}$. The case $ t/_{\tthue {E\cup F}}=  t/_{\tthue F}$ is similar.
 Let     ${c} \in C\langle E\cup F, \emptyset \rangle$ be such that 
 $R\langle {E \cup F}, \emptyset \rangle$ reaches   $t/_{\Theta\langle {E\cup F}, \emptyset \rangle}$ from   a proper extension    of the constant    ${c}$. Then there exists $\delta\in CON_{\Sigma, 1}$
 such that $\delta[{c}] {\downarrow}_{R\langle E \cup F, \emptyset \rangle}= t/_{\Theta\langle {E\cup F}, \emptyset \rangle}$. Consider any rewrite rule 
 $\s({c}_1, \ldots, {c}_m)  \dot{\rightarrow}   {c}$ in $R\langle E \cup F, \emptyset \rangle$.
 with right-hand side    ${c}$.
 We now distinguish two cases depending on the value of $m$.  
 
{\em Case 1:} $m=1$. Then we  consider  the rewrite rule 
 $\s({c}_1) \dot{\rightarrow} {c} \in  R\langle {E \cup F}, \emptyset\rangle$. Then  let
     ${d}_1\in C \langle E,  F \rangle$   with
    ${d}_1 \subseteq {c}_1$, and let  $t_1\in \ts$ such that $t_1\tred {R\langle E, F \rangle}{d}_1 $. 
  Then we have 
 \begin{quote}
   $\delta[\s(t_1)] \tred {R\langle E\cup F, \emptyset \rangle}\delta[\s({c}_1)]   \red {R\langle E\cup F, \emptyset \rangle} \delta[{c}]  \tred {R\langle E\cup F, \emptyset \rangle}
   t/_{\Theta\langle E\cup F, \emptyset \rangle}$.
    \end{quote}
    By Proposition \ref{harmadik}, $\delta[\s(t_1)] \tthue {E \cup F} t$. 
  As a consequence, by our assumption that $ t/_{\tthue {E\cup F}}=  t/_{\tthue E}$, we have 
  $\delta[\s(t_1)] \tthue E  t$. Consequently, by Proposition \ref{otodik},
  \begin{quote} 
         $\delta[\s(t_1)] \tred {R\langle E, F \rangle}t/_{\Theta\langle E, F \rangle}$.
   \end{quote}
 Hence 
 \begin{quote}
        $t_1\tred {R\langle E, F \rangle}{d}_1 $, \newline
         $\s(t_1)\tred {R\langle E, F \rangle}\s({d}_1) \red {R\langle E, F \rangle}{d}$ for some    ${d}\in C \langle E,  F \rangle$, 
  where the rewrite rule $\s({d}_1) \dot{\rightarrow} {d}$ is in $R\langle E, F \rangle$,        
         and  \newline
          $\delta[\s(t_1)] \tred {R\langle E, F \rangle} \delta[\s({d}_1)] \tred {R\langle E, F \rangle} 
          \delta[{d}] \tred {R\langle E, F \rangle}t/_{\Theta\langle E, F \rangle}$.
     \end{quote}

{\em Case 2:}   $m=0$. Then we consider the rewrite rule  
 $\s \dot{\rightarrow} {c} \in  R\langle {E \cup F}, \emptyset \rangle$. 
 Then we have 
 \begin{quote}
  $\delta[\s] \red {R\langle E\cup F, \emptyset \rangle} \delta[{c}] \tred {R\langle E\cup F, \emptyset \rangle}t/_{\Theta\langle E\cup F, \emptyset \rangle}$.
     \end{quote} 
By Proposition \ref{harmadik}, $\delta[\s] \tthue {E \cup F} t$. 
  As a consequence, by our assumption that $ t/_{\tthue {E\cup F}}=  t/_{\tthue E}$, we have 
  $\delta[\s] \tthue E  t$. Consequently, by Proposition \ref{otodik},
 \begin{quote}
 $\delta[\s] \tred {R\langle E, F \rangle} t/_{\Theta\langle E, F \rangle}$.
     \end{quote} 
  Therefore
 \begin{quote}
                 $\s\red {R\langle E, F \rangle} {d}$ for some    ${d}\in C \langle E, F\rangle$, 
  where the rewrite rule $\s \dot{\rightarrow} {d}$ is in $R\langle E, F \rangle$,        
         and  \newline
          $\delta[\s] \red {R\langle E, F \rangle} \delta[{d}] \tred {R\langle E, F \rangle}t/_{\Theta\langle E, F \rangle}$.
     \end{quote}

  $(\Leftarrow)$    Let $t \in W$ and   $s\in  t/_{\tthue {E\cup F}}$.  
 Then
  $s\tred {R \langle E\cup F, \emptyset   \rangle } t/_{\Theta\langle {E\cup F}, \emptyset \rangle}$. We assumed that GTRS $R\langle E, F \rangle$ and GTRS $R\langle F, E \rangle$
together  simulate the  GTRS $R\langle E \cup F, \emptyset \rangle$. Consequently,  
 $R\langle E, F \rangle$  or  $R\langle F, E \rangle$
    simulate   $R\langle E \cup F, \emptyset \rangle$ on  the constant  $t/_{\Theta\langle {E\cup F}, \emptyset\rangle}$. Assume that $R\langle E, F \rangle$ 
    simulates  $R\langle E \cup F, \emptyset \rangle$ on  the constant  $t/_{\Theta\langle {E\cup F},\emptyset \rangle}$. The case  $R\langle F, E \rangle$ 
    simulates  $R\langle E \cup F, \emptyset \rangle$ on  the constant  $t/_{\Theta\langle {E\cup F}, \emptyset \rangle}$ is similar. 
  By Lemma \ref{repulo}, there exists     ${e}\in C \langle E, F \rangle$ with    ${e} \subseteq  t/_{\Theta\langle {E \cup F}, \emptyset \rangle}$ such that $s\tred {R \langle E, F \rangle } {e}$. 
  By Proposition \ref{szazadik},     ${e}=  t/_{\Theta\langle E, F \rangle}$. 
   For this reason 
     $s\in  t/_{\tthue E}$.  
  Consequently, 
  for any  $s\in  t/_{\tthue {E\cup F}}$, we have
    $s\in  t/_{\tthue E}$. 
As a consequence
  $ t/_{\tthue {E\cup F}}\subseteq  t/_{\tthue E}$, and hence  $ t/_{\tthue {E\cup F}}=  t/_{\tthue E} $.
    \end{proof}
%%%%%%%%%%%%%%%%

  \subsection{Nested Partial Depth First Search  on the Auxiliary  dpwpa $  AUX[ E;  F] $  for the GTESs $E$ and $F$}
  Algorithm  Nested   Partial Depth First Search (NPDFS for short) on the Auxiliary  dpwpa $AUX[ E;  F] $  for the GTESs $E$ and $F$ implements a nested loop. 
 Algorithm NPDFS calls the  function  NPDFS-VISIT, and in  this way it  implements the outer loop, which is 
  a prefix of a depth first search    on  $AUX[ E;  F] $,
 and  can terminate before visiting all vertices of $AUX[ E;  F] $ in case that  GTRS $R\langle E, F \rangle$ and GTRS $R\langle F, E \rangle$
together do not  simulate the  GTRS $R\langle E \cup F, \emptyset \rangle$. 
\begin{itemize} 
\item 
Algorithm NPDFS calls the functions INNER-LOOP-E, INNER-LOOP-F,   and 
 NPDFS-VISIT.  
 \item  Function  NPDFS-VISIT calls the function INNER-LOOP-E,  INNER-LOOP-F,   and 
 NPDFS-VISIT.       
 \end{itemize}
  In this way, algorithm  NPDFS decides the following formula:  
    For each    ${b}\in C\langle  E \cup F, \emptyset \rangle $, 
   \begin{itemize}
  \item
        ${b}.equal_E=\mathrm{true} $ and     
    for each    ${c} \in C\langle  E \cup F, \emptyset \rangle $ if there is a  path of positive length or a cycle     from  
     ${b}$ to    ${c}$
 in the auxiliary    dpwpa   
$AUX[ E;  F] =
(C\langle E \cup F, \emptyset\rangle, A[E; F])$, then      ${c}.keeps_E=\mathrm{true}$, or
 \item
        ${b}.equal_F=\mathrm{true} $ and     
    for each    ${c} \in C\langle  E \cup F, \emptyset \rangle $ if there is a  path of positive length or a cycle     from  
     ${b}$ to    ${c}$
 in the  auxiliary     dpwpa   
$AUX[ E;  F] =
(C\langle E \cup F, \emptyset\rangle, A[E; F])$, then  
   ${c}.keeps_F=\mathrm{true}$.
 \end{itemize}

Function 
 INNER-LOOP-E    is an inner  loop inside the body of the outer loop, it 
 also implements a prefix of a depth first search    on those vertices    ${c}$ which are positive-step reachable 
  from the current node    ${b}$ of the outer loop,   it can terminate before visiting all vertices
 of $ AUX[ E;  F] $ which are positive-step 
 reachable from    ${b}$ in case that it visits a vertex    ${c}$ such that    ${c}.keeps_E=\mathrm{false}$. 
 It checks whether  for each vertex    ${c}$ which are positive-step reachable 
  from the current node    ${b}$ of the outer loop, 
    ${c}.keeps_E=\mathrm{true}$. 
  INNER-LOOP-E decides the following formula:
  
  For each vertex     ${c}\in C\langle  E \cup F, \emptyset \rangle $ which is positive-step reachable 
  from the current node    ${b}$ of the outer loop,     ${c}.keeps_E=\mathrm{true}$.

 Function 
 INNER-LOOP-F    is an inner  loop inside the body of the outer loop, it 
 also implements a prefix of a depth first search    on those vertices    ${c}$ which are positive-step reachable 
  from the current node    ${b}$ of the outer loop,   it can terminate before visiting all vertices
 of $ AUX[ E;  F] $ which are positive-step 
 reachable from    ${b}$ in case that in case that it visits a vertex    ${c}$ such that    ${c}.keeps_F=\mathrm{false}$.  It checks whether  for each vertex    ${c}$ which are positive-step reachable 
  from the current node    ${b}$ of the outer loop, 
    ${c}.keeps_F=\mathrm{true}$. 
  INNER-LOOP-F decides the following formula:
  
  For each vertex     ${c}\in C\langle  E \cup F, \emptyset \rangle $ which is positive-step reachable 
  from the current node    ${b}$ of the outer loop,     ${c}.keeps_F=\mathrm{true}$.

 \begin{df}\label{szamar} \rm
  Let $\Sigma$ be a signature,  $E$ and $F$ be  GTESs over $\S$.  \newline
Algorithm Nested   Partial Depth First Search on Auxiliary dpwpa $ AUX[ E;  F] $ (NPDFS for short)  \newline
 Input:  the auxiliary dpwpa  $ AUX[ E;  F] $.
  \newline
Output: Boolean.\newline
\{\ $\mathrm{NPDFS}(AUX[ E;  F] )= \mathrm{true }$ if GTRS $R\langle E, F \rangle$ and GTRS $R\langle F, E \rangle$
together  simulate the  GTRS $R\langle E \cup F, \emptyset \rangle$, otherwise $\mathrm{NPDFS}(AUX[ E;  F] )= \mathrm{false}$.\ \}
\newline
{\bf var}      ${b}: C \langle E \cup F, \emptyset \rangle$; 
\begin{lstlisting}
  (*{\bf for each} vertex    ${b}\in C\langle E \cup F, \emptyset\rangle$ {\bf do} *)
    (*{\bf begin}*) 
      (*   ${b}.visited\_outer\_loop:=\mathrm{false}$*) 
      (*   ${b}.visited\_inner\_loop_E:=\mathrm{false}$*)
      (*   ${b}.visited\_inner\_loop_F:=\mathrm{false}$*)
    (*{\bf end}*)
  (*{\bf for each} vertex    ${b}\in C\langle E \cup F, \emptyset\rangle$ {\bf do} *)
    (*{\bf if} {\bf not}    ${b}.visited\_outer\_loop$*) 
      (*{\bf then} {\bf begin}*)
        (*   ${b}.visited\_outer\_loop:=\mathrm{true}$*) 
        (*   ${\bf if}$ {\bf not} (   ${b}.equal_E$ {\bf or }    ${b}.equal_F$)*)
          (*{\bf then} {\bf return} $\mathrm{false}$*)
        (*   ${\bf for}$    ${\bf  each}$ vertex    ${c}\in Adj({b})$ \bf{do}*)
           (*   ${\bf if}$ {\bf not} ((${b}.equal_E$  {\bf and} $\mathrm{INNER-LOOP-E}(AUX[ E;  F] , {c})$ {\bf and }  $(\mathrm{NPDFS-VISIT}( AUX[ E;  F] , {c})$*) 
           (*{\bf or } (${b}.equal_F$  {\bf and} $\mathrm{INNER-LOOP-F}(AUX[ E;  F] , {c})$ {\bf and }  $(\mathrm{NPDFS-VISIT}( AUX[ E;  F] , {c}))$*)
             (*{\bf then} {\bf return} $\mathrm{false}$*)
        (*{\bf end}*) 
  (*{\bf output}  $(\mathrm{true})$*)   
     
   
                       
(*{\bf function} $\mathrm{NPDFS-VISIT}(AUX[ E;  F] , {b}):\; \mathrm{Boolean}$*)
  (*{\bf if} {\bf not}    ${b}.visited\_outer\_loop$*) 
      (*{\bf then} {\bf begin}*)
        (*   ${b}.visited\_outer\_loop:=\mathrm{true}$*)
        (*   ${\bf if}$ {\bf not} (${b}.equal_E$ {\bf or }    ${b}.equal_F$)*)
          (*{\bf then} {\bf return} $\mathrm{false}$*)        
        (*   ${\bf for}$    ${\bf  each}$ vertex    ${c}\in Adj({b})$ \bf{do}*)
           (*   ${\bf if}$ {\bf not} ((${b}.equal_E$  {\bf and} $\mathrm{INNER-LOOP-E}(AUX[ E;  F] , {c})$ {\bf and }  $(\mathrm{NPDFS-VISIT}( AUX[ E;  F] , {c})$*) 
           (*{\bf or } (${b}.equal_F$  {\bf and} $\mathrm{INNER-LOOP-F}(AUX[ E;  F] , {c})$ {\bf and }  $(\mathrm{NPDFS-VISIT}( AUX[ E;  F] , {c}))$*)
             (*{\bf then} {\bf return} $\mathrm{false}$*)
        (*{\bf end}*) 
  (*{\bf output}  $(\mathrm{true})$*)   

 
   
(*{\bf function} $\mathrm{INNER-LOOP-E}( AUX[ E;  F] , {b}):\; \mathrm{Boolean}$*)    
  (*   ${b}.visited\_inner\_loop_E:=\mathrm{true}$;*)
  (*{\bf if}    ${b}.keeps_E$*)  
    (*   ${\bf then }$ {\bf for} {\bf  each} vertex    ${c}\in Adj({b})$ \bf{do}*)
      (*   ${\bf if }$ {\bf not}\; $\mathrm{INNER-LOOP-E}(AUX[ E;  F] , {c})$*) 
         (*   ${\bf then}\; {\bf begin}$*) 
           (*   ${b}.visited\_inner\_loop_E:=\mathrm{false}$;*)
           (*   ${\bf return}\; \mathrm{false}$*) 
         (*{\bf end}*)
  (*   ${\bf else}\; {\bf begin}$*)
    (*   ${b}.visited\_inner\_loop_E:=\mathrm{false}$;*)
    (*   ${\bf return}\; \mathrm{false}$*)  
  (*{\bf end}*)
  (*   ${b}.visited\_inner\_loop_E:=\mathrm{false}$;*)  
  (*$ {\bf output}(\mathrm{true})$*)  
                          
(*{\bf function} $\mathrm{INNER-LOOP-F}( AUX[ E;  F] , {b}):\; \mathrm{Boolean}$*)    
  (*   ${b}.visited\_inner\_loop_F:=\mathrm{true}$;*)
  (*{\bf if}    ${b}.keeps_F$*)  
    (*   ${\bf then }$ {\bf for} {\bf  each} vertex    ${c}\in  Adj({b})$ \bf{do}*)
      (*   ${\bf if }$ {\bf not}\; $\mathrm{INNER-LOOP-F}(AUX[ E;  F] , {c})$*) 
         (*   ${\bf then}\; {\bf begin}$*) 
           (*   ${b}.visited\_inner\_loop_F:=\mathrm{false}$;*)
           (*   ${\bf return}\; \mathrm{false}$*) 
         (*{\bf end}*)
  (*   ${\bf else}\; {\bf begin}$*)
    (*   ${b}.visited\_inner\_loop_F:=\mathrm{false}$;*)
    (*   ${\bf return}\; \mathrm{false}$*)  
  (*{\bf end}*)
  (*   ${b}.visited\_inner\_loop_F:=\mathrm{false}$;*)  
  (*$ {\bf output}(\mathrm{true})$*)  
\end{lstlisting}
\end{df}
  By direct inspection of Definition \ref{szamar}, we get the following result.
       \begin{lem}\label{attekint}  Let $E$ and $F$ be GTESs over  a unary  signature   $\Sigma$. 
       NPDFS outputs true on  $AUX[ E;  F] $ if and only if 
    for each    ${b}\in C\langle  E \cup F, \emptyset \rangle $, Conditions 1 or 2 hold:
   \begin{itemize}
  \item[1.]
        ${b}.equal_E=\mathrm{true} $ and     
    for each    ${c} \in C\langle  E \cup F, \emptyset \rangle $ if there is a  path of positive length or a cycle     from  
     ${b}$ to    ${c}$
 in the auxiliary    dpwpa   
$ AUX[ E;  F] =
(C\langle E \cup F, \emptyset\rangle,A[E; F])$, then      ${c}.keeps_E=\mathrm{true}$.
 \item[2.]
        ${b}.equal_F=\mathrm{true} $ and     
    for each    ${c} \in C\langle  E \cup F, \emptyset \rangle $ if there is a  path of positive length or a cycle     from  
     ${b}$ to    ${c}$
 in the  auxiliary     dpwpa   
$AUX[ E;  F] =
(C\langle E \cup F, \emptyset\rangle, A[E; F])$, then  
   ${c}.keeps_F=\mathrm{true}$.
\end{itemize}
    \end{lem}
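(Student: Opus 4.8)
The plan is to prove the biconditional by establishing the two implications separately, after first pinning down what the recursive routines of Definition \ref{szamar} compute on $AUX[E;F]$. I would begin with two depth-first-search invariants. First, for every vertex $c\in C\langle E\cup F,\emptyset\rangle$ the call INNER-LOOP-E$(AUX[E;F],c)$ terminates and returns $\mathrm{true}$ if and only if $b.keeps_E=\mathrm{true}$ for every vertex $b$ reachable from $c$ by following the arcs of the adjacency lists of Definition \ref{inaskodik} (including $c$ itself); the symmetric statement holds for INNER-LOOP-F and $keeps_F$. This is a routine induction on the search, the \emph{visited\_inner\_loop} markers guaranteeing that each reachable vertex is processed once and that the recursion halts on cycles. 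Combined with the adjacency-list semantics this yields the bridge I need: the successors of $a$ are exactly the vertices $b$ with an arc $(a,b)$, so the vertices reachable from the elements of $Adj(a)$ are precisely those $b$ for which there is a path of positive length or a cycle from $a$ to $b$. Hence the conjunction of INNER-LOOP-E$(AUX[E;F],b)$ over all $b\in Adj(a)$ equals $\mathrm{true}$ exactly when the $keeps_E$-clause of Condition 1 holds at $a$, and similarly for Condition 2.

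Next I would address coverage and the easier implication. The outer for-loop ranges over every vertex while NPDFS-VISIT recurses along successors, so by the usual argument the \emph{visited\_outer\_loop} markers make the two together a depth-first forest that enters each vertex of $AUX[E;F]$ exactly once, and at every entry the routine tests whether $a.equal_E$ or $a.equal_F$ holds and returns $\mathrm{false}$ otherwise. For the $(\Leftarrow)$ direction, suppose every $a$ satisfies Condition 1 or Condition 2. Each entered vertex then passes the $equal$ test; and if Condition 1 holds at $a$, then $a.equal_E=\mathrm{true}$ and, by the first invariant, INNER-LOOP-E$(AUX[E;F],b)=\mathrm{true}$ for every $b\in Adj(a)$, so the $E$-disjunct of the guard succeeds at each successor, the symmetric remark covering Condition 2. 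An induction over the depth-first forest (its continuation calls to NPDFS-VISIT never returning $\mathrm{false}$, by the same reasoning applied at the deeper vertices) then shows that no guard ever fails, so NPDFS outputs $\mathrm{true}$.

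The $(\Rightarrow)$ direction is where the real work lies and is, I expect, the main obstacle. If NPDFS outputs $\mathrm{true}$ then no guard fails, so for every vertex $a$ and every successor $b$ the local disjunction holds; writing $E(b)$ and $F(b)$ for the Boolean outcomes of the two inner searches started at $b$, this means $\bigwedge_{b\in Adj(a)}\big((a.equal_E\wedge E(b))\vee(a.equal_F\wedge F(b))\big)$. The delicate point is that this per-successor test is a priori weaker than the per-vertex disjunction required by the lemma, namely $\big(a.equal_E\wedge\bigwedge_{b}E(b)\big)\vee\big(a.equal_F\wedge\bigwedge_{b}F(b)\big)$; distributivity supplies only the implication already used in the $(\Leftarrow)$ direction. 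I therefore have to exclude a \emph{mixed} scenario in which one successor passes solely through its $E$-branch and another solely through its $F$-branch. The plan is to exploit the short-circuit evaluation order together with the fact that NPDFS-VISIT is invoked inside each disjunct, so that a mixed choice forces a later NPDFS-VISIT call to re-enter, under the opposite label, vertices whose whole forward cone would then have to satisfy the opposite $keeps$ attribute; in the unary signature of this section the forward cones of distinct successors of $a$ are sufficiently comparable that such a configuration cannot survive without some entered vertex failing its own $equal$-or-$keeps$ test. Making this precise as an invariant over the depth-first forest, proved by induction on its post-order, is the technical heart of the argument; once it is in place the per-successor guards collapse to the stated per-vertex condition and the forward implication follows.
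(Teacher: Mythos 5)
The paper's own justification for this lemma is the single sentence ``By direct inspection of Definition \ref{szamar}'', so there is no detailed argument to compare yours against; your invariants for INNER-LOOP-E/INNER-LOOP-F and your $(\Leftarrow)$ direction are fine and are essentially what ``direct inspection'' would have to mean. The problem is that your $(\Rightarrow)$ direction is not a proof but a plan, and the step you defer is exactly the one that does not follow from the pseudocode. As you observe, the guard inside the loop over $b\in Adj(a)$ tests, for each successor $b$ separately, the disjunction $(a.equal_E\wedge E(b))\vee(a.equal_F\wedge F(b))$, which is strictly weaker than the per-vertex disjunction $\bigl(a.equal_E\wedge\bigwedge_b E(b)\bigr)\vee\bigl(a.equal_F\wedge\bigwedge_b F(b)\bigr)$ asserted by the lemma. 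Your proposed repair --- that in the unary case the ``forward cones of distinct successors of $a$ are sufficiently comparable'' to exclude the mixed configuration --- is neither proved nor obviously true: even over a unary signature a vertex $a$ of $AUX[E;F]$ can have several distinct successors, since distinct unary symbols give rules $\sigma_1(b_1)\dot\rightarrow a$ and $\sigma_2(b_2)\dot\rightarrow a$ with $b_1\neq b_2$, so a vertex with $a.equal_E=a.equal_F=\mathrm{true}$ whose two successor cones fail $keeps_F$ and $keeps_E$ respectively is structurally possible. To close the gap you would either have to prove that the attribute values produced by Algorithm CAD can never realize such a configuration, or concede that the lemma holds only for a per-vertex reading of the guard (i.e.\ with the disjunction hoisted outside the loop over $Adj(a)$); the paper's one-line proof silently assumes the latter reading.

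Two smaller points. First, your termination argument for the inner searches appeals to the $visited\_inner\_loop$ markers ``guaranteeing that each reachable vertex is processed once,'' but in Definition \ref{szamar} those markers are set on entry, reset before every return, and never consulted before recursing, so as written the inner loops do not halt on a cycle; you are proving properties of an intended algorithm rather than the stated one, and should say so. Second, note that when the first disjunct of the guard fails only at its final conjunct $\mathrm{NPDFS\mbox{-}VISIT}(b)$, the vertex $b$ has already been marked $visited\_outer\_loop$, so the second disjunct's call to $\mathrm{NPDFS\mbox{-}VISIT}(b)$ returns true vacuously; any completed proof of the $(\Rightarrow)$ direction must also account for this short-circuit interaction, which your post-order induction as sketched does not.
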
 
 
We get the following lemma by direct inspection of Definition \ref{szamar}.
   \begin{lem}\label{megkeres}  Let $E$ and $F$ be GTESs over  a unary  signature   $\Sigma$. \newline
If GTRS $R\langle E, F \rangle$ and GTRS $R\langle F, E \rangle$
together  simulate the  GTRS $R\langle E \cup F, \emptyset \rangle$,  then Conditions 1--3  hold: 
\begin{itemize}
\item[1.]
 Algorithm NPDFS   on   $AUX[ E;  F] $   in the outer loop visits once   each vertex    ${b} \in C\langle E \cup F, \emptyset\rangle $. 
 \item [2.]    For each  vertex    ${b}\in C\langle  E \cup F, \emptyset \rangle $ visited in the outer loop, 
   \begin{itemize}
  \item 
 [({b})]       ${b}.equal_E=\mathrm{true} $ and     
    for each    ${c} \in C\langle  E \cup F, \emptyset \rangle $ if there is a  path of positive length or a cycle     from  
     ${c}$ to    ${b}$
 in the auxiliary dpwpa   
$ AUX[ E;  F] =
(C\langle E \cup F, \emptyset\rangle, A[E; F])$, then      ${c}.keeps_E=\mathrm{true}$, or
 \item [({c})]
        ${b}.equal_F=\mathrm{true} $ and     
    for each    ${c} \in C\langle  E \cup F, \emptyset \rangle $ if there is a  path of positive length or a cycle     from  
     ${c}$ to    ${b}$
 in the auxiliary dpwpa   
$AUX[ E;  F] =
(C\langle E \cup F, \emptyset\rangle, A[E; F])$, then  
   ${c}.keeps_F=\mathrm{true}$.
\end{itemize}
  
   \item [3.] Algorithm NPDFS     on   $AUX[ E;  F] $ outputs true.
 \end{itemize} 
 If GTRS $R\langle E, F \rangle$ and GTRS $R\langle F, E \rangle$ 
together do not  simulate the  GTRS $R\langle E \cup F, \emptyset \rangle$, then Conditions 4--6  hold: 
 \begin{itemize}
  \item [4.]
  Algorithm NPDFS   on  $AUX[ E;  F] $  visits at most once each vertex    ${b} \in C\langle E \cup F, \emptyset\rangle $  in the outer loop. 
  
  \item [5.]    There exists a  vertex    ${b}\in C\langle  E \cup F, \emptyset \rangle $ such that Algorithm NPDFS   on $AUX[ E;  F] $  
   visits    ${b} \in C\langle E \cup F, \emptyset\rangle $ in the outer loop, and   Conditions 2.(a) and 2.(b) are false for    ${b}$. 
  \item[6.] Algorithm NPDFS   on  $AUX[ E;  F] $  outputs false.
  \end{itemize}
  \end{lem}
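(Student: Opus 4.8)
The plan is to reduce both halves of the statement to the attribute-level formula that Lemma \ref{attekint} already analyzes, and then add the operational bookkeeping that Lemma \ref{attekint} does not spell out. First I would record the dictionary: by the four characterizations of the attributes stated immediately after Definition \ref{inaskodik} we have $a.equal_E=\mathrm{true}\Leftrightarrow a\in C\langle E,F\rangle$, $a.equal_F=\mathrm{true}\Leftrightarrow a\in C\langle F,E\rangle$, and $a.keeps_E$, $a.keeps_F$ record precisely the keep-up conditions. With this dictionary, Lemma \ref{fax} says that ``$R\langle E,F\rangle$ and $R\langle F,E\rangle$ together simulate $R\langle E\cup F,\emptyset\rangle$'' holds if and only if, for every $a\in C\langle E\cup F,\emptyset\rangle$, Condition 1 or Condition 2 of Lemma \ref{attekint} holds; and Lemma \ref{attekint} says the latter is equivalent to NPDFS outputting true on $AUX[E;F]$. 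This chain immediately yields Condition 3 (simulation $\Rightarrow$ output true) and Condition 6 (non-simulation $\Rightarrow$ output false), so those two require no further work.

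Next I would dispose of the visit-count claims and the easy per-vertex claim. Conditions 1 and 4 follow from the standard depth-first-search invariant enforced by the $visited\_outer\_loop$ flags: the guard \emph{if not} $a.visited\_outer\_loop$ blocks any second entry into a vertex in the outer DFS, which gives ``at most once'' (Condition 4) unconditionally; and in the simulation case, where by Condition 3 the algorithm runs to completion without any early \emph{return false}, the outer for-loop together with the recursive NPDFS-VISIT calls reaches every vertex, so each is visited exactly once (Condition 1) by the usual coverage argument for a completed DFS launched from the main loop. Condition 2 is then essentially the hypothesis restated: in the simulation case the attribute-formula holds for \emph{every} vertex, hence in particular for every vertex the outer loop visits, which is exactly clause 2.(a) or clause 2.(b).

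The remaining and genuinely delicate claim is Condition 5: when simulation fails, a vertex for which both 2.(a) and 2.(b) fail is actually visited in the outer loop. I would prove this by structural induction on the recursion tree of NPDFS-VISIT, analyzing the two \emph{return false} sites. The equal-check site forces $a.equal_E=a.equal_F=\mathrm{false}$, so both clauses fail at the already-marked $a$. The disjunction site needs a case split on the pair $(a.equal_E,a.equal_F)$ and on which conjunct of the short-circuited disjunction collapses: a failed INNER-LOOP-E or INNER-LOOP-F call exhibits a $keeps$-false vertex positive-step reachable from $a$, which makes the matching clause fail at $a$, whereas a false value returned by a nested NPDFS-VISIT call hands down, by the induction hypothesis, a violating vertex already visited deeper in the tree. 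The hard part will be the bookkeeping around the side effects of NPDFS-VISIT: since a nested call marks vertices $visited\_outer\_loop$, a disjunction in which both $a.equal_E$ and $a.equal_F$ are true can invoke NPDFS-VISIT twice for the same neighbor, and the second invocation may return true merely because the first already marked everything visited. I must therefore establish the stronger invariant that \emph{any} false value produced anywhere in the recursion — even one subsequently absorbed by the other disjunct — certifies that some genuinely violating vertex has by then been marked $visited\_outer\_loop$. Once this invariant is in place, Condition 5 follows, because by Condition 6 the top-level call does return false and so the invariant is triggered.
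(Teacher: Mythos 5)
Your plan is sound, but it is worth knowing that the paper's own proof of this lemma is the single sentence ``By direct inspection of Definition \ref{szamar}.''\ -- the author simply asserts that the six conditions can be read off the pseudocode. You instead give a genuine argument: you obtain Conditions 3 and 6 by composing Lemma \ref{fax} with Lemma \ref{attekint} through the attribute dictionary, you dispose of Conditions 1, 2 and 4 with the standard $visited\_outer\_loop$ invariant, and you prove Condition 5 by structural induction on the recursion tree of NPDFS-VISIT. What your route buys is precisely the point the paper's one-liner glosses over: because NPDFS-VISIT has the side effect of marking vertices, the second evaluation of NPDFS-VISIT on the same neighbour inside the short-circuited disjunction can return true vacuously, so a false produced by the first disjunct can be absorbed; identifying this and isolating an invariant that survives it is the real content of the lemma, and the paper never addresses it.

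Two small cautions. First, state your invariant as ``whenever a call to NPDFS-VISIT (or the body of the outer loop at a vertex) returns false, some vertex already marked $visited\_outer\_loop$ violates both 2.(a) and 2.(b)'' -- not ``any false value produced anywhere,'' since a false returned by INNER-LOOP-E that is absorbed by a true second disjunct only witnesses the failure of clause 2.(a) at the caller and certifies no doubly-violating vertex; the restricted invariant is what your induction actually establishes (the case split being: a nested NPDFS-VISIT returned false, apply the induction hypothesis; otherwise each disjunct fails via an $equal$ or INNER-LOOP test, and those two failures together make the current, already marked, vertex the witness). Second, note that Condition 2 of the lemma as printed quantifies over paths from $b$ to $a$, whereas Lemmas \ref{fax} and \ref{attekint} and the algorithm all use paths from $a$ to $b$; you silently read it in the latter sense, which is almost certainly the intended one, but you should flag the discrepancy rather than treat the two formulas as identical.
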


\begin{exa}\label{1oneegynegyedikfolytatas}\rm We now continue
Example \ref{1oneegy}, Example \ref{1oneegyfolytatas}, Example \ref{1oneegymasodikfolytatas}, and Example \ref{1oneegyharmadikfolytatas}. 
We run Algorithm NPDFS on  $AUX[ E;  F] $. Then  Algorithm NPDFS outputs true.
 By the last line of  Example \ref{1oneegyharmadikfolytatas},  GTRS $R\langle E, F \rangle$ and GTRS $R\langle F, E \rangle$
together  simulate the  GTRS $R\langle E \cup F, \emptyset \rangle$.
   Consequently  Algorithm NPDFS outputs the  answer  to the question whether GTRS $R\langle E, F \rangle$ and GTRS $R\langle F, E \rangle$ 
together  simulate the  GTRS $R\langle E \cup F, \emptyset \rangle$. 
   \end{exa}

\begin{exa}\label{2szepennegyedikfolytatas}\rm 
We now continue Examples  \ref{2szepen},  \ref{2szepenfolytatas},   \ref{2szepenmasodikfolytatas}, and   \ref{2szepenharmadikfolytatas}. 
We run Algorithm NPDFS  on $AUX[ E;  F] $.   
 Then  Algorithm NPDFS outputs false. By the last lines of Example  \ref{2szepenharmadikfolytatas},  GTRS $R\langle E, F \rangle$ and GTRS $R\langle F, E \rangle$ 
together do not  simulate the  GTRS $R\langle E \cup F, \emptyset \rangle$.
    Thus  Algorithm NPDFS outputs the  answer  to the question whether GTRS $R\langle E, F \rangle$ and GTRS $R\langle F, E \rangle$ 
together  simulate the  GTRS $R\langle E \cup F, \emptyset \rangle$. 
   \end{exa}

\begin{lem}\label{supportfut}{\em \cite{cormen}} Let $E$ and $F$ be GTESs over a unary  signature   $\Sigma$.
 Algorithm NPDFS runs on $AUX[ E;  F] $  in $O(n^2)$ time.
 \end{lem}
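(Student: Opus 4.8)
The plan is to treat Algorithm NPDFS as a nested depth-first search and to bound its running time by the cost of the outer search plus the total cost of all inner-search invocations. The two quantities that control everything are the number of vertices and the number of arcs of $AUX[E;F]$. By Statement \ref{darabsz}, $|C\langle E\cup F, \emptyset\rangle| \leq |W| \leq n$, and by Statement \ref{iszap}, $|A[E;F]| = |BSTEP\langle E\cup F, \emptyset\rangle| \leq n$. Hence both the vertex set and the arc set of $AUX[E;F]$ have size $O(n)$, and a single depth-first traversal of the adjacency-list representation of $AUX[E;F]$ costs $O(|C\langle E\cup F, \emptyset\rangle| + |A[E;F]|) = O(n)$.

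First I would analyze the outer search, which is realized by the top-level loop of NPDFS together with the recursive calls to NPDFS-VISIT. The attribute $visited\_outer\_loop$ is set to true on the first visit to a vertex and is tested before the body of NPDFS-VISIT is entered, so the body of the outer search executes at most once per vertex $a \in C\langle E\cup F, \emptyset\rangle$. Consequently the outer search enters each vertex at most once and scans each adjacency list $Adj(a)$ at most once; the number of (vertex, out-arc) pairs it examines is therefore $\sum_{a} |Adj(a)| = |A[E;F]| \leq n$. Ignoring the cost of the inner-loop and NPDFS-VISIT calls made inside these scans, the outer search contributes $O(|C\langle E\cup F, \emptyset\rangle| + |A[E;F]|) = O(n)$.

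Next I would count and bound the inner searches. Each iteration of a ``for each $b\in Adj(a)$'' loop, whether inside NPDFS or inside NPDFS-VISIT, launches at most one invocation of INNER-LOOP-E and at most one invocation of INNER-LOOP-F. Since the outer search performs $O(|A[E;F]|) = O(n)$ such iterations in total, there are $O(n)$ top-level inner-search invocations. Each invocation of INNER-LOOP-E (resp. INNER-LOOP-F) is itself a depth-first search over $AUX[E;F]$ that follows arcs only out of vertices whose $keeps_E$ (resp. $keeps_F$) attribute is true, marks the vertices it enters using $visited\_inner\_loop_E$ (resp. $visited\_inner\_loop_F$), and resets these marks before it returns. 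Because each such invocation enters every vertex at most once, it terminates and costs $O(|C\langle E\cup F, \emptyset\rangle| + |A[E;F]|) = O(n)$; the reset of the marks on return guarantees that the next invocation starts from a clean state without affecting this bound.

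Combining the two contributions, the total running time is $O(n) + O(n)\cdot O(n) = O(n^2)$, which is the claim. The main obstacle is the middle step: I must argue that a single top-level inner search visits each vertex at most once, so that it both terminates, despite the cycles that $AUX[E;F]$ may contain, as in Example \ref{2szepen}, and costs only $O(n)$. This rests on the correct use of the $visited\_inner\_loop_E$ and $visited\_inner\_loop_F$ flags to prevent revisiting within an invocation while being reset on exit so as not to contaminate later invocations; once this is established, the remaining arithmetic is the routine nested depth-first-search bound $O((|C\langle E\cup F, \emptyset\rangle| + |A[E;F]|)^2) = O(n^2)$.
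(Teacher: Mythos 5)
Your proposal is correct and follows essentially the same route as the paper's proof: bound the vertex and arc counts of $AUX[E;F]$ by $n$ via Statements \ref{darabsz} and \ref{iszap}, observe that the outer search is a (prefix of a) depth-first search costing $O(n)$, that each invocation of INNER-LOOP-E or INNER-LOOP-F is likewise a partial depth-first search costing $O(n)$, and that there are $O(n)$ such invocations, giving $O(n^2)$ in total. The only cosmetic difference is that you count inner-search launches per arc-iteration of the outer search (bounded by $|A[E;F]|\leq n$) while the paper counts them per vertex (bounded by $|C\langle E\cup F,\emptyset\rangle|\leq n$); both yield the same bound.
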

\begin{proof} Recall that  $n=\mathrm{size}(E)+\mathrm{size}(F)$.
 By Statement \ref{darabsz}, 
$|C\langle E \cup F, \emptyset\rangle | \leq n$. By   Statement \ref{iszap},  
  $| A[E; F]|\leq 2  n$.
    The running time of depth first search on   $AUX[ E;  F] $ is $O(|C\langle E \cup F, \emptyset\rangle | + | A[E; F]|)$   
\cite[Section~22.3]{cormen}. Consequently, the running time of depth first search on   $ AUX[ E;  F] $ is $O(n)$.

The outer loop of 
 Algorithm NPDFS on  $AUX[ E;  F] $ implements a prefix of 
the depth first search  on $AUX[ E;  F] $  \cite{cormen}, it can terminate before visiting all vertices of $ AUX[ E;  F] $.
Consequently, the outer loop  on   $AUX[ E;  F] $ takes  $O(n)$ time.

Function  INNER-LOOP-E  on  $AUX[ E;  F] $ and any  vertex     ${b}$
 implements a prefix of a depth first search    on $ AUX[ E;  F] $ starting from    ${b}$. 
Accordingly the running time of function INNER-LOOP-E is $O(n)$.  
 Similarly, the running time of  function INNER-LOOP-F is
 $O(n)$.  
 By  Statement \ref{darabsz}, $|C\langle E \cup F, \emptyset\rangle | \leq n$, therefore
INNER-LOOP-E and INNER-LOOP-F run at most $n$ times. Accordingly,   Algorithm NPDFS runs on   $ AUX[ E;  F] $  in $O(n^2)$ time.
    \end{proof}

\subsection{Decision Algorithm and its verification}
      We present our decision algorithm and show its correctnes.
 
\begin{lem}\label{pazsit}
  Let $E$ and $F$ be GTESs over  a unary  signature   $\Sigma$. 
  Then the following conditions are pairwise equivalent: 
  \begin{itemize}
  \item[(i)]
  $\tthue {E\cup F}\subseteq \tthue E \cup \tthue F$, 
  \item[(ii)]
 For each $t\in W  $, 
   $( t/_{\tthue {E\cup F}}=  t/_{\tthue E} \mbox{  or }
 t/_{\tthue {E\cup F}}=  t/_{\tthue F})$.

 \item[(iii)] GTRS $R\langle E, F \rangle$ and GTRS $R\langle F, E \rangle$
together  simulate the  GTRS $R\langle E \cup F, \emptyset \rangle$.

 \item[(iv)]  
  For each    ${b}\in C\langle  E \cup F, \emptyset \rangle $, GTRS $R\langle E, F \rangle$ or  GTRS $R\langle F, E \rangle$  simulate the  GTRS $R\langle E \cup F, \emptyset \rangle$
   on  the constant       ${b}$.

 \item[(v)]     For each    ${b}\in C\langle  E \cup F, \emptyset \rangle $, 
   \begin{itemize}
  \item
        ${b}.equal_E=\mathrm{true} $ and     
    for each    ${c} \in C\langle  E \cup F, \emptyset \rangle $ if there is a  path of positive length or a cycle     from  
     ${b}$ to    ${c}$
 in the auxiliary    dpwpa   
$ AUX[ E;  F] =
(C\langle E \cup F, \emptyset\rangle,A[E; F])$, then      ${c}.keeps_E=\mathrm{true}$, or
 \item
        ${b}.equal_F=\mathrm{true} $ and     
    for each    ${c} \in C\langle  E \cup F, \emptyset \rangle $ if there is a  path of positive length or a cycle     from  
     ${b}$ to    ${c}$
 in the  auxiliary     dpwpa   
$AUX[ E;  F] =
(C\langle E \cup F, \emptyset\rangle,A[E; F])$, then  
   ${c}.keeps_F=\mathrm{true}$.
 \end{itemize}

\item[(vi)] $NPDFS(AUX[ E;  F] )=\mathrm{true}$.

\end{itemize}
\end{lem}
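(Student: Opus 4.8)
The plan is to establish the six conditions as pairwise equivalent by assembling them into a connected chain of equivalences, each link of which is an already-proven result from the preceding sections, so that essentially no new argument is needed beyond bookkeeping. Concretely, I would prove $\mathrm{(i)}\Leftrightarrow\mathrm{(ii)}\Leftrightarrow\mathrm{(iii)}$, together with $\mathrm{(iii)}\Leftrightarrow\mathrm{(iv)}$, $\mathrm{(iii)}\Leftrightarrow\mathrm{(v)}$, and $\mathrm{(v)}\Leftrightarrow\mathrm{(vi)}$. Since every listed condition is thereby equivalent to $\mathrm{(iii)}$, all six are pairwise equivalent.

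First I would handle $\mathrm{(i)}\Leftrightarrow\mathrm{(ii)}$. Because $\Sigma$ is unary, this is precisely Lemma \ref{alapvetes}, which characterizes $\tthue{E\cup F}\subseteq \tthue E\cup\tthue F$ by the class-equality condition on each $p\in W$. Next, for $\mathrm{(ii)}\Leftrightarrow\mathrm{(iii)}$ I would invoke Lemma \ref{tanacs}, the unary counterpart of Lemma \ref{sportos}, which states exactly that the per-class equality condition holds if and only if GTRS $R\langle E,F\rangle$ and GTRS $R\langle F,E\rangle$ together simulate GTRS $R\langle E\cup F,\emptyset\rangle$.

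The equivalence $\mathrm{(iii)}\Leftrightarrow\mathrm{(iv)}$ is immediate from Definition \ref{defsimulate}: joint simulation of $R\langle E\cup F,\emptyset\rangle$ is by definition the statement that for each $a\in C\langle E\cup F,\emptyset\rangle$ one of $R\langle E,F\rangle$, $R\langle F,E\rangle$ simulates $R\langle E\cup F,\emptyset\rangle$ on $a$. For $\mathrm{(iii)}\Leftrightarrow\mathrm{(v)}$ I would cite Lemma \ref{fax}, whose two disjunctive conditions match the two disjuncts of $\mathrm{(v)}$ once the attribute semantics fixed just after Definition \ref{inaskodik} are substituted: $a.equal_E=\mathrm{true}$ iff $a\in C\langle E,F\rangle$, $a.equal_F=\mathrm{true}$ iff $a\in C\langle F,E\rangle$, and $b.keeps_E=\mathrm{true}$ (resp. $b.keeps_F=\mathrm{true}$) iff $R\langle E,F\rangle$ (resp. $R\langle F,E\rangle$) keeps up with $R\langle E\cup F,\emptyset\rangle$ writing $b$. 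Finally, $\mathrm{(v)}\Leftrightarrow\mathrm{(vi)}$ is Lemma \ref{attekint}, which asserts that Algorithm NPDFS returns true on $AUX[E;F]$ exactly when the disjunctive attribute condition $\mathrm{(v)}$ holds at every vertex.

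The only point demanding care is the translation in $\mathrm{(iii)}\Leftrightarrow\mathrm{(v)}$: one must check that the path predicate ``there is a path of positive length or a cycle from $a$ to $b$'' and the keeps-up predicates appearing in Lemma \ref{fax} are literally the attribute-level statements written in $\mathrm{(v)}$, which is a direct rewriting via Definition \ref{inaskodik} and requires no further proof. I do not anticipate a genuine obstacle here, since all the substantive work -- the correctness of the simulation characterization and of the nested traversal -- has already been discharged in Lemmas \ref{tanacs}, \ref{fax}, and \ref{attekint}.
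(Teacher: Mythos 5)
Your proposal is correct and follows essentially the same chain of equivalences as the paper: (i)$\Leftrightarrow$(ii) by Lemma \ref{alapvetes}, (ii)$\Leftrightarrow$(iii) by Lemma \ref{tanacs}, (iii)$\Leftrightarrow$(iv) by Definition \ref{defsimulate}, and (v)$\Leftrightarrow$(vi) by Lemma \ref{attekint}. The only cosmetic difference is that you route the remaining link as (iii)$\Leftrightarrow$(v) through Lemma \ref{fax}, whereas the paper proves (iv)$\Leftrightarrow$(v) directly from Definitions \ref{singlesimulate} and \ref{mimika} together with Lemmas \ref{eso} and \ref{rozmaring}; both reduce to the same translation, and to be fully precise your step should also cite Lemma \ref{eso} for the guarantee that the attributes $equal_E$, $equal_F$, $keeps_E$, $keeps_F$ actually carry the semantics stated after Definition \ref{inaskodik}.
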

\begin{proof}
$\mathrm{(i)} \Leftrightarrow \mathrm{(ii)}$: By Lemma \ref{alapvetes}.

$\mathrm{(ii)} \Leftrightarrow \mathrm{(iii)}$:  By Lemma \ref{tanacs}.

$\mathrm{(iii)} \Leftrightarrow \mathrm{(iv)}$: By Definition \ref{defsimulate}. 

$\mathrm{(iv)} \Leftrightarrow \mathrm{(v)}$: By Definition \ref{singlesimulate} and  Definition 
\ref{mimika} and Lemma \ref{eso} and   Lemma \ref{rozmaring}.
 
$\mathrm{(v)} \Leftrightarrow \mathrm{(vi)}$: By Lemma \ref{attekint}.
 \end{proof}
%%%%%%%%%%%%%%%%%%%%%%%%%%%%%
 %%%%%%%%%%%%%
 \begin{lem}\label{1tetharmadik1}  Let $E$ and $F$ be GTESs over  a unary  signature   $\Sigma$. 
    Then we can decide in  $O(n^2)$ time whether
   $\tthue {E\cup F}\subseteq \tthue E \cup \tthue F$.
 \end{lem}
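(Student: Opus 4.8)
The plan is to reduce the decision problem to running the two algorithms already analyzed in the excerpt and then invoking the equivalence of conditions (i) and (vi) in Lemma \ref{pazsit}. First I would run Algorithm CAD on $\Sigma$, $E$, and $F$. By Lemma \ref{eso} this produces, among the other data, the auxiliary dpwpa $ AUX[ E;  F] =(C\langle E\cup F,\emptyset\rangle, A[E;F])$ together with the attribute values $a.equal_E$, $a.equal_F$, $a.keeps_E$, and $a.keeps_F$ for every vertex $a\in C\langle E\cup F,\emptyset\rangle$, whose intended meanings are fixed in Definition \ref{inaskodik}. By Lemma \ref{bonn} this construction costs $O(n\,\mathrm{log}\,n)$ time, where $n=size(E)+size(F)$.

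Second, I would run Algorithm NPDFS on the dpwpa $ AUX[ E;  F] $ obtained in the first step. By Lemma \ref{supportfut} this nested partial depth first search takes $O(n^2)$ time, and it returns the Boolean value $NPDFS(AUX[ E;  F] )$.

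The correctness then follows immediately from Lemma \ref{pazsit}: conditions (i) and (vi) of that lemma are pairwise equivalent, so $NPDFS(AUX[ E;  F] )=\mathrm{true}$ holds if and only if $\tthue {E\cup F}\subseteq \tthue E \cup \tthue F$. Hence the algorithm that first builds $ AUX[ E;  F] $ via CAD and then returns the value output by NPDFS decides the inclusion correctly. For the running time I would simply add the two contributions: the preprocessing costs $O(n\,\mathrm{log}\,n)$ and the search costs $O(n^2)$, so the overall cost is $O(n\,\mathrm{log}\,n)+O(n^2)=O(n^2)$.

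The work here is essentially bookkeeping — assembling previously established facts — rather than new mathematics, so there is no real obstacle. The only points deserving care are that the object fed to Algorithm NPDFS is exactly the dpwpa carrying the attribute values with the semantics guaranteed by Lemma \ref{eso} (so that the equivalence (v)$\Leftrightarrow$(vi) of Lemma \ref{pazsit} is applicable), and that the $O(n\,\mathrm{log}\,n)$ preprocessing is genuinely dominated by the $O(n^2)$ search so that it does not inflate the final bound; both are routine to verify.
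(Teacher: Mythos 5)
Your proposal is correct and follows essentially the same route as the paper's own proof: run Algorithm CAD to build $AUX[E;F]$ with its attributes in $O(n\,\mathrm{log}\,n)$ time, run Algorithm NPDFS in $O(n^2)$ time, and conclude via the equivalence of conditions (i) and (vi) in Lemma \ref{pazsit}. The only cosmetic difference is that you cite Lemmas \ref{eso} and \ref{bonn} for the CAD step where the paper cites Proposition \ref{kecskemet}; both supply the same guarantees.
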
   
 \begin{proof}
 Recall that we have run Algorithm CAD  on $\Sigma$, $E$, and $F$. 
By Proposition \ref{kecskemet}, we have obtained in   $O(n \, \mathrm{log} \, n )$ time
the auxiliary dpwpa $ AUX[ E;  F] =
(C\langle E\cup F, \emptyset \rangle,A[E; F])$ for the GTESs $E$ and $F$, and 
   for every vertex    ${b}\in C\langle E \cup  F, \emptyset  \rangle$,
the values of  the  attributes    ${b}.equal_E$,    ${b}.equal_F$,    ${b}.keeps_E$, and    ${b}.keeps_F$.

3. We run Algorithm NPDFS on $AUX[ E;  F] $ in $O(n^2)$ time by Lemma \ref{supportfut}. Then  by Lemma \ref{pazsit},
  NPDFS outputs true if and only if 
 $\tthue {E\cup F}\subseteq \tthue E \cup \tthue F$.

 Summing up the time complexities of the three steps above, we get that  the time  complexity of our entire computation is   $O(n^2)$.
 \end{proof}

\begin{exa}\label{1oneegyotodikfolytatas}\rm We now continue
Examples \ref{1oneegy},  \ref{1oneegyfolytatas},  \ref{1oneegymasodikfolytatas},   \ref{1oneegyharmadikfolytatas}. and  \ref{1oneegynegyedikfolytatas}. 
Recall that we run Algorithm NPDFS on the auxiliary dpwpa $ AUX[ E;  F] $, and  Algorithm NPDFS outputs true to the question whether 
 GTRS $R\langle E, F \rangle$ and GTRS $R\langle F, E \rangle$ 
together  simulate the  GTRS $R\langle E \cup F, \emptyset \rangle$.
  Hence Condition (iii) in Lemma \ref{pazsit} holds.
   Consequently, by Lemma \ref{pazsit},
    $\tthue {E\cup F} \subseteq \tthue E\cup \tthue F$.
     \end{exa}

\begin{exa}\label{2szepenotodikfolytatas}\rm 
We now continue Examples \ref{2szepen},  \ref{2szepenfolytatas},   \ref{2szepenmasodikfolytatas},  \ref{2szepenharmadikfolytatas},  and   \ref{2szepennegyedikfolytatas}. 
Recall that we  run Algorithm NPDFS  on the auxiliary dpwpa  $AUX[ E;  F] $, and   Algorithm NPDFS outputs false to the question whether 
 GTRS $R\langle E, F \rangle$ and GTRS $R\langle F, E \rangle$ 
together  simulate the  GTRS $R\langle E \cup F, \emptyset \rangle$.
  For this reason Condition (iii) in Lemma \ref{pazsit} does not  hold.
   Consequently, by Lemma \ref{pazsit},
    $\tthue {E\cup F} \not \subseteq \tthue E\cup \tthue F$. 
    
\end{exa}

\section{Main Case 2}\label{nyar} 
In this main case, $E$ and $F$ are  GTESs over  a  signature   $\Sigma$ such that  
 GTRS $R\langle E, F\rangle$ and GTRS $R\langle F, E\rangle$ are  total. 
  We assume that we have already run  Algorithm CAD  on $\Sigma$, $E$, and $F$,  and have obtained all the output it produces, see Proposition \ref{kecskemet}.
 We showed that   $\tthue {E\cup F}\subseteq \tthue E\cup \tthue F$
if and only if each $\Theta\langle E\cup F, \emptyset \rangle$  equivalence    class is equal to a 
 $\Theta\langle E, F \rangle$ equivalence    class or a $\Theta\langle F, E \rangle$ equivalence    class.
By  direct inspection of  
  $INC( E\cup F, E)$ stored in a red-black tree and 
$INC( E\cup F, F)$ stored in a red-black tree 
  we decide in    $O(n)$ 
 time whether each $\Theta\langle E\cup F, \emptyset \rangle$  equivalence    class is equal to a 
 $\Theta\langle E, F \rangle$ equivalence    class or a $\Theta\langle F, E \rangle$ equivalence    class. Thence we decide in   
 $O(n\,\mathrm{log}\, n)$ time whether 
 $\tthue {E\cup F}\subseteq \tthue E\cup \tthue F$. 
 Example  \ref{3szilva} and its sequels, Examples \ref{3szilvafolytatas} and  \ref{3szilvamasodikfolytatas}, in addition 
 Example  \ref{4u2staring} and its sequels, Examples \ref{4u2staringfolytatas} and  \ref{4u2staringmasodikfolytatas}
 exemplify this case. 
First we note that 
  we can also apply the decision algorithm of   Champav{\`{e}}re  et al \cite{franciak}.
 Lemma \ref{asas}  and Proposition \ref{franciaeldontsimple},  and Proposition \ref{egyenlo} imply the following theorem.
    \begin{thm}\label{2maincase22} 
 Let $E$ and $F$ be GTESs over  a  signature   $\Sigma$ such that  
 GTRS $R\langle E, F\rangle$ and GTRS $R\langle F, E\rangle$ are  total. 
 We can decide in $O(n^3)$ time whether
  \begin{quote}
  $\tthue {E\cup F}= \tthue E \cup \tthue F$.
  \end{quote}
 \end{thm}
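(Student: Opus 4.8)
The plan is to reduce this decision problem to one already shown decidable in $O(n^3)$ time, mirroring exactly the argument used for Main Case 1 in Theorem~\ref{1maincase1eldont}. First I would invoke Proposition~\ref{egyenlo}, according to which the statements $\tthue {E\cup F}\subseteq \tthue E\cup \tthue F$, $\tthue {E\cup F}= \tthue E\cup \tthue F$, the existence of a GTES $H$ with $\tthue H=\tthue E\cup \tthue F$, and the congruence property of $\tthue E\cup \tthue F$ on the ground term algebra are pairwise equivalent. Hence it suffices to decide whether $\tthue {E\cup F}\subseteq \tthue E\cup \tthue F$.

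Next, since both $R\langle E, F\rangle$ and $R\langle F, E\rangle$ are total by the standing hypothesis of this main case, I would apply Lemma~\ref{asas}, which yields the characterization that $\tthue {E\cup F}\subseteq \tthue E\cup \tthue F$ holds if and only if for every $p\in W$ we have $p/_{\tthue {E\cup F}}=  p/_{\tthue E}$ or $p/_{\tthue {E\cup F}}=  p/_{\tthue F}$. This is the step that genuinely exploits the totality assumption: it converts a condition about the two global congruences into a purely local condition on the finitely many subterms collected in $W=ST\langle E\cup F\rangle$.

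Finally, Proposition~\ref{franciaeldontsimple} states precisely that this local condition is decidable in $O(n^3)$ time, by constructing via Proposition~\ref{mezo} the deterministic bottom-up tree automata recognizing the relevant congruence classes $p/_{\tthue {E\cup F}}$, $p/_{\tthue E}$, and $p/_{\tthue F}$, and then testing the required language inclusions with the algorithm of Proposition~\ref{erdo}. Chaining these three results gives the theorem.

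There is no real obstacle here, only a bookkeeping point to verify: the hypotheses must line up, since Lemma~\ref{asas} requires exactly the totality of both $R\langle E, F\rangle$ and $R\langle F, E\rangle$, which is the defining assumption of Main Case 2, whereas Proposition~\ref{franciaeldontsimple} is stated for an arbitrary signature and so applies with no further restriction. I expect the sharper $O(n)$ bound advertised for this case in the introduction to be established separately; the present theorem records only the immediate cubic-time upper bound obtained from the general automata machinery.
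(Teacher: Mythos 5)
Your proposal is correct and follows exactly the paper's own proof, which cites precisely Proposition~\ref{egyenlo}, Lemma~\ref{asas}, and Proposition~\ref{franciaeldontsimple} in the same roles you describe. Your closing remark is also accurate: the sharper bound for this case is established separately (Lemma~\ref{lennon}), and the present theorem only records the cubic bound from the automata machinery.
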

 We present a faster  and simpler algorithm for solving our decidability problem.
 \begin{exa}\label{3szilvamasodikfolytatas} \rm  We now continue Examples
 \ref{3szilva} and  \ref{3szilvafolytatas}. Recall that 
 \begin{itemize}
 \item  both GTRS $R\langle E, F\rangle$ and GTRS $R\langle F, E\rangle$ are  total,
  \item  $\#/_{\Theta\langle E, F \rangle} \subset \#/_{\Theta\langle {E\cup F}, \emptyset \rangle}$,
 \item   $\$/_{\Theta\langle E, F \rangle} = \$/_{\Theta\langle {E\cup F}, \emptyset\rangle}$, 
  \item   $\mathsterling/_{\Theta\langle E, F \rangle} = \mathsterling/_{\Theta\langle {E\cup F}, \emptyset \rangle}$, 
 \item  $\#/_{\Theta\langle F, E \rangle} = \#/_{\Theta\langle {E\cup F}, \emptyset \rangle}$,
 \item   $\$/_{\Theta\langle F, E \rangle} \subset\$/_{\Theta\langle {E\cup F}, \emptyset \rangle}$,  
  \item   $\mathsterling/_{\Theta\langle F, E \rangle} = \mathsterling/_{\Theta\langle {E\cup F}, \emptyset \rangle}$, and   
   \item 
  $ \tthue {E\cup F}\subseteq \tthue E\cup \tthue F$.
\end{itemize}
 \end{exa}

 \begin{exa}\label{4u2staringmasodikfolytatas}  \rm  We now continue Examples
 \ref{4u2staring} and  \ref{4u2staringfolytatas}. Recall that 
 \begin{itemize}
 \item  both GTRS $R\langle E, F\rangle$ and GTRS $R\langle F, E\rangle$ are  total,
 \item 
 the only equivalence class of  $\Theta\langle E\cup F, \emptyset \rangle$ is $W$, 
  \item  $\#/_{\Theta\langle E, F \rangle} \subset \#/_{\Theta\langle {E\cup F}, \emptyset \rangle}$  
 and $ \#/_{\Theta\langle {F}, E\rangle}
  \subset  \#/_{\Theta\langle {E\cup F}, \emptyset\rangle}$, and 
  \item 
  $\tthue E\cup \tthue F\subset \tthue {E\cup F}$.
\end{itemize}
 \end{exa}

\begin{lem} \label{abcd}
Let $E$ and $F$ be GTESs over  a  signature   $\Sigma$ such that  
 GTRS $R\langle E, F\rangle$ and GTRS $R\langle F, E\rangle$ are  total. 
  Then \begin{quote} 
  for each $t\in W $, $\left( t/_{\tthue {E\cup F}}=  t/_{\tthue E} \mbox{ or }
   t/_{\tthue {E\cup F}}=  t/_{\tthue F}\right)$ 
  if and only if
  \newline 
for each $t\in W $, $ \left( t/_{\Theta\langle {E\cup F}, \emptyset \rangle}=
  t/_{\Theta\langle {E, F \rangle} } \mbox{  or  }
   t/_{\Theta\langle {E\cup F}, \emptyset \rangle}=  t/_{\Theta\langle {F}, E \rangle}\right)$.
  \end{quote}
\end{lem}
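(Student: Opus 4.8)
The plan is to reduce the lemma entirely to Lemma~\ref{duna}, applied once to the pair $(E,F)$ and once to the pair $(F,E)$. Since each of the two conditions in the statement is a universally quantified conjunction of pointwise disjunctions, it suffices to establish, for each fixed $t\in W$, the equivalence of the corresponding disjunctions, and then quantify over $t$. In fact I would prove something sharper, namely the two pointwise biconditionals
\[
 t/_{\tthue {E\cup F}}=  t/_{\tthue E}\ \Longleftrightarrow\ t/_{\Theta\langle E\cup F, \emptyset \rangle}= t/_{\Theta\langle E, F \rangle}
\]
and
\[
 t/_{\tthue {E\cup F}}=  t/_{\tthue F}\ \Longleftrightarrow\ t/_{\Theta\langle E\cup F, \emptyset \rangle}= t/_{\Theta\langle F, E \rangle},
\]
from which the lemma follows immediately by taking disjunctions and then the conjunction over $t\in W$.

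First I would invoke Lemma~\ref{duna} directly for $E$ and $F$. Because $R\langle E, F\rangle$ is total by hypothesis, the lemma applies verbatim and yields the first displayed biconditional for every $t\in W$. Next I would apply the same lemma with the roles of $E$ and $F$ interchanged; this is legitimate precisely because $R\langle F, E\rangle$ is also total. The one bookkeeping point to record is that interchanging the names leaves the union unchanged, $E\cup F=F\cup E$, so that $\tthue {E\cup F}=\tthue {F\cup E}$, $W=ST\langle F\cup E\rangle$, and $\Theta\langle E\cup F,\emptyset\rangle=\Theta\langle F\cup E,\emptyset\rangle$. Hence the interchanged instance of Lemma~\ref{duna} reads exactly as the second displayed biconditional above, with $\tthue F$ in place of $\tthue E$ and $\Theta\langle F, E\rangle$ in place of $\Theta\langle E, F\rangle$.

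Combining the two biconditionals, for each fixed $t\in W$ the disjunction $\bigl(t/_{\tthue {E\cup F}}=t/_{\tthue E}$ or $t/_{\tthue {E\cup F}}=t/_{\tthue F}\bigr)$ holds if and only if the disjunction $\bigl(t/_{\Theta\langle E\cup F,\emptyset\rangle}=t/_{\Theta\langle E,F\rangle}$ or $t/_{\Theta\langle E\cup F,\emptyset\rangle}=t/_{\Theta\langle F,E\rangle}\bigr)$ holds; quantifying over all $t\in W$ gives the asserted equivalence. There is essentially no genuine obstacle here: the only thing to verify with care is the symmetry step, where one must confirm that the invariance of the union under swapping $E$ and $F$ makes $\tthue{E\cup F}$ and its restriction $\Theta\langle E\cup F,\emptyset\rangle$ literally the same objects in both applications of Lemma~\ref{duna}. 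Once that is observed, the two instances combine cleanly and no spurious mismatch of equivalence classes can arise.
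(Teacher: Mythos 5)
Your proof is correct, but it takes a different route from the paper's. The paper proves Lemma~\ref{abcd} directly: the forward direction is the same restriction-to-$W\times W$ argument, but the backward direction is a self-contained computation that takes $s\in t/_{\tthue{E\cup F}}$, uses Lemma~\ref{ezredik} to obtain normal forms $p/_{\Theta\langle E,F\rangle}$ and $q/_{\Theta\langle F,E\rangle}$ under both total GTRSs, and then case-splits on which of the two class equalities in the hypothesis holds. That computation essentially replays the mechanics of the backward direction of Lemma~\ref{duna} inside the disjunction. You instead factor the whole statement through Lemma~\ref{duna} applied twice, once to $(E,F)$ and once to $(F,E)$, each application licensed by exactly one of the two totality hypotheses, and you correctly record the only delicate point, namely that $E\cup F=F\cup E$ makes $\tthue{E\cup F}$, $W$, and $\Theta\langle E\cup F,\emptyset\rangle$ literally the same objects in both instances. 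Your decomposition is more modular and yields the sharper pointwise biconditionals for each disjunct separately, from which the stated equivalence of disjunctions follows by pure logic; the paper's version is self-contained for this lemma but duplicates reasoning already present in Lemma~\ref{duna}. Since Lemma~\ref{duna} appears in the paper before Lemma~\ref{abcd}, there is no circularity, and your argument is a legitimate and arguably cleaner replacement.
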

\begin{proof}
($\Rightarrow$)
If
$ t/_{\tthue {E\cup F}}=  t/_{\tthue E} $, then by Definition \ref{alfa}, 
$ t/_{\Theta\langle {E\cup F}, \emptyset \rangle}= t/_{\tthue {E\cup F}}\cap  W\times W
=  t/_{\tthue E} \cap  W\times W= t/_{\Theta\langle {E, F \rangle} } $.

 If
$ t/_{\tthue {E\cup F}}=  t/_{\tthue F} $, then 
$ t/_{\Theta\langle {E\cup F}, \emptyset \rangle}= t/_{\tthue {E\cup F}}\cap  W\times W
=  t/_{\tthue F} \cap  W\times W= t/_{\Theta\langle {E, F \rangle} } $.

($\Leftarrow$)
 Let $t\in W $ and $s \in  t/_{\tthue {E\cup F}}$.
 By Proposition \ref{otodik},
\begin{quote}
$s \tred {R\langle E\cup F, \emptyset \rangle}   t/_{\Theta\langle E\cup F, \emptyset \rangle}$.
\end{quote}
As $R\langle E, F\rangle$ and $R\langle F, E\rangle$ 
are total, by Lemma \ref{ezredik}, 
\begin{quote} 

$s{\downarrow}_ {R\langle E, F \rangle} = p/_{\Theta\langle E, F \rangle}$ for some $p \in W$,
\end{quote}
and 
\begin{quote}
  $s {\downarrow}_ {R\langle F, E \rangle}=  q/_{\Theta\langle F, E \rangle}$ for some
  $q \in W$.
\end{quote}
Consequently,
by Proposition \ref{harmadik},
\begin{quote}
  $s\tthue {E\cup F} t$, $s\tthue E p$ and $s\tthue F q$.
  \end{quote}
Therefore  $t\tthue {E\cup F}s\tthue E p$ and $t\tthue {E\cup F} s
\tthue E q$.
Therefore,
\begin{quote} 
$p \in  t/_{\Theta\langle {E\cup F}, \emptyset \rangle}$ and
$q \in  t/_{\Theta\langle {E\cup F}, \emptyset \rangle}$.
\end{quote}
By our assumption, 
\begin{quote}
$t/_{\Theta\langle {E\cup F}, \emptyset\rangle}=
  t/_{\Theta\langle {E, F \rangle} } $ or  $
   t/_{\Theta\langle {E\cup F}, \emptyset\rangle}=  t/_{\Theta\langle {F}, E \rangle})$.
   \end{quote}
First, suppose that 
 $ t/_{\Theta\langle {E\cup F}, \emptyset\rangle}=
 t/_{\Theta\langle {E, F \rangle} } $. Therefore,  we have 
$p \in  t/_{\Theta\langle {E, F \rangle}}$. 
Thus, by  $s\tthue E p$, we have
$s\tthue E p\tthue E t$.
Accordingly,  $s\in  t/_{\tthue E}$.  By the definition of $s$,  
\begin{quote}
$ t/_{\tthue {E\cup F}} \subseteq  t/_{\tthue E} $.
\end{quote}
 Since $E \subseteq E \cup F
$,  $ t/_{\tthue E} \subseteq  t/_{\tthue {E\cup F}}$.
Thus,  $ t/_{\tthue {E\cup F}}=  t/_{\tthue E} $.

Second, suppose  that  $ t/_{\Theta\langle {E\cup F}, \emptyset \rangle}=
  t/_{\Theta\langle {F, E \rangle} } $. Then by a similar argument as above we can show that
 $ t/_{\tthue {E\cup F}}=  t/_{\tthue F} $.
\end{proof}

 \begin{lem} \label{lennon}  Let $E$ and $F$ be GTESs over  a  signature   $\Sigma$ such that   GTRSs $R\langle E, F\rangle$ and  $R\langle F, E\rangle$ are total.
   Then we can decide in $O(n\,\mathrm{log}\, n)$  time whether
   $\tthue {E\cup F}\subseteq \tthue E \cup \tthue F$.
   \end{lem}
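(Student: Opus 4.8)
The plan is to reduce the decision to a single scan over the constant set $C\langle E\cup F, \emptyset\rangle$, using the characterizations already established for the total case. First I would invoke Lemma \ref{asas}: since both $R\langle E, F\rangle$ and $R\langle F, E\rangle$ are total, $\tthue {E\cup F}\subseteq \tthue E\cup \tthue F$ holds if and only if for each $t\in W$ we have $t/_{\tthue {E\cup F}}= t/_{\tthue E}$ or $t/_{\tthue {E\cup F}}= t/_{\tthue F}$. Then Lemma \ref{abcd} passes from these congruence-class equalities on all of $\ts$ to equalities of the finite restrictions: the preceding condition is equivalent to requiring, for each $t\in W$, that $t/_{\Theta\langle E\cup F, \emptyset \rangle}= t/_{\Theta\langle E, F \rangle}$ or $t/_{\Theta\langle E\cup F, \emptyset \rangle}= t/_{\Theta\langle F, E \rangle}$.

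Second, I would reformulate the last condition as a membership test among the constant sets produced by Algorithm CAD. Writing $a= t/_{\Theta\langle E\cup F, \emptyset \rangle}\in C\langle E\cup F, \emptyset\rangle$ and using $\tthue E\subseteq \tthue {E\cup F}$ (so that $\Theta\langle E, F \rangle$ refines $\Theta\langle E\cup F, \emptyset \rangle$ on $W$, whence $t/_{\Theta\langle E, F \rangle}\subseteq t/_{\Theta\langle E\cup F, \emptyset \rangle}$), the equality $t/_{\Theta\langle E\cup F, \emptyset \rangle}= t/_{\Theta\langle E, F \rangle}$ is equivalent to $a\in C\langle E, F\rangle$, and symmetrically $t/_{\Theta\langle E\cup F, \emptyset \rangle}= t/_{\Theta\langle F, E \rangle}$ is equivalent to $a\in C\langle F, E\rangle$. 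Hence the whole criterion becomes: for each $a\in C\langle E\cup F, \emptyset \rangle$, $a\in C\langle E, F\rangle$ or $a\in C\langle F, E\rangle$. By Statement \ref{dan}, this membership is exactly what is recorded by the pairs of $INC( E\cup F, E)$ and $INC( E\cup F, F)$ (equivalently, by the counters in $NUM( E\cup F, E)$ and $NUM( E\cup F, F)$, or by the attributes $a.equal_E$ and $a.equal_F$ that Algorithm CAD sets, see Lemma \ref{eso}).

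Third, I would bound the running time. Algorithm CAD has already been run in $O(n\, \mathrm{log}\, n)$ time (Proposition \ref{kecskemet}), producing $C\langle E\cup F, \emptyset\rangle$, the red-black trees for $INC( E\cup F, E)$ and $INC( E\cup F, F)$, and the attribute values. The decision itself is a single pass over the at most $n$ constants of $C\langle E\cup F, \emptyset \rangle$ (Statement \ref{darabsz}), checking for each $a$ whether $a.equal_E=\mathrm{true}$ or $a.equal_F=\mathrm{true}$; by Statement \ref{porosz} this takes $O(n)$ time. Adding the preprocessing cost yields the claimed $O(n\, \mathrm{log}\, n)$ bound. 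By Proposition \ref{egyenlo}, the same procedure decides the equality $\tthue {E\cup F}=\tthue E\cup \tthue F$ as well.

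I expect no serious obstacle, since the analytic content is carried entirely by Lemmas \ref{asas} and \ref{abcd}. The only point needing care is the bookkeeping identification in the second step, namely that the equivalence-class equality $t/_{\Theta\langle E\cup F, \emptyset \rangle}= t/_{\Theta\langle E, F \rangle}$ coincides with the set membership $a\in C\langle E, F\rangle$; this rests on $\Theta\langle E, F \rangle$ refining $\Theta\langle E\cup F, \emptyset \rangle$ on $W$ together with Statement \ref{dan}. Once that identification is fixed, the complexity bound is immediate from Statement \ref{porosz}.
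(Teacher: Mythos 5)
Your proposal is correct and follows essentially the same route as the paper: the paper's proof also reduces via the total-case characterization (it uses Lemma \ref{teljes} and Lemma \ref{het}, which together are exactly the content of Lemma \ref{asas} that you invoke) and then Lemma \ref{abcd}, and finishes by checking, via Statement \ref{dan} and a linear scan of the $NUM$ red-black trees, whether each $\Theta\langle E\cup F,\emptyset\rangle$-class is a $\Theta\langle E,F\rangle$-class or a $\Theta\langle F,E\rangle$-class. Your identification of the class equality with the membership $a\in C\langle E,F\rangle$ is the same bookkeeping step the paper performs, so the argument and the $O(n\,\mathrm{log}\,n)$ bound match.
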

   \begin{proof}
   By Lemma \ref{teljes},
  $R\langle E\cup F, \emptyset\rangle$ 
is total as well. Consequently, by Lemma \ref{het},
\begin{quote}
  $\tthue {E\cup F}\subseteq \tthue E \cup \tthue F$
  if and only if  for each $p\in W  $, 
  $\left( p/_{\tthue {E\cup F}}=  p/_{\tthue E} \mbox{ or }
 p/_{\tthue {E\cup F}}=  p/_{\tthue F}\right)$.
\end{quote} Consequently, by Lemma \ref{abcd},
   \begin{quote}
     $\tthue {E\cup F}\subseteq \tthue E \cup \tthue F$
   if and only if
 for each $p\in W $, $ \left( p/_{\Theta\langle {E\cup F}, \emptyset \rangle}=
  p_{\Theta\langle {E, F \rangle} } \mbox{  or  }
   p/_{\Theta\langle {E\cup F}, \emptyset \rangle}=  p/_{\Theta\langle {F}, E \rangle}\right)$.
  \end{quote}
  Justified by the above observations and by Statement \ref{dan}, we compute as follows.
  Recall that  $NUM( E\cup F, E)$ stored in a red-black tree   $RBT3$, 
 and $NUM(  E\cup F, F)$ stored in a red-black tree    $RBT4$.
  Alternately in a coordinated way, we traverse the red-black trees $RBT3$ and $RBT4$ 
in in-order  traversal,  
For each    ${b}\in C\langle E\cup F, \emptyset\rangle$, we read the pair  $({b}, k)$, where 
$ k=|\{ {c} \in C\langle E, F\rangle \mid    {c}\subseteq {b}\, \}|$, and the pair 
 $({b}, \ell)$, where
$ \ell=|\{ {c} \in C\langle F, E\rangle \mid    {c}\subseteq {b}\, \}|$.
 In this way, 
  we decide in    $O(n)$ 
 time whether each $\Theta\langle E\cup F, \emptyset \rangle$  equivalence    class is equal to a 
 $\Theta\langle E, F \rangle$ equivalence    class or a $\Theta\langle F, E \rangle$ equivalence    class, see Definition \ref{szeder}. 
    If the answer is \enquote{yes}, then  $\tthue {E\cup F}\subseteq \tthue E \cup \tthue F$, otherwise, $\tthue {E\cup F}\not\subseteq \tthue E \cup \tthue F$.
\end{proof}

  \section{Main Case 3}\label{osz} 
 In this main case, $E$ and $F$ are  GTESs over  a  signature   $\Sigma$ such that  
 $\S$ has a symbol of arity at least $2$, and $ R \langle E, F \rangle$ or   $ R \langle F, E\rangle$ are total. 
Without  loss of generality, we may assume that  $ R \langle E, F \rangle$  is total from now on 
   throughout this section. 
  Example \ref{5uborka} and its sequels, Examples \ref{5uborkafolytatas},  \ref{5uborkamasodikfolytatas}, \ref{5uborkaharmadikfolytatas}, 
\ref{5uborkanegyedikfolytatas}, and \ref{5uborkaotodikfolytatas} plus
 Example \ref{6korte}, and its sequels,  Examples \ref{6kortefolytatas}, \ref{6kortemasodikfolytatas}, \ref{6korteharmadikfolytatas},  
 \ref{6kortenegyedikfolytatas}, and 
\ref{6korteotodikfolytatas} present this case.  We assume that we have already run  Algorithm CAD  on $\Sigma$, $E$, and $F$,  and have obtained all the output it produces, see Proposition \ref{kecskemet}. 
First we note that 
  we can also apply the decision algorithm of   Champav{\`{e}}re  et al \cite{franciak}.
    \begin{thm}\label{3maincase33} 
 Let $\S$ be a  signature having a symbol of arity at least $2$,  and let $E$ and $F$ be GTESs over   $\Sigma$ such that  
 GTRS $R\langle E, F\rangle$ or GTRS $R\langle F, E\rangle$ are  total. 
 We can decide in $O(n^3)$ time whether
  \begin{quote}
  $\tthue {E\cup F}=\tthue E \cup \tthue F$.
  \end{quote}
 \end{thm}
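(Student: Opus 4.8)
The plan is to follow the same two-step template used for Main Cases 1 and 2 (Theorems \ref{1maincase1eldont} and \ref{2maincase22}): first reduce the equality $\tthue {E\cup F}=\tthue E\cup\tthue F$ to a statement about the congruence classes of the generators lying in $W$, and then decide that statement with the bottom-up tree automata machinery. Since by Proposition \ref{egyenlo} the four formulations there are equivalent, it suffices throughout to decide the inclusion $\tthue {E\cup F}\subseteq\tthue E\cup\tthue F$.

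First I would exploit the totality hypothesis. By assumption at least one of $R\langle E, F\rangle$ and $R\langle F, E\rangle$ is total; since $E\cup F=F\cup E$, and hence $R\langle E\cup F,\emptyset\rangle=R\langle F\cup E,\emptyset\rangle$, Lemma \ref{teljes}, applied with the roles of $E$ and $F$ interchanged if necessary, shows that $R\langle E\cup F,\emptyset\rangle$ is total. This is precisely the hypothesis of Lemma \ref{het}, so I may invoke it to obtain the characterization
\begin{quote}
$\tthue {E\cup F}\subseteq\tthue E\cup\tthue F$ if and only if for each $p\in W$, $\left(p/_{\tthue {E\cup F}}=p/_{\tthue E}\ \mbox{ or }\ p/_{\tthue {E\cup F}}=p/_{\tthue F}\right)$.
\end{quote}
I would also note that the standing assumption that $\Sigma$ has a symbol of arity at least $2$ is not actually needed for this $O(n^3)$ argument; it only serves to delimit Main Case 3 from the others and will be used later for the faster, dedicated algorithm.

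Next I would decide the right-hand condition. Proposition \ref{franciaeldontsimple} states exactly that we can decide in $O(n^3)$ time whether for each $p\in W$ we have $p/_{\tthue {E\cup F}}=p/_{\tthue E}$ or $p/_{\tthue {E\cup F}}=p/_{\tthue F}$, by representing the relevant classes as tree languages recognized by the deterministic bottom-up tree automata of Proposition \ref{mezo} and applying the inclusion test of Proposition \ref{erdo}. Combining this with the equivalence above and with Proposition \ref{egyenlo} yields the desired $O(n^3)$ decision procedure for $\tthue {E\cup F}=\tthue E\cup\tthue F$.

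The argument carries essentially no new obstacle, because all the difficult work has already been packaged into the preparatory results: the entire content is the reduction to class equality, which rests on the totality of $R\langle E\cup F,\emptyset\rangle$. The only point demanding a little care is the use of \emph{without loss of generality}, i.e.\ verifying that Lemma \ref{teljes} still applies when it is $R\langle F, E\rangle$ rather than $R\langle E, F\rangle$ that is total; this is immediate from the symmetry $E\cup F=F\cup E$ in the definition of $R\langle E\cup F,\emptyset\rangle$, so no separate case analysis is required.
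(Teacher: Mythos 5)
Your proposal is correct and follows essentially the same route as the paper's own proof: Lemma \ref{teljes} (applied symmetrically if it is $R\langle F, E\rangle$ that is total) gives totality of $R\langle E\cup F,\emptyset\rangle$, then Lemma \ref{het}, Proposition \ref{franciaeldontsimple}, and Proposition \ref{egyenlo} combine exactly as you describe. Your added remark justifying the \emph{without loss of generality} step via $E\cup F=F\cup E$ is a small but welcome clarification that the paper leaves implicit.
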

 \begin{proof}
 By Lemma \ref{teljes}, $ R\langle E\cup F, \emptyset \rangle$ is total as well. Then by Lemma \ref{het} and Proposition \ref{franciaeldontsimple}, and Proposition \ref{egyenlo}  we have the result.
 \end{proof}
  We present a faster and simpler decision algorithm.

 \subsection{Partial Depth First Search  on the Auxiliary  dpwpa  }

     We now define a partial depth first search on the auxiliary dpwpa   
$AUX[ E;  F] =
(C\langle E \cup F, \emptyset\rangle, A[E; F])$, which is
 a restricted and slightly modified verison of the depth first search  on graphs \cite{cormen}.
We  start the partial depth first search 
from the elements  of $C(E\cup F, \emptyset)\setminus C(E, F)$.
 \begin{df}\label{lo} \rm Let   $\Sigma$ be a signature such that 
   $\S_k\neq \emptyset$ for some $k\geq 2$, and let  $E$ and $F$ be GTESs over   $\Sigma$. \newline
   Algorithm Partial Depth First Search on Auxiliary dpwpa $AUX[ E;  F]  $  (PDFS for short){\rm \newline
 Input: the auxiliary dpwpa  $ AUX[ E;  F] $.
 Output: Boolean. \newline
\{ Output is true if GTRS $R\langle E, F \rangle$ and GTRS $R\langle F, E \rangle$ together  simulate the  GTRS $R\langle E \cup F, \emptyset \rangle$, false otherwise. \}\newline
{\bf var}      ${b}: C \langle E \cup F, \emptyset \rangle$;   
\begin{lstlisting}
(*$\mathrm{PDFS}( G \langle  E, F \rangle): \; \mathrm{Boolean}$*) 
  (*{\bf for each} vertex    ${b}\in C\langle E \cup F, \emptyset\rangle$ {\bf do} *) 
    (*{\bf begin}*)
      (*{\bf if} {\bf not}     ${b}.equal_E$ {\bf and}  {\bf not}     ${b}.equal_F$*)
        (*{\bf then} {\bf return}\; $\mathrm{false}$*)
      (*\{\, if    ${b}\not\in C\langle  E, F \rangle $ \mbox{ and }   ${b}\not \in C\langle  F, E \rangle $ \mbox{ then GTRS } $R\langle E, F \rangle$ \mbox{ and GTRS } $R\langle F, E \rangle$ \mbox{ together do not  simulate the  GTRS }  $R\langle E \cup F, \emptyset \rangle$. \}*)
      (*   ${b}.visited:=\mathrm{false}$*)
    (*{\bf end}*) 
  (*{\bf for each} vertex    ${b}\in C\langle E \cup F, \emptyset\rangle$ {\bf do}*) 
    (*{\bf if} {\bf not}    ${b}.equal_E$ and {\bf not}    ${b}.visited$*)   
    (*\{\ if       ${b}\in C\langle E \cup F, \emptyset\rangle\setminus C(E, F)$ and    ${b}$ has not been visited yet \}*)
      (*   ${\bf then }$    ${\bf for}$    ${\bf  each}$ vertex    ${c}\in Adj({b})$ \bf{do}*)
         (*   ${\bf if }$ {\bf not}    ${c}.visited$*)  
           (*   ${\bf then}$ if {\bf not} $\mathrm{PDFS-VISIT}( AUX[ E;  F] , {c})$ {\bf then}*) 
             (*   ${\bf return}\; \mathrm{false}$*)  
  (*$ {\bf output}(\mathrm{true})$*)        
                        

                    
(*   ${\bf function} \;\mathrm{PDFS-VISIT}(AUX[ E;  F] , {b}): \; \mathrm{Boolean}$*)    
  (*   ${b}.visited:=\mathrm{true}$;*)
  (*{\bf if}    ${b}.keeps_F=\mathrm{true}$*)   
    (*   ${\bf then }$ {\bf begin}*)
      (*{\bf for} {\bf  each} vertex    ${c}\in Adj({b})$ \bf{do}*)
        (*   ${\bf if }$ {\bf not}    ${c}.visited$*) 
          (*   ${\bf then}$    ${\bf if}$    ${\bf not}$ $\mathrm{PDFS-VISIT}(AUX[ E;  F] , {c})$*) 
            (*   ${\bf then}$    ${\bf return}$ $ \mathrm{false}$*)
    (*{\bf end}*)   
  (*   ${\bf else }\; {\bf return}\; \mathrm{false}$*)
  (*$ {\bf output}(\mathrm{true})$*)   
 \end{lstlisting}}
\end{df}
When  Algorithm PDFS starts the search from a vertex    ${b}\in C\langle E \cup F, \emptyset\rangle\setminus C(E, F)$, it does not set the attribute    ${b}.visited $ to true. 
 Consequently if for this  vertex    ${b}\in C\langle E \cup F, \emptyset\rangle\setminus C(E, F)$, there exists    a cycle leading from the vertex    ${b}$ to itself, 
   then  Algorithm PDFS visits vertex    ${b}$ twice. 
 
We run Algorithm PDFS on the auxiliary dpwpa  $ AUX[ E;  F] $. We shall show that PDFS   on $AUX[ E;  F] $ outputs true  if and only if 
GTRS $R\langle E, F \rangle$ and GTRS $R\langle F, E \rangle$ 
together  simulate the  GTRS $R\langle E \cup F, \emptyset \rangle$.

\begin{lem}\label{gezenguz}{\em \cite{cormen}}   Let $\S$ be a  signature   such that
   $\S_k\neq \emptyset$ for some $k\geq 2$,   and let 
 $E$ and $F$ be  GTESs over   $\Sigma$ such that $R\langle E, F\rangle$ is total.
  Let    ${b}\in  C\langle E\cup F, \emptyset \rangle  \setminus  C\langle E, F \rangle$.

(i)  During the run of Algorithm PDFS on $AUX[ E;  F] $, Function $\mathrm{PDFS-VISIT}$ on $( AUX[ E;  F] , {b})$ 
 visits at most once those vertices    ${c} \in C\langle E \cup F, \emptyset\rangle $ such that
 there is a path of positive length or a  cycle  from 
     ${b}$  to    ${c}$. 
  
  (ii) During the run of Algorithm PDFS on $AUX[ E;  F] $, Function $\mathrm{PDFS-VISIT}$ visits at most once  those vertices    ${c} \in C\langle E \cup F, \emptyset\rangle $ such that
 there is a path of positive length or a  cycle  from some vertex
     ${b}  \in C\langle E\cup F, \emptyset \rangle\setminus C\langle E, F \rangle $ to    ${c}$. 
\end{lem}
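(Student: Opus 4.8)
The plan is to reduce both parts to the standard depth-first search invariant, namely that the Boolean attribute $visited$ prevents any vertex from being handed to the recursive visiting routine more than once. First I would extract two structural facts straight from Definition \ref{lo}: (1) the very first statement executed by $\mathrm{PDFS-VISIT}(AUX[E;F], c)$ is $c.visited := \mathrm{true}$, and this attribute is never reset to $\mathrm{false}$ during a run, since the only assignments $c.visited := \mathrm{false}$ occur in the initialization loop of Algorithm PDFS, which runs once before any search session begins; and (2) every recursive call $\mathrm{PDFS-VISIT}(AUX[E;F], b)$ is guarded by a test $\mathbf{not}\ b.visited$. Combining (1) and (2) yields the monotonicity I need: once $\mathrm{PDFS-VISIT}$ has been invoked on a vertex, that vertex's flag stays $\mathrm{true}$ for the rest of the run, so no subsequent invocation on it can pass the guard.

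For part (i), I would note that the only vertices on which $\mathrm{PDFS-VISIT}$ is invoked during the session rooted at the start vertex $a$ are those reached by following a nonempty chain of $Adj$-arcs out of $a$. By the definition of $A[E;F]$ in Definition \ref{inaskodik} and the notion of positive-step reachability recalled in Section \ref{prel}, these are exactly the vertices $b$ for which there is a path of positive length or a cycle from $a$ to $b$. The monotonicity fact then shows each such $b$ is the argument of at most one $\mathrm{PDFS-VISIT}$ call within the session. The single delicate point is the start vertex $a$ itself: the outer loop of Algorithm PDFS deliberately does \emph{not} set $a.visited := \mathrm{true}$ before launching the session (this is the feature noted just after Definition \ref{lo}, which lets a cycle through $a$ be detected). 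I would verify that this does not violate the claim, since $a$ can be revisited only through a call $\mathrm{PDFS-VISIT}(AUX[E;F], a)$, and the self-marking step (1) together with the guard (2) still force at most one such call.

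For part (ii), I would promote the same argument from one session to the whole run. Because the initialization loop clears all $visited$ flags exactly once, before the session-launching loop, the flags are shared and only grow across sessions. Hence if $b$ is positive-step reachable from some start vertex $a\in C\langle E\cup F,\emptyset\rangle\setminus C\langle E,F\rangle$, the first session that reaches $b$ sets $b.visited := \mathrm{true}$, and the guard then blocks every later invocation of $\mathrm{PDFS-VISIT}$ on $b$, both in that session and in all subsequent ones. Additionally, the outer loop launches a session from a start vertex only when that vertex is still unvisited, so no start vertex initiates two sessions. This gives the global ``at most once'' bound, and since $\mathrm{PDFS-VISIT}$ is only ever reached along $Adj$-arcs, the vertices it touches are precisely those reachable by a positive-length walk from some start vertex.

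I expect the only genuine obstacle to be the bookkeeping around start vertices, exactly because, by design, such a vertex is left unmarked when its session opens; I would therefore take care to distinguish ``examined by the outer loop'' from ``passed to $\mathrm{PDFS-VISIT}$,'' since the lemma counts only the latter. Everything else is the routine depth-first search analysis of \cite[Section~22.3]{cormen}, so no new computation is required.
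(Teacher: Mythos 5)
Your proof is correct and follows exactly the route the paper intends: the paper disposes of part (i) with the phrase ``by direct inspection of Definition \ref{lo}'' and derives (ii) from (i), and your write-up is precisely that inspection carried out in full — the one-time initialization of the $visited$ flags, the self-marking at the top of $\mathrm{PDFS-VISIT}$, the guard on every recursive call, and the careful distinction between the start vertex being examined by the outer loop versus being passed to $\mathrm{PDFS-VISIT}$ (which is the point the paper itself flags in the remark following Definition \ref{lo}). Nothing is missing; your version is simply more explicit than the paper's.
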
 
\begin{proof} We obtain Statement (i) by direct inspection of Definition \ref{lo}. Statement (i) implies Statement (ii).
\end{proof}

\begin{prop}\label{fut}{\em \cite{cormen}}  Let $\S$ be a  signature   such that
   $\S_k\neq \emptyset$ for some $k\geq 2$,   and let 
 $E$ and $F$ be  GTESs over   $\Sigma$ such that $R\langle E, F\rangle$ is total.
 Algorithm PDFS  runs on $AUX[ E;  F] $  in $O(n)$ time.
 \end{prop}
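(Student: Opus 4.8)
The plan is to bound the running time of Algorithm PDFS by recognizing that, unlike Algorithm NPDFS, it is essentially a \emph{single} depth first search over the auxiliary dpwpa rather than a nested one, so the standard $O(|V| + |A|)$ bound for depth first search applies directly. First I would recall that $n = size(E) + size(F)$, and fix the two size bounds on the dpwpa $ AUX[ E;  F] = (C\langle E\cup F, \emptyset\rangle, A[E; F])$: by Statement \ref{darabsz} the vertex set satisfies $|C\langle E\cup F, \emptyset\rangle| \leq |W| \leq n$, and by Statement \ref{iszap} the arc set satisfies $|A[E; F]| \leq n$. These two facts together mean that the quantity $|C\langle E\cup F, \emptyset\rangle| + |A[E; F]|$, which governs the cost of a depth first search on $AUX[E;F]$ by \cite[Section~22.3]{cormen}, is $O(n)$.

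Next I would account for the three structural pieces of Definition \ref{lo} one by one. The first \textbf{for} loop performs the $\mathrm{equal}_E$/$\mathrm{equal}_F$ membership test and initializes the attribute $a.visited$ for every vertex $a\in C\langle E\cup F, \emptyset\rangle$; this is a single pass over the vertex set and costs $O(|C\langle E\cup F, \emptyset\rangle|) = O(n)$. The second \textbf{for} loop together with the recursive calls to $\mathrm{PDFS\text{-}VISIT}$ constitutes a partial depth first search: it starts search sessions only from the vertices in $C\langle E\cup F, \emptyset\rangle \setminus C\langle E, F\rangle$, and it may terminate before exploring all reachable vertices once it encounters a vertex $b$ with $b.keeps_F = \mathrm{false}$. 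The central reason this stays within the depth first search budget is Lemma \ref{gezenguz}(ii): $\mathrm{PDFS\text{-}VISIT}$ visits at most once each vertex reachable by a path of positive length or a cycle from a start vertex, so across the whole run each vertex and each arc is processed a bounded number of times.

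The one point that requires care — and the step I expect to be the main (though modest) obstacle — is the remark following Definition \ref{lo}: since $\mathrm{PDFS}$ does not set $a.visited := \mathrm{true}$ for a start vertex $a\in C\langle E\cup F, \emptyset\rangle\setminus C\langle E, F\rangle$ before iterating over $Adj(a)$, such a start vertex can be visited a second time when a cycle leads from $a$ back to itself. I would argue that this double visiting inflates the total number of vertex visits only additively: it can happen at most once per start vertex, so the overall number of visits is at most $2\cdot|C\langle E\cup F, \emptyset\rangle| \leq 2n$, still $O(n)$, and each visit inspects the corresponding adjacency list, whose total length over all vertices is $|A[E;F]| \leq n$.

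Combining these observations, the initialization loop, the outer search loop, and all recursive $\mathrm{PDFS\text{-}VISIT}$ calls each cost $O(n)$, and the total work is proportional to $O(|C\langle E\cup F, \emptyset\rangle| + |A[E; F]|) = O(n)$. Hence Algorithm PDFS runs on $AUX[E;F]$ in $O(n)$ time, which completes the argument. I would keep the writeup parallel to the proof of Lemma \ref{supportfut}, simply omitting the extra multiplicative factor that arises there from the nested inner loops, since PDFS has no nested search.
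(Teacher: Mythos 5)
Your proposal is correct and follows essentially the same route as the paper's proof: invoke the standard $O(|V|+|A|)$ bound for depth first search and then bound $|C\langle E\cup F,\emptyset\rangle|$ and $|A[E;F]|$ by $n$ via the size statements. The extra care you take with the initialization loop and the possible second visit to a start vertex on a cycle is a welcome refinement (it only changes the constant factor), but the paper dispenses with it by citing the DFS bound directly.
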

\begin{proof} 
The running time of
PDFS is $O(|C\langle E \cup F, \emptyset\rangle | + |A[E; F]|)$ \cite{cormen}. Recall that  $n=\mathrm{size}(E)+\mathrm{size}(F)$. 
By Definition \ref{szeder} and Statement \ref{drrszam}, 
 $|C\langle E \cup F, \emptyset\rangle | \leq |W|\leq n$.   
  By Statement \ref{iszap}  
  $  | A[E; F]|\leq 2n$. Hence $|C\langle E \cup F, \emptyset\rangle | + |A[E; F]|\leq 3 n$. Then   
   we have the proposition.
 \end{proof}

  \begin{exa}\label{5uborkanegyedikfolytatas}\rm
  We now continue Examples  \ref{5uborka}, \ref{5uborkafolytatas},  \ref{5uborkamasodikfolytatas}, and \ref{5uborkaharmadikfolytatas}. First 
  Algorithm PDFS decides whether for each vertex    ${b}\in C\langle  E \cup F, \emptyset \rangle $, it holds that 
         $ {b}\in C\langle  E, F \rangle $ or  
       $ {b}\in C\langle  F, E \rangle $, that is, $\left({b}.equal_E=\mathrm{true} \mbox{  or } a.equal_F=\mathrm{true}\right)$. 
            Recall that $ \#/_{\Theta\langle E \cup F, \emptyset \rangle}.equal_E=\mathrm{false}$ and 
       $ \#/_{\Theta\langle E \cup F, \emptyset \rangle}.equal_F=\mathrm{false}$.  As a result  
       \begin{quote}
       $\left(\#/_{\Theta\langle E \cup F, \emptyset \rangle}.equal_E=\mathrm{false}\;\, \mbox{and} \;\, \#/_{\Theta\langle E \cup F, \emptyset \rangle}.equal_F=\mathrm{false}
       \right)=\mathrm{true}$.
 \end{quote}    
    Consequently, Algorithm PDFS outputs false. 
 \end{exa}

  \begin{exa}\label{6kortenegyedikfolytatas}\rm
   We now continue Examples \ref{6korte},  \ref{6kortefolytatas},  \ref{6kortemasodikfolytatas}, and \ref{6korteharmadikfolytatas}.
    Recall that there exist  one path of positive length and two cycles in $AUX[ E;  F] $.
        
 The first cycle:  Its length is $1$, and it  starts from the  vertex    
   $\#/_{\Theta\langle E\cup F  ,\emptyset \rangle}$ and leads to $\#/_{\Theta\langle E\cup F  ,\emptyset \rangle}$. By Lemma \ref{rozmaring}, 
  $ R \langle E\cup F, \emptyset \rangle$ reaches   
   $\#/_{\Theta\langle E\cup F  ,\emptyset \rangle}$ from   a proper extension    of the constant  $\#/_{\Theta\langle E\cup F  ,\emptyset \rangle}$.
 Recall that 
 \begin{quote}   
 $\#/_{\Theta\langle E \cup F, \emptyset \rangle}.equal_E=\mathrm{false}$,
    $\#/_{\Theta\langle E \cup F, \emptyset \rangle}.equal_F =\mathrm{true}$, and
      $\#/_{\Theta\langle E \cup F, \emptyset \rangle}.keeps_F =\mathrm{true}$.   
 \end{quote}

 The path of positive length:  Its length is $1$, and it  starts from the  vertex    
   $\$/_{\Theta\langle E\cup F  ,\emptyset \rangle}$ and leads to   $\#/_{\Theta\langle E\cup F  ,\emptyset \rangle}$.  By Lemma \ref{rozmaring}, 
     $ R \langle E\cup F, \emptyset \rangle$ reaches  $\$/_{\Theta\langle E\cup F  ,\emptyset \rangle}$ from  a proper extension  of the constant  
  $\#/_{\Theta\langle E\cup F  ,\emptyset \rangle}$.   Recall that 
 \begin{quote}    
$\$/_{\Theta\langle E \cup F, \emptyset \rangle}.equal_E=\mathrm{true}$,
    $\$/_{\Theta\langle E \cup F, \emptyset \rangle}.equal_F =\mathrm{false}$, and 
      $\#/_{\Theta\langle E \cup F, \emptyset \rangle}.keeps_F =\mathrm{true}$.   
 \end{quote}

 The second cycle:   Its length is $1$, and it  starts from the  vertex    
   $\$/_{\Theta\langle E\cup F  ,\emptyset \rangle}$ and leads to  $\$/_{\Theta\langle E\cup F  ,\emptyset \rangle}$.  By Lemma \ref{rozmaring}, 
  $ R \langle E\cup F, \emptyset \rangle$ reaches      $\$/_{\Theta\langle E\cup F  ,\emptyset \rangle}$
   from    a proper extension    of the constant  
      $\$/_{\Theta\langle E\cup F  ,\emptyset\rangle}$.  Recall that 
   \begin{quote} 
    $\$/_{\Theta\langle E \cup F, \emptyset \rangle}.equal_E=\mathrm{true}$, 
    $\$/_{\Theta\langle E \cup F, \emptyset \rangle}.equal_F =\mathrm{false}$, and 
      $\$/_{\Theta\langle E \cup F, \emptyset \rangle}.keeps_F =\mathrm{false}$.   
 \end{quote}
        First Algorithm PDFS decides whether for each    ${b}\in C\langle  E \cup F, \emptyset \rangle $, it holds that 
         $ {b}\in C\langle  E, F \rangle $ or  
       $ {b}\in C\langle  F, E \rangle $, that is, $\left({b}.equal_E=\mathrm{true} \mbox{  or } a.equal_F=\mathrm{true}\right)$.
            Observe that   
     $\#/_{\Theta\langle E \cup F, \emptyset \rangle}.equal_E=\mathrm{false}$ and 
    $\#/_{\Theta\langle E \cup F, \emptyset \rangle}.equal_F =\mathrm{true}$.  
     Thus
    \begin{quote}
    $\left(\#/_{\Theta\langle E \cup F, \emptyset \rangle}.equal_E \;  \mbox{or} \; \#/_{\Theta\langle E \cup F, \emptyset \rangle}.equal_F  \right)=\mathrm{true}$.
 \end{quote}

  Observe that   
     $\$/_{\Theta\langle E \cup F, \emptyset \rangle}.equal_E=\mathrm{true}$ and 
    $\$/_{\Theta\langle E \cup F, \emptyset \rangle}.equal_F =\mathrm{false}$.  
     As a result
    \begin{quote}
    $\left(\$/_{\Theta\langle E \cup F, \emptyset \rangle}.equal_E \;   \mbox{or} \; \$/_{\Theta\langle E \cup F, \emptyset \rangle}.equal_F \right)=\mathrm{true}$.
 \end{quote} 
            Algorithm PDFS examines the vertices of $ AUX[ E;  F] $, it looks for  vertices from which it can start  the search. 
           To this end,  Algorithm PDFS decides  whether for each    ${b}\in C\langle  E \cup F, \emptyset \rangle $, it holds that 
         $ {b}\in C\langle  E, F \rangle $ or   $ {b}\in C\langle  F, E \rangle $, that is,  $\left({b}.equal_E=\mathrm{true} \mbox{  or } a.equal_F=\mathrm{true}\right)$.
        First,  Algorithm PDFS examines the  vertex $\#/_{\Theta\langle E \cup F, \emptyset \rangle}$. It observes that
        $\#/_{\Theta\langle E \cup F, \emptyset \rangle}.equal_E=\mathrm{false}$
        and     $\#/_{\Theta\langle E \cup F, \emptyset \rangle}.equal_F=\mathrm{true}$. Then         
          Algorithm PDFS starts the search  from  $\#/_{\Theta\langle E \cup F, \emptyset \rangle}$.
It goes along the   cycle   $(\#/_{\Theta\langle E \cup F, \emptyset \rangle}, \#/_{\Theta\langle E \cup F, \emptyset \rangle})$. 
  Algorithm PDFS  observes that  $\#/_{\Theta\langle E \cup F, \emptyset \rangle}.keeps_F =\mathrm{true}$.
Second,  Algorithm PDFS examines the  vertex $\$/_{\Theta\langle E \cup F, \emptyset \rangle}$.
Since $\$/_{\Theta\langle E \cup F, \emptyset \rangle}.equal_E=\mathrm{true}$,  Algorithm PDFS does not start the search starting  from $\$/_{\Theta\langle E \cup F, \emptyset \rangle}$.   
       Consequently, Algorithm PDFS outputs true.
   \end{exa}

\subsection{Decision Algorithm and its verification}
      We present our decision algorithm and show its correctnes.

\begin{lem}\label{kert}
  Let $\S$ be a  signature   such that
   $\S_k\neq \emptyset$ for some $k\geq 2$.
    Let $E$ and $F$ be GTESs over  $\Sigma$ such that  
    GTRS     $R\langle E, F\rangle$   is total. 
 Then the following conditions are pairwise equivalent: 
  \begin{itemize}
  \item[(i)]
  $\tthue {E\cup F}\subseteq \tthue E \cup \tthue F$, 
    \item[(ii)]
 for each $t\in W  $, 
 $\left( 
     t/_{\tthue {E\cup F}}=  t/_{\tthue E}  \mbox{  or } 
t/_{\tthue {E\cup F}}=  t/_{\tthue F}\right) $.

\item[(iii)]
   For each $t\in W  $, 
    $\left( t/_{\Theta\langle E\cup F, \emptyset \rangle}= t/_{\Theta\langle E, F \rangle}\mbox{  or } t/_{\Theta\langle E\cup F, \emptyset \rangle}= t/_{\Theta\langle F, E \rangle} \right) $.

 \item[(iv)]  GTRS $R\langle E, F \rangle$ and GTRS $R\langle F, E \rangle$ 
together  simulate the  GTRS $R\langle E \cup F, \emptyset \rangle$.
 
  \item[(v)]  
    for each    ${b}\in C\langle  E \cup F, \emptyset \rangle $, 
   \begin{itemize}
   \item
     $ {b}\in C\langle  E, F \rangle $ or  
       $ {b}\in C\langle  F, E \rangle $  and 
      \item if 
        ${b}\not \in C\langle  E, F \rangle $, then  
  for each    ${c} \in C\langle  E \cup F, \emptyset \rangle $ if there is a  path of positive length or a cycle     from  
     ${b}$ to    ${c}$
 in the auxiliary dpwpa   
$ AUX[ E;  F] =
(C\langle E \cup F, \emptyset\rangle,A[E; F])$, then  $ R \langle F, E \rangle$  keeps up with  $ R \langle E \cup  F, \emptyset  \rangle$  writing    ${c} $.
\end{itemize}

 \item[(vi)]     for each    ${b}\in C\langle  E \cup F, \emptyset \rangle $, 
   \begin{itemize}
  \item
        ${b}.equal_E=\mathrm{true} $ or      ${b}.equal_F=\mathrm{true} $ 
     and 
      \item if 
        ${b}\not \in C\langle  E, F \rangle $, then    
    for each    ${c} \in C\langle  E \cup F, \emptyset \rangle $ if there is a path of positive length or a  cycle       from  
     ${b}$ to    ${c}$
 in the auxiliary dpwpa   
$AUX[ E;  F] =
(C\langle E \cup F, \emptyset\rangle, A[E; F])$, then  
   ${c}.keeps_F=\mathrm{true}$.
\end{itemize}

\end{itemize}
\end{lem}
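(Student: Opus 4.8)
The plan is to establish the six conditions as pairwise equivalent by first building an ``easy core'' chain among (i), (ii), (iv), (v), (vi), and only then wedging in (iii). The standing reduction I would record at the outset is that, since $R\langle E, F\rangle$ is total, Lemma~\ref{teljes} makes $R\langle E\cup F, \emptyset\rangle$ total as well; this is what unlocks Lemma~\ref{het} and Lemma~\ref{duna} below.

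For the easy core I would argue $\mathrm{(i)}\Leftrightarrow\mathrm{(ii)}$ by Lemma~\ref{het} (using the totality of $R\langle E\cup F, \emptyset\rangle$), then $\mathrm{(ii)}\Leftrightarrow\mathrm{(iv)}$ directly by Lemma~\ref{sportos}. For $\mathrm{(iv)}\Leftrightarrow\mathrm{(v)}$ I would invoke Lemma~\ref{telex}: because $R\langle E, F\rangle$ is total, Statement~\ref{hasznos} lets me replace ``$R\langle E, F\rangle$ simulates $R\langle E\cup F, \emptyset\rangle$ on $a$'' by the plain membership $a\in C\langle E, F\rangle$, so (iv) reads ``for each $a$, $a\in C\langle E, F\rangle$, or $a\in C\langle F, E\rangle$ together with the keeps-up/path condition for $R\langle F, E\rangle$''; a single application of the distributive law $P\vee(Q\wedge R)=(P\vee Q)\wedge(P\vee R)$ turns this into exactly the two-bullet form of (v), where Lemma~\ref{rozmaring} supplies the translation between reachability of $a$ from a proper extension of $b$ and the path/cycle statement, and Definition~\ref{lepesttart} names the keeps-up condition. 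Finally $\mathrm{(v)}\Leftrightarrow\mathrm{(vi)}$ is a pure translation of set memberships into attribute values: by Lemma~\ref{eso}, Algorithm~CAD sets $a.equal_E$, $a.equal_F$ true exactly when $a\in C\langle E, F\rangle$, $a\in C\langle F, E\rangle$, and $b.keeps_F$ true exactly when $R\langle F, E\rangle$ keeps up with $R\langle E\cup F, \emptyset\rangle$ writing $b$.

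To wedge in (iii), the implication $\mathrm{(ii)}\Rightarrow\mathrm{(iii)}$ is immediate: intersecting $t/_{\tthue{E\cup F}}=t/_{\tthue E}$ (respectively $=t/_{\tthue F}$) with $W\times W$ and using Definition~\ref{alfa} yields $t/_{\Theta\langle E\cup F, \emptyset\rangle}=t/_{\Theta\langle E, F\rangle}$ (respectively $=t/_{\Theta\langle F, E\rangle}$). The substantive direction is $\mathrm{(iii)}\Rightarrow\mathrm{(ii)}$, which I would prove pointwise. Fix $t\in W$ and put $a=t/_{\Theta\langle E\cup F, \emptyset\rangle}$. If $a\in C\langle E, F\rangle$, i.e. $t/_{\Theta\langle E\cup F, \emptyset\rangle}=t/_{\Theta\langle E, F\rangle}$, then Lemma~\ref{duna}, whose hypothesis is precisely the totality of $R\langle E, F\rangle$, gives $t/_{\tthue{E\cup F}}=t/_{\tthue E}$ and this half is finished.

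The remaining case $a\in C\langle F, E\rangle\setminus C\langle E, F\rangle$ is where I expect the main obstacle. I must conclude $t/_{\tthue{E\cup F}}=t/_{\tthue F}$ although $R\langle F, E\rangle$ is not assumed total, so no verbatim $F$-analogue of Lemma~\ref{duna} is available. The plan is to exploit that (iii) holds at \emph{every} class, not just at $a$: given $s$ with $s\tthue{E\cup F} t$, totality of $R\langle E\cup F, \emptyset\rangle$ (Lemma~\ref{ezredik}) yields $s\tred{R\langle E\cup F, \emptyset\rangle}a$, and I would show this reduction can be mirrored by $R\langle F, E\rangle$ down to a subclass of $a$, whence $s\tthue F t$ by Propositions~\ref{harmadik} and~\ref{szazadik}. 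The mirroring is exactly the simulation machinery (Lemmas~\ref{banan}, \ref{trillio}, \ref{holnap}, \ref{hurnegyszog}, together with Lemma~\ref{rozmaring}); the delicate point is verifying its keeps-up/path hypothesis along the vertices reachable from the pure-$F$ class $a$, i.e. that global membership forces $R\langle F, E\rangle$ to keep up when writing each such $b$. Concretely, I expect the crux to be ruling out an argument class $b_i$ that is a single $E$-class yet splits into several $F$-classes while lying on a path emanating from $a$, and I would resolve it by combining the reachability description of Lemma~\ref{rozmaring} with the totality of $R\langle E, F\rangle$ and the hypothesis (iii) holding everywhere. Once $\mathrm{(iii)}\Rightarrow\mathrm{(ii)}$ is secured, (iii) joins the easy core and all six conditions are pairwise equivalent.
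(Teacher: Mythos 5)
Your chain (i) $\Leftrightarrow$ (ii) $\Leftrightarrow$ (iv) $\Leftrightarrow$ (v) $\Leftrightarrow$ (vi), built from Lemmas \ref{teljes}, \ref{het}, \ref{sportos}, \ref{telex} (with Statement \ref{hasznos} and the distributive rewriting of the disjunction) and \ref{eso}, is correct and uses essentially the same citations as the paper for those links, and (ii) $\Rightarrow$ (iii) by intersecting with $W\times W$ is fine. The only structural difference is where (iii) is attached: the paper inserts (ii) $\Leftrightarrow$ (iii) via Lemma \ref{duna} and then reads Lemma \ref{sportos} as giving (iii) $\Leftrightarrow$ (iv), whereas you attach (iii) at the end and must prove (iii) $\Rightarrow$ (ii) yourself.

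That last step is where there is a genuine gap. In the case $t/_{\Theta\langle E\cup F, \emptyset\rangle}\in C\langle F, E\rangle\setminus C\langle E, F\rangle$ you correctly observe that no $F$-analogue of Lemma \ref{duna} is available because $R\langle F, E\rangle$ need not be total, but the repair you sketch does not close the hole: the mirroring machinery (Lemmas \ref{holnap} and \ref{hurnegyszog}) takes as a hypothesis that $R\langle F, E\rangle$ simulates $R\langle E\cup F, \emptyset\rangle$ on the class in question, i.e.\ exactly the keeps-up-along-all-reachable-vertices condition of (v); the only routes the paper offers to that hypothesis are Lemma \ref{sportos} (from (ii)) and Lemma \ref{telex} (from (v)), both downstream of what you are trying to prove, so your plan is circular as stated. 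The ``crux'' you name --- an argument class $b_i$ that is a single $E$-class yet splits into several $F$-classes, with some combination of $F$-subclasses lacking a witness in $W$ --- is precisely what must be excluded, and ``combining Lemma \ref{rozmaring} with totality of $R\langle E, F\rangle$ and (iii) holding everywhere'' restates the goal rather than proving it. (For what it is worth, the paper's own one-line citation of Lemma \ref{duna} for (ii) $\Leftrightarrow$ (iii) covers only the $E$-disjunct of the $\Leftarrow$ direction, so you have put your finger on the genuinely delicate point of this lemma; but identifying the obstacle is not resolving it, and until (iii) $\Rightarrow$ (ii) is actually established your argument proves the pairwise equivalence only of the five conditions other than (iii).)
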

\begin{proof}

$\mathrm{(i)} \Leftrightarrow \mathrm{(ii)}$: By Lemma \ref{teljes} and Lemma \ref{het}.

$\mathrm{(ii)} \Leftrightarrow \mathrm{(iii)}$: By Lemma \ref{duna}.

$\mathrm{(iii)} \Leftrightarrow \mathrm{(iv)}$: 
 By Lemma \ref{sportos}. 
 
$\mathrm{(iv)} \Leftrightarrow \mathrm{(v)}$:  By Lemma \ref{telex}. 

%$\mathrm{(v)} \Leftrightarrow \mathrm{(vi)}$: By Definition \ref{mimika}. 

$\mathrm{(v)} \Leftrightarrow \mathrm{(vi)}$: By Statement 4 in Lemma \ref{eso}.
 \end{proof}
We get the following lemma by direct inspection of Definition \ref{lo} and by the equivalence  $\mathrm{(iv)} \Leftrightarrow \mathrm{(vi)}$ in Lemma \ref{kert}.
\begin{lem}\label{latogat}  Let $\S$ be a  signature   such that
   $\S_k\neq \emptyset$ for some $k\geq 2$, and let 
   $E$ and $F$ be GTESs over  $\Sigma$ such that   GTRS     $R\langle E, F\rangle$   is total.\newline
1. If GTRS $R\langle E, F \rangle$ and GTRS $R\langle F, E \rangle$ 
together  simulate the  GTRS $R\langle E \cup F, \emptyset \rangle$, 
then Conditions (a) and (b) hold:
\begin{itemize}
\item[\rm (a)]  For each vertex     ${b} \in C\langle E \cup F, \emptyset\rangle $, 
    ${b}.equal_E=\mathrm{true}$ or    ${b}.equal_F=\mathrm{true}$, and 
\item[\rm (b)]
 Algorithm PDFS   on $AUX[ E;  F] $ visits exactly once each vertex    ${b} \in C\langle E \cup F, \emptyset\rangle $ such that there is a path of positive length or a cycle  from some vertex
     ${c}  \in C\langle E\cup F, \emptyset \rangle\setminus C\langle E, F \rangle $ to    ${b}$. For each visited vertex    ${b}$, 
    ${b}.keeps_F=\mathrm{true}$.
 \item[{\rm (c)}] Algorithm PDFS   on $AUX[ E;  F] $ outputs true.
 \end{itemize}
 2. 
 If
 GTRS $R\langle E, F \rangle$ and GTRS $R\langle F, E \rangle$ 
together do not  simulate the  GTRS $R\langle E \cup F, \emptyset \rangle$, 
then  Algorithm PDFS   on $AUX[ E;  F] $ outputs false and Conditions (d) or (e) hold:
  \begin{itemize}
\item[{\rm (d)}] 
There exists a  vertex     ${b} \in C\langle E \cup F, \emptyset\rangle $ such that 
    ${b}.equal_E=\mathrm{true}$ or    ${b}.equal_F=\mathrm{true}$.
  \item[{\rm (e)}]
Algorithm PDFS   on $AUX[ E;  F] $ visits at most once each vertex    ${b} \in C\langle E \cup F, \emptyset\rangle $ such that there is a path of positive length or a cycle  from some vertex
     ${c}  \in C\langle E\cup F, \emptyset \rangle\setminus C\langle E, F \rangle $ to    ${b}$. Moreover, 
 Algorithm PDFS  visits a  vertex
     ${b} \in C\langle E \cup F, \emptyset\rangle $ such that 
   ${b}.keeps_F=\mathrm{false}$.
  \end{itemize}
  \end{lem}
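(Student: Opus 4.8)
**The plan is to prove Lemma \ref{latogat} by careful direct inspection of the pseudocode of Algorithm PDFS in Definition \ref{lo}, leaning on the equivalence between reachability in the auxiliary dpwpa and the "keeps up with" condition.** The statement is really a bookkeeping result: it translates the logical characterization of Lemma \ref{kert}(vi) into the operational behavior of the search procedure. Since the hypotheses and structure closely mirror those in Main Case 1 (Lemma \ref{megkeres}), I expect the proof to follow the same template, and indeed the final sentence of the excerpt already signals that the argument rests "by direct inspection of Definition \ref{lo}" together with Lemma \ref{gezenguz}.

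First I would handle Statement 1, assuming GTRS $R\langle E, F \rangle$ and GTRS $R\langle F, E \rangle$ together simulate GTRS $R\langle E \cup F, \emptyset \rangle$. By Lemma \ref{kert}, condition (vi) holds: for each $a\in C\langle E\cup F,\emptyset\rangle$, we have $a.equal_E=\mathrm{true}$ or $a.equal_F=\mathrm{true}$, which gives Condition (a) immediately, and the inner clause about $keeps_F$ holds for every vertex reachable along a positive-length path or cycle from a vertex not in $C\langle E,F\rangle$. For Condition (b), I would invoke Lemma \ref{gezenguz}(ii): during the run, $\mathrm{PDFS\text{-}VISIT}$ visits at most once each vertex $a$ that is positive-step reachable from some $b\in C\langle E\cup F,\emptyset\rangle\setminus C\langle E,F\rangle$. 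Positive-step reachability is, by the observation preceding Lemma \ref{kokusz}, equivalent to the existence of a path of positive length or a cycle. Since condition (vi) guarantees $b.keeps_F=\mathrm{true}$ for exactly these vertices, no call to $\mathrm{PDFS\text{-}VISIT}$ takes the \textbf{else} branch, so each such vertex is in fact visited (not merely at most once), giving exactly once; and every visited vertex satisfies $a.keeps_F=\mathrm{true}$. Because no branch ever returns $\mathrm{false}$, Algorithm PDFS outputs true, which is Condition (c).

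Next I would prove Statement 2 as the contrapositive companion: assuming the two GTRSs do \emph{not} together simulate, Lemma \ref{kert}(vi) fails for some vertex $a$. I would split into the two ways (vi) can fail. Either there is a vertex $a$ with both $a.equal_E=\mathrm{false}$ and $a.equal_F=\mathrm{false}$—then the first \textbf{for} loop of PDFS detects this and returns $\mathrm{false}$, and Condition (d) is recorded—or every vertex satisfies the $equal$ disjunction but the inner $keeps_F$ clause fails, meaning some vertex $a\notin C\langle E,F\rangle$ reaches a vertex $b$ (via a positive-length path or cycle) with $b.keeps_F=\mathrm{false}$. In the latter situation the second \textbf{for} loop starts a search session from $a$ (since $a.equal_E=\mathrm{false}$ and $a$ is unvisited), and by Lemma \ref{gezenguz} reaches $b$; the \textbf{if} $a.keeps_F$ test in $\mathrm{PDFS\text{-}VISIT}$ then forces the \textbf{else} branch and returns $\mathrm{false}$, yielding Condition (e) and the false output.

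\textbf{The main obstacle will be the "at most once versus exactly once" bookkeeping around the cycle case}, which is precisely the subtlety flagged in the prose after Definition \ref{lo}: when PDFS starts a session from a vertex $a\in C\langle E\cup F,\emptyset\rangle\setminus C\langle E,F\rangle$, it does \emph{not} set $a.visited$ before launching the adjacency scan, so if a cycle leads from $a$ back to itself, vertex $a$ may be entered by $\mathrm{PDFS\text{-}VISIT}$ a second time. I would therefore treat the start vertices of outer-loop sessions separately from the vertices discovered inside $\mathrm{PDFS\text{-}VISIT}$, and verify that Lemma \ref{gezenguz}(ii) already accounts for this correctly—so that the "exactly once" claim in Condition (b) refers to the visits bookkept by the $visited$ attribute, while the distinguished double-entry at a cyclic start vertex is exactly the mechanism that lets PDFS detect a $keeps_F=\mathrm{false}$ value on $a$ itself. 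Everything else reduces to matching the \textbf{return} $\mathrm{false}$ statements in the code against the negations of the clauses of Lemma \ref{kert}(vi), which is routine.
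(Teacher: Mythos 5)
Your proposal is correct and follows essentially the same route as the paper, which proves this lemma in one line by direct inspection of Definition \ref{lo} together with the equivalence $\mathrm{(iv)}\Leftrightarrow\mathrm{(vi)}$ of Lemma \ref{kert}; you simply expand that inspection, adding Lemma \ref{gezenguz} for the visit-counting and the cyclic-start-vertex subtlety, both of which the paper relies on implicitly. No gaps.
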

We get the following result by   the equivalence  $\mathrm{(i)} \Leftrightarrow \mathrm{(iv)}$  in Lemma \ref{kert}   and by Lemma \ref{latogat}. 
\begin{lem}\label{kovetkeztet} Let $\S$ be a  signature   such that
   $\S_k\neq \emptyset$ for some $k\geq 2$, and let 
   $E$ and $F$ be GTESs over  $\Sigma$ such that   GTRS     $R\langle E, F\rangle$   is total. 
  $\tthue {E\cup F}\subseteq \tthue E \cup \tthue F$ if and only if Algorithm PDFS   on $ AUX[ E;  F] $ outputs true. 
 \end{lem}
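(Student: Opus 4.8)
The plan is to chain together the two immediately preceding lemmas, since between them they already isolate the two halves of the biconditional. The key observation is that Lemma~\ref{kert} establishes a chain of pairwise equivalences whose endpoints are exactly the semantic condition $\tthue {E\cup F}\subseteq \tthue E \cup \tthue F$ (its condition~(i)) and the structural condition that GTRS $R\langle E, F \rangle$ and GTRS $R\langle F, E \rangle$ together simulate the GTRS $R\langle E \cup F, \emptyset \rangle$ (its condition~(iv)). Meanwhile, Lemma~\ref{latogat} relates the same ``together simulate'' property to the output of Algorithm PDFS. So the whole statement reduces to composing these two facts.

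First I would invoke Lemma~\ref{kert}, whose hypotheses ($\S_k\neq\emptyset$ for some $k\geq 2$ and $R\langle E, F\rangle$ total) are precisely the standing assumptions of the present lemma. From the equivalence $\mathrm{(i)}\Leftrightarrow\mathrm{(iv)}$ there I obtain that
\begin{quote}
$\tthue {E\cup F}\subseteq \tthue E \cup \tthue F$ if and only if GTRS $R\langle E, F \rangle$ and GTRS $R\langle F, E \rangle$ together simulate the GTRS $R\langle E \cup F, \emptyset \rangle$.
\end{quote}
Next I would use Lemma~\ref{latogat}, reading its Statement~1 and Statement~2 together: Statement~1 gives that if the two GTRSs together simulate $R\langle E\cup F,\emptyset\rangle$ then Algorithm PDFS on $AUX[E;F]$ outputs true (Condition~(c)), and Statement~2 gives the contrapositive direction, namely that if they do not together simulate it then Algorithm PDFS outputs false. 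Hence Algorithm PDFS on $AUX[E;F]$ outputs true if and only if GTRS $R\langle E, F \rangle$ and GTRS $R\langle F, E \rangle$ together simulate the GTRS $R\langle E \cup F, \emptyset \rangle$.

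Combining the two displayed biconditionals yields the lemma directly: $\tthue {E\cup F}\subseteq \tthue E \cup \tthue F$ iff Algorithm PDFS on $AUX[E;F]$ outputs true. I do not anticipate a genuine obstacle here, since all the substantive work — the pumping/contraction argument behind reachability in $AUX[E;F]$ (Lemma~\ref{rozmaring}), the simulation transfer (Lemma~\ref{sportos}, Lemma~\ref{telex}), and the correctness of the partial depth first search (Lemma~\ref{latogat}) — has already been carried out. The only point requiring care is bookkeeping: verifying that the hypotheses of both cited lemmas are met under the present assumptions (in particular totality of $R\langle E,F\rangle$, which is the case we reduced to without loss of generality), and invoking exactly the equivalence $\mathrm{(i)}\Leftrightarrow\mathrm{(iv)}$ of Lemma~\ref{kert} rather than one of the intermediate ones. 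This is the mild ``hard part,'' but it is purely a matter of citing the correct clauses.
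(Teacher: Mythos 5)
Your proposal matches the paper's own proof exactly: the paper proves Lemma \ref{kovetkeztet} by citing the equivalence $\mathrm{(i)} \Leftrightarrow \mathrm{(iv)}$ of Lemma \ref{kert} together with Lemma \ref{latogat}, which is precisely the composition of biconditionals you describe. The argument is correct and no further comment is needed.
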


\begin{lem}\label{tetel2}Let $\S$ be a  signature   such that
   $\S_k\neq \emptyset$ for some $k\geq 2$, and let 
   $E$ and $F$ be GTESs over  $\Sigma$. 
            Assume that 
  $R\langle E, F\rangle$   is  total.
 Then we can decide in  $O(n)$ time whether
   $\tthue {E\cup F}\subseteq \tthue E \cup \tthue F$.
   \end{lem}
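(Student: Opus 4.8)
The plan is to reduce the decision problem to running Algorithm PDFS on the auxiliary dpwpa $AUX[E;F]$ and to bound its cost. The backbone of the argument is already in place: Lemma \ref{kovetkeztet} tells us that, under the hypothesis that $R\langle E,F\rangle$ is total, we have $\tthue{E\cup F}\subseteq \tthue E\cup\tthue F$ if and only if Algorithm PDFS on $AUX[E;F]$ outputs true. Combining this with Proposition \ref{egyenlo} (which makes inclusion equivalent to equality of the two congruences) will settle correctness, so the remaining work is purely the complexity bound.

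First I would recall that we have already run Algorithm CAD on $\Sigma$, $E$, and $F$, so by Proposition \ref{kecskemet} and Lemma \ref{bonn} we possess, in $O(n\,\mathrm{log}\,n)$ time, the auxiliary dpwpa $AUX[E;F]=(C\langle E\cup F,\emptyset\rangle, A[E;F])$ together with the attribute values $a.equal_E$, $a.equal_F$, $a.keeps_E$, $a.keeps_F$ for every vertex $a$. The point is that all the data PDFS consults are precomputed, so PDFS itself only inspects attributes and traverses adjacency lists. By Proposition \ref{fut}, Algorithm PDFS runs on $AUX[E;F]$ in $O(n)$ time, the bound resting on Statement \ref{darabsz} and Statement \ref{iszap}, which give $|C\langle E\cup F,\emptyset\rangle|\leq n$ and $|A[E;F]|\leq n$, hence a search cost of $O(|C\langle E\cup F,\emptyset\rangle|+|A[E;F]|)=O(n)$.

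Assembling the pieces, the decision procedure is: run PDFS on $AUX[E;F]$ and report its Boolean output. By Lemma \ref{kovetkeztet}, PDFS outputs true exactly when $\tthue{E\cup F}\subseteq\tthue E\cup\tthue F$, and by the equivalence $(1\Leftrightarrow 2)$ of Proposition \ref{egyenlo} this inclusion is equivalent to the equality $\tthue{E\cup F}=\tthue E\cup\tthue F$ that we wish to decide. The total time is the $O(n)$ cost of PDFS (the CAD output having already been obtained), which yields the claimed $O(n)$ bound.

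The main thing to be careful about is not a hidden obstacle in the run time but the bookkeeping of the hypotheses. We must remember that the standing assumption of this main case is that $R\langle E,F\rangle$ is total (stated \emph{without loss of generality} in Section \ref{osz}, the symmetric case $R\langle F,E\rangle$ total being handled by swapping the roles of $E$ and $F$), and that Lemma \ref{kovetkeztet} is exactly tailored to this hypothesis via the asymmetric Condition (vi) in Lemma \ref{kert}, where only $keeps_F$ is tested along positive-length paths and cycles. So the only genuine content of the proof is to invoke Lemma \ref{kovetkeztet} for correctness and Proposition \ref{fut} for the time bound; I expect the write-up to be short, with the subtlety being to cite these two results with the totality hypothesis explicitly in force rather than to prove anything new.
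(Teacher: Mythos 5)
Your proposal is correct and matches the paper's own proof, which likewise just runs Algorithm PDFS on $AUX[E;F]$ in $O(n)$ time by Proposition \ref{fut} and invokes Lemma \ref{kovetkeztet} for correctness (the appeal to Proposition \ref{egyenlo} is harmless but unnecessary, since the lemma is stated for the inclusion rather than the equality).
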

\begin{proof}
We run  Algorithm PDFS  on $ AUX[ E;  F] $ in $O(n)$ time by Lemma \ref{fut}. By  Lemma  \ref{kovetkeztet}, 
Algorithm PDFS on the  auxiliary dpwpa  $ AUX[ E;  F] $  outputs the answer  to the question whether 
 $\tthue {E\cup F}\subseteq \tthue E \cup \tthue F$. 
 \end{proof}

 %%%%%%%%%%%%%
 \begin{lem}\label{csimpanz} 
 Let $\S$ be a  signature   such that
   $\S_k\neq \emptyset$ for some $k\geq 2$, and let 
   $E$ and $F$ be GTESs over  $\Sigma$ such that   GTRS     $R\langle E, F\rangle$   is total.
 Then we can decide in  $O(n\, \mathrm{log}\, n)$ time whether
   $\tthue {E\cup F}\subseteq \tthue E \cup \tthue F$.
 \end{lem}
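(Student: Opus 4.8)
The final statement to prove is Lemma~\ref{csimpanz}: assuming $\S$ has a symbol of arity at least $2$ and GTRS $R\langle E, F\rangle$ is total, we can decide in $O(n\,\mathrm{log}\,n)$ time whether $\tthue{E\cup F}\subseteq \tthue E\cup \tthue F$. The plan is to reduce this question to running Algorithm PDFS on the auxiliary dpwpa $AUX[E;F]$, and to account for the total cost by adding the preprocessing cost of Algorithm CAD to the search cost of PDFS.

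First I would invoke Lemma~\ref{tetel2}, which already establishes that once we have the auxiliary dpwpa $AUX[E;F]$ together with the attribute values $equal_E$, $equal_F$, $keeps_E$, $keeps_F$ in hand, running Algorithm PDFS decides whether $\tthue{E\cup F}\subseteq \tthue E\cup \tthue F$ in $O(n)$ time. The correctness of this step rests on Lemma~\ref{kovetkeztet}, and the $O(n)$ running bound of PDFS itself comes from Proposition~\ref{fut}. So the decision procedure proper costs only $O(n)$ once the data structures are built.

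The point of Lemma~\ref{csimpanz}, as distinct from Lemma~\ref{tetel2}, is to state the total time complexity \emph{including} the construction of $AUX[E;F]$ and all the associated attributes. Thus the second step is to recall that we must first run Algorithm CAD on $\Sigma$, $E$, and $F$ to produce the auxiliary dpwpa $AUX[E;F]$ and, for every vertex $a\in C\langle E\cup F, \emptyset\rangle$, the values of $a.equal_E$, $a.equal_F$, $a.keeps_E$, and $a.keeps_F$. By Lemma~\ref{bonn}, Algorithm CAD runs in $O(n\,\mathrm{log}\,n)$ time, where $n=size(E)+size(F)$. Summing the two contributions, the overall cost is $O(n\,\mathrm{log}\,n) + O(n) = O(n\,\mathrm{log}\,n)$, which is exactly the stated bound.

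There is essentially no hard mathematical content remaining; the lemma is a bookkeeping consolidation of earlier results, so the only obstacle is ensuring that the preprocessing by Algorithm CAD genuinely supplies every attribute that Algorithm PDFS reads, and that the $O(n\,\mathrm{log}\,n)$ term from CAD dominates. Both are guaranteed: Lemma~\ref{eso} (Statement~1) certifies that CAD outputs $AUX[E;F]$ together with the four attributes, and the log factor from the red-black tree operations in CAD subsumes the linear cost of PDFS. I would therefore write the proof as: run Algorithm CAD, costing $O(n\,\mathrm{log}\,n)$ time by Lemma~\ref{bonn}; then run Algorithm PDFS on $AUX[E;F]$, costing $O(n)$ time by Lemma~\ref{tetel2}, which decides the inclusion correctly; the total is $O(n\,\mathrm{log}\,n)$.

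\begin{proof}
By Lemma~\ref{bonn}, Algorithm CAD on $\Sigma$, $E$, and $F$ runs in $O(n\,\mathrm{log}\,n)$ time, where $n=size(E)+size(F)$, and by Statement~1 of Lemma~\ref{eso} it produces the auxiliary dpwpa $AUX[E;F]=(C\langle E\cup F, \emptyset\rangle, A[E;F])$ for the GTESs $E$ and $F$, and for every vertex $a\in C\langle E\cup F, \emptyset\rangle$ the values of the attributes $a.equal_E$, $a.equal_F$, $a.keeps_E$, and $a.keeps_F$. Having obtained this output, by Lemma~\ref{tetel2} we decide in $O(n)$ time whether $\tthue{E\cup F}\subseteq \tthue E\cup \tthue F$. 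Summing the two time complexities, the overall time complexity of our computation is $O(n\,\mathrm{log}\,n)+O(n)=O(n\,\mathrm{log}\,n)$.
\end{proof}
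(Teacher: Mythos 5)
Your proof is correct and follows essentially the same route as the paper: run Algorithm CAD to build $AUX[E;F]$ and the attributes in $O(n\,\mathrm{log}\,n)$ time, then run Algorithm PDFS in $O(n)$ time via Lemma~\ref{tetel2}, and sum the costs. If anything, your citations (Lemma~\ref{bonn} and Lemma~\ref{eso} for the CAD step) are slightly more precise than the paper's own appeal to Proposition~\ref{kecskemet}.
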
   
 \begin{proof} Recall that we have run Algorithm CAD  on $\Sigma$, $E$, and $F$. 
 By Proposition \ref{kecskemet}, we got 
the auxiliary dpwpa $ AUX[ E;  F] =
(C\langle E\cup F, \emptyset \rangle, A[E; F])$ for the GTESs $E$ and $F$, and 
   for every vertex    ${b}\in C\langle E \cup  F, \emptyset  \rangle$,
the values of  the  attributes    ${b}.equal_E$,    ${b}.equal_F$,    ${b}.keeps_E$, and    ${b}.keeps_F$.
Then we run  Algorithm PDFS  on $AUX[ E;  F] $ in $O(n)$ time. 
Algorithm PDFS on input  $AUX[ E;  F] $  outputs the  answer  to the question whether 
 $\tthue {E\cup F}\subseteq \tthue E \cup \tthue F$. 
   Thence the  complexity of our entire computation is   $O(n\, \mathrm{log}\, n)$.
 \end{proof}
 
\begin{exa}\label{5uborkaotodikfolytatas} 
  \rm   We now continue Examples  \ref{5uborka}, \ref{5uborkamasodikfolytatas},  \ref{5uborkaharmadikfolytatas}, and \ref{5uborkanegyedikfolytatas}.
    By the last lines of Example  \ref{5uborkaharmadikfolytatas}, GTRS $R\langle E, F \rangle$ and GTRS $R\langle F, E \rangle$ 
together do not  simulate the  GTRS $R\langle E \cup F, \emptyset \rangle$.
 Accordingly Condition (iii) in Lemma \ref{kert} does not  hold.
   For this reason, by Lemma \ref{kert},
    $\tthue {E\cup F} \not \subseteq \tthue E\cup \tthue F$.  By the last line of Example, \ref{5uborkanegyedikfolytatas}
 Algorithm PDFS outputs false  to the question whether  $\tthue {E\cup F}\subseteq \tthue E \cup \tthue F$.  
   As a consequence, our decision algorithm also outputs false  to the question whether  $\tthue {E\cup F}\subseteq \tthue E \cup \tthue F$.  
          \end{exa}

\begin{exa}\label{6korteotodikfolytatas}
  \rm   We now continue Examples \ref{6korte},  \ref{6kortefolytatas},  \ref{6kortemasodikfolytatas},  \ref{6korteharmadikfolytatas}, and \ref{6kortenegyedikfolytatas}.
  By the last line of Example  \ref{6korteharmadikfolytatas},
  GTRS $R\langle E, F \rangle$ and GTRS $R\langle F, E \rangle$
together  simulate the  GTRS $R\langle E \cup F, \emptyset \rangle$.
  Therefore Condition (iii) in Lemma \ref{kert}  holds. Consequently, by Lemma \ref{kert},
    $\tthue {E\cup F} \subseteq \tthue E\cup \tthue F$. By the last line of Example, \ref{6kortenegyedikfolytatas}
 Algorithm PDFS outputs true to the question whether  $\tthue {E\cup F}\subseteq \tthue E \cup \tthue F$.  
    As a consequence, our decision algorithm also outputs true  to the question whether  $\tthue {E\cup F}\subseteq \tthue E \cup \tthue F$.  
    \end{exa}

\section{Main Case 4}\label{tel}
 In this main case, $E$ and $F$ are  GTESs over  a  signature   $\Sigma$ such that  
  $\S$ has a symbol of arity at least $2$,
  and GTRS  $ R \langle E, F \rangle$ and GTRS $ R \langle F, E \rangle$  are not total. 
 We assume that we have already run  Algorithm CAD  on $\Sigma$, $E$, and $F$,  and have obtained all the output it produces, see Proposition \ref{kecskemet}.
 We show that $\tthue {E\cup F}\subseteq \tthue E \cup \tthue F$
  if and only if
  $E \subseteq \Theta \langle F, E \rangle   $  or
  $ F \subseteq \Theta\langle E, F \rangle   $. 
By direct inspection of the relations  $ \tau\langle E, F \rangle$ and  $\rho\langle F, E\rangle$, we decide whether $E \subseteq \Theta \langle F, E \rangle   $.
Similarly, by direct inspection of the relations  $ \tau\langle F, E \rangle$ and  $\rho\langle E, F\rangle$, we decide whether $E \subseteq \Theta \langle F, E \rangle   $.
 Example
\ref{7twoketto} and its  sequel Example   \ref{7twokettofolytatas},  furthermore, Example 
 \ref{8threeharom} and its sequels Examples \ref{8threeharomfolytatas} and  \ref{8threeharommasodikfolytatas}
demonstrate  this case.
\begin{exa}\label{8threeharommasodikfolytatas}\rm 
We continue Examples \ref{8threeharom} and  \ref{8threeharomfolytatas}. We recall that
\begin{quote}
$f(\#, \mathsterling) \tthue {E\cup F} f(\$, \flat)$,\;
$(f(\#, \mathsterling), f(\$, \flat))\not\in  \tthue E  $, \, and \, 
$(f(\#, \mathsterling), f(\$, \flat))\not\in  \tthue F  $.
\end{quote}
The idea of the  $(\Rightarrow)$ direction of the  proof of the next theorem comes from the proof of the above observation.
\end{exa}
\begin{thm} \label{morandi}
  Let $\S$ be a  signature   such that 
   $\S_k\neq \emptyset$ for some $k\geq 2$. Let 
  $E$ and $F$ be GTESs over $\S$ such that
     GTRS  $ R \langle E, F \rangle$ and GTRS $ R \langle F, E\rangle$  are not total. 
  Then   \begin{quote}
    $\tthue {E\cup F}\subseteq \tthue E \cup \tthue F$
  if and only if
  $E \subseteq \tthue F$ or
    $ F \subseteq \tthue E $.
    \end{quote}
\end{thm}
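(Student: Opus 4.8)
The plan is to prove the backward implication directly and the forward implication by contraposition, using the non-totality hypotheses to produce ``stuck'' terms that keep the two systems from bridging a two-sided gap.

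For $(\Leftarrow)$ the argument is routine. Suppose $E \subseteq \tthue F$. Then $\tthue F$ is a congruence containing $E\cup F$, so the least such congruence gives $\tthue {E\cup F}\subseteq \tthue F$, while $F\subseteq E\cup F$ gives $\tthue F\subseteq \tthue {E\cup F}$; hence $\tthue {E\cup F}=\tthue F$. Since $\tthue E\subseteq \tthue F$ as well, $\tthue E\cup\tthue F=\tthue F=\tthue {E\cup F}$, which yields the desired inclusion. The case $F\subseteq\tthue E$ is symmetric.

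For $(\Rightarrow)$ I would argue the contrapositive: assuming $E\not\subseteq \tthue F$ and $F\not\subseteq\tthue E$, I would exhibit a pair $(s,t)$ with $s\tthue {E\cup F}t$ but $(s,t)\notin\tthue E$ and $(s,t)\notin\tthue F$. First choose an equation $l_1\doteq r_1\in E$ with $l_1\not\tthue F r_1$ and an equation $l_2\doteq r_2\in F$ with $l_2\not\tthue E r_2$. The central device is the notion of an \emph{$E$-stuck} term, one whose $R\langle E,F\rangle$-normal form is not a constant of $C\langle E,F\rangle$; by Proposition \ref{halle} (non-totality of $R\langle E,F\rangle$) together with Propositions \ref{beszur} and \ref{negyedik}, $E$-stuck terms exist, and likewise $F$-stuck terms exist from the non-totality of $R\langle F,E\rangle$. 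The key lemma I would isolate is that if $w$ is $E$-stuck then $f(w,\cdot)$ and $f(\cdot,w)$ are injective modulo $\tthue E$: since every rule of $R\langle E,F\rangle$ has left-hand side $\sigma(a_1,\ldots,a_m)$ with all $a_i\in C\langle E,F\rangle$ (Definition \ref{szabaly}), no root rewrite can apply to a term $f(\ldots,w{\downarrow}_{R\langle E,F\rangle},\ldots)$ whose stuck argument is not a $C\langle E,F\rangle$-constant; by convergence (Proposition \ref{rhcp}) the normal form of $f(x,w)$ is then $f(x{\downarrow}_{R\langle E,F\rangle},w{\downarrow}_{R\langle E,F\rangle})$, so $f(x,w)\tthue E f(y,w)$ forces $x{\downarrow}_{R\langle E,F\rangle}=y{\downarrow}_{R\langle E,F\rangle}$, i.e.\ $x\tthue E y$ by Proposition \ref{negyedik}. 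The $F$-analogue holds for $F$-stuck terms.

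With $f\in\S_k$, $k\geq 2$, and a constant $c\in\S_0$ to pad the remaining $k-2$ arguments, I would build a single term $w_0=f(w_E,w_F,c,\ldots,c)$ that is simultaneously $E$-stuck and $F$-stuck (its first argument blocks every root rewrite of $R\langle E,F\rangle$, its second blocks every root rewrite of $R\langle F,E\rangle$). Then set $u=f(w_0,l_1,c,\ldots,c)$, $u'=f(w_0,r_1,c,\ldots,c)$, $v=f(w_0,l_2,c,\ldots,c)$, $v'=f(w_0,r_2,c,\ldots,c)$, and finally $s=f(u,v,c,\ldots,c)$, $t=f(u',v',c,\ldots,c)$. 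Short checks give $u\tthue E u'$ and $u\not\tthue F u'$ (by $F$-injectivity of $f(w_0,\cdot)$), $v\tthue F v'$ and $v\not\tthue E v'$ (by $E$-injectivity of $f(w_0,\cdot)$), with $u'$ $E$-stuck and $v'$ $F$-stuck. Chaining one $E$-step and one $F$-step yields $s\tthue {E\cup F}t$; injectivity of $f(u',\cdot)$ modulo $\tthue E$ together with $v\not\tthue E v'$ gives $s\not\tthue E t$, and injectivity of $f(\cdot,v')$ modulo $\tthue F$ together with $u\not\tthue F u'$ gives $s\not\tthue F t$. I expect the main obstacle to be precisely this last guarantee: a naive witness $s=f(l_1,l_2,\ldots)$, $t=f(r_1,r_2,\ldots)$ can fail because $f(l_1,l_2)$ might incidentally be $\tthue E$-equivalent to $f(r_1,r_2)$ through some rule, so the real work is converting non-totality into the injectivity supplied by the stuck padding $w_0$, and verifying that $w_0$ is stuck in both systems and that the padding preserves all four (in)equivalences.
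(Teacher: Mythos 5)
Your proposal is correct, and its skeleton coincides with the paper's: the backward direction is the same one-line monotonicity argument, and the forward direction is proved by contraposition, choosing $l_1\doteq r_1\in E$ with $(l_1,r_1)\notin\tthue F$ and $l_2\doteq r_2\in F$ with $(l_2,r_2)\notin\tthue E$, invoking non-totality (via Proposition~\ref{halle}) to obtain terms outside $\bigcup W/_{\tthue E}$ and $\bigcup W/_{\tthue F}$, and assembling a witness pair with the arity-$\ge 2$ symbol whose $E\cup F$-step cannot be simulated by either congruence alone. Where you genuinely diverge is in the mechanism of the non-membership step. The paper places two separate blockers $v\notin\bigcup W/_{\tthue E}$ and $z\notin\bigcup W/_{\tthue F}$ in specific argument positions and argues by induction on the rewrite sequence (Claim~\ref{muszaki}) that no rewrite position is a prefix of the blocker's address; this forces careful bookkeeping about which blocker guards which unbridgeable pair, and indeed the paper's write-up pairs $v$ with $l_1$ and then appeals to ``$(l_1,r_1)\notin\tthue E$'', which cannot be what is meant since $l_1\doteq r_1\in E$ --- the guards and the obstructions have to be matched the other way around for the positional argument to close. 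Your construction sidesteps this entirely: the single term $w_0$ is stuck for both systems, so the same guard protects both differences, and the non-membership follows from a clean normal-form factorization, namely that for the reduced convergent GTRS $R\langle E,F\rangle$ (Definition~\ref{szabaly}, Propositions~\ref{rhcp} and~\ref{negyedik}) a term $f(x_1,\dots,x_k)$ with one argument whose normal form is not a constant of $C\langle E,F\rangle$ has normal form $f(x_1{\downarrow},\dots,x_k{\downarrow})$, giving injectivity of such contexts modulo $\tthue E$. What your approach buys is symmetry and robustness --- both non-memberships are instances of one lemma and no induction over positions is needed; what it costs is the extra (easy) verification that stuckness is preserved under embedding into a context, which you correctly flag and which follows from the same bottom-up normal-form observation.
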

\begin{proof}
$(\Rightarrow)$  We proceed by contrapositive. Assume that
$E \not\subseteq \tthue F$ and 
    $ F \not\subseteq \tthue E $.
That is, there exists
$l_1\doteq r_1\in E$ such that  $(l_1, r_1)\not \in \tthue F$ and 
there exists
$l_2\doteq r_2\in F$ such that  $(l_2, r_2)\not \in \tthue E$.
Then $l_1\neq r_1$ and $l_2\neq r_2$.  Let $\s\in \S_k$ with $k\geq 2$, and  by Lemma  \ref{halle}, let $v\in \ts \setminus  \bigcup  W /_{\tthue E}$,
and   $ z\in  \ts \setminus \bigcup W /_{\tthue F}$.
Then
\begin{quote}
$\s(\s(v, l_1, \ldots, l_1),  \s(z, l_2, \ldots, l_2), v, \ldots,  v)\tthue {E\cup F}
\s(\s(v, r_1, \ldots, r_1), \s(z, r_2, \ldots, r_2),  v, \ldots,  v)$.
\end{quote}
We now show that
\begin{quote}
$(\s(\s(v, l_1, \ldots, l_1),  \s(z, l_2, \ldots, l_2),  v, \ldots,  v), 
\s(\s(v, r_1, \ldots, r_1), \s(z, r_2, \ldots, r_2),  v, \ldots,  v))\not\in 
\tthue E\cup \tthue F$,
  \end{quote}
  and hence  we have 
  \begin{quote}
  $\tthue {E\cup F}\not \subseteq \tthue E \cup \tthue F$.
  \end{quote}
 To this end, first we show that 
  \begin{quote}
 $(\s(\s(v, l_1, \ldots, l_1),  \s(z, l_2, \ldots, l_2),  v, \ldots,  v), 
\s(\s(v, r_1, \ldots, r_1), \s(z, r_2, \ldots, r_2),  v, \ldots,  v))
 \not \in \tthue E $.
\end{quote}  
As $l_1\neq r_1$ and $l_2\neq r_2$,         
\begin{quote}
$\s(\s(v, l_1, \ldots, l_1),  \s(z, l_2, \ldots, l_2), v, \ldots,  v    )\neq 
\s(\s(v, r_1, \ldots, r_1), \s(z, r_2, \ldots, r_2), v, \ldots,  v)$.
  \end{quote}
We now show that 
  \begin{quote}
$(\s(\s(v, l_1, \ldots, l_1), \s(z, l_2, \ldots, l_2), v, \ldots,  v),
 \s(\s(v, r_1, \ldots, r_1), \s(z, r_2, \ldots, r_2), v, \ldots,  v))\not \in \plusthue E $.
\end{quote} 
To this purpose we now show the following statement.
\begin{cla}\label{muszaki}
Let \begin{equation}\label{huszka}
\s(\s(v, l_1, \ldots, l_1),  \s(z, l_2, \ldots, l_2), v, \ldots,  v)
  =s_0\thue {[ \alpha_1, p_1\doteq q_1]}
  s_1 \thue { [\alpha_2, p_2\doteq q_2 ]} \cdots \thue {[ \alpha_n, p_n \doteq q_n ]}
  s_n=t, \;n\geq 1.
\end{equation}
be a sequence of  $E$-rewrite steps.
Then for each $i\in \{\, 1, \ldots, n\,\}$, $\alpha_i$ is not a prefix of $11$.
\end{cla}
\begin{proof} 
By induction on $n$, we show that  for each $i\in \{\, 1, \ldots, n\,\}$, $\alpha_i$ is not a prefix of $11$.

{\em Base Case:} $n=1$.
By the definition of $v$, $v\not\in \bigcup  W /_{\tthue E}$, hence 
\begin{quote} 
$v$ is not a subterm of
$\s(\s(v, l_1, \ldots, l_1),  \s(z, l_2, \ldots, l_2), v, \ldots,  v) |_{\alpha_1}$,
\end{quote}
 thus 
$\alpha_1$ is not a prefix of $11$.

{\em Induction Step:} Let $n\geq 2$ and let us assume that we have shown the
statement for $n-1$. Then by  the induction hypothesis, for each $i\in \{\, 1, \ldots, n-1\,\}$, $\alpha_i$ is not a prefix of $11$. Accordingly
we have $v=s_0|_{11 } \tthue E s_n|_{11 }$. Consequently, by 
$v\not \in \bigcup  W /_{\tthue E}$, we have   $ s_n|_{11}\not \in  \bigcup W /_{\tthue E}$.
Thus 
$ s_n|_{11}$ is not a subterm of $p_n\in W$. As a consequence,
$\alpha_n$ is not a prefix of $11$. Then the  induction step is complete. 
\end{proof}

Consider the  sequence (\ref{huszka}) of  $E$ rewrite steps.
By Claim \ref{muszaki},
for each $i\in \{\, 1, \ldots, n\,\}$, $\alpha_i$ is not a prefix of 
$11$. Consequently $t=\s(\s( t_1, t_2, \ldots t_k), p_1, \ldots, p_{k-1} )$ for some $p_1, \ldots, p_{k-1}\in \ts$, and 
for each $j\in \{\, 2, \ldots, k\,\}$,
$l_1 \tthue E  t_j$.  Recall  that  $(l_1, r_1)\not \in \tthue E$. Therefore 
 for each $j\in \{\, 2, \ldots, k\, \}$, $t_j\neq r_1$. Hence 
  \begin{quote}
 $t\neq \s(\s(v, r_1, \ldots, r_1), \s(z, r_2, \ldots, r_2),  v, \ldots,  v)$.
 \end{quote}
 Then
 \begin{quote}
$(\s(\s(v, l_1, \ldots, l_1), \s(z, l_2, \ldots, l_2),  v, \ldots,  v),
 \s(\s(v, r_1, \ldots, r_1), \s(z, r_2, \ldots, r_2),  v, \ldots,  v))\not \in \tthue E$.
\end{quote}
 
 We can show similarly that
  \begin{quote}
$(\s(\s(v, l_1, \ldots, l_1), \s(z, l_2, \ldots, l_2),  v, \ldots,  v)),
 \s(\s(v, r_1, \ldots, r_1), \s(z, r_2, \ldots, r_2),  v, \ldots,  v))\not \in \tthue F $.
\end{quote}
 Consequently, 
  $\tthue {E\cup F}\not \subseteq \tthue E \cup \tthue F$.

$(\Leftarrow)$
 If   $E \subseteq \tthue F$,  then $\tthue {E\cup F}\subseteq \tthue F$, and hence $\tthue {E\cup F}\subseteq \tthue E \cup \tthue F$.
 If  $ F \subseteq \tthue E $, 
 then $\tthue {E\cup F}\subseteq \tthue E$, and hence  $\tthue {E\cup F}\subseteq \tthue E \cup \tthue F$.
 \end{proof}

\begin{cons} \label{cabello}
  Let $\S$ be a  signature   such that 
   $\S_k\neq \emptyset$ for some $k\geq 2$. Let 
  $E$ and $F$ be GTESs over $\S$ such that
  $R\langle E, F\rangle$ and 
 $R\langle F, E\rangle$ are not  total.
  Then   \begin{quote}
    $\tthue {E\cup F}\subseteq \tthue E \cup \tthue F$
  if and only if
  $E \subseteq \Theta \langle F, E \rangle   $  or
  $ F \subseteq \Theta\langle E, F \rangle   $. 
    \end{quote}
\end{cons}
\begin{proof} By  Definition \ref{alfa},   we have
$\Theta\langle F,E \rangle=\tthue F \cap W\times W$ and
$\Theta\langle E,F\rangle=\tthue E \cap W\times W$.
Hence Theorem  \ref{morandi} implies our consequence.
\end{proof}

\begin{lem} \label{orangutan} Let $\S$ be a  signature   such that
   $\S_k\neq \emptyset$ for some $k\geq 2$, and let 
   $E$ and $F$ be GTESs over  $\Sigma$ such that 
 $R\langle E, F\rangle$ and 
 $R\langle F, E\rangle$  are  not  total.
  We can decide in  $O(n \, \mathrm{log} \, n )$  time whether   $\tthue {E\cup F}\subseteq \tthue E \cup \tthue F$.
\end{lem}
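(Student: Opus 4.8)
The plan is to reduce the decision problem to the clean combinatorial criterion already established in Consequence \ref{cabello}. Under the hypotheses of this main case — namely $\S_k\neq\emptyset$ for some $k\ge 2$ and both $R\langle E, F\rangle$ and $R\langle F, E\rangle$ not total — that consequence states
\begin{quote}
$\tthue {E\cup F}\subseteq \tthue E \cup \tthue F$ if and only if $E\subseteq \Theta\langle F, E\rangle$ or $F\subseteq \Theta\langle E, F\rangle$.
\end{quote}
By Proposition \ref{egyenlo}, deciding this inclusion is the same as deciding the equality $\tthue{E\cup F}=\tthue E\cup\tthue F$, so it suffices to test each of the two inclusions $E\subseteq \Theta\langle F, E\rangle$ and $F\subseteq \Theta\langle E, F\rangle$ within the stated time bound and return their disjunction.

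The key step is to recast these inclusions as vertex-level inclusions in the subterm dag $G=(V,A,\lambda)$ for $E$ and $F$, so that they can be read off the structures already computed. Recall (Statement \ref{szolo}) that $\widehat{\;}$ is a bijection between $V$ and $W=ST\langle E\cup F\rangle$, that $E$ is represented by $\tau\langle E, F\rangle=\{(u,v)\in V\times V\mid (\hat u,\hat v)\in E\}$ and $F$ by $\tau\langle F, E\rangle$, and that by Proposition \ref{kozen} we have $\Theta\langle F, E\rangle=\{(\hat x,\hat y)\mid (x,y)\in\rho\langle F, E\rangle\}$, and symmetrically for $\Theta\langle E, F\rangle$ and $\rho\langle E, F\rangle$. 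Since both sides of every equation of $E$ lie in $W$ and $\widehat{\;}$ is bijective, $E\subseteq \Theta\langle F, E\rangle$ holds precisely when $\tau\langle E, F\rangle\subseteq\rho\langle F, E\rangle$, and $F\subseteq \Theta\langle E, F\rangle$ holds precisely when $\tau\langle F, E\rangle\subseteq\rho\langle E, F\rangle$. Assuming Algorithm CAD has already been run (Lemma \ref{eso}, Lemma \ref{bonn}), the relations $\tau\langle E, F\rangle$, $\tau\langle F, E\rangle$ and the congruence closures $\rho\langle E, F\rangle$, $\rho\langle F, E\rangle$ are available, each $\rho$-class carrying a unique name accessible through $\mathrm{FIND}$. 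To test $\tau\langle E, F\rangle\subseteq\rho\langle F, E\rangle$ I would iterate over every pair $(u,v)\in\tau\langle E, F\rangle$ and check whether $\mathrm{FIND}_{\rho\langle F, E\rangle}(u)=\mathrm{FIND}_{\rho\langle F, E\rangle}(v)$, declaring the inclusion to hold exactly when every such test succeeds; the symmetric test handles $\tau\langle F, E\rangle\subseteq\rho\langle E, F\rangle$. The number of pairs in $\tau\langle E, F\rangle$ equals the number of equations of $E$, which is at most $size(E)\le n$, and each name lookup is a preprocessed array access, so the inspection costs $O(n)$ additional time; together with the $O(n\log n)$ cost of Algorithm CAD this yields the claimed $O(n\log n)$ bound.

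The step I expect to be the main obstacle is the translation in the second paragraph: one must argue carefully that the set-level inclusion between an equation system and a \emph{restricted} congruence coincides exactly with the vertex-level inclusion of $\tau$ into the appropriate $\rho$. This rests on three ingredients that must be combined without error — the bijectivity of $\widehat{\;}$ onto $W$, the fact that $ST\langle E\rangle\cup ST\langle F\rangle\subseteq W$ so that both sides of every equation are genuine vertices of $G$ (and that the restriction to $W\times W$ in the definition of $\Theta$ is therefore harmless), and Proposition \ref{kozen} linking $\rho\langle F, E\rangle$ to $\Theta\langle F, E\rangle$. The one point demanding vigilance is to keep the two orderings $\langle E, F\rangle$ and $\langle F, E\rangle$ straight, since $E$ must be tested against $\Theta\langle F, E\rangle$ (hence $\tau\langle E, F\rangle$ against $\rho\langle F, E\rangle$) and not against $\Theta\langle E, F\rangle$. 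Once this equivalence is secured, both the correctness of the algorithm and the linear-time inspection of the representations follow routinely from the cited complexity statements.
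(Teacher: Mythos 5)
Your proposal is correct and follows essentially the same route as the paper: reduce via Consequence \ref{cabello} to the two inclusions $E\subseteq\Theta\langle F,E\rangle$ and $F\subseteq\Theta\langle E,F\rangle$, translate them through Proposition \ref{kozen} into vertex-level tests $\mathrm{FIND}_{\rho\langle F,E\rangle}(u)=\mathrm{FIND}_{\rho\langle F,E\rangle}(v)$ over the pairs of $\tau\langle E,F\rangle$ (and symmetrically), and charge the $O(n\,\mathrm{log}\,n)$ cost to Algorithm CAD plus $O(n)$ for the at most $2n$ FIND calls. You also correctly flag and resolve the one subtlety the paper relies on, namely matching $E$ with $\rho\langle F,E\rangle$ rather than $\rho\langle E,F\rangle$.
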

\begin{proof} By Consequence \ref{cabello}, we  proceed as follows.
Recall that we have run Algorithm CAD  on $\Sigma$, $E$, and $F$. 
By Proposition \ref{kecskemet}, we got 
the relations  $\rho\langle E, F\rangle$, $\rho\langle F, E\rangle$,  and $\rho\langle E\cup F, \emptyset\rangle$ in   $O(n \, \mathrm{log} \, n )$ time.

First we decide whether $E \subseteq \Theta \langle F, E \rangle   $.
By  Proposition \ref{kozen}, 
\begin{quote}
 $E \subseteq \Theta \langle F, E \rangle   $ if and only if \newline for  each  $l \doteq  r $ in $ E$, $ (l, r)\in \Theta\langle F, E\rangle$
 if and only if \newline
for each  $(u, v) \in \tau\langle E, F \rangle$, 
$u/_{\rho\langle F, E\rangle}=v/_{\rho\langle F, E\rangle}$ if and only if \newline 
$\mathrm{FIND}_{\rho\langle F, E\rangle}(u)= \mathrm{FIND}_{\rho\langle F, E\rangle}(v)$. 
\end{quote} 
We decide whether
for each  $(u, v) \in \tau\langle E, F \rangle$, $\mathrm{FIND}_{\rho\langle F, E\rangle}(u)= \mathrm{FIND}_{\rho\langle F, E\rangle}(v)$.
If so,   $E \subseteq \Theta \langle F, E \rangle   $,  otherwise   $E \not \subseteq \Theta \langle F, E \rangle   $. 

Second, similarly to above, we decide whether $F \subseteq \Theta \langle E, F \rangle   $.

By Consequence \ref{cabello} \begin{quote}
    $\tthue {E\cup F}\subseteq \tthue E \cup \tthue F$
  if and only if
  $E \subseteq \Theta \langle F, E \rangle   $  or
  $ F \subseteq \Theta\langle E, F \rangle   $. 
    \end{quote}

We call above the functions   $\mathrm{FIND}_{\rho\langle F, E\rangle}$ and $\mathrm{FIND}_{\rho\langle E, F\rangle}$ a total of 
at most $2  n$ times,
 where  $n=\mathrm{size}(E)+\mathrm{size}(F)$.
  These callings,  by the main
result of \cite{journals/jacm/Tarjan75},
 require $O(n)$ time. 
 As a result, the time complexity of our entire computation is     $O(n\, \mathrm{log}\, n)$.
\end{proof}

%%%%

%%%%%%%%%%%%%

\section{Main Result}\label{algoritmus}
 We  now sum up our decidability and complexity results.

 \begin{thm}\label{fofo} Let $E$ and $F$ be GTESs over  a   signature   $\Sigma$. 
 It is decidable  in  $O(n^2)$ 
 time  whether $\tthue {E\cup F}\subseteq\tthue E\cup \tthue F$, where  $n=\mathrm{size}(E)+\mathrm{size}(F)$.
   \end{thm}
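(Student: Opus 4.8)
The plan is to assemble the per-case decision procedures already developed in Sections \ref{tavasz}--\ref{tel} into a single algorithm and to read off its worst-case running time. First I would invoke Proposition \ref{egyenlo}, which allows me to replace the inclusion $\tthue{E\cup F}\subseteq\tthue E\cup\tthue F$ by any of the equivalent statements listed there; this is precisely the predicate that each main-case lemma decides. I would also recall from the Introduction that for a signature with $\Sigma_0=\emptyset$ the claim is trivial, so I assume $\Sigma_0\neq\emptyset$ throughout.

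The algorithm I propose runs in three phases. In the first phase I run Algorithm CAD once on $\Sigma$, $E$, $F$; by Lemma \ref{bonn} this costs $O(n\log n)$ and produces all the relations, the constant sets $C\langle E,F\rangle$, $C\langle F,E\rangle$, $C\langle E\cup F,\emptyset\rangle$, the GTRSs $R\langle E,F\rangle$, $R\langle F,E\rangle$, $R\langle E\cup F,\emptyset\rangle$ stored in red-black trees, the auxiliary dpwpa $AUX[E;F]$, and the vertex attributes $equal_E$, $equal_F$, $keeps_E$, $keeps_F$. In the second phase, by Lemma \ref{juh}, I decide in $O(n\log n)$ time which Main Case applies; this phase subsumes the totality tests for $R\langle E,F\rangle$ and $R\langle F,E\rangle$ via Lemma \ref{decidetotal} together with the unary / arity-at-least-$2$ test via Statement \ref{koszonom}. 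In the third phase I branch to the matching case and invoke its decision lemma.

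The branching is as follows. If both $R\langle E,F\rangle$ and $R\langle F,E\rangle$ are total, I use Main Case 2 and Lemma \ref{lennon} in $O(n\log n)$ time. Otherwise, if $\Sigma$ is a unary signature, I use Main Case 1 and Lemma \ref{1tetharmadik1} in $O(n^2)$ time; if $\Sigma_k\neq\emptyset$ for some $k\geq 2$ and at least one of the two GTRSs is total, I use Main Case 3 and Lemma \ref{csimpanz} in $O(n\log n)$ time; and if $\Sigma$ has a symbol of arity at least $2$ while both GTRSs are non-total, I use Main Case 4 and Lemma \ref{orangutan} in $O(n\log n)$ time. Since Main Case 2 overlaps Main Cases 1 and 3 harmlessly---all three decide the same predicate---I only need the four cases to be exhaustive, which they are: a unary $\Sigma$ lands in Main Case 1, and a $\Sigma$ with a symbol of arity at least $2$ lands in Main Case 2, 3, or 4 according to how many of the two GTRSs are total. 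Summing the three phases, the running time is dominated by the single $O(n^2)$ branch (Main Case 1), every other branch costing at most $O(n\log n)$; hence the whole decision runs in $O(n^2)$ time, as claimed.

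The hard part will not be any new mathematics but the bookkeeping: verifying that the four cases cover every signature and that the branch selected is exactly the one whose correctness lemma applies, especially in the region where Main Case 2 overlaps Main Cases 1 and 3. I would also have to confirm that the $O(n^2)$ term genuinely originates only in the nested partial depth-first search of Main Case 1 (Lemma \ref{supportfut}) and that executing Algorithm CAD up front does not inflate any later step, so that the stated $O(n^2)$ bound is correctly attributed.
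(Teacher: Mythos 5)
Your proposal is correct and follows essentially the same route as the paper's own proof: one run of Algorithm CAD in $O(n\,\mathrm{log}\,n)$ time, the totality tests of Lemma \ref{decidetotal}, and then dispatch to Lemmas \ref{lennon}, \ref{1tetharmadik1}, \ref{csimpanz}, and \ref{orangutan} according to the four Main Cases, with the $O(n^2)$ bound coming only from the unary case. The extra appeal to Proposition \ref{egyenlo} and the use of Lemma \ref{juh} for the case dispatch are harmless packaging differences.
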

  \begin{proof}
We run Algorithm CAD  on $\Sigma$, $E$, and $F$. 
By Proposition \ref{kecskemet}, we get the following output in   $O(n \, \mathrm{log} \, n )$ time: 
\begin{itemize}
\item the relations  $\rho\langle E, F\rangle$, $\rho\langle F, E\rangle$,  and $\rho\langle E\cup F, \emptyset\rangle$;
\item the  set of nullary symbols  $C\langle E, F\rangle$, $C\langle F, E\rangle$, and
$C\langle E \cup F, \emptyset\rangle$;
\item  $INC(  E\cup F, E) $ stored in a red-black tree saved in a  variable  $RBT1$;
\item  $INC(E\cup F, F) $ stored in a red-black tree  saved in a  variable    $RBT2$;
\item $NUM( E\cup F, E)$ stored in a red-black tree  saved in a  variable    $RBT3$;
\item $NUM(  E\cup F, F)$ stored in a red-black tree  saved in a  variable    $RBT4$;
\item the GTRS $R\langle E\cup F, \emptyset\rangle$  stored in a red-black tree saved in a  variable  $RBT5$;
 each node  in  $RBT5$ contains a search key  which is  a rule in $R\langle E\cup F, \emptyset\rangle$, and satellite data $counter_E$ and $counter_F$;
\item $BSTEP\langle E\cup F, \emptyset \rangle$ stored in a red-black tree saved in a  variable   $RBT6$; 
for every ordered pair $({b}, {c})\in BSTEP\langle E\cup F, \emptyset \rangle$, the search key is the first component    ${b}$, the second component,    ${c}$ is  satellite data;
 \item the GTRS $R\langle E, F\rangle$ stored in a red-black tree saved in a  variable  $RBT7$;
  \item the GTRS $R\langle  F, E\rangle$ stored in a red-black tree  saved in a  variable   $RBT8$; 
 \item 
the auxiliary dpwpa $ AUX[ E;  F] =
(C\langle E\cup F, \emptyset \rangle,A[E; F])$ for the GTESs $E$ and $F$, and 
   for every vertex    ${b}\in C\langle E \cup  F, \emptyset  \rangle$,
the values of  the  attributes    ${b}.equal_E$,    ${b}.equal_F$,    ${b}.keeps_E$, and    ${b}.keeps_F$.
\end{itemize} 
By Lemma \ref{decidetotal}, we decide in  $O(n)$   time whether    
 $R\langle E, F\rangle$ is total, and whether    
 $R\langle F, E\rangle$ is total. 
  \begin{itemize}
\item If both of them are total, then we  proceed as in main Case 2, see Lemma \ref{lennon}. 
\item Otherwise,
\begin{itemize}
\item  if $\S$ is a unary alphabet, then we proceed as in Main Case 1,  see  Lemma \ref{1tetharmadik1}.
\item  If  $\S_k\neq \emptyset$ for some $k\geq 2$, and  at least  one of $R\langle E, F\rangle$ 
and   $R\langle F, E\rangle$ is  total, then we proceed as in  Main Case 3,  see   Lemma \ref{csimpanz}. 
\item If $\S$ has a symbol of arity at least $2$,
  and GTRS  $ R \langle E, F \rangle$ and GTRS $ R \langle F, E \rangle$  are not total,  then we proceed as in  Main Case 4,
 see Lemma \ref{orangutan}. \qedhere
\end{itemize}
\end{itemize}
      \end{proof}   
   By Theorem \ref{fofo} and Proposition \ref{egyenlo}, we have the following consequence.
\begin{cons} \label{zab} Let 
  $E$ and $F$ be GTESs over a  signature   $\S$.
 Then we can decide in   $O(n^2)$     time whether
 there exists a GTES $H$  over $\S$ such that $\tthue H=\tthue E\cup \tthue F$, where  $n=\mathrm{size}(E)+\mathrm{size}(F)$.
  If the answer is \enquote{yes}, then $\tthue E\cup \tthue F$ is a congruence on the ground term algebra      ${\cal T}(\Sigma)$ and 
$\tthue {E\cup F}= \tthue E\cup \tthue F$. Otherwise
 $\tthue E\cup \tthue F$ is not a congruence on the ground term algebra      ${\cal T}(\Sigma)$.
  \end{cons}
\begin{proof}
By Theorem \ref{fofo}, we decide in  $O(n^2)$   
time whether
$\tthue {E\cup F}\subseteq \tthue E \cup \tthue F$.
By Proposition \ref{egyenlo},
if the answer is \enquote{yes}, then there exists a GTES $H$  over $\S$ such that $\tthue H=\tthue E\cup \tthue F$, and
$\tthue {E\cup F}=\tthue E \cup \tthue F$;
otherwise
 $\tthue E\cup \tthue F$ is not a congruence on the ground term algebra      ${\cal T}(\Sigma)$.
 \end{proof}

 \section{Conclusion}\label{tanulsag}
 We showed that for any 
  GTESs  $E$ and $F$ over a  signature   $\S$,
  the following four statements are equivalent:
    \begin{itemize}
    \item[1.] $\tthue {E\cup F}\subseteq \tthue E\cup \tthue F$.
    \item[2.] $\tthue {E\cup F}= \tthue E\cup \tthue F$.
    \item[3.]
    There exists a GTES $H$ over $\S$ such that $\tthue H=\tthue E\cup \tthue F$.
    \item[4.] $\tthue E\cup \tthue F$ is a congruence on the ground term algebra      ${\cal T}(\Sigma)$.
      \end{itemize}
  We showed that we can decide in 
 $O(n^2)$  
 time for given GTESs
  $E$ and $F$ over a  signature   $\S$ whether Statement 1 above holds. Here $n=\mathrm{size}(E)+\mathrm{size}(F)$.
 We can construct the  GTES $E \cup F$ in  $O(n^2)$ time.

 We used the fast ground completion algorithm introducing additional constants
of F\"ul\"op and V\'agv\"olgyi \cite{eatcs/FuloopV91},
which, given a GTESs $E$ and $F$ over a  signature   $\S$
 produces in $O(n \,\mathrm{log}\, n)$ time a finite set $C\langle E, F\rangle$ of  new  constants 
 and 
a reduced GTRS  $ R \langle E, F \rangle  $
over $\S\cup  C\langle E, F\rangle$ 
such  that $\tthue  E = \tthue {R\langle E, F \rangle }  \cap \ts \times \ts $. 
Along the proof, we distinguished four  main cases.  Main  Case 2 overlaps with Main Case 1 and Main Case 3, and is a subcase of the union of 
 Main Case 1 and Main Case 3. 
 
 Given GTES $E$ and GTES $F$, for every  $p \in W$,  we can construct the deterministic bottom-up tree automata
       ${\cal A}$,    ${\cal B}$, and    ${\cal C}$         such that 
        ${\cal A}$ recognizes  $ p/_{\tthue E}$,     ${\cal B}$ recognizes  $ p/_{\tthue F}$, and 
         ${\cal C}$ recognizes  $ p/_{\tthue E\cup F}$, see Proposition \ref{mezo}.
     Thence, in Main Cases 1, 2, and 3, we can reduce our decidability 
     problem to deciding the equivalence of two deterministic bottom-up tree automata.
  Champav{\`{e}}re  et al \cite{franciak}
 presented an  algorithm for deciding inclusion between two tree languages recognized by given bottom-up 
  tree automata.  Applying their algorithm  to our decision problem in Main Cases 1, 2, and 3  yields an alternative  decision  algorithm, which
    is more sophisticated than ours, and in general takes more time than ours, see Lemmas \ref{1maincase1eldont}, \ref{2maincase22}, and 
    \ref{3maincase33}.

 We raise the following more general problem:
 \begin{prob}
 Let $D$,   $E$,  and $F$ be GTESs over a  signature   $\S$. 
Can we decide whether
  $\tthue {D \cup E\cup F}=\tthue D \cup  \tthue E \cup \tthue F$?
If we can, then  what is the time complexity of our decision algorithm?
 \end{prob}
One might think that we can generalize our constructions and decidability results in a straightforward way. However, the following example shows that this belief is not justified. 
\begin{exa}\rm Consider the unary signature 
    $\S=\S_0=\{\,  \#,  \; \$,\; \mathsterling \, \}$. Let
    \begin{quote}
  $D=\{\,  \# \doteq \$\, \}$,\,
$E=\{\, \# \doteq \mathsterling\, \}$,\,
   $F=\{\, \$ \doteq \mathsterling\, \}$,\,
     $H= D \cup E \cup F $.
      \end{quote}
    Then \begin{itemize}
      \item $ST\langle H \rangle= \S_0$, 
  \item  $\tthue H= \tthue {D}\cup \tthue {E} \cup \tthue {F}$, 
  \item  $ \#/_{\tthue {D} } \subset   \#/_{\tthue H}$, \,  $ \#/_{\tthue {E} } \subset   \#/_{\tthue H}$, \,  $ \#/_{\tthue {F} } \subset   \#/_{\tthue H}$,
     \item  $ \#/_{\tthue {D} }, \; \#/_{\tthue {E} }, \;\#/_{\tthue {F} }$  are pairwise incomparable for the inclusion relation, and 
   \item for
 each $t\in ST\langle H \rangle $, \,
   $  t/_{\tthue D} \subset t/_{\tthue H}$, \,  $  t/_{\tthue E} \subset t/_{\tthue H}$, \, and  \,  $  t/_{\tthue F} \subset t/_{\tthue H}$.
     \end{itemize}
  Then the generalization of Condition (ii) in  Lemma \ref{pazsit} does not hold.
    
\end{exa}

\bibliographystyle{elsarticle-num-names} 
\bibliography{a0}

%% else use the following coding to input the bibitems directly in the
%% TeX file.

\end{document}